\documentclass{amsart}

\usepackage{graphicx} 
\usepackage{bbm}
\usepackage[lite]{amsrefs}
\usepackage{amssymb}
\usepackage[all,cmtip]{xy}
\usepackage{amsmath}
\usepackage{amsthm}
\usepackage{amsfonts}
\usepackage{hyperref}
\usepackage{enumerate}
\usepackage[titletoc]{appendix}
\usepackage{fancyhdr}
\usepackage[dvipsnames]{xcolor}
\usepackage{comment}
\usepackage{tikz-cd}
\usetikzlibrary{babel}
\usepackage{tikzpagenodes}
\usepackage{caption}
\usetikzlibrary[patterns]
\usepackage[margin = 1in]{geometry}
\usepackage{setspace}
\usepackage[T1]{fontenc}

\usepackage{booktabs}
\usepackage{array}

\providecommand{\cprime}{'}
\allowdisplaybreaks

 \newtheorem{thm}{Theorem}[section]

 \newtheorem{cor}[thm]{Corollary}
 \newtheorem{lem}[thm]{Lemma}
 \newtheorem{prop}[thm]{Proposition}

  \newtheorem{defn-thm}[thm]{Definition-Theorem}
   \newtheorem{ex}[thm]{Example}
   
   \newtheorem{rem}[thm]{Remark}

 \theoremstyle{definition}
 \newtheorem{defn}[thm]{Definition}
 
   \renewcommand\[{\begin{equation}}
\renewcommand\]{\end{equation}}
\numberwithin{equation}{section}

\usepackage{amsxtra}
\usepackage{graphicx}
\usepackage{newlfont}
\usepackage{amscd}
\usepackage{hyperref}
\usepackage[german,english]{babel}
\usepackage{cancel}
\usepackage{amsmath,amsthm,amssymb}
\usepackage{enumerate}
\usepackage{color}
\usepackage{varwidth}
\usepackage{mathrsfs}  
\usepackage{setspace}
\usepackage{collectbox}
\usepackage[utf8]{inputenc} 
\usepackage[T1]{fontenc}


\usepackage{graphicx}

\DeclareFontFamily{U}{mathx}{}
\DeclareFontShape{U}{mathx}{m}{n}{<-> mathx10}{}
\DeclareSymbolFont{mathx}{U}{mathx}{m}{n}
\DeclareMathAccent{\widehat}{0}{mathx}{"70}
\DeclareMathAccent{\widecheck}{0}{mathx}{"71}
\DeclareMathOperator{\Mat}{Mat}

\newcommand{\ZZ}{\mathbb{Z}}

\newcommand{\CC}{\mathbb{C}}

\newcommand{\n}{\mathfrak{n}}

\def \CC {\mathbb{C}}

\def \ZZ {\mathbb{Z}}

\def \Ical {\mathcal{I}}
\def \Jcal {\mathcal{J}}

\def \Vcal {\mathcal{V}}
\def \Wcal {\mathcal{W}}

\def \Sfr {\mathfrak{S}}

\def \dfr {\mathfrak{d}}

\def \gfr {\mathfrak{g}}
\def \hfr {\mathfrak{h}}

\def \lfr {\mathfrak{l}}

\def \nfr {\mathfrak{n}}

\def \pfr {\mathfrak{p}}

\def \sfr {\mathfrak{s}}

\def \hbar {\bar{h}}

\def \Vbf {\mathbf{V}}
\def \Wbf {\mathbf{W}}

\newcommand{\tr}{\text{tr}}

\allowdisplaybreaks
 
\begin{document}

\title[Poisson structures of deformed $W$-algebras and classical $W$-algebras]{Poisson structures of \protect \\
deformed $W$-algebras and classical $W$-algebras}

\author[D.J.Choi]{Dong Jun Choi}
\address[D.J.Choi]{Department of Mathematical Sciences, Seoul National University, Gwanak-ro 1, Gwanak-gu, Seoul 08826, Korea}
\email{djchoi9696@snu.ac.kr}

\author[U.R.Suh]{Uhi Rinn Suh}
\address[U.R.Suh]{Department of Mathematical Sciences and Research institute of Mathematics, Seoul National University, Gwanak-ro 1, Gwanak-gu, Seoul 08826, Korea}
\email{uhrisu1@snu.ac.kr}
\thanks{U.R. S. is supported by National Research Foundation of Korea (NRF) Grant No. 2022R1C1C1008698 and the Global-LAMP Program of NRF Grant No. RS-2023-00301976 funded by the Ministry of Education.
}

\date{} 

\begin{abstract}
In this paper, we explain how classical $W$-algebras can be obtained as limits of $q$-deformed $W$-algebras in type $A$. To this end, we give a detailed examination of the construction of deformed $W$-algebras via Hamiltonian reduction of deformed affine Poisson algebras. We first derive an affine Poisson vertex algebra (PVA) from the Poisson structure of the deformed affine Poisson algebra. This construction identifies a specific set of generators of the deformed $W$-algebra with generators of the classical $W$-algebra.
\end{abstract}

\maketitle

\section{Introduction}
\label{sec:int}

This paper develops an algebraic framework for the type $A$ $q$-difference Drinfeld-Sokolov reduction introduced by \cite{FRS98}, and establishes its relationship with classical affine $W$-algebras.

\subsection{History} \label{subsec:history}
The mathematical framework for classical $W$-algebras originally emerged from the study of Poisson structures associated with differential operators. Specifically, the space of differential operators of the form $\partial^N + w_{2}\partial^{N-2}+\dots +w_{N-1}\partial+ w_{N}$ is equipped with a well-known Poisson structure, widely recognized as the Adler-Gelfand-Dickey bracket in the theory of integrable systems \cites{A79, GD78}. In their seminal work, Drinfeld and Sokolov \cite{DS84} provided an interpretation of this Poisson structure through the framework of Hamiltonian reduction of the dual space of the affine Lie algebra $\widehat{\gfr}$ for every semisimple Lie algebra $\gfr$. The resulting Poisson algebra of functionals on the reduced space is now known as the classical $W$-algebra. On the other hand, Zamolodchikov \cite{Z85} introduced extensions of the Virasoro algebra by adding fields of higher conformal dimension, and Fateev and Lukyanov \cite{FL88} constructed the $W_N$-algebras corresponding to $\mathfrak{sl}_N$. Later, Feigin and Frenkel \cite{FF90a} provided a rigorous foundation via the quantum Drinfeld-Sokolov reduction, which for a simple Lie algebra $\mathfrak{g}$ produces the $W$-algebra $W^k(\mathfrak{g})$ at level $k$, generalizing $W_N$-algebras; see also \cites{F91, DSK06, FB}. Feigin and Frenkel \cite{FF92} proved that in the limit $k \to \infty$ the algebra $W^k(\mathfrak{g})$ becomes the classical $W$-algebra.

The study of two-parameter deformations of the $W_N$-algebras was initiated by Awata, Kubo, Odake and Shiraishi \cites{SKAO96, AKOS96} based on a bosonic system, independently of and in parallel with Feigin and Frenkel \cite{FF96}. This framework was further generalized by Frenkel and Reshetikhin \cite{FR98}, who introduced the two-parameter deformed $W$-algebra $\mathbf{W}_{q, t}(\mathfrak{g})$ associated with any simple Lie algebra $\mathfrak{g}$. It is a deformation of the affine $W$-algebra in the sense that, for a fixed $\beta$, taking the conformal limit $q \to 1$ with $t=q^\beta$ recovers the affine $W$-algebras. The algebra $\mathbf{W}_{q, t}(\mathfrak{g})$ has since been studied from various perspectives, including Langlands duality and string theory \cites{FH11a, FH11b, AFO18, T18}, four-dimensional supersymmetric gauge theories \cites{KP1, KP2, EP19, LNZ18, NZ20, HKc22}, representation theory and algebraic combinatorics \cites{M07, N22, FJM25, S02}, and quantum integrable systems \cites{FHR22, K22, AFR20}. Moreover, $\mathbf{W}_{q, t}(\mathfrak{g})$ retains a remarkably rich structure even under various limits, which are connected with integrable models associated with the quantum affine algebras $U_q(\widehat{\mathfrak{g}}),\; U_t({}^L(\widehat{{}^L\mathfrak{g}})) \text{, and }U_t({}^L\widehat{\mathfrak{g}})$. Among these, we focus in particular on the classical limit $q \to 1$, and study the resulting Poisson algebra $\mathbf{W}_{1, t}(\mathfrak{g})$.

When $\mathfrak{g}$ is simply-laced, the algebra $\mathbf{W}_{1, t}(\mathfrak{g})$ is isomorphic to $\mathbf{W}_{q, 1}(\mathfrak{g})$, whose structure is relatively well understood in terms of the center of a completion of the quantum affine algebra $U_q(\widehat{\mathfrak{g}})$ \cites{RS90, DP94, FR96}. The corresponding commutative algebra of generating fields can be identified with the Grothendieck ring of representations of $U_q(\widehat{\mathfrak{g}})$. Under this identification, the free field realization of $\mathbf{W}_{q, 1}(\mathfrak{g})$ coincides with the $q$-character homomorphism. For general types, it was conjectured in \cite{FR98}*{Conjecture 3} that $\mathbf{W}_{1, t}(\mathfrak{g})$ can be obtained via a $t^{r^\vee}$-deformed Drinfeld-Sokolov reduction of the loop group associated with $G$, introduced in \cites{FRS98, SS98}. The construction of \cites{FRS98, SS98} is a difference operator analogue of the classical theory \cite{DS84}. Frenkel, Reshetikhin, Semenov-Tian-Shansky and Sevostyanov \cites{FRS98, SS98} introduced an appropriate $q$-deformed Poisson structure on the loop group and carried out a Hamiltonian reduction with respect to a $q$-gauge action, showing that the resulting space is identified with a space of $q$-difference operators of the form $D^N+T_1D^{N-1}+\dots + T_{N-1}D+1$. It should be noted that the parameter $q$ in this context is distinct from the $q$ used in the notation $\mathbf{W}_{q, t}(\mathfrak{g})$.

The aim of this paper is to investigate the deformed $W$-algebra $\Wbf_{1,t}(\sfr\lfr_N)$ (equivalently, $\Wbf_{q,1}(\sfr\lfr_N)$), which is equipped with a difference algebra structure, and its relationship with the classical $W$-algebra $\Wbf(\sfr\lfr_N)$ endowed with a differential algebra structure.
From this perspective, a series of works by De Sole, Kac, and Valeri is particularly relevant. In the differential setting, they introduced a Poisson vertex algebra (PVA) framework for the Drinfeld--Sokolov reduction in \cite{DSKV13}, which was subsequently further developed in \cites{DSKV16a, DSKV16b, DSKV18}. Parallel to this, they also developed the theory of multiplicative Poisson vertex algebras (mPVAs) and the associated theory of integrable systems in the difference setting \cites{DSKVW19, DSKVW20}. Although the present paper is formulated in the language of \cites{FRS98, SS98}, we expect that the constructions and results can be translated naturally into the framework of mPVAs.

\subsection{Algebraic structures of deformed \texorpdfstring{$W$}{W}-algebras} 
In this paper, we focus on $q$-deformed $W$-algebras $\Wbf_q(LSL_N)$ of type $A$, which is isomorphic to $\mathbf{W}_{1,t}(\sfr\lfr_N)$ or $\mathbf{W}_{q,1}(\sfr\lfr_N)$ introduced in Section~\ref{subsec:history}. In Sections~\ref{sec:affine} and~\ref{sec:W-alg}, we provide a detailed algebraic description of the deformed $W$-algebras via Hamiltonian reduction.

The Drinfeld--Sokolov reduction for $\Wbf_q(LSL_N)$ introduced in \cite{FRS98} plays a central role in our approach. In this framework, $\Wbf_q(LSL_N)$ is realized as a Hamiltonian reduction of the deformed affine Poisson algebra $\Vbf_q(LSL_N)$, which is generated by the formal series $\mu_{ij}(z)$ for $i,j=1,\dots,N$. The algebra is equipped with the difference operator
$D\mu_{ij}(z)=\mu_{ij}(qz),$ 
together with the relation $\det(z)=1$. The Poisson structure on $\Vbf_q(LSL_N)$ is defined through an operator satisfying the modified classical Yang--Baxter equation (see Appendices~\ref{AppendixA} and~\ref{AppendixB}). While it is generally understood that this construction ensures the well-definedness of the Poisson bracket, we were unable to find a complete proof in the existing literature. We therefore provide a detailed proof in the appendices for the reader's convenience. Moreover, we derive in Section~\ref{subsec:deform_affine} the explicit formula of Poisson brackets between the generating series : 
\begin{equation} \label{eq:PoissonBracketDefinition-intro}
    \begin{aligned}
        \{\mu_{ij}(z), \mu_{kl}(w)\}
        &=C_{ijkl}\left(\frac{w}{z}\right) \mu_{ij}(z)\mu_{kl}(w) +\frac{1}{2}(\varepsilon_{ik}+\varepsilon_{lj})\delta\left(\frac{w}{z}\right)\mu_{kj}(z)\mu_{il}(w) \\
        &-\delta_{jk}\sum_{\alpha > j} \delta\left(\frac{qw}{z}\right)\mu_{i\alpha}(z)\mu_{\alpha l}(w) + \delta_{il} \sum_{\alpha > i} \delta\left(\frac{w}{qz}\right) \mu_{\alpha j}(z) \mu_{k\alpha}(w),
    \end{aligned}
\end{equation}
where $C_{ijkl}\big(\frac{w}{z}\big)$ is given in \eqref{eq:C}. 

The deformed $W$-algebra $\Wbf_q(LSL_N)$ was defined in \cite{FRS98} as the subspace of $\Vbf_q(LSL_N)$ consisting of elements invariant under the $q$-gauge action of the unipotent subgroup $N_+$. In Definition \ref{def:W}, we introduced an equivalent definition of $\Wbf_q(LSL_N)$ in terms of a Lie algebra action. Using these two constructions, we showed that the elements $T_r(z)$, for $r=1,\dots,N-1$, form a generating set of $\Wbf_q(LSL_N)$, as expected in \cite{FRS98}, where 
\begin{equation} \label{eq:cdet-intro}
    \text{cdet}\begin{pmatrix}
    & & \\
    & \delta_{ij} D+ \pi(\mu_{ij}(z))& \\
    & & 
\end{pmatrix}
= D^{N}+ T_1(z) D^{N-1}+\dots+T_{N-1}(z) D+1,
\end{equation}
for $\pi$ given in \eqref{eq:pi}. Again, it seems that the well-definedness of Poisson bracket on $\Wbf_q(LSL_N)$, induced from that of $\Vbf_q(LSL_N)$, is already known. However, we were unable to find a complete proof in the literature, we include one in Appendix~\ref{AppendixC}.

The deformed $W$-algebra $\Wbf_q(LSL_N)$ can be embedded into the $q$-difference subalgebra $\Vbf_q(H)$ of $\Vbf_q(LSL_N)$, which is generated by $\mu_{rr}(z)$ for $r=1,2,\dots, N.$ The embedding
\begin{equation} \label{eq:q-miura-intro}
m_q: \Wbf_q(LSL_N) \hookrightarrow \Vbf_q(H)
\end{equation}
is called the deformed Miura transformation. Here we note that while $\Vbf_q(H)$ is a $q$-difference subalgebra of $\Vbf_q(LSL_N)$, it is not a Poisson subalgebra since the bracket \eqref{eq:PoissonBracketDefinition-intro} is not closed in $\Vbf_q(H)$. However, by defining the Poisson bracket on $\Vbf_q(H)$ as
\begin{equation*}
     \{\mu_{ii}(z), \mu_{jj}(w)\}
        =C_{iijj}\left(\frac{w}{z}\right) \mu_{ii}(z)\mu_{jj}(w),
\end{equation*}
one can make \eqref{eq:q-miura-intro} into a Poisson algebra homomorphism. In Section~\ref{subsec:q-miura}, we included detailed proofs of the facts on the Miura transformation.


\subsection{Classical \texorpdfstring{$W$}{W}-algebras from deformed \texorpdfstring{$W$}{W}-algebras}

By construction, $\Wbf_q(LSL_N)$ is naturally viewed as a deformation of the classical $W$-algebra $\Wbf(\sfr\lfr_N)$. However, the precise relationship between $\Wbf_q(LSL_N)$ and $\Wbf(\sfr\lfr_N)$, particularly the mechanism by which the Poisson structure of the latter can be recovered from that of the former, has not been thoroughly investigated. In Sections~\ref{sec:affinelimit} and ~\ref{sec:W-w}, we address this question and, in particular, establish Theorem~\ref{thm:main}, inspired by the approach in \cite{FR96}*{Section 10.2}.

In Section~\ref{sec:affinelimit}, we begin by investigating the affine case. Since the Poisson bracket given in \eqref{eq:PoissonBracketDefinition-intro} on $\Vbf_q(LSL_N)$ is completely determined by the Poisson brackets among its Fourier coefficients, we consider the $\mathbb{C}(q)$-algebra $\Vcal_q(LSL_N)$ generated by the Fourier modes of the elements in $\Vbf_q(LSL_N)$. We then construct a Poisson $\mathbb{C}(\!(h)\!)$-algebra $\widetilde{\Vcal}_h(\sfr\lfr_N)$, setting $q=e^{-h}$, which captures the behavior of the Poisson structure of $\Vcal_q(LSL_N)$ as $h\to 0$. More precisely, $\widetilde{\Vcal}_h(\sfr\lfr_N)$ is generated by the Fourier coefficients of the formal series in $\mathbb{C}[\mathsf{E}_{ij}(z) | i,j=1,\ldots,N]$, and its Poisson bracket is defined by
\begin{equation*}
   \{\mathsf{E}_{ij}[m],\mathsf{E}_{kl}[n]\}_h
= \frac{1}{h}\,\iota \, \left\{\mu_{ji}[m],\mu_{lk}[n]\right\},
\end{equation*}
where the map $\iota:\Vcal_q(LSL_N)\to \widetilde{\Vcal}_h(\sfr\lfr_N)$, describing the correspondence between the generating series $\mu_{ji}(z)$ and $\mathsf{E}_{ij}(z)$, is given by
$
    \iota\bigl(\mu_{ji}[m]\bigr)
= \delta_{ij}\delta_{m0} + h\mathsf{E}_{ij}[m].
$


To take the limit $h \to 0$, we introduce an $A_1$-subalgebra $\Vcal_h(\sfr\lfr_N)$ of $\widetilde{\Vcal}_h(\sfr\lfr_N)$ consisting of elements that admit a well-defined limit as $h \to 0$, where $A_1$ is the ring of rational functions in $q$ regular at $q=1$. Especially, $\Vcal_h(\sfr\lfr_N)$ contains $\mathsf{E}^{0,l}_{ij}[m]=\mathsf{E}_{ij}[m]$, 
    \begin{equation} \label{eq:generator_intro}
            \mathsf{E}^{k,l}_{ij}[m]=\prod_{s=0}^{k-1} \left(\frac{1-q^{(-m-1)l}}{lh}-s\right)\mathsf{E}_{ij}[m]
    \end{equation}
  for $1 \le i, j \le N$, $l \in \ZZ_{>0}, k \in \ZZ_{\ge 0}$, and $m \in \ZZ$, and the elements
  \begin{equation} \label{eq:generator_2_intro}
      \sum_{m_1+\cdots + m_k = M} c_{m_1m_2\cdots m_k}(h)\cdot \mathsf{E}_{i_1j_1}[m_1] \dots \mathsf{E}_{i_kj_k}[m_k]
  \end{equation}
  for $c_{m_1m_2\cdots m_k}(h)\in h\mathbb{C}[\![h]\!].$ Note that the coefficients appearing on the right-hand side of \eqref{eq:generator_intro} give rise, in the limit $h\to 0$, to the differential operator $z^k\partial_z^k$ acting on the generating series (see \eqref{eq:motiv_generator}). Furthermore, the expressions given in \eqref{eq:generator_2_intro} vanish in the limit $h\to 0$. Now we obtain the first main theorem in this paper.
  
  \begin{thm} [Theorem~\ref{thm:main_affine}] \label{thm:main-1,intro}
  Let $\Vcal(\sfr\lfr_N)$ be the $\mathbb{C}$-algebra generated by the Fourier series of elements in
  \begin{equation*}
      \Vbf(\sfr\lfr_N) = \mathbb{C}[\partial^n e_{ij}(z)|i,j=1,\cdots, N, \, n\in \mathbb{Z}_{\ge 0}] \big/ \big< e_{11}(z) + \dots + e_{NN}(z) \big>.
  \end{equation*} 
  Then the limit 
  \begin{equation*}
      \lim_{h\to 0} : \Vcal_h(\sfr\lfr_N) \to \Vcal(\sfr\lfr_N)
  \end{equation*}
  is a well-defined surjective map, and the Poisson structure on $ \Vcal_h(\sfr\lfr_N)$ induces the affine PVA structure on $\Vcal(\sfr\lfr_N)$ by identifying $\mathsf{E}_{ij}(z)$ with the field $e_{ij}(z)-\frac{\delta_{ij}}{N}I_N(z)$ corresponding to the element $e_{ij}-\frac{\delta_{ij}}{N}I_N\in \sfr\lfr_N.$ Moreover, the analogous statement holds when $\sfr\lfr_N$ is replaced by its Cartan subalgebra $\hfr.$
\end{thm}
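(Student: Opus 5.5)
The proof has three parts. First, show that $\Vcal_h(\sfr\lfr_N)$ is closed under the bracket $\{\cdot,\cdot\}_h$. Second, show that the specialization $h\to 0$ is a well-defined, surjective ring homomorphism onto $\Vcal(\sfr\lfr_N)$. Third, compute the limit of the bracket on generators and match it with the affine $\lambda$-bracket of $\sfr\lfr_N$.

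For the limit map, I send $q\mapsto 1$, i.e. reduce the $A_1$-coefficients modulo the maximal ideal. On generators the assignment is
\[
\mathsf{E}^{k,l}_{ij}[m]\mapsto \text{the $m$-th mode of } (z\partial_z)^{\underline{k}}\bigl(e_{ij}(z)-\tfrac{\delta_{ij}}{N}I_N(z)\bigr),
\]
where $(z\partial_z)^{\underline{k}}$ denotes the falling factorial $z\partial_z(z\partial_z-1)\cdots(z\partial_z-k+1)=z^k\partial_z^k$. This uses
\[
\frac{1-q^{(-m-1)l}}{lh}\;\to\; -m-1 \quad (q=e^{-h}),
\]
together with the mode convention of \eqref{eq:motiv_generator}, so the product $\prod_{s}(\cdot-s)$ tends to the eigenvalue of $z^k\partial_z^k$ on the relevant mode. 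The elements \eqref{eq:generator_2_intro} are sent to $0$.

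Well-definedness requires that $\Vcal_h(\sfr\lfr_N)$ is, by definition, the $A_1$-span of products of these generators, and that any $A_1$-linear relation among them has zero specialization. I would establish this with a filtration/PBW argument. The monomials in the $\mathsf{E}_{ij}[m]$ are a basis of $\widetilde{\Vcal}_h$ over $\mathbb{C}(\!(h)\!)$, so an element of $\Vcal_h$ is a $\mathbb{C}(\!(h)\!)$-combination of them whose coefficients become regular once the $\mathsf{E}^{k,l}$ are inserted. The tracelessness $\sum_i\mathsf{E}_{ii}=0$ comes from $\det=1$: expanding $\det(\iota\mu)$ to first order in $h$ gives $\sum_i \mathsf{E}_{ii}[m]\in h\Vcal_h$, which vanishes in the limit. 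Surjectivity then follows because the modes of $z^k\partial_z^k e_{ij}$ for all $k$ generate $\Vcal(\sfr\lfr_N)$; $\partial^n$ is recovered from $z^k\partial^k$ up to powers of $z$, which only shift modes.

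For the bracket, I substitute $\iota(\mu_{ji})=\delta_{ij}+h\mathsf{E}_{ij}$ into \eqref{eq:PoissonBracketDefinition-intro}, divide by $h$, and expand in $h$, with $q=e^{-h}$ in the $\delta(qw/z)$ terms. The $O(h^0)$ contributions from the $\delta_{ik}\delta_{jl}$-type constant terms must be seen to cancel. These come from the $C_{ijkl}$ term, the $\frac12(\varepsilon_{ik}+\varepsilon_{lj})$ term, and the two $\delta(q^{\pm1}w/z)$ sums. At $O(h^0)$ after dividing by $h$, i.e. at order $h^2$ before dividing, the bilinear part in $\mathsf{E}$ should produce the commutator term $\delta_{jk}e_{il}-\delta_{il}e_{kj}$ times $\delta(w/z)$. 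The central term is the derivative $\partial_w\delta(w/z)$ paired with the trace form. It arises from expanding $\delta(qw/z)-\delta(w/z)=-h\,w\partial_w\delta(w/z)+O(h^2)$ against the constant parts, and from the $h$-expansion of $C_{ijkl}(w/z)$ given in \eqref{eq:C}.

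The main obstacle is this bookkeeping. First, verifying that the $1/h$ and $1/h^2$ singular terms cancel exactly; this should use the $\varepsilon$-antisymmetry and the structure of $C$. Second, checking that the resulting central term is exactly the trace form restricted to $\sfr\lfr_N$, where the subtraction $\frac{\delta_{ij}}{N}I_N$ is necessary. Third, showing that brackets of general elements of $\Vcal_h$ lie in $\Vcal_h$. For the last point, I would use the Leibniz rule and the fact that brackets of the $\mathsf{E}^{k,l}$ are obtained by applying the regular coefficient operators to the generator bracket, so the result is controlled by the generator computation. Once the generator brackets are shown to be $\lambda$-bracket modes plus elements of the form \eqref{eq:generator_2_intro}, the Leibniz rule and Jacobi pass to the limit, which gives the affine PVA. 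For $\hfr$ the same argument applies with only the $C_{iijj}$ term, and the limit is the Heisenberg-type bracket of $\hfr$.
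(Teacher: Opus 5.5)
Your architecture is the paper's: closure of $\Vcal_h$ under $\{\cdot,\cdot\}_h$ is reduced by the Leibniz rule to generator brackets, the limit is a coefficientwise specialization $q\mapsto 1$, and the core is an explicit $h$-expansion of $\frac1h\,\iota\{\mu_{ji}(z),\mu_{lk}(w)\}$. The gap is in that expansion, which is the real content of the proof. Your $h$-bookkeeping is inconsistent and, as written, puts the Lie bracket term in the wrong place. Since $\iota(\mu_{ji}(z))=\delta_{ij}z^{-1}+h\mathsf{E}_{ij}(z)$, consider each quadratic term $F\,\mu(z)\mu(w)$ of \eqref{eq:PoissonBracketDefinition}, where $F$ is one of $C(w/z)$, $\frac12(\varepsilon+\varepsilon)\delta(w/z)$, $\delta(q^{\pm1}w/z)$, all regular at $h=0$. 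After division by $h$ it contributes exactly three pieces: $\frac1hF$ times a constant, $F$ times a term linear in $\mathsf{E}$, and $hF$ times a term bilinear in $\mathsf{E}$. Three consequences follow. The bilinear part vanishes in the limit (this is the paper's remark $\lim_{h\to0}hC_{jilk}=0$), so it cannot produce $\delta_{jk}\mathsf{E}_{il}-\delta_{il}\mathsf{E}_{kj}$, which is linear anyway. Order $h^0$ after dividing by $h$ is order $h$, not $h^2$, before dividing. And no $1/h^2$ terms exist. If you really divided by $h^2$, as your count of orders suggests, the linear terms would blow up like $1/h$ and there would be no limit.

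Here is the correct accounting.

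\textbf{Lie bracket term.} It comes from the linear pieces. The $C$-term contributes through its value at $q=1$, namely $C_{jikk}[r]\to\frac12(\delta_{jk}-\delta_{ik})$ and $C_{iilk}[r]\to\frac12(\delta_{ik}-\delta_{il})$. These combine with the linear pieces of the $\varepsilon$-term and of the $\delta(q^{\pm1}w/z)$ sums, as in \eqref{eq:limit4}--\eqref{eq:limit5}.

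\textbf{Poles.} The only pole is $\frac1h$ times constants coming from the $\varepsilon$-term and the $\delta(q^{\pm1}w/z)$ sums, and it occurs only when $i=l\neq j=k$. These pieces pair, up to sign, into $\frac1h\bigl(\delta(w/z)-\delta(q^{\pm1}w/z)\bigr)$, whose limit is $\delta_{il}\delta_{jk}\delta_{i\neq j}\partial_w\delta(z,w)$.

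\textbf{Diagonal central term.} The $C$-term takes no part in that cancellation. Its constant part is $\delta_{ij}\delta_{kl}C_{iikk}$ with $C_{iikk}[r]=O(h)$, so $\frac1hC_{iikk}$ is already finite and yields $\delta_{ij}\delta_{kl}(\delta_{ik}-\frac1N)\partial_w\delta(z,w)$. Added to the previous contribution, this gives the trace form $\delta_{il}\delta_{jk}-\frac1N\delta_{ij}\delta_{kl}$ on $\sfr\lfr_N$.

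Two smaller remarks. First, in the paper $\Vcal_h$ is defined by requiring every coefficient to have a limit; your generators are only examples of such elements, not a definition. So closure and the limit are checked coefficientwise. Second, the monomials in the $\mathsf{E}_{ij}[m]$ are not a basis of $\widetilde{\Vcal}_h(\sfr\lfr_N)$, which is a quotient by $\iota(\det[m])=\delta_{m,0}$. What well-definedness actually requires is that the limit kills the regular elements of this ideal. Your observation that $\frac1h(\iota(\det[m])-\delta_{m,0})=\sum_i\mathsf{E}_{ii}[m]+O(h)$ maps to $\sum_i(e_{ii}-\frac1NI_N)[m]=0$ is the right starting point for that.
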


\noindent
The following diagram summarizes the limit process from $\Vbf_q(G)$ to $\Vcal(\gfr)$ for $(G,\gfr)=(LSL_N,\sfr\lfr_N)$ or $(H,\hfr)$:
\begin{equation*}
    \begin{tikzcd}[column sep=large]
    \Vbf_q(G) \arrow[d, dashed, "\text{\scriptsize Fourier modes}"'] & & \Vbf(\gfr) \arrow[d, dashed, "\text{\scriptsize Fourier modes}"]\\
    \Vcal_q(G) \arrow[r, hook, "\iota"] & \widetilde{\Vcal}_h(\gfr) \supset \Vcal_h(\gfr) \arrow[r, two heads, "\lim_{h\to 0}"] & \Vcal(\gfr)
    \end{tikzcd}
\end{equation*}

\vskip 2mm

Another main theorem of this paper is the $W$-algebra analogue of Theorem~\ref{thm:main-1,intro}. Recall that $\Wbf_q(LSL_N)$ is the $q$-difference algebra generated by the coefficients appearing in \eqref{eq:cdet-intro}. Likewise, the generators of the classical $W$-algebra can be obtained from the column determinant of a certain matrix (see \eqref{eq:cdetnondeformed}). Since the Miura transform in each case is an injective homomorphism, we can apply it to the determinant form and obtain the generators in the image of the Miura transform. In other words, the coefficients $T_1(z), \cdots, T_{N-1}(z)\in \Vbf_q(H)$ of
\begin{equation*}
(D+\mu_{11}(z))(D+\mu_{22}(z))\cdots (D+\mu_{NN}(z))= D^N+T_1(z)D^{N-1}+\cdots + T_{N-1}(z)D+1, 
\end{equation*}
generate $\Wbf_q(LSL_N)$ and the coefficients $w_2(z), \cdots, w_N(z)\in \Vbf(\hfr)$ of 
\begin{equation*}
(\partial+e_{11}(z))(\partial+e_{22}(z))\cdots (\partial+e_{NN}(z)) = \partial^N+ w_2(z) \partial^{N-2}+\cdots + w_{N-1}(z)\partial+ w_{N}(z),
\end{equation*}
generate $\Wbf(\sfr\lfr_N).$ Our strategy is to establish a connection between the generators of $T_1(z), \cdots, T_{N-1}(z)$ and $w_2(z), \cdots, w_N(z).$

As in the affine case, we consider the $\mathbb{C}(q)$-algebra $\Wcal_q(LSL_N)$ generated by the Fourier modes of $\Wbf_q(LSL_N)$, together with the $\mathbb{C}(\!(h)\!)$-algebra $\widetilde{\Wcal}_h(\sfr\lfr_N)$, defined as the image of $\Wbf_q(LSL_N)$ under the map $\iota$. Recall that the Poisson bracket on $\widetilde{\Wcal}_h(\sfr\lfr_N)$ is inherited from that on $\widetilde{\Vcal}_h(H)$, while the Poisson structure of the affine PVA $\Vcal(\hfr)$ is induced by the corresponding Poisson structure on $\widetilde{\Vcal}_h(H)$. Therefore, if every element of the classical $W$-algebra can be realized as the limit of elements in $\widetilde{\Wcal}_h(\sfr\lfr_N)$, it follows immediately that the Poisson structure on the classical $W$-algebra also arises as the limit of the Poisson structure on the deformed $W$-algebra.

To identify elements whose limits produce the generators of the classical $W$-algebra, we introduce the elements $\mathfrak{T}_{r}[m] \in \widetilde{\Wcal}_h(\sfr\lfr_N)$ for $1\le r\le N-1$ and $m\in\mathbb{Z}$, defined by
\begin{equation*}
\begin{aligned}
   \mathfrak{T}_{r}[m]\textstyle &:=\frac{1}{h^{r+1}}\Big(\mathsf{T}_r[m] - q^{-1}\binom{N-r}{1}_{q^{-1}} \mathsf{T}_{r-1}[m]+ \\
   &\quad \dots + (-1)^{r-1}q^{-\frac{(r-1)r}{2}}\binom{N-2}{r-1}_{q^{-1}}\mathsf{T}_1[m] + (-1)^r q^{-\frac{r(r+1)}{2}}\binom{N-1}{r}_{q^{-1}} \delta_{m, 0} - \delta_{m, 0} \Big),
    \end{aligned}
\end{equation*}
where $T_r(z)=\sum_{m\in \mathbb{Z}}T_r[m] z^{-m-r}$ and $\mathsf{T}_r[m]= \iota(T_r[m])$. Defining the formal generating series as
$\mathfrak{T}_r(z) := \sum_{m \in \ZZ} \mathfrak{T}_r[m]z^{-m-r-1}$,
we can state the second main theorem of this paper.

\begin{thm}[Theorem \ref{thm:main}]\label{thm:main_intro}
Let $\Wcal(\sfr\lfr_N)$ be the $\mathbb{C}$-algebra generated by the Fourier modes of elements in 
  \begin{equation*}
      \Wbf(\sfr\lfr_N) = \mathbb{C}[\partial^n w_i(z)|i=2,\cdots, N, \, n\in \mathbb{Z}_{\ge 0}].
  \end{equation*}
Then each $\mathfrak{T}_r[m]$ admits a well-defined limit as $h\to 0$, that is, $
     \mathfrak{T}_r[m] \in \Wcal_h(\sfr\lfr_N):=\widetilde{\Wcal}_h(\sfr\lfr_N) \cap \Vcal_h(\hfr)$. Furthermore,
\begin{equation*}
\begin{aligned}
\lim_{h \to 0} : \Wcal_h(\sfr\lfr_N) \to \Wcal(\sfr\lfr_N)
\end{aligned}
\end{equation*}
is a well-defined surjective homomorphism such that 
$\lim_{h \to 0} \mathfrak{T}_r[m] = -w_{r+1}[m]$, or equivalently, $\lim_{h \to 0} \mathfrak{T}_r(z) = -w_{r+1}(z)$. Moreover, the Poisson structure on $\widetilde{\Wcal}_h(\sfr\lfr_N)$ induces the Poisson structure on $\Wcal(\sfr\lfr_N).$
\end{thm}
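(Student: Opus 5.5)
The plan is to do the whole computation inside the free-field side, using the Miura transformation. Since $m_q$ is an injective Poisson homomorphism, I will write $T_r(z)$ as the degree-$(N-r)$ coefficient of $(D+\mu_{11}(z))\cdots(D+\mu_{NN}(z))$ in $\Vbf_q(H)$ and push it through $\iota$. Concretely, I substitute $\mu_{ii}(z)=1+h\mathsf{E}_{ii}(z)$ (dropping the transpose, which is irrelevant on the diagonal). I then move all powers of $D$ to the right using $D\,f(z)=f(qz)\,D$. This expresses $T_r(z)$ as a sum over subsets $S=\{i_1<\dots<i_{N-r}\}$ of positions where the factor $D$ is chosen. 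Each term is a product of the shifted series $1+h\mathsf{E}_{jj}(q^{a_j}z)$ over the complementary $r$ indices $j\notin S$. Here $a_j=\#\{s\in S: s<j\}$ is the number of $D$'s standing to the left of $j$.

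The key step is to expand each term in powers of $h$, writing $q=e^{-h}$. A shifted series $\mathsf{E}(q^{a}z)$ has modes $q^{-a(m+1)}\mathsf{E}[m]$, up to the normalization of $z^{-m-1}$. Expanding $q^{-a(m+1)}$ in $h$ produces polynomials in $\frac{1-q^{(-m-1)l}}{lh}$. These are exactly the combinations \eqref{eq:generator_intro}, which tend to $z^k\partial_z^k$ acting on the series, as noted in the paper. So each term becomes a finite combination of elements of $\Vcal_h(\hfr)$ times powers of $h$, plus tails of the form \eqref{eq:generator_2_intro}.

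The subtracted constants in $\mathfrak{T}_r[m]$ are designed to remove the low-order parts. The alternating combination $\sum_s(-1)^s q^{-s(s+1)/2}\binom{N-r+s-1}{s}_{q^{-1}}\mathsf{T}_{r-s}$ plays the role of conjugating by $D=1+(D-1)$. Rewriting the $q$-difference operator in the variable $(D-1)/h$ (which tends to $z\partial_z$, or $\partial$ after the appropriate rescaling encoded by the shift $z^{-m-r-1}$), the coefficient of $((D-1)/h)^{N-r-1}$ is precisely this combination divided by $h^{r+1}$. To see this I will use the $q$-binomial theorem $D^n=\sum_k\binom{n}{k}_{q}(\cdots)(D-1)^k$ in the twisted form. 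The $\delta_{m,0}$ constants cancel the $h^0$ contribution coming from $(D+1)^N$, so the whole expression is $h^{-(r+1)}$ times the coefficient of a genuine operator $\prod_i(D-1+h\mathsf{E}_{ii})$ rescaled.

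It then remains to check two things. First, the terms of order $h^{\le r}$ vanish identically after this combination, with the $\mathsf{E}$-free terms handled by $q$-binomial identities and the terms linear and higher in $\mathsf{E}$ handled by counting powers of $h$. Second, the surviving order $h^{r+1}$ term is the coefficient of $\partial^{N-r-1}$ in $\prod_i(\partial+e_{ii}(z))$ up to sign, which gives $-w_{r+1}$. The sign and the tracelessness $\sum\mathsf{E}_{ii}=0$ (which kills $w_1$) need care here.

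Surjectivity of $\lim_{h\to0}$ follows because the $w_{r+1}[m]$ generate $\Wcal(\sfr\lfr_N)$ as an algebra under products and $\partial$, and $\lim$ is a homomorphism on $\Vcal_h(\hfr)$ by Theorem~\ref{thm:main_affine}. Since $\Wcal_h(\sfr\lfr_N)$ is an algebra, products of the $\mathfrak{T}_r[m]$ map onto products of the $w$'s. Derivatives are modes, so no extra closure under $\partial$ is needed.

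For the Poisson statement, I take $a,b\in\Wcal_h(\sfr\lfr_N)$. Then $\{a,b\}_h$ lies in $\widetilde{\Wcal}_h$, by Appendix~\ref{AppendixC} and the Miura homomorphism, and also in $\Vcal_h(\hfr)$, by Theorem~\ref{thm:main_affine}, with limit $\{\lim a,\lim b\}$. So the bracket on $\Wcal(\sfr\lfr_N)$ is induced from the affine PVA on $\Vcal(\hfr)$, which is the classical $W$-algebra bracket via the classical Miura map.

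The main obstacle I expect is the order-by-order cancellation showing that $h^{-(r+1)}(\cdots)$ has no negative powers of $h$. Among these, the hardest will be the mixed terms where the coefficients $q^{-a(m+1)}$ depend on $m$, which must be reorganized into the specific generators \eqref{eq:generator_intro}. I would first attempt this via the operator identity in $(D-1)/h$, rather than by brute expansion.
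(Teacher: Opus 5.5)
Your plan has a genuine gap at its central step. You claim that $h^{r+1}\mathfrak{T}_r[m]$ is the coefficient of $((D-1)/h)^{N-r-1}$, suitably rescaled, in the Miura operator. That cannot be right.

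Since $D$ commutes with $D-1$, rewriting $\sum_k c_k T_k(z)D^{N-k}$ in powers of $D-1$ (or of $D+1$) involves only ordinary binomials. There is no ``twisted $q$-binomial theorem'' relating $D^n$ to $(D-1)^k$. The coefficient of $(D-1)^{N-r-1}$ is $\sum_{k\le r+1}c_k\binom{N-k}{N-r-1}T_k$, which contains $T_{r+1}$ with coefficient $c_{r+1}\neq 0$. By contrast, $\mathfrak{T}_r[m]$ involves only $\mathsf{T}_r,\dots,\mathsf{T}_1$, constants, and the final $-\delta_{m,0}$. For $r\le N-2$ the generator $\mathsf{T}_{r+1}$ is independent of these. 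So at best you would compute the limit of a different element of $\widetilde{\Wcal}_h(\sfr\lfr_N)$. That might still give surjectivity by triangularity, but it says nothing about $\mathfrak{T}_r[m]$ or the formula $\lim\mathfrak{T}_r=-w_{r+1}$.

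Relatedly, the $q$-binomials and the factors $q^{-s(s+1)/2}$ in $\mathfrak{T}_r$ come from $\iota(\mu_{ii}(z))=z^{-1}+h\mathsf{E}_{ii}(z)$, where $Dz^{-1}=q^{-1}z^{-1}D$. Your substitution $\mu_{ii}=1+h\mathsf{E}_{ii}$ discards exactly this.

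The second problem is that for the actual $\mathfrak{T}_r$, the orders $h^1,\dots,h^r$ do not vanish identically. Set $q=1$ in the coefficients and shifts: the combination becomes $\sum_{k=1}^{r}h^k\sigma_k(\mathsf{E})[m]$, where $\sigma_k$ is the convolution elementary symmetric polynomial. This is $O(h^{r+1})$ only modulo $\sum_{k=1}^{N}h^k\sigma_k(\mathsf{E})[m]=0$. That relation is the deformed determinant relation $\iota(\det[m])=\delta_{m,0}$, obtained by reading $-\delta_{m,0}$ as $-\mathsf{T}_N[m]$. It is not the undeformed tracelessness $\sum\mathsf{E}_{ii}=0$ you invoke; in $\widetilde{\Vcal}_h(\hfr)$ one has $h\sum_\alpha\mathsf{E}_{\alpha\alpha}[m]=-h^2\sigma_2[m]-\cdots$.

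This relation also supplies the nonlinear part of $-w_{r+1}$. For $r=1$, $h^{-2}\bigl(\mathsf{T}_1[m]-\binom{N}{1}_{q^{-1}}\delta_{m,0}\bigr)$ splits as $h^{-1}\sum_\alpha(q^{(-m-1)(\alpha-1)}-1)\mathsf{E}_{\alpha\alpha}[m]$ plus $h^{-1}\sum_\alpha\mathsf{E}_{\alpha\alpha}[m]$. The quadratic part $-\sigma_2$ of $-w_2$ comes entirely from the second term, via the relation. Imposing naive tracelessness loses it.

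The paper avoids any global operator identity. It proves $r=1$ this way (Proposition~\ref{prop:main}). It then inducts on $(N,r)$ using $T_r^{N+1}(z)=T_r^N(qz)+\mu_{11}(z)T_{r-1}^N(z)$ and $w_r^{N+1}=w_r^N+e_{11}w_{r-1}^N+\partial w_{r-1}^N$. These give the exact identity \eqref{eq:mainproof2}, in which $h^{-1}(1-q^{-m-r})\mathfrak{T}_{r-1}^N[m]$ produces the $\partial_z w_r^N$ term. Your surjectivity and Poisson arguments are fine once the limit of $\mathfrak{T}_r$ is established.
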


\noindent Finally, the following diagram summarizes the discussion above:

\begin{equation*}
\begin{tikzcd}[column sep=large, row sep=large]
\Wbf_q(LSL_N) \arrow[d, dashed, "\text{\scriptsize Fourier modes}"'] & & \Wbf(\sfr\lfr_N) \arrow[d, dashed, "\text{\scriptsize Fourier modes}"]\\
\Wcal_q(LSL_N)
  \arrow[r, hook, "\iota"]
  \arrow[d, hook]
&
\widetilde{\Wcal}_h(\sfr\lfr_N)
  \supset
\Wcal_h(\sfr\lfr_N)
  \arrow[r, two heads, "\lim_{h\to 0}"]
  \arrow[d, hook]
&
\Wcal(\sfr\lfr_N)
  \arrow[d, hook]
\\
\Vcal_q(H)
  \arrow[r, hook, "\iota"]
&
\widetilde{\Vcal}_h(\hfr)
  \supset
\Vcal_h(\hfr)
  \arrow[r, two heads, "\lim_{h\to 0}"]
&
\Vcal(\hfr).
\end{tikzcd}
\end{equation*}

\vskip 3mm

\subsection{Organization of the paper}
This paper is organized as follows. In Section~\ref{sec:affine}, we review the affine PVA $\Vbf(\sfr\lfr_N)$ and the deformed affine Poisson algebra $\Vbf_q(LSL_N)$. In Section~\ref{subsec:affine PVA}, we recall the definition and basic properties of affine PVAs, while Section~\ref{subsec:deform_affine} introduces the deformed affine Poisson algebra and presents explicit formulas for its Poisson brackets. Although these brackets have already appeared in other literature (for example, \cites{FRS98, SS98}) in a geometric form based on the $R$-matrix formalism, we explain how to derive the Poisson brackets among generators from this description. Furthermore, since we could not find a complete proof in the existing literature, we provide a proof of the well-definedness of the Poisson bracket in Appendix~\ref{AppendixB}. In Section~\ref{sec:W-alg}, we study the algebraic structures of the classical $W$-algebra $\Wbf(\sfr\lfr_N)$ and the deformed $W$-algebra $\Wbf_q(LSL_N)$. Section~\ref{subsec:classical W} reviews the structure of $\Wbf(\sfr\lfr_N)$ and its Miura transformation. In Sections~\ref{subsec:deformed W}--\ref{subsec:q-miura}, we carefully examine the Drinfeld--Sokolov construction for $\Wbf_q(LSL_N)$, together with explicit formulas for its generators and Miura transformation. In addition, Appendix~\ref{AppendixC} contains a complete proof of the well-definedness of the Poisson bracket on $\Wbf_q(LSL_N)$. Finally, in Sections~\ref{sec:affinelimit} and ~\ref{sec:W-w}, we explain how the affine PVA $\Vbf(\sfr\lfr_N)$ and the classical $W$-algebra $\Wbf(\sfr\lfr_N)$ arise as limits of the deformed structures $\Vbf_q(LSL_N)$ and $\Wbf_q(LSL_N)$, respectively.

\subsection*{Acknowledgments}
The first author thanks Sin-Myung Lee for a helpful discussion in the early stages of this work.

\section{Affine Poisson vertex algebra and deformed affine Poisson algebra} \label{sec:affine}
In this section, we introduce affine Poisson vertex algebras and deformed affine Poisson algebras. They provide the appropriate setting for Hamiltonian reduction. The affine Poisson vertex algebras are equipped with differential and Poisson structures, while the deformed affine Poisson algebras are equipped with $q$-difference and Poisson structures.

\subsection{Affine Poisson Vertex Algebra} \label{subsec:affine PVA}
Let $\gfr$ be a finite-dimensional simple Lie algebra over $\CC$ and fix a symmetric invariant bilinear form $\langle \cdot, \cdot \rangle$. Consider the corresponding affine Lie algebra $\widehat{\gfr} = \gfr \otimes \CC[t, t^{-1}] \oplus \CC K$. For $X, Y \in \gfr$ and $m, n \in \ZZ$, the commutation relations are given by
\begin{equation} \label{eq:affine bracket}
    [X[m], Y[n]] = [X, Y][m+n] + m\delta_{m+n, 0} \langle X, Y \rangle K,
\end{equation}
where $X[m] := X \otimes t^m$ and $K$ is a central element. The bracket \eqref{eq:affine bracket} can be rewritten in terms of the formal generating series 
\begin{equation}
    X(z) := \sum_{n \in \ZZ} X[n]z^{-n-1},
\end{equation}
for $X \in \gfr$ as follows: 
\begin{equation}\label{eq:affinePoisson}
    [X(z), Y(w)] = [X, Y](w) \delta(z, w) + \langle X, Y \rangle \partial_w \delta(z, w).
\end{equation}
Here
\begin{equation*}
    \delta(z, w) := \sum_{n \in \ZZ} z^{-n-1}w^n
\end{equation*}
denotes the formal delta distribution. The \emph{affine Poisson vertex algebra (affine PVA)} is the commutative differential algebra
\begin{equation*}
\Vbf(\gfr)=\mathbb{C}[\partial^nX(z)\mid X\in \gfr, \, n\in \mathbb{Z}_{\ge 0}]
\end{equation*}
endowed with the Poisson bracket induced from \eqref{eq:affinePoisson}. More precisely, $\{X(z), Y(w)\} = [X, Y](w) \delta(z, w) + \langle X, Y \rangle \partial_w \delta(z, w)$ for $X,Y\in \gfr$, and the bracket between any elements $X_1(z), X_2(z), X_3(z) \in \Vbf(\gfr)$ is defined by the following properties: 
\begin{enumerate}
    \item (Leibniz rule) \quad 
    $ \left\{ X_1(z), X_2(w)X_3(w)\right\} = \left\{ X_1(z), X_2(w) \right\} X_3(w) + \left\{X_1(z), X_3(w) \right\} X_2(w)$;
    \item (Compatibility with derivation) 
    \begin{equation*}\left\{ \partial X_1(z), X_2(w)\right\} = \partial_z \left\{X_1(z), X_2(w)\right\}, \quad \left\{ X_1(z), \partial X_2(w) \right\} = \partial_w \left\{ X_1(z), X_2(w)\right\}.\end{equation*}
\end{enumerate}
In addition, one can easily check the following properties 
\begin{enumerate}
    \item (Skew-symmetry) 
    $ \left\{X_1(z), X_2(w)\right\} = -\left\{X_2(w), X_1(z)\right\};$
    \item (Jacobi identity) 
    $\left\{X_1(z), \left\{X_2(w), X_3(u)\right\} \right\} - \left\{ X_2(w), \left\{ X_1(z), X_3(u)\right\}\right\} = \left\{ \left\{ X_1(z), X_2(w)\right\}, X_3(u)\right\},$
\end{enumerate}
from the corresponding properties for the brackets between generators.

This framework is formally captured in the $\lambda$-bracket language of PVAs, extensively developed in a series of papers \cites{BDSK09, DSK13, DSKV13, DSKV14, DSKV15, DSKV16a, DSKV16b, DSKV18}. For example, by encoding $\partial_w^n \delta(z, w)$ as the formal variable $\lambda^n$, the compatibility with the derivation is expressed by the sesquilinearity
\begin{equation*}
    \left\{ \partial X {}_\lambda Y \right\} = -\lambda \left\{ X {}_\lambda Y \right\}, \left\{ X {}_\lambda \partial Y\right\} = (\partial + \lambda) \left\{ X {}_\lambda Y \right\}.
\end{equation*}
of the $\lambda$-bracket.
In this paper, we use the formal generating series notation instead of $\lambda$-brackets, since it is more convenient for comparison with the deformed Poisson algebras.

\subsection{Deformed affine Poisson algebra} \label{subsec:deform_affine}

In this subsection, we introduce the deformed affine Poisson algebra associated with the loop Lie group $LSL_N$, originally introduced in \cite{FRS98} and \cite{SS98}. Its Poisson structure is induced from the Poisson--Lie group structure on $LSL_N$ associated with the classical $r$-matrix of $L\sfr\lfr_N$. Based on these results, we provide an explicit algebraic formalism.

\vskip 2mm
Throughout this paper, $N \ge 2$ denotes an integer and $q$ denotes a nonzero complex number that is not a root of unity. For an element $A(t) \in LSL_N$, we view each of its entries as an element of the Laurent polynomial ring  $\CC[t, t^{-1}]$. Consider the functionals $\mu_{ij}[n] : LSL_N \to \CC$, which extract the Fourier mode of the $(i, j)$-th entry:
\begin{equation*}
    \mu_{ij}[m](A(t)) = A_{ij}[m], \quad \text{ where } \quad A_{ij}(t) = \sum_{m \in \ZZ} A_{ij}[m]t^{-m}.
\end{equation*}
Let us define the formal generating series
\begin{equation*}
\mu_{ij}(z) := \sum_{m \in \ZZ}\mu_{ij}[m]z^{-m-1}
\end{equation*}
for $1\le i, j \le N$ and denote 
\begin{equation*}
    \det(z) = \sum_{n \in \ZZ} \det[m] z^{-m-N} := \sum_{\sigma \in \Sfr_N} (-1)^{\mathrm{sgn}(\sigma)} \mu_{1 \sigma(1)}(z) \mu_{2 \sigma(2)}(z) \cdots \mu_{N \sigma(N)}(z).
\end{equation*}
We now introduce one of the main objects of this paper.

\begin{defn}\label{def:affine} \cite{FRS98}
    The \emph{deformed affine Poisson algebra} associated with $LSL_N$ is the commutative algebra
    \begin{equation*}
        \Vbf_q(LSL_N) := \Vbf_q(L\Mat_N) \left/\big\langle {\det}(q^mz) -1 \bigm| m \in \ZZ \big\rangle\right. ,
    \end{equation*}
    where $\Vbf_q(L\Mat_N)$ is the free commutative algebra generated by the formal generating series $\mu_{ij}(q^mz)$:
    \begin{equation*}
    \Vbf_q(L\Mat_N) := \CC\left[\mu_{ij}(q^mz) \mid m \in \ZZ, 1 \le i, j \le N\right].
    \end{equation*}
\end{defn}

Let $D$ be the $q$-difference operator acting on a formal generating series 
    \begin{equation*}
        f(z) = \sum_{m\in \ZZ} f[m]z^{-m-1}
    \end{equation*}
via
    \begin{equation*}
        (Df)(z) := f(qz) = \sum_{m\in \ZZ} f[m]q^{-m-1}z^{-m-1}.
    \end{equation*}
A commutative algebra closed under the action of $D$ is called a \emph{$q$-difference algebra}. Both $\Vbf_q(LSL_N)$ and $\Vbf_q(L\Mat_N)$ are $q$-difference algebras. For convenience, we denote the image of $\mu_{ij}(z) \in \Vbf_q(L\Mat_N)$ under the quotient map
    \begin{equation*}
        \Vbf_q(L\Mat_N) \to \Vbf_q(LSL_N)
    \end{equation*}
simply by $\mu_{ij}(z)$.

For the remainder of this section, we define the Poisson bracket on $\Vbf_q(LSL_N)$. Note that the Poisson bracket given in \eqref{eq:PoissonBracketDefinition} is motivated by the geometry of Poisson-Lie groups; see Section~\ref{subsec:B1}. To this end, we introduce the following notation:
\begin{equation*}
\begin{aligned}
    &\delta_{\alpha>\beta}= \begin{cases}
        1 \quad &\text{if } \alpha>\beta \\ 0 \quad &\text{otherwise}
    \end{cases}, \quad \varepsilon_{\alpha\beta} = \delta_{\alpha>\beta} - \delta_{\beta>\alpha} = \begin{cases}
        1 \quad &\text{if  } \alpha>\beta,\\
        0 \quad &\text{if  } \alpha=\beta,\\
        -1 \quad &\text{if  } \alpha<\beta.
    \end{cases}\\
    &\Psi_{\alpha \beta}[r] = \frac{q^{r|\alpha-\beta|_N}}{1-q^{Nr}}, \quad \Omega_{\alpha\beta}[r]=\Psi_{\alpha\beta}[r]-\delta_{\alpha\beta}, \\
    &\Phi_{\alpha \beta}[r] = \Psi_{\alpha \beta}[r] + \Omega_{\alpha \beta}[r] = \begin{cases}
        \frac{2q^{(\alpha-\beta)r}}{1-q^{Nr}} \quad &\text{if } \alpha > \beta, \\
        \frac{1+q^{Nr}}{1-q^{Nr}} \quad &\text{if } \alpha = \beta,\\
        \frac{2q^{(N+\alpha-\beta)r}}{1-q^{Nr}} \quad &\text{if } \alpha < \beta,
    \end{cases}
    \end{aligned}
\end{equation*}
where $|\alpha-\beta|_N \in \{0, 1, \dots, N-1\}$ denotes $\alpha-\beta$ if $\alpha \ge \beta$ and $N+\alpha-\beta$ if $\alpha<\beta$. Furthermore, we define
\begin{equation} \label{eq:C}
    C_{ijkl}[r] = \frac{1}{2}\Phi_{ik}[r] + \frac{1}{2}\Phi_{jl}[r] - q^{r}\Psi_{jk}[r] - q^{-r}\Omega_{il}[r].
\end{equation}
Now consider the bracket on the generators given by
\begin{equation} \label{eq:PoissonBracketDefinition}
    \begin{aligned}
        \{\mu_{ij}(z), \mu_{kl}(w)\}&= \sum_{m, n \in \ZZ} \{\mu_{ij}[m], \mu_{kl}[n]\}z^{-m-1}w^{-n-1}\\
        &=C_{ijkl}\left(\frac{w}{z}\right) \mu_{ij}(z)\mu_{kl}(w) +\frac{1}{2}(\varepsilon_{ik}+\varepsilon_{lj})\delta\left(\frac{w}{z}\right)\mu_{kj}(z)\mu_{il}(w) \\
        &\quad -\delta_{jk}\sum_{\alpha > j} \delta\left(\frac{qw}{z}\right)\mu_{i\alpha}(z)\mu_{\alpha l}(w) + \delta_{il} \sum_{\alpha > i} \delta\left(\frac{w}{qz}\right) \mu_{\alpha j}(z) \mu_{k\alpha}(w),
    \end{aligned}
\end{equation}
where $C_{ijkl}(x) = \sum_{r \in \ZZ} C_{ijkl}[r]x^{r}$ and $\delta\left(\frac{w}{z}\right)=\sum_{r\in \mathbb{Z}}\left( \frac{w}{z}\right)^r$ and we extend this bracket to the entire algebra $\Vbf_q(L\Mat_N)$ by the Leibniz rule:
\begin{equation*}
    \{f_1(z), f_2(w)f_3(w)\} = \{f_1(z), f_2(w)\}f_3(w) + \{f_1(z), f_3(w)\}f_2(w).
\end{equation*}
Note that the property
\begin{equation*}
    \{D_zf(z), g(w)\} = D_z\{f(z), g(w)\}, \quad \{f(z), D_wg(w)\} = D_w\{f(z), g(w)\}
\end{equation*}
follows immediately from the definition of the Poisson bracket on Fourier modes.
In the rest of this subsection, we show that the bracket \eqref{eq:PoissonBracketDefinition} induces a well-defined Poisson structure on $\Vbf_q(L\Mat_N)$ and $\Vbf_q(LSL_N)$.

Note that this Poisson bracket is slightly different from that of \cites{FRS98, SS98}; the difference merely comes from the convention for $\nfr_+$ and $\nfr_-$, namely, which one is taken to be upper triangular or lower triangular. See Example~\ref{ex:N=2} and Example~\ref{ex:N=3} below.

\begin{ex}\label{ex:N=2}
    The Poisson bracket on $\Vbf_q(L\Mat_2)$ is defined as follows: 
    \begin{equation*}
        \begin{aligned}
            \{\mu_{11}(z), \mu_{11}(w)\}&= \sum_{r\in\ZZ}\frac{1-q^r}{1+q^r}\left(\frac{w}{z}\right)^r \mu_{11}(z)\mu_{11}(w)-\delta\left(\frac{qw}{z}\right)\mu_{12}(z)\mu_{21}(w)+\delta\left(\frac{w}{qz}\right)\mu_{21}(z)\mu_{12}(w),\\
            \{\mu_{11}(z), \mu_{12}(w)\}&=-\delta\left(\frac{qw}{z}\right) \mu_{12}(z)\mu_{22}(w),\\
            \{\mu_{11}(z), \mu_{21}(w)\}&=\delta\left(\frac{w}{qz}\right) \mu_{21}(z)\mu_{22}(w),\\
            \{\mu_{11}(z), \mu_{22}(w)\}&=-\sum_{r \in \ZZ} \frac{1-q^r}{1+q^r}\left(\frac{w}{z}\right)^r \mu_{11}(z)\mu_{22}(w),\\
            \{\mu_{12}(z), \mu_{12}(w)\}&=0,\\
            \{\mu_{12}(z), \mu_{21}(w)\}&=-\delta\left(\frac{w}{z}\right)\mu_{11}(z)\mu_{22}(w)+\delta\left(\frac{w}{qz}\right)\mu_{22}(z)\mu_{22}(w),\\
            \{\mu_{12}(z), \mu_{22}(w)\}&= -\delta\left(\frac{w}{z}\right)\mu_{12}(z)\mu_{22}(w),\\
            \{\mu_{21}(z), \mu_{21}(w)\}&= 0,\\
            \{\mu_{21}(z), \mu_{22}(w)\}&= \delta\left(\frac{w}{z}\right) \mu_{21}(z)\mu_{22}(w),\\
            \{\mu_{22}(z), \mu_{22}(w)\}&= \sum_{r\in\ZZ} \frac{1-q^r}{1+q^r} \left(\frac{w}{z}\right)^r \mu_{22}(z)\mu_{22}(w).
        \end{aligned}
    \end{equation*}
\end{ex}
    \begin{ex}\label{ex:N=3}
        We also provide a few non-trivial Poisson brackets for the case $\Vbf_q(L\Mat_3)$:
    \begin{equation*}
        \begin{aligned}
            &\{\mu_{12}(z), \mu_{23}(w)\}= -\sum_{r\in\ZZ} \frac{1+2q^r}{1+q^r+q^{2r}}\left(\frac{w}{z}\right)^r \mu_{12}(z)\mu_{23}(w) - \delta\left(\frac{qw}{z}\right) \mu_{13}(z)\mu_{33}(w),\\
            &\{\mu_{11}(z), \mu_{31}(w)\}\\
            &\quad = \sum_{r\in\ZZ} \frac{q^r-q^{2r}}{1+q^r+q^{2r}}\left(\frac{w}{z}\right)^r \mu_{11}(z)\mu_{31}(w) + \delta\left(\frac{w}{qz}\right)\mu_{21}(z)\mu_{32}(w) + \delta\left(\frac{w}{qz}\right) \mu_{31}(z)\mu_{33}(w),\\
            &\{\mu_{22}(z), \mu_{22}(w)\}\\
            &\quad = \sum_{r \in \ZZ} \frac{1-q^{2r}}{1+q^r+q^{2r}} \left(\frac{w}{z}\right)^r \mu_{22}(z)\mu_{22}(w) - \delta\left(\frac{qw}{z}\right) \mu_{23}(z)\mu_{32}(w) +\delta\left(\frac{w}{qz}\right) \mu_{32}(z)\mu_{23}(w).
        \end{aligned}
    \end{equation*}
    \end{ex}

\begin{rem}\label{rem:indep_d}
    Observe that for any $d \in \ZZ$, the choice
    \begin{equation}
        \mu_{ij}^{(d)}(z) = \sum_{m \in \ZZ} \mu_{ij}[m]z^{-m-d}
    \end{equation}
    does not affect the Poisson structure. For example, in $\Vbf_q(L\Mat_2)$,
    \begin{equation*}
    \begin{aligned}
        \{\mu_{11}^{(d)}(z), \mu_{22}^{(d)}(w)\} &= \sum_{m, n \in \ZZ}\{\mu_{11}[m], \mu_{22}[n]\}z^{-m+d}w^{-n-d} \\
        &= -\sum_{m, n, r \in \ZZ} \frac{1-q^r}{1+q^r}\left(\frac{w}{z}\right)^r\mu_{11}[m-r]\mu_{22}[n+r]z^{-m+r+d}w^{-n-r+d}\\
        &= -\sum_{r\in\ZZ} \frac{1-q^r}{1+q^r}\left(\frac{w}{z}\right)^r \mu_{11}^{(d)}(z)\mu_{22}^{(d)}(w),
    \end{aligned}
    \end{equation*}
    which coincides with the computation in Example~\ref{ex:N=2}. 
\end{rem}

To see the well-definedness of the Poisson bracket on $\Vbf_q(L\Mat_N)$, we need to show the following properties: 
\begin{enumerate}
    \item (Skew-symmetry)
    $\{f_1(z), f_2(w)\} = -\{f_2(w), f_1(z)\}$;
    \item (Jacobi identity)
    $\{f_1(z), \{f_2(w), f_3(u)\}\} - \{f_2(w), \{f_1(z), f_3(u)\}\} = \{\{f_1(z), f_2(w)\}, f_3(u)\},$
\end{enumerate}
for any $f_1(z), f_2(z), f_3(z) \in \Vbf_q(L\Mat_N)$.
We will show in Proposition~\ref{prop:Poisson} that our bracket \eqref{eq:PoissonBracketDefinition} exactly matches the Poisson bracket \eqref{eq:PoissonBracketDef} evaluated on $\mu_{ij}[m]$ and $\mu_{kl}[n]$, which is derived from the Poisson-Lie group structure of $LSL_N$. In addition, we prove in Proposition~\ref{prop:Jacobi} the bracket \eqref{eq:PoissonBracketDef} satisfies skew-symmetry and the Jacobi identity. We also note that by the Leibniz rule and compatibility with the $q$-difference operator, skew-symmetry and the Jacobi identity for the brackets between generators imply those in the full algebra.

To compare the brackets \eqref{eq:PoissonBracketDefinition} and  \eqref{eq:PoissonBracketDef}, we first need to establish the following setup. Let $L\gfr = L\sfr\lfr_N = \sfr\lfr_N \otimes \CC[t, t^{-1}]$ be the loop Lie algebra endowed with the nondegenerate symmetric invariant bilinear form
\begin{equation*}
    \langle X[m], Y[n] \rangle := \tr(XY) \delta_{m+n, 0}
\end{equation*}
where $X[m] := X \otimes t^{m}$, $Y[n] := Y \otimes t^{n}$. We decompose $L\gfr$ as
\begin{equation*}
L\gfr = L\nfr_+ \oplus L\hfr \oplus L\nfr_-,
\end{equation*}
where $L\nfr_+$ and $L\nfr_-$ denote the subalgebras of strictly upper and strictly lower triangular matrices, respectively, and $L\hfr$ denotes the subalgebra of diagonal matrices. Let $e_{ij}$ denote the matrix with $1$ in the $(i,j)$-entry and $0$ otherwise. We define a linear map $\theta : L\hfr \to L\hfr$ by
\begin{equation*}
\theta\left(\sum_{i=1}^Nc_ie_{ii}[n]\right) = q^n\left(\sum_{i=1}^{N-1} c_{i+1}e_{ii}[n] + c_1e_{NN}[n]\right), \quad \sum_{i=1}^N c_i = 0.
\end{equation*}
It is straightforward to verify that $\theta$ satisfies the following properties for any $X, Y \in L\hfr$:
    \begin{equation*}
    \begin{aligned}
        &\det(I-\theta) \ne 0;\\
        &\langle\theta(X), \theta(Y) \rangle = \langle X, Y\rangle.
    \end{aligned}
    \end{equation*}
By Lemma~\ref{lemma:AppendixA1} and Lemma~\ref{lemma:AppendixA2}, the operators $R_+, R_-$ satisfy the classical Yang-Baxter equation, and $R$ satisfies the modified classical Yang-Baxter equation, where
\begin{equation*}
    R = P_+ - P_- + \frac{I+\theta}{I-\theta}P_0, \quad R_+ = P_+ + \frac{I}{I-\theta}P_0, \quad R_- = -P_- + \frac{\theta}{I-\theta}P_0.
\end{equation*}
Here, $P_+, P_0, P_-$ are the projections from $L\gfr$ onto $L\nfr_+, L\hfr, L\nfr_-$, respectively. 

For later use, we summarize the following explicit computations. Let $X = \sum_{i=1}^N c_ie_{ii}[n] \in L\hfr$. Since the images of $X$ under the operators are also diagonal matrices, it suffices to describe their $i$-th diagonal entries:
    \begin{equation*}
        \left(\frac{I}{I-\theta}X\right)_{ii} = \frac{1}{1-q^{Nn}}\sum_{j=1}^N c_j q^{n|j-i|_N}t^n,
    \end{equation*}
    \begin{equation*}
        \left(\frac{\theta}{I-\theta}X\right)_{ii} = \frac{1}{1-q^{Nn}} \sum_{j=1}^N c_j\left( q^{n|j-i|_N} - \delta_{ij}(1-q^{Nn})\right) t^n,
    \end{equation*}
    and
    \begin{equation*}
        \left(\frac{I+\theta}{I-\theta}X\right)_{ii} = \frac{1}{1-q^{Nn}} \sum_{j=1}^N c_j\left( 2q^{n|j-i|_N} - \delta_{ij}(1-q^{Nn}) \right)t^n.
    \end{equation*}
    In addition, the explicit formulas for the left and right gradients of the functionals $\mu_{ij}[m]$, corresponding to the matrix expressions in \eqref{eq:nabla}, are given below. We regard these gradients $L\gfr$-valued formal series whose coefficients are Fourier modes of elements of $\Vbf_q(L\Mat_N)$:
\begin{equation*}
\begin{aligned}
    \nabla\mu_{ij}[m] &:= \sum_{\alpha=1}^N \sum_{r \in \ZZ} \left(-\frac{1}{N}\right) \mu_{ij}[m-r] \otimes e_{\alpha \alpha}[r] + \sum_{\alpha=1}^N \sum_{r \in \ZZ}\mu_{\alpha j}[m-r] \otimes e_{\alpha i}[r],\\
    \nabla'\mu_{ij}[m] &:= \sum_{\alpha=1}^N \sum_{r \in \ZZ} \left(-\frac{1}{N}\right) \mu_{ij}[m-r] \otimes e_{\alpha \alpha}[r] + \sum_{\alpha=1}^N \sum_{r \in \ZZ}\mu_{i\alpha}[m-r] \otimes e_{j \alpha}[r].
\end{aligned}
\end{equation*}
We naturally extend the linear operators $R, R_\pm$ and the bilinear form $\langle \cdot, \cdot \rangle$ on $L\gfr$ to those on the $L\gfr$-valued formal series. Let $\tau : L\gfr \to L\gfr$ be the Lie algebra automorphism given by $\tau(X[n]) = q^nX[n]$ for $X \in \sfr\lfr_N$. With this setup, the Poisson bracket originally defined in \eqref{eq:PoissonBracketDef} takes the following form:
\begin{equation} \label{eq:Poisson_R}
\begin{aligned}
    \{\mu_{ij}[m], \mu_{kl}[n] \} &= \frac{1}{2}\langle R\nabla \mu_{ij}[m], \nabla\mu_{kl}[n] \rangle +\frac{1}{2}\langle R\nabla' \mu_{ij}[m], \nabla' \mu_{kl}[n]\rangle\\
    &\quad -\langle (\tau \circ R_+)\nabla' \mu_{ij}[m], \nabla \mu_{kl}[n]\rangle-\langle (R_- \circ \tau^{-1})\nabla \mu_{ij}[m], \nabla' \mu_{kl}[n]\rangle.
    \end{aligned}
\end{equation}
As we mentioned, the skew-symmetry and Jacobi identity of the bracket \eqref{eq:Poisson_R} were established in Proposition~\ref{prop:Jacobi}. Since the following proposition shows that \eqref{eq:PoissonBracketDefinition} coincides with \eqref{eq:Poisson_R}, it follows that \eqref{eq:PoissonBracketDefinition} also satisfies skew-symmetry and the Jacobi identity.

\begin{prop}\label{prop:Poisson}
    Let $R, R_+, R_-$ and $\tau$ be defined as above. The Poisson bracket of $\mu_{ij}[m]$ and $\mu_{kl}[n]$ defined in \eqref{eq:Poisson_R} coincides with that given in \eqref{eq:PoissonBracketDefinition}. Hence \eqref{eq:PoissonBracketDefinition} is a well-defined Poisson bracket. 
\end{prop}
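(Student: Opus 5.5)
The plan is a direct computation: expand each of the four terms of \eqref{eq:Poisson_R} using the explicit gradients $\nabla\mu_{ij}[m]$, $\nabla'\mu_{ij}[m]$, and match the resulting Fourier-mode coefficients with those of \eqref{eq:PoissonBracketDefinition}. Writing $\nabla\mu_{ij}[m]=\sum_r\big(\sum_\alpha \mu_{\alpha j}[m-r]\otimes e_{\alpha i}[r]-\tfrac1N\mu_{ij}[m-r]\otimes I[r]\big)$, I first note that the scalar part $-\tfrac1N\mu_{ij}\otimes I$ exactly removes the trace of the matrix part. Hence each gradient is the traceless projection of $\sum_\alpha \mu_{\alpha j}\otimes e_{\alpha i}$ (resp.\ $\sum_\alpha\mu_{i\alpha}\otimes e_{j\alpha}$). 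Since $R$, $R_\pm$, $\tau$ act on $L\sfr\lfr_N$ and the pairing is the trace form, all formulas may be computed on $e_{\alpha i}$ directly, with the diagonal part extended to $L\mathfrak{gl}_N$ appropriately. I would check that the $\frac{1}{N}$-corrections cancel: the operators $\frac{I}{I-\theta}$ etc.\ as written act on the vector $(c_j)$ with $\sum c_j=0$, and the explicit formulas are extended to arbitrary $c$, so the identity component contributes only terms that sum to zero against traceless partners.

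Next I split according to the triangular decomposition. For the off-diagonal components, $R=P_+-P_-$, $R_+=P_+$, $R_-=-P_-$. The pairing $\langle e_{\alpha i}[r],e_{k\beta}[s]\rangle=\delta_{ik}\delta_{\alpha\beta}\delta_{r+s,0}$ selects the matching entries. The terms $\frac12\langle R\nabla,\nabla\rangle+\frac12\langle R\nabla',\nabla'\rangle$ restricted to $L\nfr_\pm$ give $\frac12(\varepsilon_{ik}+\varepsilon_{lj})\delta(w/z)\mu_{kj}(z)\mu_{il}(w)$. Here the sign of $\varepsilon$ records whether $e_{ki}$ is upper or lower triangular, and the $r$-independence produces $\delta(w/z)$. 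The term $-\langle\tau P_+\nabla',\nabla\rangle$ pairs $e_{j\alpha}$ ($\alpha>j$) with $e_{\alpha' k}$, which forces $j=k$. Because of the factor $q^{r}$, it yields $-\delta_{jk}\sum_{\alpha>j}\delta(qw/z)\mu_{i\alpha}(z)\mu_{\alpha l}(w)$. Similarly $-\langle(-P_-)\tau^{-1}\nabla,\nabla'\rangle$ gives the $\delta_{il}\sum_{\alpha>i}\delta(w/qz)$ term.

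For the Cartan components, the diagonal parts of the gradients are $\mu_{ij}\otimes e_{ii}$ (from $\nabla$, $\alpha=i$) and $\mu_{ij}\otimes e_{jj}$ (from $\nabla'$). Inserting the explicit diagonal formulas for $\frac{I+\theta}{I-\theta}$, $\frac{I}{I-\theta}$, $\frac{\theta}{I-\theta}$ with mode $r$ produces the four coefficients $\frac12\Phi_{ik}[r]$, $\frac12\Phi_{jl}[r]$, $-q^{r}\Psi_{jk}[r]$ (from $\tau\circ R_+$, which pairs the $\nabla'$-index $j$ against the $\nabla$-index $k$), and $-q^{-r}\Omega_{il}[r]$ (from $R_-\circ\tau^{-1}$). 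This is precisely $C_{ijkl}[r]$, multiplying $\mu_{ij}[m-r]\mu_{kl}[n+r]$, i.e.\ $C_{ijkl}(w/z)\mu_{ij}(z)\mu_{kl}(w)$ after reindexing as in Remark~\ref{rem:indep_d}. The ordering of indices in $\Psi_{\alpha\beta}$, namely $|\alpha-\beta|_N$ versus $|\beta-\alpha|_N$, must be tracked against the definition of $\theta$ and the sign of $r$ after pairing modes $r$ with $-r$.

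The main obstacle I expect is this Cartan bookkeeping: verifying the $\frac1N$ trace terms cancel consistently, and getting the correct index order and the sign of the mode ($q^{\pm r}$) in $\Psi,\Omega$ so that $C_{ijkl}$ appears exactly, including the $\delta_{il}$ shift distinguishing $\Omega$ from $\Psi$. I would test it against Example~\ref{ex:N=2}, e.g.\ recover $\frac{1-q^r}{1+q^r}$ for $\{\mu_{11},\mu_{11}\}$. Once the coefficients match, the well-definedness follows from Proposition~\ref{prop:Jacobi}, which establishes skew-symmetry and the Jacobi identity for \eqref{eq:Poisson_R}.
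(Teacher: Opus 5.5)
Your proposal is correct and is essentially the paper's proof. You separate the off-diagonal pairings, which give the $\frac12(\varepsilon_{ik}+\varepsilon_{lj})$, $\delta(qw/z)$ and $\delta(w/qz)$ terms, from the diagonal pairings, where the explicit formulas for $\frac{I+\theta}{I-\theta}$, $\frac{I}{I-\theta}$, $\frac{\theta}{I-\theta}$ assemble into $C_{ijkl}[r]$, and you then invoke Proposition~\ref{prop:Jacobi}. One correction on the step you flagged: the $\frac1N$-corrections do not drop out pairing by pairing. The four Cartan pairings give $\Phi_{ik}[r]-\frac1N S_\Phi$, $\Phi_{jl}[r]-\frac1N S_\Phi$, $\Psi_{jk}[r]-\frac1N S_\Psi$ and $\Omega_{il}[r]-\frac1N S_\Omega$, where $S_X:=\sum_{\alpha}X_{1\alpha}[r]$; the constants cancel only in the total with weights $\frac12,\frac12,-q^{r},-q^{-r}$, because $S_\Omega=q^{r}S_\Psi$ and $S_\Phi=S_\Psi+S_\Omega$.
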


\begin{proof}
    The non-zero contributions to the evaluation of the Poisson bracket arise from the pairings $\langle e_{\alpha \beta}[r], \allowbreak e_{\beta \alpha}[-r]\rangle$ for $\alpha \ne \beta$, and from $\langle e_{\alpha \alpha}[r], e_{\alpha \alpha}[-r]\rangle$, which arise from the gradients separately. The former case arising in the computation of $\{\mu_{ij}[m], \mu_{kl}[n]\}$ in \eqref{eq:Poisson_R} yields the following:
\begin{equation*}
    \begin{aligned}
        &\frac{1}{2} (\varepsilon_{ik}+\varepsilon_{lj})\sum_{r \in \ZZ} \mu_{kj}[m-r]\mu_{il}[n+r]\\
        & -\delta_{jk} \sum_{\alpha>j} \sum_{r\in \ZZ} q^{r} \mu_{i\alpha}[m-r]\mu_{\alpha l}[n+r] + \delta_{il} \sum_{\alpha > i} \sum_{r \in \ZZ} q^{-r} \mu_{\alpha j}[m-r]\mu_{k \alpha}[n+r].
    \end{aligned}
\end{equation*}
    For the latter case, let $X = \sum_{\alpha=1}^N \left( \delta_{i\alpha} - \frac{1}{N} \right) e_{\alpha \alpha}[r]$. Evaluating $X$ on the linear operators yields the following identities:
    \begin{equation*}
    \begin{aligned}
        \frac{I}{I-\theta} X &= \sum_{\alpha=1}^N\left(-\frac{1+q^r+\dots+q^{(N-1)r}}{N(1-q^{Nr})} + \Psi_{i\alpha}[r] \right) e_{\alpha \alpha}[r],\\
        \frac{\theta}{I-\theta} X &= \sum_{\alpha=1}^N\left(- \frac{q^r+\dots+q^{(N-1)r} + q^{Nr}}{N(1-q^{Nr})} + \Omega_{i\alpha}[r] \right)e_{\alpha\alpha}[r],\\
        \frac{I+\theta}{I-\theta}X &=\sum_{\alpha=1}^N\left(- \frac{1+2q^r+\dots+2q^{(N-1)r}+q^{Nr}}{N(1-q^{Nr})} + \Phi_{i\alpha}[r] \right) e_{\alpha\alpha}[r].
    \end{aligned}
    \end{equation*}
Taking the bilinear form and summing over $r \in \ZZ$, we conclude that the latter case gives rise to the term
\begin{equation*}
    \sum_{r \in \ZZ}C_{ijkl}[r]\mu_{ij}[m-r] \mu_{kl}[n+r].
\end{equation*}
Hence \eqref{eq:PoissonBracketDefinition} coincides with \eqref{eq:PoissonBracketDef}.
\end{proof}

Finally, we show the following lemma, which shows that the Poisson bracket $\Vbf_q(L\Mat_N)$ induces a well-defined Poisson bracket on $\Vbf_q(LSL_N).$

\begin{lem}\label{lem:determinant}
For $1\le i,j\le N$, we have  
    $\{\mu_{ij}(z), \det(w)\} = 0$.
\end{lem}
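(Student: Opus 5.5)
We plan to use the $R$-matrix form \eqref{eq:Poisson_R} of the bracket, which Proposition~\ref{prop:Poisson} identifies with \eqref{eq:PoissonBracketDefinition}, rather than expanding the determinant term by term. By the Leibniz rule, $\{\mu_{ij}[m],\det[n]\}$ is given by the same formula \eqref{eq:Poisson_R} with the gradients $\nabla\mu_{kl}[n]$, $\nabla'\mu_{kl}[n]$ replaced by $\nabla\det[n]$, $\nabla'\det[n]$. Here $\nabla F=\sum_{k,l}\frac{\partial F}{\partial\mu_{kl}}\cdot\nabla\mu_{kl}$, and similarly for $\nabla'$. So the whole statement reduces to computing the two gradients of $\det$.

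\textbf{Step 1: both gradients of $\det$ vanish.} Write $\mathrm{adj}$ for the adjugate matrix, formed in the loop sense with coefficients in the algebra. Using cofactor expansion, we obtain
\[\sum_{k,l}\frac{\partial\det(w)}{\partial\mu_{kl}(w)}\mu_{\alpha l}(w)=\sum_{l}\mu_{\alpha l}(w)\,\mathrm{adj}_{lk}\cdots=\delta_{\alpha k}\det(w).\]
In the non-traceless part of $\nabla\det$, the term $\sum_{k,l}\mathrm{adj}(\mu)_{lk}\,\mu_{\alpha l}\otimes e_{\alpha k}$ therefore collapses to $\det\otimes\sum_\alpha e_{\alpha\alpha}$. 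The traceless correction $-\frac1N\sum_{k,l}\mathrm{adj}_{lk}\mu_{kl}\otimes\sum_\alpha e_{\alpha\alpha}=-\det\otimes\sum_\alpha e_{\alpha\alpha}$ cancels it exactly.

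This must be done at the level of Fourier modes. The convolution in $r$ is harmless, because every $e_{\alpha\beta}[r]$ carries the same loop index $r$, and $\det(w)$ is a product of the $\mu$'s at the same point $w$. The same computation, with the roles of rows and columns swapped, gives $\nabla'\det[n]=0$.

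Conceptually, this is just the statement that $\det$ is bi-invariant under $SL_N$: its left and right derivatives lie in the span of the identity, which the projection onto $\sfr\lfr_N$ kills.

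\textbf{Step 2: conclude.} Every term in \eqref{eq:Poisson_R} is bilinear in the pair of gradients and has the gradients of $\det$ in its second slot, so each term vanishes. Hence $\{\mu_{ij}[m],\det[n]\}=0$ for all $m,n$, which is the claim.

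\textbf{Main obstacle.} The only delicate point is making the formal gradient manipulation rigorous in the loop setting. One must check that the Leibniz rule extends the mode formula \eqref{eq:Poisson_R} from generators to the products $\det[n]$, with $\nabla$ acting as a derivation on Fourier modes. If this is awkward, a fallback is a direct check from \eqref{eq:PoissonBracketDefinition} for $N=2$ using Example~\ref{ex:N=2}, followed by organizing the general case by the same cofactor cancellation. I expect the gradient argument to go through cleanly.
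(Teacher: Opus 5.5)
Your argument is correct, but it takes a different route from the paper's. The paper stays with the explicit bracket \eqref{eq:PoissonBracketDefinition}. It expands $\{\mu_{ij}(z),\mu_{1\sigma(1)}(w)\cdots\mu_{N\sigma(N)}(w)\}$ by the Leibniz rule and kills the $C$-terms for each fixed $\sigma$ using $\sum_{x}C_{ijx\sigma(x)}[r]=0$. It then cancels the four remaining families of terms by pairing $\sigma$ with permutations of opposite sign, namely $\sigma\circ(ix)$, $\sigma\circ(\sigma^{-1}(j)x)$, $\sigma\circ(j\alpha)$ and $(i\alpha)\circ\sigma$.

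You instead use Proposition~\ref{prop:Poisson} to pass to the $R$-matrix form \eqref{eq:Poisson_R}. You then show that $\nabla\det[n]=\nabla'\det[n]=0$ already in the free algebra $\Vcal_q(L\Mat_N)$, without using $\det=1$. That is exactly what is needed for $\langle \det-1\rangle$ to be a Poisson ideal.

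The two proofs encode the same cancellations:
\begin{itemize}
\item The paper's identity $\sum_x C_{ijx\sigma(x)}[r]=0$ corresponds to the vanishing of the diagonal part of your gradients; for each $\sigma$, $\sum_x\bigl(e_{xx}-\tfrac1N I_N\bigr)=0$.
\item The paper's sign-reversing pairings correspond to the vanishing of the off-diagonal entries of $\mu\,\mathrm{adj}(\mu)$ and $\mathrm{adj}(\mu)\,\mu$, i.e.\ determinants with a repeated row or column.
\end{itemize}

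Your route does the check once for all four terms of \eqref{eq:Poisson_R} at the same time. It shows the result does not depend on the particular $\theta$, $R_\pm$ and $\tau$, and it explains the lemma as bi-invariance of $\det$. The cost is reliance on Proposition~\ref{prop:Poisson}, plus a routine observation: $\psi\mapsto\{\mu_{ij}[m],\psi\}$ and $\psi\mapsto$ (the right side of \eqref{eq:Poisson_R} with $\nabla\psi,\nabla'\psi$ in the second slot) are both derivations that agree on generators, hence agree everywhere. The paper's computation, by contrast, is self-contained in the explicit formula.

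One small correction: your displayed cofactor identity should read, for each fixed $k$, $\sum_l \mathrm{adj}(\mu)_{lk}\,\mu_{\alpha l}=\delta_{\alpha k}\det$. As written, $k$ is summed on the left but free on the right.
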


\begin{proof}
Expanding the Poisson bracket using the Leibniz rule, for each $\sigma \in \Sfr_N$ we obtain
    \begin{align}
        &\{\mu_{ij}(z), \mu_{1\sigma(1)}(w)\mu_{2\sigma(2)}(w)\cdots\mu_{N\sigma(N)}(w)\}\\
        &=\sum_{x=1}^NC_{ijx\sigma(x)}\left(\frac{w}{z}\right) \mu_{ij}(z)\mu_{1\sigma(1)}(w) \cdots \mu_{N\sigma(N)}(w) \label{eq:determinant1}\\
        &\quad +\frac{1}{2}\sum_{x=1}^N\varepsilon_{ix}\delta\left(\frac{w}{z}\right) \mu_{xj}(z)\mu_{i\sigma(x)}(w)\mu_{1\sigma(1)}(w)\cdots \mu_{i\sigma(i)}(w) \cdots \widehat{\,\mu_{x\sigma(x)}\,}(w)\cdots \mu_{N\sigma(N)}(w)\label{eq:determinant2}\\
        &\quad +\frac{1}{2}\sum_{x=1}^N \varepsilon_{\sigma(x)j} \delta\left(\frac{w}{z}\right) \mu_{xj}(z)\mu_{i\sigma(x)}(w)\mu_{1\sigma(1)}(w)\cdots\mu_{\sigma^{-1}(j)j}(w) \cdots\widehat{\,\mu_{x\sigma(x)}\,}(w)\cdots \mu_{N\sigma(N)}(w)\label{eq:determinant3}\\
        &\quad -\sum_{\alpha > j} \delta\left(\frac{qw}{z}\right)\mu_{i\alpha}(z)\mu_{\alpha \sigma(j)}(w)\mu_{1\sigma(1)}(w) \cdots \mu_{\alpha \sigma(\alpha)}(w) \cdots \widehat{\,\mu_{j\sigma(j)}\,}(w) \cdots \mu_{N\sigma(N)}(w)\label{eq:determinant4} \\
        &\quad +\sum_{\alpha>i} \delta\left(\frac{w}{qz}\right) \mu_{\alpha j}(z) \mu_{\sigma^{-1}(i)\alpha}(w) \mu_{\sigma^{-1}(1)1}(w) \cdots \mu_{\sigma^{-1}(\alpha) \alpha}(w) \cdots \widehat{\,\mu_{\sigma^{-1}(i)i}\,}(w) \cdots \mu_{\sigma^{-1}(N)N}(w)\label{eq:determinant5}.
    \end{align}
    Here, $\widehat{\,\mu_{ij}\,}(w)$ denotes the omission. The term \eqref{eq:determinant1} vanishes because $\sigma$ is a permutation of $\{1, 2, \dots, N\}$, which implies that summing over $\sigma(x)$ is equivalent to summing over $x$. Thus,
    \begin{equation*}
    \begin{aligned}
            \sum_{x=1}^N C_{ijx\sigma(x)}[r] &=\frac{1}{2}\sum_{x=1}^N \Phi_{ix}[r] + \frac{1}{2}\sum_{x=1}^N \Phi_{jx}[r] -q^{r}\sum_{x=1}^N \Psi_{jx}[r] - q^{-r}\sum_{x=1}^N \Omega_{ix}[r] \\
            &\notag= \frac{1+2q^r+\dots+2q^{(N-1)r}+q^{Nr}}{2(1-q^{Nr})}+\frac{1+2q^r+\dots+2q^{(N-1)r}+q^{Nr}}{2(1-q^{Nr})}\\
            &\notag \ -\frac{q^r+\dots+q^{(N-1)r}+q^{Nr}}{1-q^{Nr}}-\frac{1+q^r+\dots+q^{(N-1)r}}{1-q^{Nr}} = 0.
    \end{aligned}
    \end{equation*}
    We now show that the remaining terms cancel pairwise after summing over $\sigma \in \Sfr_N$.
    
    For the terms of type \eqref{eq:determinant2}, fix $x \neq i$ and set $\tau = \sigma \circ (ix)$ where $(ix)$ is the transposition exchanging $i$ and $x$. The term \eqref{eq:determinant2} corresponding to $\tau$ is
    \begin{equation*}
        \frac{1}{2}\sum_{y=1}^N\varepsilon_{iy}\delta\left(\frac{w}{z}\right) \mu_{yj}(z)\mu_{i\tau(y)}(w)\mu_{1\tau(1)}(w)\cdots \mu_{i\tau(i)}(w) \cdots \widehat{\,\mu_{y\tau(y)}\,}(w)\cdots \mu_{N\tau(N)}(w)
    \end{equation*}
    and taking the term with $y = x$, we obtain
    \begin{equation}\label{eq:determinant22}
        \frac{1}{2}\varepsilon_{ix}\delta\left(\frac{w}{z}\right) \mu_{xj}(z)\mu_{i\sigma(i)}(w)\mu_{1\sigma(1)}(w)\cdots \mu_{i\sigma(x)}(w) \cdots \widehat{\,\mu_{x\sigma(i)}\,}(w)\cdots \mu_{N\sigma(N)}(w).
    \end{equation}
    Since $\sigma$ and $\tau$ have opposite sign, \eqref{eq:determinant2} and \eqref{eq:determinant22} cancel. The same argument applies to the terms \eqref{eq:determinant3}, using $\tau = \sigma \circ (\sigma^{-1}(j)x)$.
    
    To show that \eqref{eq:determinant4} cancels, fix $\alpha > j$ and consider $\tau = \sigma \circ (j \alpha)$. The permutations $\tau$ and $\sigma$ have opposite sign and the term \eqref{eq:determinant4} corresponding to $\tau$ is
    \begin{equation*}
    \begin{aligned}
        &-\sum_{\alpha > j} \delta\left(\frac{qw}{z}\right) \mu_{i\alpha}(z)\mu_{\alpha \tau(j)}(w) \mu_{1\tau(1)}(w) \cdots \mu_{\alpha \tau(\alpha)}(w) \cdots \widehat{\,\mu_{j\tau(j)}\,}(w) \cdots \mu_{N \tau(N)}(w)\\
        &=-\sum_{\alpha > j} \delta\left(\frac{qw}{z}\right) \mu_{i\alpha}(z)\mu_{\alpha \sigma(\alpha)}(w) \mu_{1\sigma(1)}(w) \cdots \mu_{\alpha \sigma(j)}(w) \cdots \widehat{\,\mu_{j\tau(j)}\,}(w) \cdots \mu_{N \sigma(N)}(w).
    \end{aligned}
    \end{equation*}
    Thus, this term cancels with the term \eqref{eq:determinant4}. The same argument applies to the terms \eqref{eq:determinant5}, using $\tau = (i\alpha)\circ \sigma$. Hence $\{\mu_{ij}(z), \det(w)\}=0.$
\end{proof}

The following theorem summarizes the result of this section.

\begin{thm}
    The deformed affine Poisson algebra $\Vbf_q(LSL_N)$ is a $q$-difference Poisson algebra with the Poisson bracket induced from the bracket \eqref{eq:PoissonBracketDefinition}.
\end{thm}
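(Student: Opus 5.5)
The theorem assembles facts already established in this section, so the plan is to check, in order, that $\Vbf_q(L\Mat_N)$ is a $q$-difference Poisson algebra and that the bracket descends to the quotient by the ideal $\langle \det(q^mz)-1 \mid m\in\ZZ\rangle$ compatibly with $D$.

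\emph{Step 1: the bracket on $\Vbf_q(L\Mat_N)$.} The bracket is defined on generators by \eqref{eq:PoissonBracketDefinition} and extended by the Leibniz rule. By Proposition~\ref{prop:Poisson}, on Fourier modes it coincides with the $R$-matrix bracket \eqref{eq:Poisson_R}. That bracket satisfies skew-symmetry and the Jacobi identity on generators by Proposition~\ref{prop:Jacobi}, which rests on the (modified) classical Yang--Baxter equations of Lemmas~\ref{lemma:AppendixA1} and~\ref{lemma:AppendixA2}. The relation $\{D_zf(z),g(w)\}=D_z\{f(z),g(w)\}$ holds because the bracket is defined on modes and $D$ only rescales modes. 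Since the bracket is a biderivation, skew-symmetry and Jacobi on generators extend to the whole free commutative algebra in the standard way: the Jacobiator is a triderivation, so it vanishes once it vanishes on generators. Hence $\Vbf_q(L\Mat_N)$ is a $q$-difference Poisson algebra.

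\emph{Step 2: the ideal is a Poisson ideal.} Lemma~\ref{lem:determinant} gives $\{\mu_{ij}(z),\det(w)\}=0$ for all $i,j$. Applying $D_w^m$, which commutes with the bracket, gives $\{\mu_{ij}(z),\det(q^mw)\}=0$; equivalently, every mode $\det[m]$ Poisson-commutes with every $\mu_{ij}[n]$. The Leibniz rule then shows that each $\det[m]$ is Poisson central in $\Vbf_q(L\Mat_N)$. The constant $1$ is also central. Therefore, for any generator $a=\det(q^mz)-1$ of the ideal and any $f,g$,
\[
\{f,ga\}=\{f,g\}a+g\{f,a\}=\{f,g\}a,
\]
so $\{f,ga\}$ lies in the ideal. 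The ideal is thus a Poisson ideal. It is also $D$-stable, because its generating set is closed under $z\mapsto qz$.

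\emph{Step 3: conclusion.} The quotient $\Vbf_q(LSL_N)$ therefore inherits a Poisson bracket and the operator $D$, and their compatibility passes to the quotient.

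The only delicate point is making Step 2 rigorous at the level of modes. The relation $\det(q^mz)=1$ means $\det[k]=\delta_{k,0}$, up to the indexing shift $z^{-m-N}$. One must check that centrality of all modes $\det[k]$ is exactly what Lemma~\ref{lem:determinant} provides. Since the lemma is an identity of formal series in $z$ and $w$, comparing coefficients of $z^{-m-1}w^{-n-N}$ gives this directly.
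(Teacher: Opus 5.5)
Your proposal is correct and follows the same route as the paper. The bracket on $\Vbf_q(L\Mat_N)$ is identified with the $R$-matrix bracket (Proposition~\ref{prop:Poisson}); its skew-symmetry and Jacobi identity come from Proposition~\ref{prop:Jacobi} and extend to the whole algebra by the Leibniz rule and $D$-compatibility; and Lemma~\ref{lem:determinant} shows the bracket descends to the quotient. The only difference is that you spell out why $\{\mu_{ij}(z),\det(w)\}=0$ makes $\langle \det(q^mz)-1 \mid m\in\ZZ\rangle$ a $D$-stable Poisson ideal (centrality of the modes $\det[k]$), a step the paper leaves implicit.
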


\section{Classical and Deformed \texorpdfstring{W}{W}-algebra} \label{sec:W-alg}
In this section, we provide an algebraic construction of the deformed $W$-algebra, whose geometric construction was studied in \cites{FRS98, SS98}.
We demonstrate that the $q$-difference and Poisson structures are naturally induced from the deformed affine Poisson algebra, and that the Miura map is well-defined.

\subsection{Classical \texorpdfstring{$W$}{W}-algebra} \label{subsec:classical W}
We first recall the definition of classical $W$-algebras associated with the principal nilpotent element of $\sfr\lfr_N$. For a detailed description, we refer the reader to \cites{DS84, MR15, DSKV13}.

Let $\sfr\lfr_N = \nfr_+ \oplus \hfr \oplus \nfr_-$ be the standard triangular decomposition, and set $\pfr_- = \hfr \oplus \nfr_-$. Define a surjective differential algebra homomorphism
\begin{equation*}
    \pi_{\pfr_-} : \Vbf(\sfr\lfr_N) \to \Vbf(\pfr_-) := \CC\big[\partial_z^n e_{ij}(z) \mid i \ge j, \ n \in \ZZ_{\ge 0}\big]\Big/\Big\langle \partial_z^n \sum_{i=1}^N e_{ii}(z) \Bigm| n \in \ZZ_{\ge 0} \Big\rangle
\end{equation*}
by
\begin{equation*}
    \pi_{\pfr_-} (e_{ij}(z)) = \begin{cases}
        e_{ij}(z) \quad &\text{if } i \ge j,\\
        -1 \quad &\text{if } i+1=j,\\
        0 \quad &\text {if } i+1<j.
    \end{cases}
\end{equation*}
The \emph{classical $W$-algebra} associated with $\sfr\lfr_N$ is defined by
\begin{equation*}
    \Wbf(\sfr\lfr_N) = \left\{ f \in \Vbf(\pfr_-) \middle| \pi_{\pfr_-} \{e_{ij}(z), f(w)\} = 0 \text{ for all } i < j \right\}.
\end{equation*}
It is a fundamental result that the differential and Poisson structures of $\Vbf(\sfr\lfr_N)$ induce a well-defined differential and Poisson structure on $\Wbf(\sfr\lfr_N)$. A generating set $\{w_2(z), \cdots, w_N(z)\}$ of $\Wbf(\sfr\lfr_N)$ can be obtained from the following column determinant formula
\begin{equation}\label{eq:cdetnondeformed}
\begin{aligned}
    \text{cdet} \, &  \begin{pmatrix}
        \partial + e_{11}(z) & -1 & 0 &  \dots & 0 \\
        e_{21}(z) & \partial + e_{22}(z) & -1 &  \dots & 0\\
        e_{31}(z) & e_{32}(z) & \partial+e_{33}(z) &  \dots & 0\\
        \vdots & \vdots & \vdots  & \ddots & \vdots \\
        e_{N-1, 1}(z) & e_{N-1, 2}(z) & e_{N-1, 3}(z)  & \dots & -1\\
        e_{N1}(z) & e_{N2}(z) & e_{N3}(z) & \dots  & \partial+e_{NN}(z)
    \end{pmatrix}\\[1em]
    & = \partial^N + w_2(z)\partial^{N-2} + \dots + w_{N-1}(z)\partial + w_{N}(z),
\end{aligned}
\end{equation}
where the matrix has entries in $\Vbf(\pfr_-) \otimes \CC[\partial]$.
In order to get the RHS of \eqref{eq:cdetnondeformed}, one needs to use the  commutation relations
$   \partial e_{ij}(z) - e_{ij}(z) \partial = \partial_ze_{ij}(z) $
 in the ring $\Vbf(\pfr_-) \otimes \CC[\partial]$. Notice that the coefficient of $\partial^{N-1}$ is $e_{11}(z)+\dots + e_{NN}(z)=0$ and the following theorem holds.
 
\begin{thm}\cite{MR15}
    As a differential algebra, $\Wbf(\sfr\lfr_N)$ is freely generated by
    $w_2(z), \dots, w_{N}(z)$.
\end{thm}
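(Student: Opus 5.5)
The plan is to prove the theorem in three steps: the $w_k$ are invariant, they are differentially independent, and they generate all invariants. Throughout we identify $\Vbf(\pfr_-)$ with the algebra of functions of the operator $\mathcal{L}=\partial+F+Q(z)$, where $F=-\sum_i e_{i,i+1}$ and $Q(z)=\sum_{i\ge j}e_{ij}(z)\,e_{ji}$ is the lower-triangular matrix of fields appearing in \eqref{eq:cdetnondeformed}.

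\emph{Step 1: invariance.} We first compute $\pi_{\pfr_-}\{e_{ij}(z),Q(w)\}$ for $i<j$ from \eqref{eq:affinePoisson}. It coincides with the infinitesimal gauge transformation
\begin{equation*}
\mathcal{L}\mapsto [\,n(w),\mathcal{L}\,]
\end{equation*}
with $n=e_{ij}\delta(z,w)\in\nfr_+$, projected back onto $\pfr_-$. The derivative term $\partial_w\delta(z,w)$ is exactly what accounts for the $\partial$ in $\mathcal{L}$. Next we check that the column determinant of $\partial+F+Q$, viewed as a scalar differential operator, is unchanged when $\mathcal{L}$ is conjugated by $\exp(\mathrm{ad}\,n)$ with $n$ strictly upper triangular. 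The reason is that such a conjugation amounts to column and row operations of unitriangular type in the ring $\Vbf(\pfr_-)\otimes\CC[\partial]$. In addition, the cdet of a matrix with $-1$ on the superdiagonal can be expanded as an ordered product along the first column, and this expansion is compatible with such operations. Differentiating at $n=0$ shows that every coefficient $w_k$ is annihilated by $\pi_{\pfr_-}\{e_{ij}(z),\cdot\}$ for $i<j$. Since $\pi_{\pfr_-}$ is a differential homomorphism, every differential polynomial in the $w_k$ lies in $\Wbf(\sfr\lfr_N)$.

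\emph{Step 2: freeness.} We grade $\Vbf(\pfr_-)$ by conformal weight, assigning $e_{ij}$ the weight $i-j+1$ and $\partial$ the weight $1$; then $w_k$ is homogeneous of weight $k$. Expanding the cdet shows that $w_k=\pm e_{k1}+(\text{terms involving only } e_{ab} \text{ with } a-b<k-1)$. We order the generators $\partial^n e_{ab}$ by weight and then by the diagonal index. Under this order the leading term of $\partial^n w_k$ is $\pm\partial^n e_{k1}$, and these leading terms are algebraically independent. Therefore the $\partial^n w_k$ are algebraically independent, and the differential algebra they generate is free.

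\emph{Step 3: generation, which I expect to be the main obstacle.} Here we use a differential version of the Kostant slice. For each $\mathcal{L}$ we construct, recursively along the diagonals $i-j=0,1,\dots,N-2$, a unique element $n\in\nfr_+\otimes\Vbf(\pfr_-)$ such that $e^{\mathrm{ad}\,n}\mathcal{L}=\partial+F+\sum_k s_k e_{k1}$. At each diagonal this requires solving a linear equation governed by $\mathrm{ad}\,F$, and this equation is uniquely solvable on the complement of $\ker\mathrm{ad}\,F$. This gives a differential algebra isomorphism
\begin{equation*}
\Vbf(\pfr_-)\cong \CC[\partial^n s_k]\otimes\CC[\partial^n n_{ij}],
\end{equation*}
under which the infinitesimal $\nfr_+$-action moves only the $n$-coordinates, freely and triangularly. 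Step 1 together with the triangularity of Step 2 should show that the $s_k$ and the $w_k$ generate the same differential algebra. An element $f$ in the kernel of all $\pi_{\pfr_-}\{e_{ij}(z),\cdot\}$ then has vanishing derivatives in the $n$-directions. This is the delicate point: the argument has to be run weight by weight, using the grading to make the infinitesimal equations triangular. Granting it, $f$ is a differential polynomial in the $s_k$, and hence in the $w_k$.
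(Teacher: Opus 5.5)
The paper does not prove this statement; it only quotes \cite{MR15}. Your route is the Drinfeld--Sokolov gauge-fixing argument. It is the differential counterpart of what the paper does for $\Wbf_q(LSL_N)$: canonical form in Lemma~\ref{lem:gen4}, generators read off from it in Proposition~\ref{prop:gen}, and comparison with the cdet coefficients in Theorem~\ref{cor:gen}. It yields generation and freeness at once, with no outside input.

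Step~3 is sound, and the point you call delicate is easy. Let $E_{ab}$ denote the matrix units and work in coordinates $(s_k,g_{ab})$ with $g=e^{n}$. Then $\pi_{\pfr_-}\{e_{ij}(z),\cdot\}$ fixes the $s_k$ and acts by $Dg=g\epsilon$, where $\epsilon=E_{ij}\delta(z,w)$. Suppose $\partial^M g_{ab}$ is the highest derivative occurring in $f$. The coefficient of $\partial_w^M\delta(z,w)$ in the variation of $f$ is then $\sum_{a\le i}g_{ai}\,\partial f/\partial(\partial^M g_{aj})$. Since $g_{ii}=1$, these partials vanish successively for $i=1,\dots,j-1$, so no weight induction is needed.

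\textbf{Step 1 is set up inconsistently, and as written it fails.} The matrix $\sum_{i\ge j}e_{ij}(z)E_{ji}$ is upper triangular, so it is not the matrix of \eqref{eq:cdetnondeformed}. Write that matrix as $\partial+M$. The bracket gives
$\pi_{\pfr_-}\{e_{ij}(z),M(w)\}=[\epsilon,M(w)]+\partial_w\epsilon$ with $\epsilon=E_{ji}\delta(z,w)$ strictly \emph{lower} triangular.
\begin{itemize}
\item This is a gauge transformation of $M-\partial$, not an $\nfr_+$-gauge transformation of $\partial+M$. An upper triangular $n$ does not even preserve the shape of that matrix, since $[E_{12},E_{23}]=E_{13}$.
\item The shape-preserving picture actually gives the wrong answer. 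Lower unitriangular gauge transformations of $\partial+M$ fix the first component of solutions of $(\partial+M)\psi=0$. Their invariant is therefore the reversed product, e.g.\ $(\partial+e_{22})(\partial+e_{11})+e_{21}$ for $N=2$.
\item The constant term of that reversed product has variation $2\partial_w\delta(z,w)$ under $\pi_{\pfr_-}\{e_{12}(z),\cdot\}$. By contrast, $w_2$ from the cdet has variation $0$.
\end{itemize}

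What works is your $Q$ together with $F=-\sum_iE_{i+1,i}$. Take $\mathcal L=\partial-\sum_iE_{i+1,i}+\sum_{i\ge j}e_{ij}(z)E_{ji}$, the differential analogue of the paper's $B_+$. Then $\pi_{\pfr_-}\{e_{ij}(z),\mathcal L(w)\}=[\mathcal L(w),E_{ij}\delta(z,w)]$ holds exactly, with no projection needed.

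Your invariance argument also fails as stated. A column determinant over a noncommutative ring is not invariant under unitriangular row or column operations. Use elimination instead:
\begin{itemize}
\item There is a unique $u=(u_1,\dots,u_{N-1},1)^T$ with entries in $\Vbf(\pfr_-)\otimes\CC[\partial]$ such that $\mathcal Lu=(L,0,\dots,0)^T$, because rows $2,\dots,N$ determine $u_{N-1},\dots,u_1$ in turn. This $L$ is exactly the cdet of \eqref{eq:cdetnondeformed}.
\item Let $D\mathcal L=[\mathcal L,\epsilon]$ with $\epsilon$ strictly upper triangular. Differentiating gives $\mathcal L(Du+\epsilon u)=(DL,0,\dots,0)^T$ with $(Du+\epsilon u)_N=0$. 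Hence $Du+\epsilon u=0$, and so $DL=0$.
\end{itemize}

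In this picture the slice in Step~3 is the first row, $\partial-\sum_iE_{i+1,i}+\sum_k s_kE_{1k}$, whose $L$ is $\partial^N+\sum_k s_k\partial^{N-k}$. Thus $w_k=s_k$ exactly, and Step~2 becomes unnecessary.

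Step~2 is also misstated. The linear part of $w_k$ is $\sum_a e_{a+k-1,a}$, not $\pm e_{k1}$; for instance, $w_2$ contains $e_{21}+e_{32}$ when $N=3$. Your leading-term argument still works if $e_{k1}$ is ordered largest on its diagonal.
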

\noindent
Moreover, explicit formulas for the Poisson brackets of classical $W$-algebras are also well understood (see \cites{DSKV16b,LSS23}).

Another well-known property of $\Wbf(\sfr\lfr_N)$ is the free field realization, which means it can be embedded in the following algebra:
\begin{equation*}
\Vbf(\hfr) := \CC\big[\partial_z^n e_{ii}(z) \mid 1\le i \le N, \  n \in \ZZ_{\ge 0}\big]\Big/\Big\langle \partial_z^n \sum_{i=1}^N e_{ii}(z) \Bigm| n \in \ZZ_{\ge 0}\Big\rangle
\end{equation*}
equipped with the Poisson bracket
$\{e_{ii}(z), e_{jj}(w)\} := \langle e_{ii}, e_{jj} \rangle \partial_w \delta(z,w),$
where $\langle \cdot, \cdot \rangle$ is the invariant symmetric bilinear form on $\sfr\lfr_N$. The following theorem implies the free field realization of $\Wbf(\sfr\lfr_N).$

\begin{thm}
    The \emph{Miura transformation}
    \begin{equation}
    m : \Wbf(\sfr\lfr_N) \to \Vbf(\hfr)
    \end{equation}
    which is the composition of the inclusion $\Wbf(\sfr\lfr_N) \to \Vbf(\pfr_-)$ and the natural projection map $\Vbf(\pfr_-) \to \Vbf(\hfr)$ is injective.
\end{thm}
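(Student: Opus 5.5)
The plan is to show that no nonzero $f \in \Wbf(\sfr\lfr_N)$ maps to $0$ under the projection $\Vbf(\pfr_-) \to \Vbf(\hfr)$, which kills the $e_{ij}(z)$ with $i>j$. I will use a gauge-fixing argument. The defining condition $\pi_{\pfr_-}\{e_{ij}(z), f(w)\}=0$ for $i<j$ says that $f$ is invariant under the infinitesimal action of $\nfr_+$ on $\Vbf(\pfr_-)$. Concretely, this action is the gauge action $L \mapsto e^{\mathrm{ad}\, n}(L)$, with $n \in \nfr_+\otimes \Vbf$, on the operator $L=\partial + \Lambda + Q$. Here $\Lambda = -\sum_i e_{i,i+1}$ and $Q\in \pfr_-$ collects the $e_{ij}(z)$ with $i\ge j$. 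Invariance under this infinitesimal action integrates to invariance under the unipotent gauge group $N_+$, since $\nfr_+$ is nilpotent and the action is by polynomial differential substitutions.

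The next step is to use the $\rho^\vee$-grading of $\sfr\lfr_N$ by $\deg e_{ij} = j-i$. For each piece $Q_{-k}$ of $Q$ of degree $-k<0$, I will solve $[n_{k-1},\Lambda] = \text{(prescribed degree }-k\text{ term)}$ recursively in $k$. This uses the fact that $\mathrm{ad}\,\Lambda : \nfr_+ \to \nfr_-\oplus \hfr$ is injective, and its image is complementary to a suitable slice. The goal is to gauge $L$ into a canonical form. Rather than the usual companion-matrix slice, I will choose the gauge so as to remove all of $\nfr_-$, landing on $\partial + \Lambda + Q_0'$ with $Q_0' \in \hfr\otimes\Vbf(\pfr_-)$. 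This is not possible in general, so I will argue differently, as follows.

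Because the free generation by $w_2,\dots,w_N$ (the theorem of \cite{MR15}) is available, the cleanest plan is instead to show that the images $m(w_2),\dots,m(w_N)$ are differentially algebraically independent in $\Vbf(\hfr)$. Once this is shown, any polynomial relation in the $\partial^n w_i$ that vanishes after $m$ must be trivial, so $m$ is injective. The images are computed from the factorization: setting $e_{ij}=0$ for $i>j$ in \eqref{eq:cdetnondeformed}, the matrix becomes upper bidiagonal. Its column determinant is the ordered product $(\partial+e_{11})\cdots(\partial+e_{NN})$, as stated in the introduction. Hence $m(w_k)$ is the coefficient of $\partial^{N-k}$ in that product.

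For independence I will use a leading-term argument. Order the differential monomials so that the term in $m(w_k)$ of highest derivative order is linear and involves a distinguished variable. Expanding the product, $m(w_k)$ contains $c_k\,\partial^{k-1}u$ plus lower-order terms, where $u$ is a linear combination of $e_{11},\dots,e_{NN}$ and $c_k$ is a nonzero binomial-type constant. I expect this step to be the main obstacle: the leading linear parts must be linearly independent modulo the relation $\sum e_{ii}=0$. I would first test the weighted leading terms using the symbol map, i.e. pass to the associated graded with respect to the total weight $\deg \partial^n e_{ii} = n+1$. There $m(w_k)$ has homogeneous weight $k$, and its top symbol is the $k$-th elementary symmetric function in $N$ commuting quantities $e_{ii}$ together with $\partial$-corrections. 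The independence of the power sums, equivalently the elementary symmetric functions $\sigma_2,\dots,\sigma_N$ on the hyperplane $\sum e_{ii}=0$, then gives algebraic independence of the symbols. A standard graded argument lifts this to differential algebraic independence, because $\partial$ acts compatibly and the $\partial^n$ of the symbols stay independent: the Jacobian with respect to the highest jets is triangular. Together with the free generation this yields injectivity.
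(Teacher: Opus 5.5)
The paper quotes this theorem as standard and gives no proof. The closest argument in the paper is the proof of Proposition~\ref{prop:Miura} for $m_q$. Your final plan is its differential analogue: use free generation of $\Wbf(\sfr\lfr_N)$ by $w_2,\dots,w_N$ (quoted just before the statement), then show that $m(w_2),\dots,m(w_N)$ are differentially algebraically independent in $\Vbf(\hfr)$. That reduction is correct, and so is your computation of $m(w_k)$ from the upper bidiagonal matrix. The abandoned gauge-fixing paragraph can simply be deleted.

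As written, however, the decisive independence step is not established. First, you leave the linear-part criterion unchecked. Moreover, the leading linear term is not $c_k\partial^{k-1}u$ for a single $u$: with a common $u$, the linear parts of $\partial^n m(w_k)$ and $\partial^{n+1}m(w_{k-1})$ would be proportional, and the criterion would fail. Second, your symbol argument uses the wrong filtration. With $\deg\partial^n e_{ii}=n+1$, every $m(w_k)$ is already homogeneous of weight $k$, so the associated graded changes nothing and the $\partial$-corrections remain. Independence of $\sigma_2,\dots,\sigma_N$ then says nothing about the symbols you wrote down.

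Both versions are easily repaired. For the first, use $\partial^a f=\sum_j\binom{a}{j}(\partial^j f)\partial^{a-j}$: the degree-one part of $m(w_k)$ is $\partial^{k-1}u_k$ with $u_k=\sum_{\alpha=1}^N\binom{\alpha-1}{k-1}e_{\alpha\alpha}$. The matrix $\bigl(\binom{\alpha-1}{k-1}\bigr)_{k,\alpha}$ is unitriangular, and $u_1=\sum_\alpha e_{\alpha\alpha}=0$ in $\Vbf(\hfr)$, so $u_2,\dots,u_N$ are linearly independent there. Grouping the linear parts $\partial^{n+k-1}u_k$ of the elements $\partial^n m(w_k)$ by derivative order then shows that all of them are linearly independent. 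Comparing lowest-degree components of a putative relation $P(\partial^n m(w_k))=0$ forces $P=0$. This is exactly the reasoning the paper uses at the end of Proposition~\ref{prop:gen}.

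For the second, filter by polynomial degree in the $\partial^n e_{ii}$ instead. Then the top component of $\partial^n m(w_k)$ is precisely $\partial^n\sigma_k(e_{11},\dots,e_{NN})$. Your block-triangular jet Jacobian gives the algebraic independence of these top components: its diagonal blocks equal the Jacobian of $\sigma_2,\dots,\sigma_N$ on $\sum_i e_{ii}=0$, which is generically nonsingular by the Vandermonde determinant. Independence of top components then lifts to the elements themselves.

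Either repair handles all derivatives explicitly. That is more than the paper's proof of Proposition~\ref{prop:Miura} does when it ``ignores $q$ and $z$'' and checks only the elementary symmetric polynomials themselves.
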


\noindent
The image $m(w_r(z))$ of the generator $w_r(z)$ can be computed by the following formula 
    \begin{equation}\label{eq:Miura_nondeformed}
    \begin{aligned}
        &\big(\partial + e_{11}(z)\big)\big(\partial+e_{22}(z)\big)\dots\big(\partial+e_{NN}(z)\big) \\
        &= \partial^N+m(w_2(z))\partial^{N-2}+\dots+m(w_{N-1}(z))\partial+m(w_N(z)),
    \end{aligned}
    \end{equation}
and by the injectivity of the Miura map, $m(w_2(z)), \cdots, m(w_N(z))$ freely generates $m(\Wbf(\sfr\lfr_N))\simeq \Wbf(\sfr\lfr_N)$.
\vskip 2mm 
We summarize the following three essential properties of the classical $W$-algebra:
\begin{enumerate}[(1)]
    \item As a differential algebra, $\Wbf(\sfr\lfr_N)$ is freely generated by $N-1$ elements, which can be obtained from the column determinant of a matrix with entries in $\Vbf(\pfr_-)\otimes\CC[\partial]$.
    \item The Poisson algebra structure on $\Vbf(\sfr\lfr_N)$ naturally induces a Poisson algebra structure on $\Wbf(\sfr\lfr_N)$.
    \item The Miura transformation $\Wbf(\sfr\lfr_N) \to \Vbf(\hfr)$ is an injective Poisson algebra homomorphism.
\end{enumerate}
In the following subsections, we describe analogous properties of the deformed $W$-algebra.

\subsection{Construction of the deformed \texorpdfstring{$W$}{W}-algebra} \label{subsec:deformed W}
In parallel with the non-deformed case, we first define the $q$-difference algebra $\Vbf_q(B_+)$ corresponding to the differential algebra $\Vbf(\pfr_-)$. Let $\Vbf_q(B_+)$  be the quotient of $\Vbf_q(LSL_N)$ by the $q$-difference algebra ideal generated by the elements
    \begin{equation*}
        \mu_{ij}(z) + \delta_{i, j+1} \qquad \text{for } i>j.
    \end{equation*}
Then $\Vbf_q(B_+)$ inherits a natural $q$-difference algebra structure, and there is a natural surjective $q$-difference algebra homomorphism
\begin{equation*}
    \pi_{B_+} : \Vbf_q(LSL_N) \to \Vbf_q(B_+)
\end{equation*}
given by
\begin{equation} \label{eq:pi}
    \pi_{B_+}\big(\mu_{ij}(z)\big) = \begin{cases}
        \mu_{ij}(z) \quad&\text{if } i \le j,\\
        -1 \quad&\text{if } i=j+1,\\
        0 \quad&\text{if } i>j+1.
    \end{cases}
\end{equation}
Geometrically, one can understand $B_+$ as a subset of $LSL_N$ consisting of elements of the form
\begin{equation*}
    A(t) = \begin{pmatrix}
        A_{11}(t) & A_{12}(t) & \dots & A_{1,N-1}(t) & A_{1N}(t) \\
        -1 & A_{22}(t) & \dots & A_{2,N-1}(t) & A_{2N}(t) \\
        0 & -1 & \dots & A_{3,N-1}(t) & A_{3N}(t) \\
        \vdots & \vdots & \ddots & \vdots \\
        0 & 0 & \dots & -1 & A_{NN}(t)
    \end{pmatrix}, \qquad A_{ij}(t) \in \CC[t, t^{-1}].
\end{equation*}

Next, we define the $q$-gauge action on $\Vbf_q(LSL_N)$. We first describe the $q$-gauge action on the Fourier modes $\mu_{kl}[m]$. Let $N_+$ be the subgroup of $LSL_N$ consisting of upper triangular matrices with all diagonal entries equal to $1$. The $q$-gauge action of $N_+$ on $\mu_{kl}[m]$ is defined by
\begin{equation*}
    (g(t)\cdot \mu_{kl}[m])(A(t))
    := \mu_{kl}[m]\left(g(qt)A(t)g(t)^{-1}\right),
    \qquad g(t)\in N_+ .
\end{equation*}
The corresponding infinitesimal action of the Lie algebra
$\nfr_+ = \operatorname{Lie}(N_+)$ on the generators is given by
\begin{equation*}
    e_{ij}[p]\cdot \mu_{kl}[m]
    = q^p\delta_{ik}\mu_{jl}[m+p]
      -\delta_{jl}\mu_{ki}[m+p],
    \qquad i<j .
\end{equation*}
These actions extend to the formal generating series as follows:
\begin{align}
    & g(t)\cdot \mu_{kl}(z)
    := \sum_{m\in\ZZ} (g(t)\cdot \mu_{kl}[m])z^{-m-1}, \label{eq:Lie group action}\\
    & e_{ij}[p]\cdot \mu_{kl}(z)
    := \sum_{m\in\ZZ}(e_{ij}[p]\cdot \mu_{kl}[m])z^{-m-1}= \delta_{ik}(qz)^p\mu_{jl}(z)
       -\delta_{jl}z^p\mu_{ki}(z). \label{eq:Lie algebra action}
\end{align}
They further extend naturally to the algebras $\Vbf_q(LSL_N)$ and $\Vbf_q(B_+)$ by the Leibniz rule. As one can see in \eqref{eq:Lie algebra action}, these actions do not, in general, preserve $\Vbf_q(LSL_N)$ or $\Vbf_q(B_+)$. However, since we are only interested in elements of $\Vbf_q(B_+)$ that are annihilated by this action, it is sufficient to define the action formally. Moreover, whenever a generating series of the form $z^m f(z)$ appears, where $f(z)\in\Vbf_q(LSL_N)$, we apply $\pi_{B_+}$ only to the generating series $f(z)$ and leave the factor $z^m$ unchanged.

\begin{defn}\label{def:W}
    The \emph{deformed $W$-algebra} associated with $LSL_N$ is defined by
    \begin{equation}\label{eq:deformedWdef}
        \Wbf_q(LSL_N) = \left\{f(z) \in \Vbf_q(B_+) \;\middle|\; \pi_{B_+} \big(X \cdot f(z) \big) = 0 \text{ for all } X \in \nfr_+\right\}.
    \end{equation}
\end{defn}
The deformed $W$-algebra is a $q$-difference Poisson algebra endowed with the Poisson bracket introduced in the following subsection.
In the rest of this section, we show the deformed $W$-algebra \eqref{eq:deformedWdef} is identical to the $q$-difference algebra defined via the $q$-gauge action. The space of the $q$-gauge invariant functionals is defined as follows:
\begin{equation} \label{eq:tilde W}
    \widetilde{\Wbf}_q(LSL_N) = \left\{f(z) \in \Vbf_q(B_+) \;\middle|\; \pi_{B_+} \big(g(t) \cdot f(z) \big) = f(z) \text{ for all } g(t) \in N_+ \right\}.
\end{equation}
In the following lemma and proposition, we prove that $\widetilde{\Wbf}_q(LSL_N)$ and $\Wbf_q(LSL_N)$ are the same $q$-difference algebras.

\begin{lem}\label{lem:gen3}
    For $i<j$ and $f \in \Vbf_q(LSL_N)$, we have 
    \begin{equation}\label{eq:exponentialformula}
        \big((I+ e_{ij}[p])\cdot f(z)\big) = \sum_{m \ge 0} \frac{1}{m!} \big((e_{ij}[p])^m \cdot f(z)\big).
    \end{equation}
    Here, the left-hand side denotes the action of $N_+$, while the right-hand side denotes the action of $\nfr_+$, and the summation on the right-hand side is finite.
\end{lem}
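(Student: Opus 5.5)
The plan is to reduce the statement to the generators $\mu_{kl}(z)$, then compute both sides explicitly. Both sides behave multiplicatively: the group action $g\cdot(f_1f_2) = (g\cdot f_1)(g\cdot f_2)$ holds because it is defined by substitution $A(t)\mapsto g(qt)A(t)g(t)^{-1}$. The derivation $e_{ij}[p]$ acts by the Leibniz rule, so $\exp(e_{ij}[p])$ is multiplicative by the usual binomial/Leibniz identity $\frac{1}{m!}X^m(f_1f_2)=\sum_{a+b=m}\frac{1}{a!}X^a f_1\,\frac{1}{b!}X^b f_2$. Both also commute with the difference operator $D$ (that is, with substituting $q^sz$), since the action is defined modewise. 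Hence it suffices to verify \eqref{eq:exponentialformula} for $f=\mu_{kl}(z)$, or equivalently modewise for $\mu_{kl}[m]$. Finiteness of the right-hand side for a general monomial then follows from finiteness on generators.

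For generators, first compute the left side. With $g(t)=I+e_{ij}t^p$ and $i<j$, we have $g(t)^{-1}=I-e_{ij}t^p$ because $e_{ij}^2=0$. Then
\[
g(qt)A(t)g(t)^{-1} = A + q^pt^p e_{ij}A - t^p Ae_{ij} - q^pt^{2p}e_{ij}Ae_{ij}.
\]
Reading off the $(k,l)$ entry gives
\[
(g\cdot\mu_{kl})(z) = \mu_{kl}(z) + \delta_{ik}(qz)^p\mu_{jl}(z) - \delta_{jl}z^p\mu_{ki}(z) - \delta_{ik}\delta_{jl}q^pz^{2p}\mu_{ji}(z).
\]
The factors of $z^p$ come from the shift $t^p$ in Fourier modes under the convention $A_{ij}(t)=\sum A_{ij}[m]t^{-m}$; I will check the conventions against \eqref{eq:Lie algebra action}, whose linear term matches.

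Next compute the right side via \eqref{eq:Lie algebra action}. Write $X=e_{ij}[p]$. Then
\[
X\cdot\mu_{kl} = \delta_{ik}(qz)^p\mu_{jl} - \delta_{jl}z^p\mu_{ki}.
\]
Applying $X$ again:
\begin{itemize}
\item $X\cdot\mu_{jl}=\delta_{ij}(\cdots)-\delta_{jl}z^p\mu_{ji}$, and the first term vanishes since $i\neq j$.
\item $X\cdot\mu_{ki}=\delta_{ik}(qz)^p\mu_{ji}-\delta_{ji}(\cdots)$, and the second term vanishes.
\end{itemize}
So
\[
X^2\cdot\mu_{kl} = -2\delta_{ik}\delta_{jl}q^pz^{2p}\mu_{ji}.
\]
Finally, $X\cdot\mu_{ji}=\delta_{ij}(\cdots)-\delta_{ji}(\cdots)=0$, so $X^3\cdot\mu_{kl}=0$. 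Therefore
\[
\sum_m \tfrac{1}{m!}X^m\cdot\mu_{kl} = \mu_{kl} + X\cdot\mu_{kl} - \delta_{ik}\delta_{jl}q^pz^{2p}\mu_{ji},
\]
which matches the group computation.

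The main obstacle is bookkeeping rather than ideas. The $z^p$ factors must be treated as formal multipliers, as the paper stipulates, so that the iterated action composes correctly: $X$ acts only on the series part, and the powers $(qz)^p$ and $z^p$ multiply. The multiplicativity of $\exp(X)$ also has to be justified for formal series. I would do this modewise, where each application of $X$ to a monomial produces finitely many terms, and nilpotency on generators bounds the degree of nilpotency on a monomial of length $n$ by $2n$.
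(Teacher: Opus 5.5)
Your proposal is correct and follows essentially the same route as the paper: you verify the identity explicitly on the generators $\mu_{kl}(z)$ using $(e_{ij}[p])^3\cdot\mu_{kl}(z)=0$, then extend it to products via multiplicativity of the group action and the binomial form of the Leibniz rule. You are somewhat more explicit than the paper in justifying that the group action is multiplicative, that the formal factors $z^p$ compose consistently with the modewise definition, and that $(e_{ij}[p])^{2n+1}$ annihilates a product of $n$ generators; the paper leaves these points implicit.
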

\begin{proof}
    We first verify \eqref{eq:exponentialformula} for $f(z) = \mu_{kl}(z)$.
    \begin{equation}\label{eq:muaction}
    \begin{aligned}
        \big((I+e_{ij}[p])\cdot \mu_{kl}(z)\big) &= \mu_{kl}(z) + (qz)^p\delta_{ik} \mu_{jl}(z)-z^p\delta_{jl}\mu_{ki}(z) - q^pz^{2p}\delta_{ik}\delta_{jl} \mu_{lk}(z)\\
        &= \mu_{kl}(z)+ e_{ij}[p]\cdot \mu_{kl}(z) + \frac{1}{2} e_{ij}[p] \cdot (e_{ij}[p] \cdot \mu_{kl}(z))\\
        & = \sum_{m\geq 0}\frac{1}{m!}(e_{ij}[p])^m\cdot \mu_{kl}(z).
    \end{aligned}
    \end{equation}
    Notice that in the right-hand side of \eqref{eq:muaction}, we used  $(e_{ij}[p])^3\cdot\mu_{kl}(z)=0$ since $i<j$. To extend the result to the entire algebra, suppose that
    \begin{equation*}
        \begin{aligned}
            (I+e_{ij}[p]) \cdot f(z) &= \sum_{m_1 \ge 0} \frac{1}{m_1!}(e_{ij}[p])^{m_1} \cdot f(z), \\
            (I+e_{ij}[p]) \cdot g(z) &= \sum_{m_2 \ge 0} \frac{1}{m_2!}(e_{ij}[p])^{m_2} \cdot g(z)
        \end{aligned}
    \end{equation*}
    for $f(z), g(z) \in \Vbf_q(LSL_N)$, where right-hand sides of both equalities are finite sums. Taking the product of these series and collecting terms with the total degree $M=m_1+m_2$, we obtain:
    \begin{equation*}
        \begin{aligned}
            (I + e_{ij}[p]) \cdot (f(z)g(z)) &= \left( \sum_{m_1 \ge 0} \frac{1}{m_1!}(e_{ij}[p])^{m_1} \cdot f(z) \right)\left(\sum_{m_2 \ge 0} \frac{1}{m_2!}(e_{ij}[p])^{m_2} \cdot g(z)\right) \\
            &= \sum_{m_1 \ge 0} \sum_{m_2 \ge 0} \frac{1}{(m_1+m_2)!} \binom{m_1+m_2}{m_2} ((e_{ij}[p])^{m_1} \cdot f(z))((e_{ij}[p])^{m_2} \cdot g(z))\\
            &=\sum_{M \ge 0} \frac{1}{M!} \sum_{s=0}^M \binom{M}{s} (e_{ij}[p]^{M-s} \cdot f(z))(e_{ij}[p]^{s} \cdot g(z)) \\
            &= \sum_{M \ge 0} \frac{1}{M!} (e_{ij}[p])^M \cdot (f(z)g(z)),
        \end{aligned}
    \end{equation*}
    where the last equality follows from the Leibniz rule for the derivation. Moreover, since the sums for $(I+e_{ij}[p]) \cdot f(z)$ and $(I+e_{ij}[p])\cdot g(z)$ are finite, this resulting summation is also finite.
\end{proof}

Finally, we prove the following proposition.
\begin{prop}\label{prop:gen2}
We have
    \begin{equation*}
        \Wbf_q(LSL_N) = \widetilde{\Wbf}_q(LSL_N).
    \end{equation*}
\end{prop}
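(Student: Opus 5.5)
We prove the two inclusions separately. Both rest on Lemma~\ref{lem:gen3} and on the fact that $N_+$ is generated by the elementary unipotent loops $I+c\,e_{ij}[p]$ with $i<j$, $p\in\ZZ$, $c\in\CC$.

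For the generation claim, observe that any $g(t)\in N_+$ can be reduced to the identity by row operations. Each step multiplies by an elementary matrix $I+a(t)e_{ij}$ with $a(t)\in\CC[t,t^{-1}]$. Such a matrix factors as a finite product $\prod_p (I+a_pe_{ij}[p])$, because the $e_{ij}[p]$ for fixed $(i,j)$ commute and square to zero. Rescaling $e_{ij}[p]$ by a scalar $c$ does not affect Lemma~\ref{lem:gen3}, since the action of $\nfr_+$ is linear. We therefore also have $(I+c\,e_{ij}[p])\cdot f=\sum_m \frac{c^m}{m!}(e_{ij}[p])^m\cdot f$.

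\emph{Inclusion $\Wbf_q\subseteq\widetilde{\Wbf}_q$.} Let $f\in\Vbf_q(B_+)$ satisfy $\pi_{B_+}(X\cdot f)=0$ for all $X\in\nfr_+$. We show by induction on $m$ that $\pi_{B_+}(X^m\cdot f)=0$ for all $m\ge1$. This is the main obstacle. The action does not preserve the ideal defining $\Vbf_q(B_+)$, so $\pi_{B_+}\circ X$ and $X$ cannot naively be composed.

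To get around this, work with a lift $\tilde f\in\Vbf_q(LSL_N)$ of $f$. The key observation to verify is that $X=e_{ij}[p]$ with $i<j$ maps the ideal $\ker\pi_{B_+}$ into itself, modulo terms killed by $\pi_{B_+}$. From \eqref{eq:Lie algebra action}, $e_{ij}[p]\cdot\mu_{kl}$ with $k>l$ is a combination of $\mu_{jl}$ (when $k=i$) and $\mu_{ki}$ (when $l=j$). In the first case $j>i=k>l$, so $\mu_{jl}$ is strictly below the subdiagonal. In the second case $k>l=j>i$, so $\mu_{ki}$ is also strictly below the subdiagonal. Hence $\pi_{B_+}$ sends both to $0$. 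Since $X$ annihilates constants, $X$ maps each generator $\mu_{kl}+\delta_{k,l+1}$ ($k>l$) into $\ker\pi_{B_+}$. By the Leibniz rule, $X(\ker\pi_{B_+})\subseteq\ker\pi_{B_+}$.

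Consequently $\pi_{B_+}\circ X$ descends to a well-defined operator on $\Vbf_q(B_+)$, and $\pi_{B_+}(X^m\tilde f)=(\pi_{B_+}X)^m f$. The right-hand side is $0$ for $m\ge1$. By Lemma~\ref{lem:gen3}, $\pi_{B_+}((I+cX)\cdot\tilde f)=f$. Applying this along the factorization of a general $g\in N_+$, and using that the group action is compatible with composition and with $\pi_{B_+}$ by the same ideal-preservation argument at the group level, we obtain $\pi_{B_+}(g\cdot f)=f$.

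\emph{Inclusion $\widetilde{\Wbf}_q\subseteq\Wbf_q$.} Suppose $\pi_{B_+}((I+c\,e_{ij}[p])\cdot f)=f$ for all $c\in\CC$. By Lemma~\ref{lem:gen3} the left-hand side equals $\sum_{m\ge0}\frac{c^m}{m!}\pi_{B_+}(e_{ij}[p]^m\cdot f)$, which is a polynomial in $c$, and the sum is finite. Since this polynomial is constant in $c$, its linear coefficient vanishes, giving $\pi_{B_+}(e_{ij}[p]\cdot f)=0$. The elements $e_{ij}[p]$ span $\nfr_+$ (understood as the loop algebra $L\nfr_+$), so $f\in\Wbf_q(LSL_N)$.

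Finally, we must check that a product of group elements acts as the composite of their actions; this is immediate from the definition $g\cdot\mu(A)=\mu(g(qt)Ag(t)^{-1})$.
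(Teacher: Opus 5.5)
Your proof is correct and follows the same route as the paper. For $\widetilde{\Wbf}_q(LSL_N)\subseteq\Wbf_q(LSL_N)$ you extract the first-order term of a one-parameter action: the paper differentiates $e^{sX}\cdot f$ at $s=0$, and you read off the linear coefficient of a polynomial in $c$. For the reverse inclusion you factor $g(t)$ into elementary unipotents and expand each factor via Lemma~\ref{lem:gen3}, as the paper does. Your check that $e_{ij}[p]$ maps $\ker\pi_{B_+}$ into itself gives $\pi_{B_+}(X^m\cdot\tilde f)=(\pi_{B_+}X)^m f=0$ for $m\ge 1$, which supplies the justification the paper leaves implicit when it asserts that all higher-order terms vanish after projection.
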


\begin{proof}
    Take any $f(z) \in \widetilde{\Wbf}_q(LSL_N)$ and $X \in \nfr_+$. By the definition of the infinitesimal action,
    \begin{equation*}
         X \cdot f(z) = \left.\frac{d}{ds}\right|_{s=0} \big( e^{sX} \cdot f(z)\big) = \left.\frac{d}{ds}\right|_{s=0} f(z) = 0,
    \end{equation*}
    because $f(z)$ is invariant under the action of $N_+$. Hence, $f(z) \in \Wbf_q(LSL_N)$. Conversely, let $f(z) \in \Wbf_q(LSL_N)$ and $g(t) \in N_+$. For $1 \le i < j \le N$ and $a(t) \in \CC[t, t^{-1}]$, let $E_{ij}(a(t)) \in N_+$ denote the elementary upper unipotent matrix having $a(t)$ as its only non-zero off-diagonal entry, located at $(i, j)$-position. Every element $g(t)\in N_+$ can be written as a finite ordered product
    \begin{equation*}
        g(t) = \overleftarrow{\prod_{j=2}^{N}}\left(\prod_{i=1}^{j-1} E_{ij}(g_{ij}(t))\right),
    \end{equation*}
    where $g_{ij}(t)$ denotes the $(i, j)$-entry of $g(t)$. Therefore, it suffices to show that
    \begin{equation*}
        E_{ij}(a(t)) \cdot f(z) = f(z)
    \end{equation*}
    for all $i < j$. 
   By Lemma~\ref{lem:gen3}, the action of $E_{ij}(a(t))$ on $f(z)$ can be written as a finite sum of infinitesimal Lie algebra actions on $f(z)$. Since $f(z)\in \Wbf_q(LSL_N)$, every nontrivial Lie algebra action term vanishes after applying the projection $\pi_{B_+}$, leaving only the contribution from the identity element. This proves the proposition.
\end{proof}

\subsection{Generators of deformed W-algebras} \label{subsec:gen of deformed W}
In this subsection, we find explicit generators of $\Wbf_q(LSL_N)$ as a $q$-difference algebra. As in the classical $W$-algebra case, these generators are obtained from the column determinant of a matrix whose entries lie in $\Vbf_q(B_+) \otimes \CC[D]$.

\begin{defn}
    Define the formal generating series
    \begin{equation}\label{eq:generatorFourier}
        T_r(z), \quad 1 \le r \le N,
    \end{equation}
    by the identity
    \begin{equation*}
    \begin{aligned}
        \operatorname{cdet}&\begin{pmatrix}
            D+\mu_{11}(z) & \mu_{12}(z) & \dots & \mu_{1, N-1}(z) & \mu_{1N}(z)\\
            -1 & D+\mu_{22}(z) & \dots & \mu_{2, N-1}(z) & \mu_{2N}(z) \\
            0 & -1 & \dots & \mu_{3, N-1}(z) & \mu_{3N}(z) \\
            \vdots & \vdots & \ddots & \vdots & \vdots \\
            0 & 0 & \dots & -1 & D+ \mu_{NN}(z)
        \end{pmatrix}\\
        &=D^{N} + T_1(z)D^{N-1} + T_2(z) D^{N-2}+ \dots + T_{N-1}(z)D + T_N(z),
        \end{aligned}
    \end{equation*}
    where the $D$ satisfies the commutation relation $D\mu_{ij}(z) = \mu_{ij}(qz)D$. Note that $T_N(z) =1$ since ${\det}(z) =1$ in $\Vbf_q(B_+)$. 
\end{defn}
For example, if $N=3$, the formal generating series are 
    \begin{equation*}
        \begin{aligned}
        T_1(z) &= \mu_{11}(z) + \mu_{22}(qz)+\mu_{33}(q^2z),\\
        T_2(z) &= \mu_{12}(z)+\mu_{23}(qz) + \mu_{11}(z)\mu_{22}(z) +\mu_{11}(z)\mu_{33}(qz) +\mu_{22}(qz)\mu_{33}(qz),\\
            T_3(z) & = \mu_{13}(z)+ \mu_{12}(z)\mu_{33}(z) + \mu_{11}(z) \mu_{23}(z)+\mu_{11}(z)\mu_{22}(z)\mu_{33}(z)=1.
        \end{aligned}
    \end{equation*}
In the following proposition, we obtain the explicit formula for $T_r(z)$ by direct computation of the column determinant. To describe the formula, let us introduce the index set
\begin{equation}\label{eq:index}
    \Jcal_{m, r}^N:= \left\{(\alpha, \beta) \in \ZZ^m \times \ZZ^m \;\middle|\;\begin{aligned} 1 \le \alpha_1\le\beta_1 &<\alpha_2 \le \beta_2 < \dots <\alpha_m \le \beta_m \le N\\
    &\sum_{j=1}^m (\beta_j -\alpha_j+1) = r\end{aligned}\right\},
\end{equation}
for  $1 \le r \le N, 1 \le m \le r$, where $\alpha=(\alpha_1, \dots, \alpha_m), \beta=(\beta_1, \dots, \beta_m)$.

\begin{prop}\label{prop:generators}
For $1 \le r \le N$,
\begin{equation}\label{eq:generators}
        T_r(z) = \sum_{m=1}^r \sum_{(\alpha, \beta) \in \Jcal_{m, r}^N}\prod_{x=1}^m \mu_{\alpha_x \beta_x}(q^{\gamma_x}z),
    \end{equation}
    with the exponent given by $\gamma_x := \alpha_x-\beta_{x-1}+\alpha_{x-1}-\beta_{x-2}+\dots-\beta_1+\alpha_1-x$.
\end{prop}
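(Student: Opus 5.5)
We prove \eqref{eq:generators} by expanding the column determinant directly, exploiting the Hessenberg shape of the matrix. The only nonzero entries below the diagonal are the $-1$'s on the subdiagonal. Recall that the column determinant is
\[
\operatorname{cdet}(M)=\sum_{\sigma\in\Sfr_N}\operatorname{sgn}(\sigma)\,M_{\sigma(1)1}M_{\sigma(2)2}\cdots M_{\sigma(N)N},
\]
with the factors taken in column order from left to right. The main steps are to identify the permutations that contribute, to compute the sign times the product of $-1$'s, and then to move every $D$ to the right using $D\mu(z)=\mu(qz)D$.

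\textbf{Step 1: contributing permutations.} The term for $\sigma$ is nonzero only if $\sigma(k)\le k+1$ for all $k$. Such permutations are exactly products of disjoint cycles on consecutive blocks $[a,b]$, each of the form $a\mapsto a+1\mapsto\cdots\mapsto b\mapsto a$ when read as $\sigma(k)=k+1$ for $a\le k<b$ and $\sigma(b)=a$.
\begin{itemize}
\item A block $[a,b]$ contributes, in columns $a,\dots,b$, the factor $\mu_{ab}(z)$ (from column $b$, row $a$) together with $b-a$ subdiagonal entries $-1$. The cycle has sign $(-1)^{b-a}$, so these signs cancel exactly and the block contributes $+\mu_{ab}(z)$. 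The column order matters: the $-1$'s sit in columns $a,\dots,b-1$, which precede column $b$.
\item For $a=b$ one may instead pick the diagonal entry $D+\mu_{aa}(z)$. Expanding it, a fixed point either contributes $D$ or is a length-one block contributing $\mu_{aa}(z)$.
\end{itemize}
Thus each monomial corresponds to a choice of disjoint blocks $[\alpha_1,\beta_1],\dots,[\alpha_m,\beta_m]$ with $\beta_x<\alpha_{x+1}$, each carrying $\mu_{\alpha_x\beta_x}(z)$, with $D$ placed at every column outside these blocks. The resulting word in column order is
\[
D^{\alpha_1-1}\,\mu_{\alpha_1\beta_1}(z)\,D^{\alpha_2-\beta_1-1}\,\mu_{\alpha_2\beta_2}(z)\cdots
\]

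\textbf{Step 2: normal ordering.} Move every $D$ to the right. The factor $\mu_{\alpha_x\beta_x}(z)$ is preceded by exactly
\[
(\alpha_1-1)+\sum_{y=2}^{x}(\alpha_y-\beta_{y-1}-1)
\]
copies of $D$, so it becomes $\mu_{\alpha_x\beta_x}(q^{\gamma_x}z)$, and this count is precisely $\gamma_x=\alpha_x-\beta_{x-1}+\cdots-\beta_1+\alpha_1-x$. The total power of $D$ left on the right is $N-\sum_x(\beta_x-\alpha_x+1)$.

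\textbf{Step 3: reading off coefficients.} The coefficient of $D^{N-r}$ is therefore the sum over $(\alpha,\beta)\in\Jcal^N_{m,r}$, for $m=1,\dots,r$, of $\prod_x\mu_{\alpha_x\beta_x}(q^{\gamma_x}z)$, which is \eqref{eq:generators}. As a consistency check, the case $N=3$ reproduces the examples computed above.

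The step needing the most care is the sign bookkeeping in Step 1. One must check that the column-ordered product places each $\mu$ after the $D$'s coming from earlier columns, so the ordering is consistent with Step 2. One must also verify that no noncommutativity issue arises, which holds because the $-1$ entries are scalars commuting with $D$.
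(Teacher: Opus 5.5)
Your proof is correct, but it takes a different route from the paper. The paper inducts on $N$. Expanding the column determinant along the last column gives the recursion $T^{N+1}_r(z)=T^N_r(z)+\sum_{l=0}^{r-1}T^{N-l}_{r-1-l}(z)\,\mu_{N+1-l,N+1}(q^{N+1-r}z)$, with $T^{s}_0=1$. The inductive step is then closed by sorting terms by polynomial degree. It uses the disjoint decomposition of $\Jcal^{N+1}_{r-k,r}$ into $\Jcal^{N}_{r-k,r}$ together with the pieces whose last block is $(N+1-l,N+1)$.

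You instead expand over all of $\Sfr_N$ at once. The Hessenberg support forces $\sigma$ to be a product of cycles on consecutive blocks. Each cycle's sign cancels against its $b-a$ subdiagonal entries $-1$. Then $\gamma_x$ is simply the number of columns to the left of the $x$-th block that contribute a $D$.

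Your version gives a non-inductive, bijective argument that explains the exponent $\gamma_x$ rather than merely verifying it. Your Step 1 is also exactly what justifies the last-column expansion the paper asserts, which the paper uses only for the final block. The paper's organization avoids explicit permutation and sign bookkeeping, at the cost of the index-set matching.

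The one claim you leave unproved is the classification of permutations with $\sigma(k)\le k+1$, and it takes one line. Let $b$ be the least index with $\sigma(b)\le b$. Then $\sigma(k)=k+1$ for $k<b$, which forces $\sigma(b)=1$, and you recurse on $\{b+1,\dots,N\}$.
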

\begin{proof}
    We proceed by induction on $N$. The base case $N=2$ is straightforward.
    Assume that the formula holds for matrices of size up to $N$ and let us prove it for $N+1$. Let $T^{m}_r(z)$ denote the coefficient of $D^{m-r}$ of the column determinant of top-left $m \times m$ sub-matrix of the full $(N+1) \times (N+1)$ matrix. 
    By expanding the column determinant of the $(N+1) \times (N+1)$ matrix, we obtain the recursive relation:
    \begin{equation*}
        \begin{aligned}
        &D^{N+1}+T^{N+1}_1(z)D^{N}+\dots + T^{N+1}_N(z)D + T^{N+1}_{N+1}(z)\\
            &=\big(D^{N}+T_1^N(z)D^{N-1}+\dots+T_N^N(z)\big)(D+\mu_{N+1, N+1}(z))\\
            &\quad + \big(D^{N-1}+T_1^{N-1}(z)D^{N-2}+\dots+T_{N-1}^{N-1}(z)\big)\mu_{N, N+1}(z)\\
            &\quad + \dots\\
            &\quad + (D+T_1^1(z))\mu_{2, N+1}(z)\\
            &\quad + \mu_{1, N+1}(z),
        \end{aligned}
    \end{equation*}
    where $T_1^1(z) := \mu_{11}(z)$. Extracting the coefficients of $D^{N+1-r}$ on both sides of this expansion, we obtain
    \begin{equation}\label{eq:l}
        \begin{aligned}
            T^{N+1}_r(z) &= T_r^N(z) + T_{r-1}^N(z)\mu_{N+1, N+1}(q^{N+1-r}z)\\
            &\quad +T_{r-2}^{N-1}(z)\mu_{N, N+1}(q^{N+1-r}z)\\
            &\quad +\dots \\
            &\quad + T_1^{N-r+2}(z) \mu_{N-r+3, N+1}(q^{N+1-r}z)\\
            &\quad + \mu_{N-r+2, N+1}(q^{N+1-r}z).
        \end{aligned}
    \end{equation}
    We simplify \eqref{eq:l} by classifying the terms on the right-hand side according to their polynomial degree in the variables $\mu_{ij}(z)$. The degree $1$ terms arise exclusively from $T_r^N(z)$ and the final term, yielding:
    \begin{equation*}
    \begin{aligned}
        &\sum_{(\alpha, \beta) \in \Jcal_{1, r}^N}\mu_{\alpha_1\beta_1}(q^{\alpha_1-1}z) +\mu_{N-r+2, N+1}(q^{N+1-r}z)=\sum_{(\alpha, \beta) \in \Jcal_{1, r}^{N+1}}\mu_{\alpha_1\beta_1}(q^{\alpha_1-1}z).
        \end{aligned}
    \end{equation*}
    For a general degree $r-k$ where $0 \le k \le r-2$, the corresponding terms emerge from the first $k+1$ lines of \eqref{eq:l}:
    \begin{equation*}
    \begin{aligned}
        &\sum_{(\alpha, \beta) \in \Jcal_{r-k, r}^N} \prod_{x=1}^{r-k}\mu_{\alpha_x\beta_x}(q^{\gamma_x}z) +\sum_{l=0}^{k}\sum_{(\alpha, \beta) \in \Jcal_{r-k-1, r-1-l}^N} \prod_{x=1}^{r-k-1} \big(\mu_{\alpha_x\beta_x}(q^{\gamma_x}z)\big)\big(\mu_{N+1-l, N+1}(q^{N-r+1}z)\big).
    \end{aligned}
    \end{equation*}
    Observe that the index set $\Jcal_{r-k, r}^{N+1}$ admits the natural disjoint decomposition $\Jcal_{r-k, r}^{N+1} = \Jcal_{r-k, r}^{N} \cup \big( \bigcup_{l=0}^{k} \Jcal_{r-k, r}^{N+1}[l]\big),$
    where $\Jcal_{r-k, r}^{N+1}[l]$ denotes the subset of $\Jcal_{r-k, r}^{N+1}$ whose final indices $(\alpha_{r-k}, \beta_{r-k})$ are fixed as $(N+1-l, N+1)$. Therefore, the degree $r-k$ terms of $T_r^{N+1}(z)$ are
    \begin{equation*}
        \begin{aligned}
            &\sum_{(\alpha, \beta) \in \Jcal_{r-k, r}^N} \prod_{x=1}^{r-k}\mu_{\alpha_x\beta_x}(q^{\gamma_x}z) +  \sum_{l=0}^k \sum_{(\alpha, \beta) \in \Jcal_{r-k, r}^{N+1}[l]} \prod_{x=1}^{r-k} \mu_{\alpha_x\beta_x}(q^{\gamma_x}z)=\sum_{(\alpha, \beta) \in \Jcal_{r-k, r}^{N+1}} \prod_{x=1}^{r-k} \mu_{\alpha_x\beta_x}(q^{\gamma_x}z).
        \end{aligned}
    \end{equation*}
    Finally, we can conclude the proposition for the size $N+1$ case.
\end{proof}

Now, let us show that the formal generating series $T_r(z)$ are contained in $\Wbf_q(LSL_N)$ in the following lemma.

\begin{lem}\label{lem:gen1}
    For $r = 1, 2, \dots, N-1$, we have 
    $T_{r}(z) \in \Wbf_q(LSL_N).$
\end{lem}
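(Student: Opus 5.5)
By Definition~\ref{def:W} we must show $\pi_{B_+}\big(X\cdot T_r(z)\big)=0$ for every $X=e_{ij}[p]$ with $i<j$. The action \eqref{eq:Lie algebra action} is a Lie algebra action by derivations. The annihilator of $T_r(z)$ is therefore a Lie subalgebra of $L\nfr_+$, and it is enough to treat a set of Lie generators. Since $[e_{ij}[p],e_{jk}[0]]=e_{ik}[p]$, I will restrict to the simple root vectors $e_{i,i+1}[p]$ for $1\le i\le N-1$, $p\in\ZZ$.

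One point needs care: $\pi_{B_+}$ does not commute with the action in general. To use commutators, I will first check that for $i<j$ the action sends the ideal generated by $\{\mu_{kl}(z)+\delta_{k,l+1}: k>l\}$ into itself plus terms killed by $\pi_{B_+}$. By \eqref{eq:Lie algebra action}, $e_{ij}[p]\cdot\mu_{kl}$ with $k>l$ involves $\mu_{jl}$ (when $k=i$) and $\mu_{kj'}$ with $j'=i$ (when $l=j$). Since $i<j$, an entry strictly below the subdiagonal is only sent to entries that remain at least as far below the diagonal, or that lie strictly below it. The subdiagonal constants $-1$ are only moved into positions $(i,j-1)$ with $i\le j-1$, or into $(i+1,j)$ with $i+1\le j$. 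These positions are on or above the diagonal, where they contribute a constant to $\mu_{i,j-1}$ or $\mu_{i+1,j}$. Consequently the composite $\pi_{B_+}\circ X$ is a well-defined derivation of $\Vbf_q(B_+)$ whenever $X\in\nfr_+$, and $X\mapsto \pi_{B_+}\circ X$ respects brackets. This justifies the reduction to generators.

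The conceptual reason for invariance is that, writing $\mathcal L = D\cdot I+\pi_{B_+}(\mu(z))$ in the Ore ring $\Vbf_q(B_+)\otimes\CC[D]$ with $D f(z)=f(qz)D$, the $q$-gauge action is conjugation $\mathcal L\mapsto g(qz)\,\mathcal L\, g(z)^{-1}$. Indeed $g(qz)Dg(z)^{-1}=D$, and the $z^p$ versus $(qz)^p$ in \eqref{eq:Lie algebra action} is exactly this twist. Infinitesimally, $e_{i,i+1}[p]$ adds $(qz)^p$ times row $i+1$ to row $i$ on the left, and subtracts column $i$ times $z^p$ (on the right) from column $i+1$. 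I will then verify by hand that the column determinant of a Hessenberg matrix of this shape is unchanged to first order. The strategy is to expand the column determinant along the affected row or column using the recursion from the proof of Proposition~\ref{prop:generators}. The variation then becomes $(qz)^p$ or $z^p$ times the column determinant of a matrix with two equal rows (respectively two equal columns), and this must vanish.

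The main obstacle is that this ``repeated row/column'' vanishing is not automatic for column determinants with noncommuting entries. If the abstract argument resists, I will fall back on a direct combinatorial check with the explicit formula \eqref{eq:generators}. I would compute $e_{i,i+1}[p]\cdot\mu_{\alpha_x\beta_x}(q^{\gamma_x}z)$ term by term; it is nonzero only when $\alpha_x=i$ or $\beta_x=i+1$. After projecting, the surviving terms are of two kinds:
\begin{enumerate}
\item merges of two adjacent blocks $(\alpha_x,\beta_x=i)$ and $(\alpha_{x+1}=i+1,\beta_{x+1})$, possibly through the constant $-1$;
\item shifts of a block endpoint between $i$ and $i+1$.
\end{enumerate}
The plan is to pair these terms bijectively with opposite signs across $\Jcal^N_{m,r}$ and $\Jcal^N_{m\pm1,r}$. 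The key point to check is that the exponents $\gamma_x$ match exactly, which is where the factor $q^p$ from $(qz)^p$ versus $z^p$ must compensate the shift of $\gamma$ by one when a block moves. For $r\le N-1$ the blocks never fill all of $\{1,\dots,N\}$, so no unmatched boundary terms remain. By contrast, $T_N=1$ is handled separately through Lemma~\ref{lem:determinant}.
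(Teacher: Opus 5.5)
Your reduction to the simple root vectors $e_{i,i+1}[p]$ and your fallback are essentially the paper's proof. In the fallback you cancel the terms of $e_{i,i+1}[p]\cdot T_r(z)$ on the explicit formula \eqref{eq:generators}: singleton blocks $\{i\}\leftrightarrow\{i+1\}$ are swapped inside $\Jcal^N_{k+1,r}$, with the extra $q^p$ in $(qz)^p$ absorbing the shift of $\gamma_x$, and removals of singleton blocks are matched against merged blocks of $\Jcal^N_{k,r}$. Your check that the action preserves $\ker\pi_{B_+}$ supplies a step that the paper's reduction uses only tacitly. Strictly, $X\cdot(Y\cdot f)-Y\cdot(X\cdot f)=-[X,Y]\cdot f$, but this sign is harmless for closure of the annihilator.

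What is genuinely different is your primary route through $\mathcal{L}=D\cdot I+\pi_{B_+}(\mu(z))$, and it does close without any block combinatorics. Write $\mathsf{d}=\pi_{B_+}\circ(e_{i,i+1}[p]\cdot{})$ and $X=e_{i,i+1}$. Since the action on Fourier modes does not involve $z$, $\mathsf{d}$ extends to the Ore ring with $\mathsf{d}(D)=0$. One then checks $\mathsf{d}\mathcal{L}=(qz)^pX\mathcal{L}-\mathcal{L}z^pX$ entrywise; the two $(qz)^pD$ contributions at position $(i,i+1)$ cancel because $Dz^p=(qz)^pD$. Hence $\mathsf{d}\,\mathrm{cdet}(\mathcal{L})=A-B$, where:
\begin{itemize}
\item $A$ is the cdet with row $i$ replaced by row $i+1$ left-multiplied by $(qz)^p$;
\item $B$ is the cdet with column $i+1$ replaced by column $i$ right-multiplied by $z^p$.
\end{itemize}

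In $A$, pair $\sigma$ with $(i\;i+1)\circ\sigma$. The Hessenberg condition $\sigma(c)\le c+1$, together with the new row $i$ being supported in columns $\ge i$, forces two things. The smaller of the two affected columns is column $i$, whose row-$(i+1)$ entry is $-1$. Every factor strictly between the two affected columns is a subdiagonal $-1$. So $(qz)^p$ moves between the two positions at no cost, and the pair cancels.

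In $B$, pair $\sigma$ with $\sigma\circ(i\;i+1)$. Columns $1,\dots,i-1$ need $i-1$ distinct rows in $\{1,\dots,i\}$, which forces one of the two column-$i$ factors to be $\mathcal{L}_{i+1,i}=-1$. So the noncommuting pair $D+\mu_{ii}$, $\mu_{ki}$ ($k<i$) never occurs in a nonzero term.

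Thus $\sum_r\mathsf{d}(T_r)D^{N-r}=0$, and all $\mathsf{d}(T_r)=0$ at once. This route is shorter and explains the invariance conceptually, whereas the paper's route yields the explicit cancellations that reappear in Appendix~\ref{AppendixC}.

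Two small corrections. First, $r\le N-1$ plays no role in either cancellation. Second, Lemma~\ref{lem:determinant} concerns Poisson brackets rather than the gauge action, so it is not the right reference for $T_N$.
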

\begin{proof}
    It is straightforward to verify that
    \begin{equation*}
        e_{i_1j_1}[p_1] \cdot \big( e_{i_2j_2}[p_2] \cdot \mu_{kl}(z) \big) - e_{i_2j_2}[p_2] \cdot \big( e_{i_1j_1}[p_1] \cdot \mu_{kl}(z) \big) = [e_{i_1j_1}[p_1], e_{i_2j_2}[p_2]] \cdot \mu_{kl}(z).
    \end{equation*}
    Furthermore, for $i<j$, we have
   $e_{ij}[p] = [e_{i, i+1}[0], [e_{i+1, i+2}[0], \dots, [e_{j-2, j-1}[0], e_{j-1, j}[p]]\dots ]],$
    and thus it suffices to show that
    \begin{equation*}
        \pi_{B_+}\big(e_{i, i+1}[p] \cdot T_r(z)\big) = 0, \quad i, r \in \{1, \dots, N-1\}.
    \end{equation*}
    For the remainder of this proof, we drop the notation $\pi_{B_+}$ by directly setting $\mu_{ij}(z) = 0$ for $i>j+1$ and $\mu_{i+1, i}(z) = -1$ for the sake of brevity. Now, fix $r, i \in \{1, 2, \dots, N-1\}$ and $p \in \ZZ$. We will prove that $e_{i, i+1}[p] \cdot T_r(z) =0$ by showing that its homogeneous components of degree $0$, degree $r$, and degree $k$ for $1 \le k \le r-1$ with respect to the variables $\mu_{ab}(z)$ for $a \le b$ all vanish. First, the degree $0$ component is trivial unless $r=1$. In the case $r=1$, we have
    \begin{equation*}
        e_{i, i+1}[p] \cdot \big( \mu_{ii}(q^{i-1}z) + \mu_{i+1, i+1}(q^{i}z) \big) = -(q^{i}z)^p + (q^{i}z)^p = 0.
    \end{equation*}
    Moreover, from \eqref{eq:generators}, the degree $r$ component of $T_r(z)$ is given by
    \begin{equation*}
        \sum_{1 \le \alpha_1 < \dots < \alpha_r \le N} \big(\mu_{\alpha_1\alpha_1}(q^{\alpha_1-1}z)\big) \dots \big(\mu_{\alpha_r\alpha_r}(q^{\alpha_r-r}z)\big).
    \end{equation*}
    Notice that whenever $e_{i, i+1}[p]$ acts on a factor $\mu_{\alpha_x \alpha_x}(q^{\alpha_x-x}z)$, the degree strictly decreases. Consequently, the action yields no degree $r$ component. Finally, we observe the degree $k$ component of $e_{i, i+1}[p] \cdot T_r(z)$ for $1 \le k \le r-1$. This component arises from the action of $e_{i, i+1}[p]$ on the degree $k$ and $k+1$ part of $T_r(z)$. To track the degree $k$ component arising from the degree $k+1$ part of $T_r(z)$, we need to find the monomials containing either $\mu_{ii}(z)$ or $\mu_{i+1, i+1}(z)$ factors. For $(\alpha, \beta) \in \Jcal_{k+1, r}^N$, define
    \begin{equation*}
        \mu_{(\alpha, \beta)} := \mu_{\alpha_1\beta_1}(q^{\gamma_1}z) \cdots \mu_{\alpha_k \beta_k}(q^{\gamma_k}z).
    \end{equation*}
    For a fixed $1 \le x \le k+1$, let $(\alpha, \beta), (\alpha', \beta')$ be elements in $\Jcal_{k+1, r}^N$ that differ only at the $x$-th entry. Specifically, we assume
    \begin{equation*}
        \begin{aligned}
            &\alpha_x = \beta_x = i \quad \text{with} \quad \alpha_{x+1} \neq i+1, \\
            &\alpha'_x = \beta'_x = i+1 \quad \text{with} \quad \beta'_{x-1} \neq i,
        \end{aligned}
    \end{equation*}
    under the convention that $\alpha_{k+2} = N+1$ and $\beta'_0 = 0$. For all $y \neq x$, we have $\alpha_y=\alpha_y', \beta_y=\beta_y'$. Then, when $e_{i, i+1}[p]$ acts on the sum $\mu_{(\alpha, \beta)} + \mu_{(\alpha', \beta')}$, the degree $k$ part of the action vanishes due to the cancellation at the $x$-th factor:
    \begin{equation*}
        e_{i, i+1}[p] \cdot \big( \mu_{ii}(q^{\gamma_x}z) + \mu_{i+1, i+1}(q^{\gamma_x+1}z)\big) = 0.
    \end{equation*}
    Because these paired terms cancel, it suffices to consider only the cases for $(\alpha, \beta) \in \Jcal_{k+1, r}^N$ where, for some $1 \le x \le k$, we have either $\alpha_x=\beta_x=i$ with $\alpha_{x+1} = i+1$, or $\alpha_{x+1}=\beta_{x+1}=i+1$ with $\beta_{x}=i$.

    For the case $(\alpha, \beta) \in \Jcal_{k+1, r}^N$ with $\alpha_x=\beta_x=i, \alpha_{x+1}=i+1$, the action of $e_{i, i+1}[p]$ yields
    \begin{equation}\label{eq:ii}
    \begin{aligned}
        &-\big(\mu_{\alpha_1\beta_1}(q^{\gamma_1}z)\big)\dots \big(\mu_{\alpha_{x-1}\beta_{x-1}}(q^{\gamma_{x-1}}z)\big)\\
        &\quad \times \big((q^{\gamma_x+1}z)^p\big)\big(\mu_{i+1, \beta_{x+1}}(q^{\gamma_{x+1}+1}z)\big)\dots \big(\mu_{\alpha_{k+1}\beta_{k+1}}(q^{\gamma_{k+1}}z)\big).
    \end{aligned}
    \end{equation}
    By applying the index shift $\beta_{y+1} \mapsto \beta_{y}$ for $y=x, x+1, \dots, k+1$, $\alpha_{y+1} \mapsto \alpha_{y}$ for $y=x+1, \dots, k+1$, and by setting $\alpha_{x} = i$, we obtain a new element $(\alpha, \beta)\in \Jcal_{k, r}^N$. We observe that the contribution from the action of $e_{i, i+1}[p]$ on the term corresponding to this $(\alpha, \beta) \in \Jcal_{k, r}^N$ exactly cancels the term \eqref{eq:ii}. Similarly, for the case $(\alpha, \beta) \in \Jcal_{k+1, r}^N$ with $\alpha_{x+1}=\beta_{x+1}=i+1, \beta_x=i$, the action of $e_{i, i+1}[p]$ exactly cancels with the corresponding term in the case where $(\alpha, \beta) \in \Jcal_{k, r}^N$, $\beta_x=i+1$. Since this exhausts all possible contributions, we conclude that the degree $k$ component completely vanishes.
\end{proof}

In order to show $T_r(z)$ generates the deformed $W$-algebra, we find generators of $\widetilde{\Wbf}_q(LSL_N)$, which is defined via Lie group action. Since Proposition \ref{prop:gen2} establishes that $\Wbf_q(LSL_N)$ and $\widetilde{\Wbf}_q(LSL_N)$ are identical, we will henceforth refer to this space simply as $\Wbf_q(LSL_N)$. The following lemma is a direct analogue of the result in \cite{FRS98}.

\begin{lem}\label{lem:gen4}
    For any $A(t) \in B_+$, the orbit of the $q$-gauge action has a unique element which takes the form
     \begin{equation} \label{eq:canonical}
         A^{\mathrm{can}}(t) := g(qt)A(t)g(t)^{-1} = \begin{pmatrix}
             \widetilde{A}_1(t) & \widetilde{A}_2(t) & \dots & \widetilde{A}_{N-1}(t) & 1\\
             -1 & 0 & \dots & 0 & 0\\
             0 & -1 & \dots & 0 & 0\\
             \vdots & \vdots & \ddots & \vdots & \vdots \\
             0 & 0 & \dots & -1 & 0
         \end{pmatrix}.
     \end{equation}
     Moreover, the linear part of $\widetilde{A}_r(t)$ with respect to the variables $A_{ij}(t)$ is given by
     \begin{equation*}
         A_{1r}(t) + A_{2, r+1}(qt)+\dots +A_{N-r+1, N}(q^{N-r}t).
     \end{equation*}
\end{lem}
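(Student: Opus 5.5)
We construct the gauge transformation $g(t)\in N_+$ explicitly by an inductive elimination, working row by row from the bottom of the matrix upward. Uniqueness is then shown by analysing the stabilizer-type equation. Throughout, write the $q$-gauge action as $A\mapsto g(qt)A(t)g(t)^{-1}$. Equivalently, we seek $g$ with $g(qt)A(t)=A^{\mathrm{can}}(t)g(t)$. The subdiagonal of $A$ consists of $-1$'s, and $g$ is unipotent upper triangular. Hence the subdiagonal entries of $g(qt)Ag(t)^{-1}$ remain $-1$, and the entries below the subdiagonal remain $0$. So $B_+$ is preserved, and only the upper part $(i\le j)$ changes.

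\textbf{Existence.} We use the elementary matrices $E_{ij}(a(t))$ from the proof of Proposition~\ref{prop:gen2}. Conjugating by $E_{i,i+1}(a)$ changes the $(i+1,j)$ entries: since $A_{i+1,i}=-1$, the right multiplication by $E_{i,i+1}(a(t))^{-1}$ adds $a(t)$ times the column containing $-1$ in row $i+1$. Concretely, the $(i+1,i+1)$ entry becomes $A_{i+1,i+1}(t)+a(t)$ plus lower-order terms. The left multiplication by $E_{i,i+1}(a(qt))$ only modifies row $i$. We proceed for rows $N,N-1,\dots,2$. To kill the entry $(k,j)$ with $k\ge2$ and $j\ge k$, we use $E_{k-1,j}(a)$ with $a=A_{kj}(t)$. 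Right multiplication by $E_{k-1,j}(a)^{-1}$ subtracts $a$ times column $k-1$ from column $j$, and column $k-1$ has entry $-1$ in row $k$. This clears $(k,j)$, while the left factor $E_{k-1,j}(a(qt))$ only alters row $k-1$. Rows below $k$ are unaffected, because column $k-1$ vanishes below row $k$. Processing the rows from the bottom up, and within each row from left to right, gives a matrix whose rows $2,\dots,N$ are $-1$ on the subdiagonal and $0$ elsewhere. The top-row entries $\widetilde A_r$ collect all the transported terms. The $(1,N)$ entry equals $1$ because $\det=1$: expanding the determinant along the last column of the companion-type matrix gives $(-1)^{N-1}\cdot(\pm)\cdot\widetilde A_N=\widetilde A_N$ after the sign bookkeeping, and the determinant is gauge invariant since $\det g=1$.

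\textbf{Uniqueness.} Suppose $g(qt)A^{\mathrm{can}}=\widetilde{A}^{\mathrm{can}}g(t)$ with both sides in canonical form and $g\in N_+$. Compare rows $2,\dots,N$. Row $k$ of $\widetilde A^{\mathrm{can}}g$ is minus row $k-1$ of $g(t)$. Row $k$ of $g(qt)A^{\mathrm{can}}$ is $\sum_l g_{kl}(qt)A^{\mathrm{can}}_{l\,\cdot}$, which equals $-(\text{row } k-1 \text{ of } g)$ up to the shifts. Inducting from $k=N$ downward gives $g_{k-1,j}=0$ for $j>k-1$. Hence $g=I$, and both canonical forms coincide.

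\textbf{Linear part.} This is the main bookkeeping step. Track only the terms of the procedure that are linear in the $A_{ij}$. Each elimination step with parameter $a=A_{kj}(t)$ contributes, through the left factor, the term $a(qt)$ to position $(k-1,j-1)$ via the $-1$ in row $k$, column $k-1$. All other contributions are products of at least two $A$'s. Iterating, the entry $A_{k,k+r-1}(t)$ travels diagonally up to row $1$, column $r$, picking up a factor $q$ at each step, and arrives as $A_{k,k+r-1}(q^{k-1}t)$. Summing over $k$ gives $A_{1r}(t)+A_{2,r+1}(qt)+\dots+A_{N-r+1,N}(q^{N-r}t)$.

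The main obstacle is making the order of eliminations consistent, so that earlier-cleared entries are not reintroduced, and verifying the $q$-shift exponent in the linear part. We would handle this by an induction on $N$ that mirrors the proof of Proposition~\ref{prop:generators}.
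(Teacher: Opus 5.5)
Your argument is correct in substance, but it is organized differently from the paper's, and in one respect it is more complete.

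\textbf{Comparison of routes.} The paper eliminates column by column from right to left (bottom-up within each column) and records explicit update rules \eqref{eq:recursive}. You eliminate row by row from the bottom up, which is essentially the order the paper attributes to the original construction and deliberately replaces. Both orders work for the same reason. Clearing $(k,j)$ with $E_{k-1,j}$ changes only column $j$ in rows $\le k$ (right factor) and row $k-1$ (left factor). In either order, $(k+1,j+1)$ is processed before $(k,j)$, so linear parts travel diagonally exactly as you say. These two locality facts already settle the consistency of the ordering, so no induction on $N$ is needed.

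What the paper's column order buys is visible homogeneity of $\widetilde A_r$ under $\Delta(\mu_{ij})=j-i+1$. That is used in Theorem~\ref{cor:gen}, but it is not part of this lemma. What your version buys is a proof of the whole statement. The paper's proof only shows that each orbit \emph{contains} a canonical element; it does not address uniqueness or the corner entry $1$. Your stabilizer computation gives uniqueness, which Proposition~\ref{prop:gen} needs because it treats $\widetilde A_r$ as functions on orbits. Your determinant expansion gives $\widetilde A_N=\det=1$: the cofactor sign $(-1)^{N+1}$ and the minor $\det(-I_{N-1})=(-1)^{N-1}$ cancel.

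\textbf{Details to correct.}
\begin{enumerate}
\item \emph{Sign of the gauge parameter.} With $a=A_{kj}(t)$, right multiplication by $E_{k-1,j}(a)^{-1}$ turns the $(k,j)$ entry into $M_{kj}+a$, because $M_{k,k-1}=-1$; it does not clear it. You must take $a=-M_{kj}$, where $M_{kj}$ is the \emph{current} entry, which already contains terms transported from row $k+1$. Compare the paper's $E_{\alpha-1,\alpha}(-A^{(0)}_{\alpha\alpha}(t))$.
\item \emph{Mechanism of the linear deposit.} The left factor $E_{k-1,j}(a(qt))$ adds $a(qt)$ times row $j$ to row $k-1$. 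So the new linear term $-a(qt)=M_{kj}(qt)$ lands at $(k-1,j-1)$ through the $-1$ at $(j,j-1)$. The $-1$ at $(k,k-1)$ is what the right factor uses to clear $(k,j)$; the two coincide only when $j=k$. Your two sign slips cancel, so your final formula is right.
\item \emph{Only linear contributions.} To justify ``all other contributions are at least quadratic,'' note that entries on and above the diagonal never acquire constant terms. Hence a linear term can only arise when a gauge parameter hits a subdiagonal $-1$, and that happens only at $(k,j)$ and $(k-1,j-1)$.
\item \emph{Uniqueness identity.} Row $k$ of $g(qt)A^{\mathrm{can}}$ is minus row $k$ of $g(qt)$ shifted one column to the left, not minus row $k-1$ of $g$. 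Equating it with row $k$ of $\widetilde A^{\mathrm{can}}g(t)$ gives $g_{k-1,m}(t)=g_{k,m+1}(qt)$, with $g_{k,N+1}:=0$. Then either your downward induction or iterating along diagonals to column $N$ yields $g=I$.
\end{enumerate}
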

\begin{proof}
        Following the approach of Lemma 1 of \cite{FRS98}, we recursively eliminate the entries of $A(t)$ strictly above the subdiagonal, with the exception of those in the first row. However, while the elimination of \cite{FRS98} is performed row by row, we proceed column by column from right to left. This modification allows us to transparently track the linear terms and degree of $\widetilde{A}_r(t)$.
        
        For $\alpha=1, \dots, N$, let $B_+^\alpha$ denote the subset of matrices in $B_+$ whose entries in columns $i=\alpha+1, \dots, N$ are all zero, except for the first row and the subdiagonal entries at $(i+1, i)$ which remain $-1$. We will prove that for any $A(t) \in B_+^\alpha$ with $\alpha >1$, there exists $g(t) \in N_+$, such that $g(qt)A(t)g(t)^{-1} \in B_+^{\alpha-1}$. Since $B_+^N = B_+$, this implies that each $q$-gauge orbit contains an element of the canonical form \eqref{eq:canonical}.
        
        Let $E_{ij}(f(t)) \in N_+$ denote the upper unipotent matrix whose only non-zero off-diagonal entry is $f(t)$ located at the $(i, j)$ position. To perform the elimination in the $\alpha$-th column, we start with $A(t)=A^{(0)}(t) \in B_+^\alpha$. We apply the $q$-gauge action $E_{\alpha-1, \alpha}\big(-A_{\alpha\alpha}^{(0)}(t)\big)$ to $A^{(0)}(t)$. This action yields a new matrix $A^{(1)}(t)$ which remains in $B^{\alpha}_+$, but with its $(\alpha, \alpha)$-entry eliminated. For each step $\beta=1, \dots, \alpha-2$, we sequentially apply the $q$-gauge action $E_{\alpha-\beta-1, \alpha}\big(-A_{\alpha-\beta, \alpha}^{(\beta)}(t)\big)$ to $A^{(\beta)}(t)$. This action yields a new matrix $A^{(\beta+1)}(t)$ which remains in $B_+^{\alpha}$, but additionally has its $(\alpha-\beta, \alpha)$-entry eliminated. Finally, we obtain the matrix $A^{(\alpha-1)}(t) \in B_+^{\alpha-1}$.

        To investigate the entries $\widetilde{A}_r(t)$ of the canonical matrix $A^{\mathrm{can}}(t)$, we explicitly compute the effect of the $q$-gauge action of $E_{\alpha-\beta-1, \alpha}\big(-A_{\alpha-\beta, \alpha}^{(\beta)}(qt)\big)$ on the intermediate matrix $A^{(\beta)}(t)$. This transformation updates the matrix entries according to the following relations:
        \begin{equation}\label{eq:recursive}
        \begin{aligned}
            A_{x, y}^{(\beta+1)}(t) &= A_{x, y}^{(\beta)}(t) &&\quad\text{for } y\ne \alpha-1, \alpha\\
            A_{x, \alpha-1}^{(\beta+1)}(t) &= A_{x, \alpha-1}^{(\beta)}(t) + \delta_{x, \alpha-\beta-1}A_{x+1, \alpha}^{(\beta)}(qt) &&\\
            A_{x, \alpha}^{(\beta+1)}(t) &= A_{x, \alpha}^{(\beta)}(t)+A_{x, \alpha-\beta-1}^{(\beta)}(t)A_{\alpha-\beta, \alpha}^{(\beta)}(t) &&\quad\text{for } x \le \alpha-\beta-1\\
            A_{x, \alpha}^{(\beta+1)}(t) &= 0 &&\quad\text{for } x \ge \alpha-\beta.
        \end{aligned}
        \end{equation}
        Observe that new linear terms emerge exclusively in the $(\alpha-\beta-1, \alpha-1)$-entry. Tracking only the linear parts during the transformation of $A(t)$ into $A^{\mathrm{can}}(t)$, we can see that each initial entry $A_{ij}(t)$ moves to the upper-left diagonal direction $i-1$ times. At each step of this translation, the variable $t$ is replaced by $qt$. This completes the proof.
\end{proof}

\begin{prop}\label{prop:gen}
    The $q$-difference algebra $\Wbf_q(LSL_N)$ has a free generating set $\{\widetilde{T}_1(z), \dots, \widetilde{T}_{N-1}(z)\}$, where the linear part of $\widetilde{T}_r(z)$ with respect to the variables $\mu_{ij}(z)$ coincides with that of $T_r(z)$.
\end{prop}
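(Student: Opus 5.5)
The plan is to realize $\Wbf_q(LSL_N)$ as functions on the space of canonical forms supplied by Lemma~\ref{lem:gen4}, and then read off the generators from the first row of that form.

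First I will define $\widetilde{T}_r(z)$ for $1\le r\le N-1$. By Lemma~\ref{lem:gen4}, each $A(t)\in B_+$ has a unique canonical representative $A^{\mathrm{can}}(t)$ in its $q$-gauge orbit. The proof of that lemma builds the gauge element $g(t)$ by explicit recursive formulas \eqref{eq:recursive}. Every step uses only additions, products and $q$-shifts of the entries, with no inversions. Consequently each entry $\widetilde{A}_r(t)$ is a polynomial in the $A_{ij}(q^kt)$, $k\ge 0$. Passing to Fourier modes exactly as in \eqref{eq:Lie group action}, the functional $A\mapsto \widetilde{A}_r(t)$ then defines an element $\widetilde{T}_r(z)\in\Vbf_q(B_+)$.

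Next I will show that $\widetilde{T}_r(z)$ is gauge invariant. If $A'=g(qt)Ag(t)^{-1}$ with $g\in N_+$, then $A$ and $A'$ lie in the same orbit. Uniqueness in Lemma~\ref{lem:gen4} gives $A'^{\mathrm{can}}=A^{\mathrm{can}}$, so $\widetilde{T}_r$ takes the same value on $A$ and $A'$. Hence $\widetilde{T}_r(z)\in\widetilde{\Wbf}_q(LSL_N)=\Wbf_q(LSL_N)$ by Proposition~\ref{prop:gen2}. The second claim of Lemma~\ref{lem:gen4} shows that the linear part of $\widetilde{T}_r(z)$ is $\mu_{1r}(z)+\mu_{2,r+1}(qz)+\cdots+\mu_{N-r+1,N}(q^{N-r}z)$. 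I need to check this against the linear part of $T_r(z)$ from Proposition~\ref{prop:generators}: with $m=1$ we get $\sum_{\beta_1-\alpha_1+1=r}\mu_{\alpha_1\beta_1}(q^{\alpha_1-1}z)$. This is the same list of terms, but the column index of the lemma is off by one relative to $T_r$. I expect the cause is that $\widetilde{A}_r$ sits in column $r$ while the determinant expansion indexes by $D^{N-r}$, and I will fix the indexing when comparing the two formulas.

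For generation, take $f\in\Wbf_q(LSL_N)$. Gauge invariance gives $f(A)=f(A^{\mathrm{can}})$ for all $A\in B_+$. The value $f(A^{\mathrm{can}})$ is a polynomial in the Fourier modes of the first-row entries $\widetilde{A}_1,\dots,\widetilde{A}_{N-1}$; the last entry equals $1$ because of the determinant relation $\det=1$. Substituting the polynomial expressions $\widetilde{T}_r$ therefore writes $f$ as a $q$-difference polynomial in $\widetilde{T}_1,\dots,\widetilde{T}_{N-1}$. Strictly, this identity holds as functionals on $B_+$, and I will use that $\Vbf_q(B_+)$ embeds as polynomial functionals on $B_+$ to turn it into an identity in the algebra.

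For freeness, the canonical forms give a section: the map $(\widetilde{A}_1,\dots,\widetilde{A}_{N-1})\mapsto A^{\mathrm{can}}\in B_+$ is defined for arbitrary Laurent polynomials $\widetilde{A}_r$. Restricting $\widetilde{T}_r$ to this slice recovers the coordinate functionals $\mu_{1r}$ (up to index shift), which are algebraically independent together with all their $q$-shifts. Hence no nonzero polynomial relation among the $\widetilde{T}_r$ and their $D$-shifts can hold.

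I expect the main obstacle to be the generation step. It requires a careful translation between functionals on $B_+$ and elements of the algebra, which amounts to injectivity of $\Vbf_q(B_+)\to\mathrm{Fun}(B_+)$ on Fourier modes. It also requires checking that the polynomial formula obtained for $f$ lands in the $q$-difference algebra generated by the $\widetilde{T}_r$, that is, that only finitely many shifts appear. I would first attempt this by a degree/leading-term argument, using that each $\widetilde{T}_r$ equals a coordinate $\mu_{1r}$ plus higher-degree terms, and inducting on degree.
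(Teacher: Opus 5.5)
Your proposal is correct and follows the paper's route. You define $\widetilde{T}_r$ from the first row of the canonical form in Lemma~\ref{lem:gen4}, get invariance from uniqueness of that form, and get generation from $f(A)=f(A^{\mathrm{can}})$. For freeness you restrict to the slice of canonical forms, where the paper instead cites algebraic independence of the linear parts; the two amount to the same thing.

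Two small corrections to your own worries. There is in fact no index shift to fix: the linear part in Lemma~\ref{lem:gen4} is exactly the $m=1$ part of \eqref{eq:generators}, and on the slice $\widetilde{T}_r$ restricts to $\mu_{1r}$ itself. The embedding of $\Vbf_q(B_+)$ into functionals on $B_+$ that you flag follows by interpolation: since $q$ is not a root of unity, the values of the free entries $A_{ij}$ ($i\le j$, $(i,j)\neq(1,N)$) at the distinct points $q^m z_0$ can be prescribed arbitrarily.
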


\begin{proof}
    Note that $\widetilde{A}_r(t)$ in \eqref{eq:canonical} is a polynomial in the variables $A_{ij}(q^nt)$, where $i \le j$ and $n \in \ZZ$. Define $\widetilde{T}_r(z)$ by replacing $A_{ij}(q^nt)$ with $\mu_{ij}(q^nz)$ in the expression for $\widetilde{A}_r(t)$. If we formally define
    \begin{equation*}
        \mu_{ij}(z)(A(t)) := \sum_{m \in \ZZ} \mu_{ij}[m](A(t)) z^{-m-1},
    \end{equation*}
    then $\widetilde{T}_r(z) \in \Wbf_q(LSL_N)$ if and only if $ \widetilde{T}_r(z)(A(t)) = \widetilde{T}_r(z)(A^{\mathrm{can}}(t))$
    for all $A(t) \in B_+$. This follows directly from the definition of $\widetilde{T}_r(z)$.
    
    Now, let $f(z) \in \Wbf_q(LSL_N)$. By the $q$-gauge invariance of $f$, we have
    \begin{equation*}
        f(z)(A(t)) = f(z)(A^{\mathrm{can}}(t))
    \end{equation*}
    for any $A(t) \in B_+$. Since the right-hand side is a polynomial in the entries $(q^nz)^{-1}\widetilde{A}_r(q^nz)$, it follows that $f(z)$ is a polynomial in $\widetilde{T}_r(q^nz)$. Here, the factor $(q^nz)^{-1}$ comes from our convention $\mu_{ij}(z)(A(t)) = z^{-1}A_{ij}(z)$. Therefore, $\widetilde{T}_1(z), \dots, \widetilde{T}_{N-1}(z)$ generate $\Wbf_q(LSL_N)$.
    
    Finally, since their linear parts are algebraically independent, $\widetilde{T}_1(z), \dots, \widetilde{T}_{N-1}(z)$ are algebraically independent. Hence $\Wbf_q(LSL_N)$ is freely generated by $\widetilde{T}_1(z), \dots, \widetilde{T}_{N-1}(z)$.
\end{proof}

From Proposition~\ref{prop:gen2}, Lemma~\ref{lem:gen1}, and Proposition~\ref{prop:gen}, we finally have the following Theorem.

\begin{thm}\label{cor:gen}
    The $q$-difference algebra $\Wbf_q(LSL_N)$ is freely generated by $T_1(z), \dots, T_{N-1}(z)$.
\end{thm}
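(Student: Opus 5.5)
The plan is to combine the three preceding results. By Lemma~\ref{lem:gen1}, each $T_r(z)$ for $1\le r\le N-1$ lies in $\Wbf_q(LSL_N)$. Since $\Wbf_q(LSL_N)$ is a $q$-difference algebra, the $q$-difference subalgebra $\mathcal{T}$ generated by $T_1(z),\dots,T_{N-1}(z)$ is contained in it. For the reverse inclusion and for freeness we use Proposition~\ref{prop:gen}. It says that $\Wbf_q(LSL_N)$ is freely generated, as a $q$-difference algebra, by $\widetilde{T}_1(z),\dots,\widetilde{T}_{N-1}(z)$, and that the linear part of $\widetilde{T}_r(z)$ in the variables $\mu_{ij}(q^nz)$ equals that of $T_r(z)$. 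By Proposition~\ref{prop:generators}, this common linear part is $\sum_{\alpha}\mu_{\alpha,\alpha+r-1}(q^{\alpha-1}z)$.

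The key step is a triangularity argument with respect to polynomial degree in the variables $\mu_{ij}(q^nz)$, $i\le j$. Since $T_r(z)\in\Wbf_q(LSL_N)$, we can write
\[
T_r(z)=P_r\bigl(\widetilde{T}_s(q^nz)\bigr)
\]
for a polynomial $P_r$. Each $\widetilde{T}_s(q^nz)$ has no constant term and a nonzero linear part. Moreover, the linear parts of distinct $\widetilde{T}_s(q^nz)$ are linearly independent: they involve different diagonals $j-i=s-1$, and for a fixed $s$ different shifts $n$ give different arguments. Comparing linear parts then shows that $P_r$ has no constant term and that its linear part is exactly $\widetilde{T}_r(z)$. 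Hence
\[
T_r(z)=\widetilde{T}_r(z)+Q_r,
\]
where $Q_r$ is a polynomial in the $\widetilde{T}_s(q^nz)$ in which every monomial has degree at least $2$ in the $\widetilde{T}$'s.

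Next we invert this relation by induction on degree in the $\mu$'s. Let $\mathcal{T}$ denote the $q$-difference subalgebra generated by the $T_r$. We show that every element of $\Wbf_q(LSL_N)$ of $\mu$-degree at most $d$ lies in $\mathcal{T}$ by solving $\widetilde{T}_r(z)=T_r(z)-Q_r$. The difficulty is that $Q_r$ involves products whose $\mu$-degree may not be strictly controlled. Instead we filter by $\mu$-degree and pass to the associated graded: the lowest-degree component of $T_r$ equals that of $\widetilde{T}_r$. This makes the substitution $\widetilde{T}\mapsto T$ an automorphism of the free $q$-difference polynomial algebra $\CC[\widetilde{T}_s(q^nz)]$ of the form ``identity plus higher-order terms''. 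The inverse of such a map is constructed degree by degree in $\widetilde{T}$-degree and exists, since each step only involves finitely many shifts. Therefore each $\widetilde{T}_r$ is a polynomial in the $T_s(q^nz)$, which gives generation.

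Freeness follows in the same way as in Proposition~\ref{prop:gen}. Any nontrivial polynomial relation among the $T_s(q^nz)$ would have a lowest-order homogeneous part in the linear forms. That part is a nontrivial polynomial relation among the linear parts $\sum_\alpha\mu_{\alpha,\alpha+s-1}(q^{\alpha-1+n}z)$, which are linearly independent, so no such relation exists.

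I expect the main obstacle to be the inversion step, namely checking that the triangular substitution is genuinely invertible in the polynomial (not power-series) sense. The cleanest route is to grade by $\widetilde{T}$-degree. Then $T_r=\widetilde{T}_r+(\text{degree}\ge2)$, and because $\mu$-degree is additive and each $\widetilde{T}$ has positive $\mu$-degree, the recursion terminates.
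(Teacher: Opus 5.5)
Your reduction to $T_r(z)=\widetilde{T}_r(z)+Q_r$, with $Q_r$ of $\widetilde{T}$-degree at least $2$, is correct. So is your freeness argument via lowest-order parts. The gap is the inversion step.

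An endomorphism of a polynomial algebra of the form ``identity plus higher-order terms'' is injective but need not be surjective. For example, $x\mapsto x+x^2$ on $\CC[x]$ has an inverse only as a formal power series. Your termination criterion (additivity of $\mu$-degree and positivity of the $\mu$-degree of each generator) holds verbatim for this example with $\deg x=1$, so it cannot be what stops the recursion. Each back-substitution $\widetilde{T}_s=T_s-Q_s$ inside $Q_r$ creates new remainder terms of strictly higher total degree, and nothing you have shown forces them to vanish eventually. The associated-graded argument only gives invertibility after completing with respect to the $\mu$-degree filtration. Your own earlier worry, that the $\mu$-degrees of the constituents of $Q_r$ are not controlled, is exactly this problem, and the proposal never resolves it.

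The missing ingredient is a grading in which the substitution is homogeneous rather than merely triangular. The paper assigns the weight $\Delta(\mu_{ij}(z))=j-i+1$.
\begin{itemize}
\item By \eqref{eq:generators}, $T_r$ is homogeneous of weight $r$, since $\sum_x(\beta_x-\alpha_x+1)=r$.
\item By \eqref{eq:recursive}, $\widetilde{T}_r$ is also homogeneous of weight $r$. Every update preserves the weight $y-x+1$ attached to position $(x,y)$, and the subdiagonal constants $-1$ sit in weight $0$.
\end{itemize}
The $\widetilde{T}_s(q^nz)$ are algebraically independent and homogeneous. Hence $T_r-\widetilde{T}_r$ is a weight-$r$ polynomial in them with no constant or linear part. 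Each monomial is therefore a product of at least two factors of positive weight summing to $r$, so only $\widetilde{T}_s$ with $s<r$ occur. This gives $\widetilde{T}_r=T_r-Q_r(\widetilde{T}_1,\dots,\widetilde{T}_{r-1})$, and induction on $r$ terminates. With this grading added, the rest of your proposal goes through.
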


\begin{proof}
    Define a grading on $\Vbf_q(B_+)$ by 
    \begin{equation*}
        \Delta(\mu_{ij}(z)) = j-i+1,
    \end{equation*}
    and extend it by
    \begin{equation*}
        \Delta(f(z)g(z)) = \Delta(f(z)) + \Delta(g(z)).
    \end{equation*}
    By \eqref{eq:recursive}, the element $\widetilde{T}_r(z)$ is homogeneous of degree $r$. Moreover, \eqref{eq:generators} implies that $T_r(z)$ is also homogeneous of degree $r$. Hence
    \begin{equation*}
        \Wbf_q(LSL_N) = \bigoplus_{\Delta \ge 0} \Wbf_q(LSL_N)_{\Delta},
    \end{equation*}
    where $\Wbf_q(LSL_N)_{\Delta}$ denotes the degree $\Delta$ subspace of $\Wbf_q(LSL_N)$.

    Let $\overline{\Wbf}_q(LSL_N)$ be the $q$-difference subalgebra of $\Wbf_q(LSL_N)$ generated by $T_1(z), \dots, T_{N-1}(z)$ and let $\overline{\Wbf}_q(LSL_N)_{\Delta}$ denote its degree $\Delta$ subspace. Since $\Wbf_q(LSL_N)$ is freely generated by $\widetilde{T}_1(z), \dots, \widetilde{T}_{N-1}(z)$, it suffices to show that
    \begin{equation*}
        \widetilde{T}_r \in \overline{\Wbf}_q(LSL_N)_{r}
    \end{equation*}
    for every $1 \le r \le N-1$. We prove this by induction on $r$. Since $T_r(z)$ and $\widetilde{T}_r(z)$ have the same linear part, their difference
    \begin{equation*}
        \widetilde{T}_r(z) - T_r(z)
    \end{equation*}
    has degree $r$ and contains no linear terms in the generators $\mu_{ij}(z)$. Any such non-linear homogeneous element of degree $r$ must be formed by multiplying generators of strictly lower degrees, and so it is a polynomial in $\widetilde{T}_1(z), \dots, \widetilde{T}_{r-1}(z)$. By the induction hypothesis, $\widetilde{T}_r(z) - T_r(z)$ is a polynomial in $T_1(z), \dots, T_{r-1}(z)$, which implies that $\widetilde{T}_r(z) \in \overline{\Wbf}_q(LSL_N)_{r}$. Therefore $\Wbf_q(LSL_N)$ is freely generated by $T_1(z), \allowbreak \dots, \allowbreak T_{N-1}(z)$.
\end{proof}

\subsection{Poisson structure of the deformed \texorpdfstring{$W$}{W}-algebra} \label{subsec:Poisson of deformed W}
In this subsection, we demonstrate that the Poisson structure of $\Vbf_q(LSL_N)$ induces a well-defined Poisson structure on $\Wbf_q(LSL_N)$.
Furthermore, we compute the explicit Poisson brackets between generating series $T_1(z), \dots, T_{N-1}(z)$ of $\Wbf_q(LSL_N)$.

\begin{prop}\label{prop:PA}
    The Poisson bracket of $\Vbf_q(LSL_N)$ induces a well-defined Poisson algebra structure on $\Wbf_q(LSL_N)$.
\end{prop}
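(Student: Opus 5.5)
The approach is to define the bracket on $\Wbf_q(LSL_N)$ by lifting, as in classical Hamiltonian reduction. Let $\Ical\subset \Vbf_q(LSL_N)$ be the $q$-difference ideal $\ker \pi_{B_+}$. It is generated by the constraint series $\phi_{ij}(z):=\mu_{ij}(z)+\delta_{i,j+1}$ for $i>j$. For $f,g\in\Wbf_q(LSL_N)$, choose lifts $F,G\in\Vbf_q(LSL_N)$ with $\pi_{B_+}(F)=f$ and $\pi_{B_+}(G)=g$, and set
\begin{equation*}
\{f(z),g(w)\}:=\pi_{B_+}\{F(z),G(w)\}.
\end{equation*}
Three things must be checked: the definition does not depend on the lifts; the result again lies in $\Wbf_q(LSL_N)$; and skew-symmetry, the Jacobi identity and the Leibniz rule hold. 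The last of these is inherited from $\Vbf_q(LSL_N)$ (Proposition~\ref{prop:Poisson} and Lemma~\ref{lem:determinant}) once the first two are established, because $\pi_{B_+}$ is an algebra map.

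\textbf{Constraints generate the gauge action.} The key computation uses \eqref{eq:PoissonBracketDefinition}. For $i>j$ and an arbitrary generator $\mu_{kl}(w)$, I would show that $\pi_{B_+}\{\phi_{ij}(z),\mu_{kl}(w)\}$ is a combination of the infinitesimal gauge actions $\pi_{B_+}(e_{ab}[p]\cdot\mu_{kl}(w))$, $a<b$, from \eqref{eq:Lie algebra action}, with coefficients in $\pi_{B_+}(\Vbf_q(LSL_N))$. I expect the bracket with the subdiagonal constraints $\mu_{j+1,j}$ to give the generators $e_{j,j+1}[p]$ up to invertible factors. The deeper constraints $\mu_{ij}$ with $i>j+1$ should give the higher root vectors, or combinations lying in the ideal generated by such actions. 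The terms $\delta(qw/z)$ and $\delta(w/qz)$ in \eqref{eq:PoissonBracketDefinition} should produce exactly the $q^p$ twist in $e_{ab}[p]\cdot\mu_{kl}=q^p\delta_{ak}\mu_{bl}-\delta_{bl}\mu_{ka}$. I also have to check that the $C_{ijkl}$-term and the $\varepsilon$-terms either vanish after projection, since they contain $\phi$ or $\mu_{ab}$ with $a>b+1$, or assemble into these gauge vector fields. By the Leibniz rule the statement then extends from generators to all $G$.

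\textbf{Independence of lifts.} Suppose $F\in\Ical$, so $F=\sum a\,\phi$. Then
\begin{equation*}
\pi_{B_+}\{F,G\}=\sum \pi_{B_+}(a)\,\pi_{B_+}\{\phi,G\}.
\end{equation*}
By the previous step and Definition~\ref{def:W}, this vanishes when $\pi_{B_+}(G)\in\Wbf_q(LSL_N)$. For this I also need $\pi_{B_+}(X\cdot\Ical)=0$, i.e.\ that the gauge action preserves $B_+$ to first order; this follows from \eqref{eq:Lie algebra action} directly. Skew-symmetry then gives independence in the other slot as well.

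\textbf{Closure.} I would show that $\pi_{B_+}(X\cdot\{F,G\})=0$ for $X\in\nfr_+$. Since $X$ acts through brackets with constraints, this is the Jacobi identity:
\begin{equation*}
\{\phi,\{F,G\}\}=\{\{\phi,F\},G\}+\{F,\{\phi,G\}\}.
\end{equation*}
Each term on the right has the form $\{(\text{element of }\Ical),\,\cdot\,\}$ evaluated against an invariant, and so lies in $\Ical$ by the independence step. An alternative route is through Theorem~\ref{cor:gen}: it suffices to check that $\pi_{B_+}\{T_r(z),T_s(w)\}$ is $\nfr_+$-invariant, which could also be verified via the Miura map.

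The main obstacle is the first step, identifying the Hamiltonian vector fields of the constraints with the $q$-gauge action modulo $\Ical$. The $C$-term and the half-sum $\varepsilon$-terms do not obviously project to gauge actions, and the constraint algebra is not first class in the naive sense. I would first work out $N=2,3$ using Examples~\ref{ex:N=2}--\ref{ex:N=3} to locate the correct combination of constraints (possibly multiplied by $\mu_{ii}$ factors) that generates each $e_{i,i+1}[p]$.
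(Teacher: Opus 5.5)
Your route to independence of lifts differs from the paper's, and it works, but not in the form you anticipate. The paper argues by direct computation: Lemmas~\ref{lem:AppendixC1}--\ref{lem:AppendixC2} list $\pi_{B_+}\{\mu_{i+d,i}(z),\mu_{kl}(w)\}$. Lemmas~\ref{lem:well1}--\ref{lem:well2} then check degree by degree, through the index sets $\Jcal^N_{m,r}$, that $\pi_{B_+}\{\mu_{i+d,i}(z),T_r(w)\}=0$, and independence of lifts is deduced for the generators $T_r$.

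Your key step can be read off from the same two lemmas; note that the $\mu_{i+1,l}(w)$ in Lemma~\ref{lem:AppendixC2}(2) should be $\mu_{i+d,l}(w)$. With $E_{ib}(z):=\sum_{p\in\ZZ}z^{-p}e_{ib}[p]$, one has for $j>i$ and all $k,l$:
\begin{equation*}
\pi_{B_+}\{\mu_{ji}(z),\mu_{kl}(w)\}=-\sum_{b>i}\pi_{B_+}\big(\mu_{jb}(z)\big)\,\pi_{B_+}\big(E_{ib}(z)\cdot\mu_{kl}(w)\big).
\end{equation*}
So the gauge parameter is row $j$ of the matrix, not an invertible multiple of one root vector. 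Already $\mu_{i+1,i}$ involves every $e_{ib}$ with $b>i$, and for $d=1$ the $C$-term does not vanish but combines exactly with the $\varepsilon$-term.

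Contrary to your worry, the constraints are first class. In $\pi_{B_+}\{\phi,\psi\}$ only the $C$-term between two subdiagonal constraints survives, and $C_{j+1,j,l+1,l}[r]=0$. Combined with $\pi_{B_+}(X\cdot\Ical)=0$, your argument gives $\pi_{B_+}\{\phi,G\}=0$ for every lift $G$ of every element of $\Wbf_q(LSL_N)$, directly from Definition~\ref{def:W}. Lemmas~\ref{lem:well1}--\ref{lem:well2} then become corollaries of Lemma~\ref{lem:gen1}, and Theorem~\ref{cor:gen} is not needed. The paper's route needs no structural insight, but it covers only the explicit generators and does so through long combinatorics.

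The closure step has a genuine gap. The identity runs only one way, and the coefficient matrix $(\pi_{B_+}\mu_{jb}(z))_{j,b>i}$ is not invertible; for $N=2$ it reads $\pi_{B_+}\{\mu_{21}(z),\cdot\}=-\mu_{22}(z)\,\pi_{B_+}\circ E_{12}(z)$. So the claim that $X$ acts through brackets with constraints is false. Jacobi plus lift-independence only give
\begin{equation*}
\sum_{b>i}\pi_{B_+}\big(\mu_{jb}(z)\big)\,\pi_{B_+}\big(E_{ib}(z)\cdot\{F,G\}\big)=0 \quad \text{for } j>i.
\end{equation*}
At a point $A(t)\in B_+$, the constraint fields reach only gauge parameters whose $i$-th row lies in the image of right multiplication by the block $(A_{jb}(t))_{j,b>i}$. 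That image is a proper subspace whenever the block's determinant is not a unit of $\CC[t^{\pm1}]$.

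To conclude $\pi_{B_+}(e_{ib}[p]\cdot\{F,G\})=0$ you need an extra cancellation argument. For instance, evaluate at points, where only finitely many $p$ contribute so everything becomes Laurent polynomials in $t$. Then cancel the nonzero minor in the domain $\CC[t^{\pm1}]$ and finish by density.

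Your fallback through the Miura map does not help. $m_q$ kills all $\mu_{ij}$ with $i<j$ and is injective only on $\Wbf_q(LSL_N)$, so it cannot detect whether $\pi_{B_+}\{T_r,T_s\}$ is $\nfr_+$-invariant. The paper's proof of this proposition only establishes lift-independence. Membership of the bracket in $\Wbf_q(LSL_N)$ is what the explicit formula of Proposition~\ref{prop:independenth} supplies, by writing $\{T_i,T_j\}$ as a polynomial in the $T$'s. That formula, not the Miura map, is what your alternative route should invoke.
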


\begin{proof}
    For any $T_i(z), T_j(z) \in \Wbf_q(LSL_N)$, let $\overline{T}_i(z), \overline{T}_j(z)$ be arbitrary lifts to $\Vbf_q(LSL_N)$ via surjective $q$-difference algebra homomorphism $\pi_{B_+}$. From Lemmas~\ref{lem:well1} and \ref{lem:well2},
    \begin{equation*}
        \pi_{B_+}\{\overline{T}_i(z), \overline{T}_j(w)\} = \pi_{B_+}\{T_i(z), T_j(w)\},
    \end{equation*}
    which shows that the induced bracket is independent of the choice of lifts.
\end{proof}

\begin{prop}\label{prop:independenth}
    In the algebra $\Wbf_q(LSL_N)$, the Poisson brackets for $i \le j$ are explicitly given by
    \begin{align}
    \{T_i(z), T_j(w)\}&= \sum_{r \in \ZZ} \left( \frac{q^{j-i}w}{z} \right)^r \frac{(1-q^{ir})(1-q^{(N-j)r})}{1-q^{Nr}}T_i(z)T_j(w) \\
    &\notag+ \sum_{x=1}^{\min\{i, N-j\}} \delta\left( \frac{q^xw}{z}\right) T_{j+x}(z)T_{i-x}(w) \\
    &\notag- \sum_{x=1}^{\min\{i, N-j\}} \delta\left( \frac{w}{q^{j-i+x}z}\right) T_{i-x}(z) T_{j+x}(w),
    \end{align}
    where we set $T_{0}(z) =T_{N}(z) = 1$.
\end{prop}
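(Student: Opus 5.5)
The plan is to compute the bracket in $\Wbf_q(LSL_N)$ through convenient lifts to $\Vbf_q(LSL_N)$. Proposition~\ref{prop:PA} makes the result independent of the lift, and in practice this reduces to working in the diagonal algebra $\Vbf_q(H)$. Concretely, I would take the lifts of $T_i(z)$ and $T_j(w)$ to be the expressions \eqref{eq:generators}, now read in $\Vbf_q(LSL_N)$. The image of any element of $\Wbf_q(LSL_N)$ is determined by its values on the canonical slice, or equivalently under the Miura projection. So it suffices to compute the bracket after setting the off-diagonal upper entries to zero.

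This reduction has to be justified carefully. The bracket \eqref{eq:PoissonBracketDefinition} of two diagonal generators produces, besides the $C_{iijj}$ term, the terms $\delta(qw/z)\mu_{i\alpha}\mu_{\alpha i}$ and $\delta(w/qz)\mu_{\alpha i}\mu_{i\alpha}$. Under $\pi_{B_+}$ these survive only when $\alpha=i+1$, where they become $-\mu_{i,i+1}$ times $-1$. For this reason I would not simply drop them. Instead I would use the Miura map $m_q$ of \S\ref{subsec:q-miura} as a Poisson homomorphism into $\Vbf_q(H)$ with the bracket $\{\mu_{ii}(z),\mu_{jj}(w)\}=C_{iijj}(w/z)\mu_{ii}(z)\mu_{jj}(w)$. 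Its injectivity is stated there, and its Poisson property is what \S\ref{subsec:q-miura} establishes.

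Under $m_q$, the element $T_r(z)$ becomes the elementary-symmetric-type sum
\[
\sum_{\alpha_1<\dots<\alpha_r}\prod_{x}\mu_{\alpha_x\alpha_x}(q^{\alpha_x-x}z),
\]
obtained from the product $(D+\mu_{11})\cdots(D+\mu_{NN})$. Setting $\Lambda_a(z):=\mu_{aa}(q^{a-1}z)$ and writing $C_{aabb}[r]$ explicitly, the structure function
\[
C_{aabb}[r]=\Phi_{ab}[r]-q^r\Psi_{ab}[r]-q^{-r}\Omega_{ab}[r]
\]
simplifies to a piecewise formula depending only on whether $a<b$, $a=b$ or $a>b$. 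Including the shifts $q^{a-1}$, the bracket $\{\Lambda_a(z),\Lambda_b(w)\}$ then takes the form $f_{\mathrm{sgn}(a-b)}(q^{b-a}w/z)\Lambda_a(z)\Lambda_b(w)$ with explicit rational $f$. I would check this against Example~\ref{ex:N=2}, which gives $\frac{1-q^r}{1+q^r}$.

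The main computation expands $\{T_i(z),T_j(w)\}$ by Leibniz as a sum over pairs of subsets $A,B$ of sizes $i,j$. For each pair we sum the structure functions over $a\in A$, $b\in B$, with the shifts determined by positions. The claim is that this sum equals the universal coefficient
\[
\frac{(1-q^{ir})(1-q^{(N-j)r})}{1-q^{Nr}}\Big(\frac{q^{j-i}w}{z}\Big)^r
\]
plus corrections supported on $\delta$-functions. The corrections come from the fact that $\sum_r q^{kr}x^r$ with shifted exponents differs from the "periodic" choice by finite geometric sums, which collapse to $\delta(q^xw/z)$ terms. Those delta terms, once we use the relation $\delta(w/z)F(z)=\delta(w/z)F(w)$, reassemble monomials by exchanging elements between $A$ and $B$. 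This yields $T_{j+x}(z)T_{i-x}(w)$ and $T_{i-x}(z)T_{j+x}(w)$.

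The \textbf{main obstacle} is this combinatorial bookkeeping: identifying exactly which configurations of $A$ and $B$ produce each $\delta(q^xw/z)$ term, and checking that they assemble into products of $T$'s with coefficient $\pm1$. I would first do the cases $i=1$ and $N=2,3$ by hand to fix signs. Then I would organise the general case by a generating-function argument: write $\sum_r x^r T_i(z)T_j(w)$-type sums using the transfer-matrix form $(D+\Lambda_1)\cdots$, and verify the identity as a functional relation in the spectral variable.
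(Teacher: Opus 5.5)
Your reduction to $\Vbf_q(H)$ through the injective Poisson map $m_q$ of Proposition~\ref{prop:Miura} is legitimate. It is also a genuinely different route from the paper's; it is essentially the Frenkel--Reshetikhin free-field computation.

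The paper never passes to the diagonal. It stays in $\Vbf_q(B_+)$ and keeps only monomials in the first-row variables $\mu_{1\alpha}$, which in effect restricts to the canonical slice of Lemma~\ref{lem:gen4}, on which $T_\alpha$ reduces to $\mu_{1\alpha}$. The non-$\delta$ term then comes from the single bracket $\{\mu_{1i}(z),\mu_{1j}(w)\}_1+\{\mu_{1i}(z),\mu_{1j}(w)\}_2$. The $\delta$-terms come from four explicit families of $\{\cdot,\cdot\}_3$-brackets, and no sum over subsets is needed. Your route buys a purely diagonal bracket at the price of a combinatorial identity over all pairs $(A,B)$. That identity is the whole content of the proposition, and your proposal only announces it as the ``main obstacle''; it does not carry it out.

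More importantly, the identity you set out to verify is false for $i<j$. First a slip: with $\Lambda_a(z)=\mu_{aa}(q^{a-1}z)$ the shift is absorbed, so
$\{\Lambda_a(z),\Lambda_b(w)\}=f_{\mathrm{sgn}(a-b)}(w/z)\Lambda_a(z)\Lambda_b(w)$, with argument $w/z$ rather than $q^{b-a}w/z$. Here
$f_<[r]=-\frac{(1-q^r)^2q^{(N-1)r}}{1-q^{Nr}}$, $f_=[r]=f_<[r]+1-q^r$ and $f_>[r]=f_<[r]+2-q^r-q^{-r}$.
Since $\Lambda_{a_x}$ sits at $q^{1-x}z$ and $\Lambda_{b_y}$ at $q^{1-y}w$, the pair $(x,y)$ carries weight $q^{(x-y)r}$, and
\begin{equation*}
\sum_{x=1}^{i}\sum_{y=1}^{j}f_<[r]\,q^{(x-y)r}=-\frac{(1-q^{ir})(1-q^{jr})\,q^{(N-j)r}}{1-q^{Nr}}=\frac{(1-q^{ir})(1-q^{(N-j)r})}{1-q^{Nr}}-(1-q^{ir}).
\end{equation*}
So for every $(A,B)$ the coefficient is $\frac{(1-q^{ir})(1-q^{(N-j)r})}{1-q^{Nr}}$ plus a finite Laurent polynomial in $q^r$. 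The generic term is therefore $(w/z)^r$, not $(q^{j-i}w/z)^r$.

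The discrepancy $(q^{(j-i)r}-1)\frac{(1-q^{ir})(1-q^{(N-j)r})}{1-q^{Nr}}$ is not a Laurent polynomial in $q^r$. Indeed, the factor $\Phi_N(q^r)$ of the denominator cannot cancel, because $0<i,\,j-i,\,N-j<N$. Hence it cannot be absorbed into $\delta$-terms. Concretely, for $N=3$, $i=1$, $j=2$ one finds $\{T_1(z),T_2(w)\}=\sum_r\frac{(1-q^r)^2}{1-q^{3r}}(w/z)^rT_1(z)T_2(w)+\delta(qw/z)-\delta\bigl(w/(q^2z)\bigr)$.

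This agrees with the final display of the paper's own proof, which also has $(w/z)^r$. The factor $q^{j-i}$ in the displayed statement is inconsistent with the normalization of $T_r$ used here; it is harmless only when $i=j$.

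What remains for you is to adopt the corrected target and then show one thing. The leftover Laurent polynomials are
$-(1-q^{ir})+\sum_{a_x=b_y}(1-q^r)q^{(x-y)r}+\sum_{a_x>b_y}(2-q^r-q^{-r})q^{(x-y)r}$.
Read them as combinations of $\delta(q^kw/z)$, evaluate at $z=q^kw$, and use $\prod_a\mu_{aa}=1$; you must show they telescope to exactly the two $\delta$-sums of the statement. Alternatively, switch to the paper's restriction to first-row monomials, which avoids this bookkeeping altogether.
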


\begin{proof}
    Suppose $i\le j$. To extract the terms corresponding to $T_{\alpha}$ from the Poisson bracket, it suffices to track the terms containing $\mu_{1\alpha}(z)$. Thus, in computing $\{T_{i}(z), T_{j}(w)\}$, we restrict our attention to the monomial terms in $\mu_{1\alpha}(z)$. The computations for the cases $i+j \le N$ and $i+j>N$ are analogous. For the sake of brevity, we assume $i+j\le N$ and omit the details for the latter case. Let us decompose the underlying bracket into three parts:
    \begin{equation*}
    \begin{aligned}
    \{\mu_{ij}(z), \mu_{kl}(w)\}_1=& C_{ijkl}\left(\frac{w}{z}\right) \mu_{ij}(z) \mu_{kl}(w),\\
    \quad \{\mu_{ij}(z), \mu_{kl}(w)\}_2 =&  \frac{1}{2}(\varepsilon_{ik}+\varepsilon_{lj}) \delta\left(\frac{w}{z}\right) \mu_{kj}(z) \mu_{il}(w),\\
    \{\mu_{ij}(z), \mu_{kl}(w)\}_3 =&- \sum_{\alpha > j} \delta_{jk}\delta\left(\frac{qw}{z}\right) \mu_{i\alpha}(z) \mu_{\alpha l}(w) + \sum_{\alpha > i}\delta_{il} \delta\left(\frac{w}{qz}\right)\mu_{\alpha j}(z) \mu_{k\alpha}(w).
    \end{aligned}
    \end{equation*}
    The following term is the only term of the monomials in $\mu_{1 \alpha}(z)$ appearing as a result of bracket $\{T_i(z), T_j(w)\}_1 +\{T_i(z), T_j(w)\}_2$:
    \begin{equation*}
    \begin{aligned}
        \{\mu_{1i}(z), \mu_{1j}(w)\}_1 &+ \{\mu_{1i}(z), \mu_{1j}(w)\}_2 = \sum_{r \in \ZZ}\frac{(1-q^{ir})(1-q^{(N-j)r})}{1-q^{Nr}}\left(\frac{w}{z}\right)^r\mu_{1i}(z)\mu_{1j}(w).
        \end{aligned}
    \end{equation*}
    From the bracket $\{\cdot, \cdot\}_3$, we obtain the following contributions:
    \begin{equation*}
    \begin{aligned}
        \left\{\mu_{1i}(z), \sum_{y=1}^{i-1}\mu_{1y}(w)\mu_{i, i+j-y-1}(q^{i-y-1}w)\right\}_3 &= \sum_{y=1}^{i-1}\delta\left(\frac{q^{i-y}w}{z}\right)\mu_{1, i+j-y}(z)\mu_{1y}(w),\\
        \left\{\mu_{1i}(z), \mu_{i, i+j-1}(q^{i-1}w)\right\}_3 &= \delta\left(\frac{q^iw}{z}\right) \mu_{1, i+j}(z), \\
        \left\{\sum_{y=1}^{i-1}\mu_{1y}(z)\mu_{j, j+i-y-1}(q^{j-y-1}z), \mu_{1j}(w)\right\}_3&= -\sum_{y=1}^{i-1}\delta\left(\frac{w}{q^{j-y}z}\right) \mu_{1y}(z)\mu_{1, i+j-y}(w),\\
        \left\{\mu_{j, j+i-1}(q^{j-1}z), \mu_{1j}(w)\right\}_3 &= -\delta\left(\frac{w}{q^jz}\right) \mu_{1, i+j}(w).
        \end{aligned}
    \end{equation*}
    Making the change of variables $x=i-y$, we obtain
    \begin{equation*}
    \begin{aligned}
    \{T_{i}(z), T_{j}(w)\} &=\sum_{r \in \ZZ} \frac{(1-q^{ir})(1-q^{(N-j)r})}{1-q^{Nr}}\left( \frac{w}{z} \right)^r T_{i}(z)T_{j}(w) \\
    &\quad + \sum_{x=1}^{i} \delta\left( \frac{q^xw}{z}\right) T_{j+x}(z)T_{i-x}(w)  - \sum_{x=1}^{i} \delta\left( \frac{w}{q^{j-i+x}z}\right) T_{i-x}(z) T_{j+x}(w)
    \end{aligned}
    \end{equation*}
    as desired.
\end{proof}

\subsection{Miura transformations} \label{subsec:q-miura}

We now introduce the deformed analogue of the Miura transformation. Recall the Poisson bracket formula of $\Vbf_q(LSL_N)$:
\begin{equation*}
    \begin{aligned}
        \{\mu_{ij}(z), \mu_{kl}(w)\}&= C_{ijkl}\left(\frac{w}{z}\right) \mu_{ij}(z)\mu_{kl}(w) +\frac{1}{2}(\varepsilon_{ik}+\varepsilon_{lj})\delta\left(\frac{w}{z}\right)\mu_{kj}(z)\mu_{il}(w) \\
        &\quad -\delta_{jk}\sum_{\alpha > j} \delta\left(\frac{qw}{z}\right)\mu_{i\alpha}(z)\mu_{\alpha l}(w) + \delta_{il} \sum_{\alpha > i} \delta\left(\frac{w}{qz}\right) \mu_{\alpha j}(z) \mu_{k\alpha}(w) .
    \end{aligned}
\end{equation*}
Note that the first term $C_{ijkl}\left(\frac{w}{z}\right) \mu_{ij}(z)\mu_{kl}(w)$ originates from the diagonal part of the classical $r$-matrix.
\begin{defn}\label{def:deformed_Heisenberg_algebra}
    The \emph{deformed Heisenberg algebra} $\Vbf_q(H)$ is the $q$-difference algebra generated by $\mu_{ii}(z)$ for $1\le i \le N$ subject to the relation $\prod_{i=1}^N \mu_{ii}(z) =1$.
    We endow $\Vbf_q(H)$ with the following Poisson structure:
    \begin{equation}\label{eq:HeisenbergPB}
    \begin{aligned}
        \{\mu_{ii}(z), \mu_{jj}(w) \} &:= C_{iijj}\left(\frac{w}{z}\right) \mu_{ii}(z)\mu_{jj}(w) \\
        &= \begin{cases}
            -\sum_{r \in \ZZ} \frac{(1-q^r)^2}{1-q^{Nr}} \left(\frac{q^{N+i-j-1}w}{z}\right)^r \mu_{ii}(z)\mu_{jj}(w)\quad &\text{if } i < j,\\
            \sum_{r \in \ZZ} \frac{(1-q^r)(1-q^{(N-1)r})}{1-q^{Nr}} \left(\frac{w}{z}\right)^r \mu_{ii}(z)\mu_{ii}(w)\quad &\text{if } i=j.
        \end{cases}
    \end{aligned}
    \end{equation}
\end{defn}

\begin{lem} \label{lem:partial-D}
    The algebra $\Vbf_q(H)$ equipped with the bracket \eqref{eq:HeisenbergPB} is a Poisson algebra.
\end{lem}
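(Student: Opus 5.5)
The bracket \eqref{eq:HeisenbergPB} is of ``multiplicative constant-coefficient'' type: on generators it has the form $\{\mu_{ii}(z),\mu_{jj}(w)\}=C_{iijj}(w/z)\,\mu_{ii}(z)\mu_{jj}(w)$. It is extended to all of $\Vbf_q(L\Mat_N)$-style polynomials in $\mu_{ii}(q^mz)$ by the Leibniz rule. Compatibility with $D$ holds automatically, since the bracket is defined on Fourier modes by $\{\mu_{ii}[m],\mu_{jj}[n]\}=\sum_r C_{iijj}[r]\mu_{ii}[m-r]\mu_{jj}[n+r]$. So the plan is to check three things: skew-symmetry on generators, the Jacobi identity on generators, and compatibility with the relation $\prod_i\mu_{ii}(z)=1$. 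As remarked before Proposition~\ref{prop:Poisson}, the first two on generators suffice for the whole algebra by the Leibniz rule and $D$-compatibility.

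\emph{Skew-symmetry.} We need $C_{iijj}[r]=-C_{jjii}[-r]$. I would verify this directly from the closed forms in \eqref{eq:HeisenbergPB}, using $\frac{x^{-1}}{1-x^{-N}}=-\frac{x^{N-1}}{1-x^N}$ with $x=q^r$.

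For $i<j$: exchanging $i,j$ and $r\mapsto -r$ must turn $-\frac{(1-q^r)^2}{1-q^{Nr}}q^{(N+i-j-1)r}$ into its negative. The case $j<i$ of $C$ can be read off from \eqref{eq:C}, giving $\frac{(1-q^r)^2}{1-q^{Nr}}q^{(i-j-1)r}$ or similar. A one-line computation then matches the two expressions.

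For $i=j$: $\frac{(1-q^r)(1-q^{(N-1)r})}{1-q^{Nr}}$ is odd under $r\mapsto -r$, since the numerator is multiplied by $q^{-Nr}$ and the denominator by $-q^{-Nr}$.

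\emph{Jacobi identity.} For brackets of the form $\{a(z),b(w)\}=c_{ab}(w/z)a(z)b(w)$, the double bracket computes as
\[
\{\mu_{ii}(z),\{\mu_{jj}(w),\mu_{kk}(u)\}\}=C_{jjkk}\!\left(\tfrac{u}{w}\right)\Bigl(C_{iijj}\!\left(\tfrac{w}{z}\right)+C_{iikk}\!\left(\tfrac{u}{z}\right)\Bigr)\mu_{ii}(z)\mu_{jj}(w)\mu_{kk}(u).
\]
The Jacobiator is then the cyclic sum of products $C_{jk}(C_{ij}+C_{ik})$ with appropriate signs. By skew-symmetry this cancels identically, exactly as for any log-canonical bracket with constant coefficients. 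The only care needed is that products like $C(u/w)C(w/z)$ are well-defined formal series. Working on Fourier modes, each term is a finite sum for fixed output modes? Not quite: the sums over $r$ are infinite. So I would phrase everything as formal identities in $\mathbb{C}(q)$-coefficient expressions of the form $\sum_{r,s}c_rc'_s\,\mu[\cdot]\mu[\cdot]\mu[\cdot]$, where cancellation happens termwise in $(r,s)$. Alternatively, and more economically, I would note that $\Vbf_q(H)$ with this bracket is the ``diagonal part'' of \eqref{eq:Poisson_R}: its bracket is $\langle\frac{I+\theta}{I-\theta}\nabla^0\cdot,\nabla^0\cdot\rangle$-type with $\theta$ commuting, so Jacobi reduces to the abelian case.

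\emph{The relation.} We must show $\{\mu_{ii}(z),\prod_j\mu_{jj}(w)\}=0$. By Leibniz this equals $\sum_jC_{iijj}(w/z)\,\mu_{ii}(z)\prod_j\mu_{jj}(w)$, so it suffices that $\sum_jC_{iijj}[r]=0$. This is exactly the computation in the proof of Lemma~\ref{lem:determinant} (term \eqref{eq:determinant1}) specialized to $\sigma=\mathrm{id}$ and $j=i$. The ideal generated by $\prod\mu_{jj}(q^mz)-1$ is therefore a Poisson ideal, and the bracket descends.

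I expect the main obstacle to be the sign bookkeeping in skew-symmetry for $i\ne j$, since $C_{jjii}$ for $j>i$ is not listed explicitly and must be extracted from \eqref{eq:C} with the $|\cdot|_N$ conventions. I would handle it by writing out $\Phi,\Psi,\Omega$ for the two orderings and simplifying.
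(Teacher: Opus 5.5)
Your argument is correct and is essentially the paper's proof: skew-symmetry from $C_{iijj}[r]=-C_{jjii}[-r]$, and the Jacobi identity from the pairwise cancellation (via this skew-symmetry) of the six products of $C$'s in the cyclic sum. Your additional check that $\sum_j C_{iijj}[r]=0$ shows that $\prod_i\mu_{ii}(z)-1$ generates a Poisson ideal, so the bracket descends to $\Vbf_q(H)$; the paper's proof leaves this point implicit.

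One small correction: for $i>j$, \eqref{eq:C} gives $C_{iijj}[r]=-\frac{(1-q^r)^2}{1-q^{Nr}}\,q^{(i-j-1)r}$, with a minus sign. With the sign you tentatively wrote, the skew-symmetry check would fail rather than close.
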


\begin{proof}
    Skew-symmetry follows directly from $C_{iijj}[r] = -C_{jjii}[-r].$
    The Jacobi identity follows directly from straightforward computation:
\begin{equation*}
    \begin{aligned}
    &\{\mu_{ii}(z), \{\mu_{jj}(w), \mu_{kk}(x)\}\}+\{\mu_{jj}(w), \{\mu_{kk}(x), \mu_{ii}(z)\}\}+\{\mu_{kk}(x), \{\mu_{ii}(z), \mu_{jj}(w)\}\}\\
    &=\Bigg(C_{iijj}\left(\frac{w}{z}\right)C_{jjkk}\left(\frac{x}{w}\right)+C_{iikk}\left(\frac{x}{z}\right)C_{jjkk}\left(\frac{x}{w}\right)+C_{jjkk}\left(\frac{x}{w}\right)C_{kkii}\left(\frac{z}{x}\right)\\
    &\quad +C_{jjii}\left(\frac{z}{w}\right)C_{kkii}\left(\frac{z}{x}\right)+C_{kkii}\left(\frac{z}{x}\right)C_{iijj}\left(\frac{w}{z}\right)+C_{kkjj}\left(\frac{w}{x}\right)C_{iijj}\left(\frac{w}{z}\right)\Bigg) \mu_{ii}(z)\mu_{jj}(w)\mu_{kk}(x).
    \end{aligned}
\end{equation*}
  Since $C_{iijj}\left(\frac{w}{z}\right) = -C_{jjii}\left(\frac{z}{w}\right)$, we see that the second and third terms cancel, the fourth and fifth terms cancel, and the first and sixth terms cancel.
\end{proof}

\begin{defn}
    The \emph{deformed Miura transformation} is the composition of the inclusion $\Wbf_q(LSL_N) \to \Vbf_q(B_+)$ and the restriction $\Vbf_q(B_+) \to \Vbf_q(H)$. More explicitly,
    \begin{equation}\label{eq:deformedMiura}
    \begin{aligned}
        m_q : \Wbf_q(LSL_N) \to \Vbf_q(H), \quad 
        T_r(z) \mapsto  \sum_{1\le \alpha_1 < \alpha_2 < \dots < \alpha_r \le N} \,  \prod_{x=1}^r \mu_{\alpha_x\alpha_x}(q^{\alpha_x-x}z).
    \end{aligned}
    \end{equation}
\end{defn}

In other words, the deformed Miura transformation simply maps $\mu_{ij}(z)$ to $0$ for $i < j$.

\begin{prop}\label{prop:Miura}
    The deformed Miura transformation $m_q$ is an injective Poisson algebra homomorphism. Hence $\Wbf_q(LSL_N)\simeq m_q(\Wbf_q(LSL_N)).$
\end{prop}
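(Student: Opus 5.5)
The proof has two parts: injectivity, which comes from a leading-term argument, and the homomorphism property, which we check on the generators $T_r(z)$ using Proposition~\ref{prop:independenth}.

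\emph{Injectivity.} By Theorem~\ref{cor:gen}, $\Wbf_q(LSL_N)$ is freely generated as a $q$-difference algebra by $T_1(z),\dots,T_{N-1}(z)$. So it suffices to show that the images
\[
m_q(T_r(z))=\sum_{\alpha_1<\dots<\alpha_r}\prod_x \mu_{\alpha_x\alpha_x}(q^{\alpha_x-x}z)
\]
are algebraically independent over all $q$-shifts. To do this, eliminate $\mu_{NN}$ using $\prod_i\mu_{ii}(z)=1$, so that $\Vbf_q(H)$ becomes a Laurent polynomial algebra in the variables $\mu_{ii}(q^nz)$ with $i\le N-1$ and $n\in\ZZ$. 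Fix a monomial order in which $\mu_{11}\succ\mu_{22}\succ\cdots$, with ties between shifts broken by the shift exponent. Then the leading monomial of $m_q(T_r(z))$ is $\mu_{11}(z)\mu_{22}(z)\cdots\mu_{rr}(z)$, and for $q$-shifts it is $\prod_{i\le r}\mu_{ii}(q^nz)$. These leading monomials, taken over $r\le N-1$ and $n\in\ZZ$, are multiplicatively independent: one can solve triangularly for $\mu_{rr}(q^nz)$. Any nontrivial polynomial relation among the $m_q(T_r(q^nz))$ would therefore have a unique top monomial that cannot cancel. This gives injectivity.

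A cleaner alternative is to reduce to the analogous classical fact via the grading $\Delta$ from Theorem~\ref{cor:gen}, but the leading-term argument above is self-contained.

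\emph{Homomorphism property.} The map $m_q$ is a $q$-difference algebra homomorphism by construction. Since both brackets obey the Leibniz rule and commute with $D$, it suffices to check
\[
\{m_q T_i(z),m_qT_j(w)\}_H=m_q\{T_i(z),T_j(w)\}\qquad\text{for } i\le j .
\]
The right-hand side is given by Proposition~\ref{prop:independenth}, with each $T$ replaced by its image. I would compute the left-hand side in generating-function form. Write $\Lambda_a(z)=\mu_{aa}(q^{a-1}z)$. Each $m_qT_r$ is then an elementary-symmetric-type sum of $\Lambda_{\alpha_1}(z)\Lambda_{\alpha_2}(q^{-1}z)\cdots$. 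The bracket \eqref{eq:HeisenbergPB} becomes
\[
\{\Lambda_a(z),\Lambda_b(w)\}=f_{ab}(w/z)\Lambda_a(z)\Lambda_b(w),
\]
where $f_{ab}$ depends only on the sign of $a-b$: it equals $f(x)=\sum_r\frac{(1-q^r)(1-q^{(N-1)r})}{1-q^{Nr}}x^r$ on the diagonal and $-\sum_r\frac{(1-q^r)^2}{1-q^{Nr}}(q^{\pm\cdots}x)^r$ off the diagonal. Summing over pairs of positions, the bracket of two monomials is a sum of rational series times the product.

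\emph{Main obstacle.} The hard step is showing that these rational series collapse to the regular part $\frac{(1-q^{ir})(1-q^{(N-j)r})}{1-q^{Nr}}$ plus delta-function terms. I would handle this with the standard identity $\frac{1}{1-x}+\frac{x}{1-x}\cdot(\dots)$: when two factors $\Lambda_a$ and $\Lambda_{a+1}$ sit at adjacent arguments, the difference between the two off-diagonal series is a $\delta$-function supported at a specific ratio $q^{k}w/z$. On that support, the monomials in $m_qT_i(z)\,m_qT_j(w)$ recombine into $m_qT_{j+x}(z)\,m_qT_{i-x}(w)$, exchanging an "interval" of indices between the two products.

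As a concrete strategy I would first verify the case $N=2$ directly, then the case $i=j=1$ for general $N$, and then proceed by induction on $N$. The induction uses the recursion $T_r^{N+1}=T_r^N+T_{r-1}^N\,\mu_{N+1,N+1}(q^{N+1-r}z)$, which is the diagonal specialization of \eqref{eq:l}. Keeping track of how the delta terms propagate through this recursion is the most delicate part of the argument.
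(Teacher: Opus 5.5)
Your injectivity argument is correct, and it differs from the paper's. The paper collapses all $q$-shifts and appeals to the algebraic independence of the elementary symmetric polynomials. That reduction does not by itself exclude relations that mix different shifts $m_q(T_r(q^nz))$. Your lexicographic leading-monomial argument handles all shifts at once: after eliminating $\mu_{NN}$, every other term of $m_q(T_r(q^nz))$ is smaller than $\prod_{i\le r}\mu_{ii}(q^nz)$. Either it lacks $\mu_{kk}(q^nz)$ at the first $k$ with $\alpha_k\neq k$, or it carries the factor $\mu_{11}(q^{n+N-r}z)^{-1}$ coming from $\mu_{NN}$.

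The homomorphism part, however, has a genuine gap. The reduction to $\{m_qT_i(z),m_qT_j(w)\}_H=m_q\{T_i(z),T_j(w)\}$ is valid, but that identity is the whole content, and you leave it as the ``main obstacle''. Your proposed induction on $N$ also does not close as stated, because $\Vbf_q(H^N)$ is not a Poisson subalgebra of $\Vbf_q(H^{N+1})$:
\begin{itemize}
\item the structure functions $C_{aabb}$ in \eqref{eq:HeisenbergPB} depend on $N$ through $1-q^{Nr}$ and $q^{N+a-b-1}$ (compare $\{\mu_{11},\mu_{11}\}$ for $N=2$ and $N=3$);
\item the constraint $\prod\mu_{ii}=1$ changes with $N$.
\end{itemize}
So the brackets of the $T_r^N$ computed inside $\Vbf_q(H^{N+1})$ are not those of Proposition~\ref{prop:independenth} for $N$. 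The inductive hypothesis therefore cannot be fed into the recursion $T_r^{N+1}=T_r^N+T_{r-1}^N\mu_{N+1,N+1}(q^{N+1-r}z)$.

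The idea you are missing is that no computation with the $T_r$ is needed: compatibility can be checked on the generators $\mu_{ab}$. Elements of $\Wbf_q(LSL_N)$ are polynomials in the $\mu_{ab}(q^nz)$ with $a\le b$. Their bracket is $\pi_{B_+}$ of the bracket of these lifts (Proposition~\ref{prop:PA}). Since $m_q$ merely sets $\mu_{ab}=0$ for $a<b$, the Leibniz rule reduces $m_q\{F,G\}=\{m_qF,m_qG\}_H$ to two facts about $m_q\pi_{B_+}\{\mu_{ab}(z),\mu_{cd}(w)\}$. It must vanish when $a<b$ or $c<d$, and it must equal $C_{aacc}(w/z)\mu_{aa}(z)\mu_{cc}(w)$ when $a=b$ and $c=d$. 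Both can be read off from \eqref{eq:PoissonBracketDefinition}.

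For $a<b$, every term keeps a strictly upper-triangular factor after $\pi_{B_+}$:
\begin{itemize}
\item the $C$-term contains $\mu_{ab}$;
\item the $\delta(w/z)$-term $\mu_{cb}(z)\mu_{ad}(w)$ could only survive if $c\ge b>a\ge d\ge c$, which is impossible;
\item the $\delta_{bc}$-term contains $\mu_{a\alpha}$ with $\alpha>b>a$;
\item the $\delta_{ad}$-term contains $\mu_{c\alpha}$ with $\alpha>a\ge c$.
\end{itemize}
The case $c<d$ follows by skew-symmetry.

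For $a=b$ and $c=d$, the $\varepsilon$-term vanishes. The shifted-$\delta$ terms, present only when $a=c$, become $\delta(qw/z)\mu_{a,a+1}(z)-\delta(w/qz)\mu_{a,a+1}(w)$, which $m_q$ kills.

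This is the paper's argument. It makes Proposition~\ref{prop:independenth} unnecessary for this statement, whereas your route would amount to rederiving that proposition in free-field form.
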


\begin{proof}
    Let us first show $m_q$ is injective. Suppose not, that is, we assume the elements $m_q(T_1(z))$, $\dots$, $m_q(T_{N-1}(z))$ are algebraically dependent. After ignoring $q$ and variable $z$, this implies that the elements
    \begin{equation} \label{eq:Miura_image}
        \sum_{\alpha=1}^N \mu_{\alpha\alpha}, \quad \sum_{\alpha_1<\alpha_2} \mu_{\alpha_1\alpha_1}\mu_{\alpha_2\alpha_2}, \dots, \sum_{\alpha_1<\dots<\alpha_{N-1}}\mu_{\alpha_1\alpha_1}\dots\mu_{\alpha_{N-1}\alpha_{N-1}},\  \mu_{11}\mu_{22}\dots\mu_{NN}
    \end{equation}
    are algebraically dependent in $\CC[\mu_{11}, \dots, \mu_{NN}]$. However, these are exactly the elementary symmetric polynomials in the $N$ variables $\mu_{11}, \dots, \mu_{NN}$. One can readily see that the polynomials in \eqref{eq:Miura_image} arise as the images of the Miura transformation applied to the generators of the classical finite \(W\)-algebra associated with \(\mathfrak{sl}_N\). Since the injectivity of the Miura map for classical finite \(W\)-algebras is a well-known fact, the proposition follows. 
    
    Now, let us show $m_q$ is a Poisson algebra homomorphism. Since we already know $m_q$ is injective, it is enough to show if either $i < j$ or $k < l$, none of the terms of the form $\mu_{i_1i_1}(z)\mu_{i_2i_2}(z)$ appear from the Poisson bracket $\{\mu_{ij}(z), \mu_{kl}(w)\}$.
  Consequently, they all vanish under the deformed Miura transformation $m_q$. This implies that the Poisson bracket is preserved.
\end{proof}

By Proposition~\ref{prop:Miura}, we can get a free generating set $m_q(T_1(z)), \cdots, m_q(T_{N-1}(z))$ of $m_q(\Wbf_q(LSL_N))$, which is isomorphic to $\Wbf_q(LSL_N)$, by the following formula:
    \begin{equation}\label{eq:Miura_deformed}
    (D +\mu_{11}(z))(D+\mu_{22}(z))\dots(D+\mu_{NN}(z))= D^{N}+m_q(T_1(z))D^{N-1}+\dots+m_q(T_{N-1}(z))D+1.
    \end{equation}
This formula will be crucially used in Section \ref{sec:W-w}.

\section{Limits of Deformed Affine Poisson Algebras and Deformed Heisenberg Algebras}\label{sec:affinelimit}

In this section, we put $q=e^{-h}$ and introduce the $h \to 0$ limit process which yields the PVAs $\Vbf(\gfr)$ from the $q$-difference Poisson algebras $\Vbf_q(G)$. Throughout this section, we fix $(G, \gfr)$ to be either $(LSL_N, \sfr\lfr_N)$ or $(G, \gfr) = (H, \hfr)$.

\vskip 2mm

\subsection{Algebras of Fourier modes}
In this subsection, we construct the Poisson $\CC(q)$-algebra $\Vcal_q(G)$ and the Poisson $\CC$-algebra $\Vcal(\gfr)$, which consist of the Fourier modes of $\Vbf_q(G)$ and $\Vbf(\gfr)$, respectively. Finally, we define the Poisson $\CC(\!(h)\!)$-algebra $\widetilde{\Vcal}_h(\gfr)$, which serves as a bridge between these two algebras.

The following $\CC(q)$-algebra introduced in Definition \ref{def:V_q}  contains all Fourier coefficients of the generating series in $\Vbf_q(LSL_N)$. It was originally introduced in Section 5.2 and 2.3 of \cite{FRS98} to study the Poisson structures of the deformed affine Poisson algebra and the deformed $W$-algebra, respectively.
\begin{defn} \label{def:V_q}
\mbox{}
\begin{enumerate}[(1)]
    \item The $\CC(q)$-algebra $\Vcal_q(L\Mat_N)$ consists of finite linear combinations of expressions of the form
    \begin{equation*}
        \sum_{m_1+\cdots + m_k = M} c_{m_1\dots m_k}(q) \cdot \mu_{i_1j_1}[m_1] \cdots \mu_{i_kj_k}[m_k]
    \end{equation*}
    where $c_{m_1 \dots m_k}(q) \in \CC(q)$.
    \item The $\CC(q)$-algebra $\Vcal_q(G)$ is defined as the quotient of the $\CC(q)$-algebra consisting of finite linear combinations of expressions of the form
    \begin{equation*}
        \sum_{m_1+\cdots + m_k = M} c_{m_1 \dots m_k}(q) \cdot \mu_{i_1j_1}[m_1] \cdots \mu_{i_kj_k}[m_k]
    \end{equation*}
    where $c_{m_1 \dots m_k}(q) \in \CC(q)$, by the ideal generated by
    \begin{equation*}
        \det[m]-\delta_{m, 0} \quad \text{for all } m \in \ZZ.
    \end{equation*}
We endow $\Vcal_q(G)$ with the Poisson structure defined in \eqref{eq:Poisson_R} or \eqref{eq:HeisenbergPB}.
Here, if $(G, \gfr) = (H, \hfr)$, we formally set $\mu_{ij}[m] = 0$ for all $m \in \ZZ$ whenever $i \ne j$. Recall that
\begin{equation*}
    \det[m] = \sum_{m_1+\dots+m_N=m}\sum_{\sigma \in \Sfr_N} (-1)^{{\mathrm{sgn}}(\sigma)} \mu_{1 \sigma(1)}[m_1] \mu_{2 \sigma(2)}[m_2] \dots \mu_{N \sigma(N)}[m_N].
\end{equation*}
\end{enumerate}
\end{defn}

We analogously define the algebras of Fourier modes for the non-deformed Poisson algebras. Again, if $(G, \gfr) = (H, \hfr)$, we formally set $e_{ij}[m] = 0$ for all $m \in \ZZ$ whenever $i \ne j$.
\begin{defn}
    The $\CC$-algebra $\Vcal(\gfr)$ is defined as the quotient of the $\CC$-algebra consisting of finite linear combinations of the form
    \begin{equation*}
        \sum_{m_1+\dots+m_k=M}c_{m_1 \dots m_k} \cdot e_{i_1j_1}[m_1] \cdots e_{i_kj_k}[m_k]
    \end{equation*}
    where $c_{m_1 \dots m_k} \in \CC$, by the ideal generated by
    \begin{equation*}
        e_{11}[m] + \dots + e_{NN}[m] \quad \text{for all } m \in \ZZ.
    \end{equation*}
    We endow $\Vcal(\gfr)$ with the Poisson structure
    \begin{equation*}
        \{e_{ij}[m], e_{kl}[n]\} = \delta_{jk}e_{il}[m+n]-\delta_{il}e_{kj}[m+n] + m\delta_{m+n, 0} \left( \delta_{il}\delta_{jk} - \frac{1}{N} \delta_{ij}\delta_{kl}\right).
    \end{equation*}
\end{defn}

Finally, we define a $\CC(\!(h)\!)$-algebra $\widetilde{\Vcal}_h(\gfr)$, which captures $h$-expansion behavior of the algebra $\Vcal_q(G)$. If $(G, \gfr) = (H, \hfr)$, we again formally set $\mathsf{E}_{ij}[m] = 0$ for all $m \in \ZZ$ whenever $i \ne j$.
\begin{defn} \label{eq:affine_h-algebras}
\mbox{}
\begin{enumerate}[(1)]
    \item The $\CC(\!(h)\!)$-algebra $\widetilde{\Vcal}_h(\gfr\lfr_N)$ is the $\CC(\!(h)\!)$-algebra consisting of finite linear combinations of expressions of the form
    \begin{equation*}
        \sum_{m_1+\cdots + m_k = M} c_{m_1 \dots m_k}(h) \cdot \mathsf{E}_{i_1j_1}[m_1] \cdots \mathsf{E}_{i_kj_k}[m_k]
    \end{equation*}
    where $c_{m_1 \dots m_k}(h) \in \CC(\!(h)\!)$, and the $\CC$-algebra homomorphism $\iota : \Vcal_q(L\Mat_N) \to \widetilde{\Vcal}_h(\gfr\lfr_N)$ is defined by
    \begin{equation*}
        \mu_{ji}[m] \mapsto \delta_{ij}\delta_{m, 0} + h\mathsf{E}_{ij}[m], \quad q \mapsto e^{-h}.
    \end{equation*}
    \item The $\CC(\!(h)\!)$-algebra $\widetilde{\Vcal}_h(\gfr)$ is the quotient of the $\CC(\!(h)\!)$-algebra consisting of finite linear combinations of expressions of the form
    \begin{equation*}
        \sum_{m_1+\cdots + m_k = M} c_{m_1 \dots m_k}(h) \cdot \mathsf{E}_{i_1j_1}[m_1] \cdots \mathsf{E}_{i_kj_k}[m_k]
    \end{equation*}
    where $c_{m_1 \dots m_k}(h) \in \CC(\!(h)\!)$, by the ideal generated by
    \begin{equation*}
        \iota\big( \det[m] \big) - \delta_{m, 0} \quad \text{for all } m \in \ZZ.
    \end{equation*}
    We endow $\widetilde{\Vcal}_h(\gfr)$ with the Poisson structure
    \begin{equation*}
            \{\mathsf{E}_{ij}[m], \mathsf{E}_{kl}[n]\}_h := \frac{1}{h} \iota \big( \{ \mu_{ji}[m], \mu_{lk}[n]\}\big),
    \end{equation*}
    and by extending via the Leibniz rule.
\end{enumerate}
\end{defn}

\begin{ex}
    The algebra $\widetilde{\Vcal}_h(\sfr\lfr_2)$ is the quotient of the $\CC(\!(h)\!)$-algebra $\widetilde{\Vcal}_h(\gfr\lfr_2)$ by the $q$-difference ideal generated by
    \begin{equation*}
        h\big(\mathsf{E}_{11}[m] + \mathsf{E}_{22}[m]\big) + h^2\sum_{m_1+m_2=m}\big(\mathsf{E}_{11}[m_1]\mathsf{E}_{22}[m_2] - \mathsf{E}_{12}[m_1] \mathsf{E}_{21}[m_2] \big)
    \end{equation*}
    for $m \in \ZZ$. The following are Poisson brackets on $\widetilde{\Vcal}_h(\sfr\lfr_2)$:
    \begin{equation*}
    \begin{aligned}
        \{\mathsf{E}_{11}[m], \mathsf{E}_{22}[n] \}_h &= -\frac{1}{h} \sum_{r \in \ZZ} \frac{1-q^r}{1+q^r} \iota(\mu_{11}[m-r])\iota(\mu_{22}[n+r])\\
        &=-h\sum_{r \in \ZZ} \frac{1-q^r}{1+q^r} \mathsf{E}_{11}[m-r]\mathsf{E}_{22}[n+r] \\
        &\quad -\frac{1-q^m}{1+q^m} \mathsf{E}_{22}[m+n] - \frac{1-q^{-n}}{1+q^{-n}}\mathsf{E}_{11}[m+n] \\
        &\quad - \frac{1}{h}\frac{1-q^m}{1+q^m} \delta_{m+n, 0},\\
        \{\mathsf{E}_{11}[m], \mathsf{E}_{12}[n]\}_h &= q^n \mathsf{E}_{12}[m+n] + h\sum_{r \in \ZZ} q^{-r} \mathsf{E}_{12}[m-r]\mathsf{E}_{22}[n+r].
    \end{aligned}
    \end{equation*}
\end{ex}

\subsection{The limit and convergence}
In this subsection, we define the subalgebra $\Vcal_h(\gfr) \subset \widetilde{\Vcal}_h(\gfr)$ by extracting the elements that admit a well-defined limit as $h \to 0$.  To make this process precise, we consider the subring $A_1$ of $\CC(q)$ regular at $q=1$.

\vskip 2mm
Observe that not all elements in $\widetilde{\Vcal}_h(\gfr)$ possess a limit. For instance, the element $
    \frac{1}{h}\mathsf{E}_{ij}[m]$
does not admit a limit. In contrast, for any $l \in \ZZ_{>0}$, recalling that $q=e^{-h}$, the elements
\begin{equation}\label{eq:derivation}
    \frac{1-q^{(-m-1)l}}{lh} \mathsf{E}_{ij}[m]
\end{equation}
admit a well-defined limit and converge to $(-m-1)\mathsf{E}_{ij}[m]$. Therefore, we extract the elements that exhibit well-defined behavior under the $h \to 0$ limit. Via the natural inclusion $A_1 \hookrightarrow \CC(\!(h)\!)$ induced by $q=e^{-h}$, we regard $\widetilde{\Vcal}_h(\gfr)$ as an $A_1$-algebra.

\begin{defn}
The $A_1$-algebra $\Vcal_h(\gfr)$ is the subalgebra of $\widetilde{\Vcal}_h(\gfr)$ consisting of finite linear combinations of expressions of the form
\begin{equation*}
        \sum_{m_1+\cdots + m_k = M} c_{m_1 \dots m_k}(h) \cdot \mathsf{E}_{i_1j_1}[m_1] \cdots \mathsf{E}_{i_kj_k}[m_k],
\end{equation*}
    where the limit $
    \lim_{h \to 0}c_{m_1 \dots m_k}(h)$
    exists for all $m_1+ \dots + m_k = M$. The {\em $h\to 0$ limit} 
\begin{equation} \label{eq:limit}
    \lim_{h \to 0}: \Vcal_h(\gfr) \to \Vcal(\gfr)
\end{equation}
is the linear map given by $ c_{m_1 \dots m_k}(h) \cdot  \prod_{s=1}^k\mathsf{E}_{i_sj_s}[m_s] \mapsto \lim_{h \to 0}(c_{m_1 \dots m_k}(h)) \cdot \prod_{s=1}^k (e_{i_s j_s}-\delta_{i_s,j_s}\frac{I_N}{N})[m_s]$ for $I_N=e_{11}+\cdots+e_{NN}.$
\end{defn}

It is clear that $\Vcal_h(\gfr)$ is closed under the multiplication, since the product of two functions admitting well-defined limits also admits a limit. Furthermore, note that the coefficient $q \in A_1$ specializes to $1$ in this limit. Therefore, we obtain a $\CC$-algebra homomorphism
\begin{equation*}
    \lim_{h \to 0} : \Vcal_h(\gfr) \to \Vcal(\gfr)
\end{equation*}
by taking limit $h \to 0$.

We introduce a family of elements $\mathsf{E}_{ij}^{k, l}[m]$ in $\Vcal_h(\gfr)$ that play the role of derivations of $\mathsf{E}_{ij}[m]$ in the limit. To motivate this definition, we multiply the expression in \eqref{eq:derivation} by the formal variable $z^{-m-1}$ and sum over $m \in \ZZ$, which yields
    \begin{equation} \label{eq:motiv_generator}
        \lim_{h \to 0} \left. \left( \frac{1-D^l}{lh} \right)\mathsf{E}_{ij}(z)\right|_{q=e^{-h}}= z\partial_z\mathsf{E}_{ij}(z).
    \end{equation}
    Recall the standard identity for the operators: $\prod_{s=0}^{k-1}(z\partial_z -s) = z^k\partial_z^k$. In light of this relation, the product of difference operators
    \begin{equation}
        \left(\frac{1-D^l}{lh}\right)\left(\frac{1-D^l}{lh}-1\right)\dots\left(\frac{1-D^l}{lh}-k+1\right)
    \end{equation}
    can be naturally viewed as the operator $z^k\partial_z^k$ in the limit $h \to 0$ with $q=e^{-h}$. Accordingly, we define the elements $\mathsf{E}_{ij}^{k, l}[m]$ for $1 \le i, j \le N, l \in \ZZ_{>0}, k \in \ZZ_{\ge 0}$, and $m \in \ZZ$ as follows:
    \begin{equation} \label{eq:E_{ij}^{kl}}
    \mathsf{E}_{ij}^{k, l}[m] := \begin{cases}
    \mathsf{E}_{ij}[m] \quad &\text{ if } k = 0,\\
        \prod_{s=0}^{k-1} \left(\frac{1-q^{(-m-1)l}}{lh}-s\right)\mathsf{E}_{ij}[m] \quad &\text{ if } k>0.
    \end{cases}
\end{equation}
If we set the formal series 
$ \mathsf{E}_{ij}^{k, l}(z) = \sum_{m \in \ZZ} \mathsf{E}_{ij}^{k, l}[m]z^{-m-k-1}$ and $e_{ij}(z) = \sum_{m \in \ZZ} e_{ij}[m]z^{-m-1}$ then we have 
\begin{equation} \label{eq:derivation of E_{ij}}
    \lim_{h \to 0}\big(\mathsf{E}_{ij}^{k, l}(z)\big) = \partial_z^k \Big(e_{ij}-\delta_{ij}\frac{I_N}{N}\Big)(z).
\end{equation}
Later, \eqref{eq:derivation of E_{ij}} will be used to describe the relationship between difference algebras and differential algebras.

Next, we investigate the Poisson structure on $\Vcal_h(\gfr)$. Recall from \eqref{eq:C} that the numerator of $C_{ijkl}[m]$ vanishes at $q=1$. Thus, although the denominator of $C_{ijkl}[m]$ contains the factor $1-q^{Nm}$, the limit
$  \lim_{q \to 1} C_{ijkl}[m] $
exists for all $1 \le i, j, k, l \le N$, and $m \in \ZZ$. Consequently, even when we view $\widetilde{\Vcal}_h(\gfr)$ as an $A_1$-algebra, the Poisson bracket remains well-defined.

\begin{prop}\label{prop:affinelimit}
Let $1 \le i, j, k, l \le N$ and $m, n \in \ZZ$.
\begin{enumerate}[(1)]
    \item For $\mathsf{E}_{ij}[m],\mathsf{E}_{kl}[n]\in \Vcal_h(\sfr\lfr_N)$, the Poisson bracket $\{\mathsf{E}_{ij}[m], \mathsf{E}_{kl}[n]\}_h \in \Vcal_h(\sfr\lfr_N)$ and the limit is 
    \begin{equation*}
        \begin{aligned}
            \lim_{h\to 0}\big(\{\mathsf{E}_{ij}[m], \mathsf{E}_{kl}[n]\}_h\big) &= \delta_{jk}\lim_{h\to 0}(\mathsf{E}_{il}[m+n]) - \delta_{il}\lim_{h\to 0}(\mathsf{E}_{kj}[m+n]) + \left(\delta_{jk}\delta_{il}-\frac{1}{N}\delta_{ij}\delta_{kl} \right) m\delta_{m+n, 0}.
        \end{aligned}
    \end{equation*}
    Equivalently, in terms of the formal series, we have 
    \begin{equation*}
        \lim_{h\to 0}\big(\{\mathsf{E}_{ij}(z), \mathsf{E}_{kl}(w)\}_h\big) = \left(\delta_{jk}\lim_{h\to 0}(\mathsf{E}_{il}(w)) - \delta_{il}\lim_{h\to 0}(\mathsf{E}_{kj}(w))\right) \delta(z, w) + \left(\delta_{jk}\delta_{il} - \frac{1}{N} \delta_{ij}\delta_{kl}\right) \partial_w\delta(z, w),
    \end{equation*}
    where $\delta(z, w) := \sum_{r \in \ZZ} z^{-r-1}w^r$.
    \item For $\mathsf{E}_{ij}[m],\mathsf{E}_{kl}[n]\in  \Vcal_h(\hfr)$, the Poisson bracket $\{\mathsf{E}_{ii}[m], \mathsf{E}_{jj}[n]\}_h \in \Vcal_h(\hfr)$ and the limit satisfies 
    \begin{equation*}
        \lim_{h\to 0}\big(\{\mathsf{E}_{ii}(z), \mathsf{E}_{jj}(w)\}_h\big) = \left(\delta_{ij}- \frac{1}{N} \right) \partial_w\delta(z, w).
    \end{equation*}
\end{enumerate}
\end{prop}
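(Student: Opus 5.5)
The plan is a direct computation. Substitute $\iota(\mu_{ji}[m])=\delta_{ij}\delta_{m0}+h\mathsf{E}_{ij}[m]$ and $q=e^{-h}$ into the mode form of \eqref{eq:PoissonBracketDefinition}, then expand in powers of $h$. Since $\{\mathsf{E}_{ij}[m],\mathsf{E}_{kl}[n]\}_h=\frac1h\iota\{\mu_{ji}[m],\mu_{lk}[n]\}$, every product $\mu\mu$ on the right-hand side becomes a sum of three kinds of terms. The constant-times-constant terms contribute at order $h^{-1}$. The cross terms contribute at order $h^0$ times a single $\mathsf{E}$. The $h^2\mathsf{E}\mathsf{E}$ terms contribute at order $h$ times a product.

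For each order, we must show that the total coefficient lies in $A_1$, so that the bracket belongs to $\Vcal_h(\sfr\lfr_N)$, and then compute its value at $h=0$. The quadratic terms carry coefficients of the form $h\,C_{jilk}[r]$, $h\,q^{\pm r}$, or $h\cdot\tfrac12(\varepsilon+\varepsilon)$. Each $C_{jilk}[r]$ is regular at $q=1$, as noted just before the statement. Hence all quadratic terms lie in $h A_1$ and vanish in the limit, which is consistent with \eqref{eq:generator_2_intro}.

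The linear terms are handled separately by type.
\begin{itemize}
\item The $\delta(\tfrac wz)$ term with coefficient $\tfrac12(\varepsilon_{jl}+\varepsilon_{ki})$ is nonzero only if the constant $\delta$'s force $l=i$ or $j=k$. In those cases it pairs with the $\delta_{jk}$ and $\delta_{il}$ terms of type $\{\cdot,\cdot\}_3$ at $q=1$.
\item The sums $\sum_{\alpha>\cdot}$ in the $\{\cdot,\cdot\}_3$ part produce $\mathsf{E}_{il}$ or $\mathsf{E}_{kj}$ with coefficient $q^{\pm r}\to1$ in the limit.
\end{itemize}
Collecting all linear terms with their signs, the half-sums of $\varepsilon$ combine with the $\alpha>j$ sums into $\delta_{jk}\mathsf{E}_{il}-\delta_{il}\mathsf{E}_{kj}$. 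I would check this by splitting into the cases $i<l$, $i=l$, $i>l$ (and similarly for $j,k$). The $C$-term also contributes linear pieces when one of the two factors is diagonal with mode $0$. These come with coefficients $C_{jilk}[\pm m]$, which tend to finite limits. They must therefore be accounted for: their sum $\sum_{\text{diag}}$ either cancels or yields multiples of $\sum_\alpha\mathsf{E}_{\alpha\alpha}$.

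That last combination vanishes in the limit. The relation $\iota(\det[m])=\delta_{m0}$ gives $h\sum_\alpha\mathsf{E}_{\alpha\alpha}[m]+O(h^2)=0$, so $\sum_\alpha \mathsf{E}_{\alpha\alpha}[m]\in h\Vcal_h$. This is also why the identification uses the traceless $e_{ij}-\frac{\delta_{ij}}N I_N$.

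The main obstacle is the constant term of order $h^{-1}$. It arises only from the $C$-term when $i=j$ and $k=l$ and the modes are $m$ and $-m$, and from the $\{\cdot,\cdot\}_3$ and $\varepsilon$ terms when both factors are constants. For example, the $\{\cdot,\cdot\}_3$ part gives $-\delta_{jk}\delta_{il}\delta_{i>\dots}$-type contributions with $q^{m}$ factors. One must show that the total coefficient has the form $\frac1h(a(q)-a(1))$ with $a(1)=0$, so that it lies in $A_1$. Its limit should then be the derivative, giving $m(\delta_{il}\delta_{jk}-\frac1N\delta_{ij}\delta_{kl})$.

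I would compute this directly. For $i=j$, $k=l$, the coefficient is $\frac1h C_{iikk}[m]$, using the closed forms in \eqref{eq:HeisenbergPB}: $\frac{(1-q^m)(1-q^{(N-1)m})}{h(1-q^{Nm})}\to \frac{(N-1)m}{N}$ and $-\frac{(1-q^m)^2q^{\dots}}{h(1-q^{Nm})}\to -\frac mN\cdot$. Note that the latter actually has a zero of order $2$ over order $1$, giving limit $0$ at that order. This requires care, and I expect to reconcile it with the $\{\cdot,\cdot\}_3$ constant contributions such as $\frac1h(q^{m}-1)\delta_{i>\dots}$. For $i=l\ne j=k$, the constant contributions come solely from the $\delta_{jk}$ and $\delta_{il}$ terms, giving $\frac1h(q^{\pm m}-1)$ or $\frac1h\cdot0$, with limit $\pm m$. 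The sign is fixed by whether $i<j$.

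Part (2) is the same computation, restricted to the diagonal bracket \eqref{eq:HeisenbergPB}. There only the $C$-term is present: the quadratic part is in $hA_1$, the linear part is $C_{iijj}[\cdot]$ times $\mathsf{E}$, which again is absorbed modulo the trace relation, and the constant part has limit $(\delta_{ij}-\frac1N)m$. Passing to generating series then gives the $\partial_w\delta(z,w)$ form.
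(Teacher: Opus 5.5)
Your overall plan is the paper's: expand $\frac1h\,\iota\{\mu_{ji}[m],\mu_{lk}[n]\}$ in powers of $h$, discard the quadratic part because its coefficients lie in $hA_1$, and evaluate the linear and constant parts. The quadratic part is handled correctly. The gap is in the linear part: the mechanism you predict is false, and the bookkeeping would not close as you describe it.

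The linear pieces of the $C$-term neither cancel nor reduce to multiples of $\sum_\alpha\mathsf{E}_{\alpha\alpha}$. Writing $\Phi=2\Psi-\delta$ and $\Omega=\Psi-\delta$ gives $C_{jikk}[r]=(1-q^{-r})\Psi_{jk}[r]+(1-q^{r})\Psi_{ik}[r]+(q^{-r}-\tfrac12)\delta_{jk}-\tfrac12\delta_{ik}$. The two $\Psi$-terms tend to $-\frac1N$ and $\frac1N$ and cancel, so $C_{jikk}[r]\to\tfrac12(\delta_{jk}-\delta_{ik})$, and similarly $C_{iilk}[r]\to\tfrac12(\delta_{ik}-\delta_{il})$. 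Hence in the limit the $C$-term contributes $\tfrac12\delta_{kl}(\delta_{jk}-\delta_{ik})\mathsf{E}_{ij}+\tfrac12\delta_{ij}(\delta_{ik}-\delta_{il})\mathsf{E}_{kl}$. The $\varepsilon$-term together with the $\delta_{jk}$- and $\delta_{il}$-sums gives only $\tfrac12\delta_{jk}(\delta_{j\neq l}+\delta_{i\neq j})\mathsf{E}_{il}-\tfrac12\delta_{il}(\delta_{i\neq j}+\delta_{i\neq k})\mathsf{E}_{kj}$. Only the sum of the two equals $\delta_{jk}\mathsf{E}_{il}-\delta_{il}\mathsf{E}_{kj}$.

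Concretely, take $N=2$ and $\{\mathsf{E}_{12},\mathsf{E}_{22}\}_h=\frac1h\iota\{\mu_{21},\mu_{22}\}$. One has $C_{2122}[r]=\tfrac12$ for all $r$ and $\tfrac12(\varepsilon_{22}+\varepsilon_{21})=\tfrac12$, so each term supplies half of $\mathsf{E}_{12}$. The trace relation plays no role here. In part (2) the linear terms vanish simply because $C_{iijj}[r]$ itself vanishes at $q=1$, not because of the trace relation.

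Your treatment of the constant term also goes astray, in three places.

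\begin{enumerate}
\item The coefficient $\frac{(1-q^m)^2}{h(1-q^{Nm})}$ is of order $h^2/h^2$, not $h^2/h$, so it tends to $m/N$. Your first evaluation was right, and this is exactly the $-\frac1N\delta_{ij}\delta_{kl}m$ term for $i\neq k$.
\item There is nothing to reconcile with the $\delta_{jk}$/$\delta_{il}$ constants, because the two kinds of constants never occur together. The $C$-constant requires $i=j$, $k=l$ and by itself gives $(\delta_{ik}-\frac1N)m$. The other constants require $i=l\neq j=k$.
\item In that second case the $\delta_{jk}$- and $\delta_{il}$-sums alone give the divergent $-q^{m}/h$ or $q^{-m}/h$. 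It is the constant $\pm\frac1h$ from the $\varepsilon$-term that makes the total $\frac1h(1-q^{m})$ or $\frac1h(q^{-m}-1)$. Both tend to $+m$, whether $i<j$ or $i>j$, as the target $\delta_{jk}\delta_{il}m$ requires. Your claim that the sign depends on the order of $i,j$ is wrong.
\end{enumerate}

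With these corrections your computation coincides with the paper's.
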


\begin{proof}
We present the proof in terms of generating series. The proof of the second statement is omitted, as it follows from the exact same computations. We first examine $\{\mathsf{E}_{ij}(z), \mathsf{E}_{kl}(w)\}_h$:
    \begin{equation}\label{eq:expansion}
\begin{aligned}
    &\{\mathsf{E}_{ij}(z), \mathsf{E}_{kl}(w)\}_h \\
    &=  h\Bigg(C_{jilk}\left(\frac{w}{z}\right)\mathsf{E}_{ji}(z)\mathsf{E}_{lk}(w) +\frac{1}{2}(\varepsilon_{jl}+\varepsilon_{ki})\delta\left(\frac{w}{z}\right)\mathsf{E}_{li}(z)\mathsf{E}_{jk}(w)\\
    &\qquad -\delta_{il}\sum_{\alpha > i}\delta \left(\frac{w}{qz}\right)\mathsf{E}_{j\alpha}(z)\mathsf{E}_{\alpha k}(w) + \delta_{jk}\sum_{\alpha > j} \delta\left(\frac{qw}{z}\right) \mathsf{E}_{\alpha i}(z)\mathsf{E}_{l \alpha}(w)\Bigg)\\
    &\quad +\frac{1}{w}\Bigg(C_{jilk}\left(\frac{w}{z}\right)\delta_{kl}\mathsf{E}_{ij}(z) \\
    &\qquad + \frac{1}{2}\delta_{jk}(\varepsilon_{jl}+\varepsilon_{ji})\mathsf{E}_{il}(z)\delta\left(\frac{w}{z}\right) -\delta_{il}\delta_{k>i}\mathsf{E}_{kj}(z)\delta\left(\frac{w}{qz}\right)+\delta_{jk}\delta_{l>j}\mathsf{E}_{il}(z)\delta\left(\frac{qw}{z}\right)\Bigg)\\
    &\quad +\frac{1}{z}\Bigg(C_{jilk}\left(\frac{w}{z}\right)\delta_{ij}\mathsf{E}_{kl}(w) \\
    &\qquad +\frac{1}{2}\delta_{il}(\varepsilon_{ji}+\varepsilon_{ki})\mathsf{E}_{kj}(w)\delta\left(\frac{w}{z}\right) -\delta_{il}\delta_{j>i}\mathsf{E}_{kj}(w)\delta\left(\frac{w}{qz}\right)+\delta_{jk}\delta_{i>j}\mathsf{E}_{il}(w)\delta\left(\frac{qw}{z}\right)\Bigg)\\
    &+ \frac{1}{zwh}\Bigg(C_{jilk}\left(\frac{w}{z}\right)\delta_{ij}\delta_{kl} + \delta_{j>i}\delta_{il}\delta_{jk}\left(\delta\left(\frac{w}{z}\right) -\delta\left(\frac{w}{qz}\right)\right) + \delta_{i>j} \delta_{il}\delta_{jk} \left(\delta\left(\frac{qw}{z}\right)-\delta\left(\frac{w}{z}\right)\right)\Bigg).
\end{aligned}
\end{equation}
    Evaluating the limit of the terms in the last line of \eqref{eq:expansion} yields:
    \begin{equation}\label{eq:limit1}
    \begin{aligned}
        \lim_{h \to 0} \frac{1}{zwh}C_{iikk}\left(\frac{w}{z}\right)\delta_{ij}\delta_{kl} &= \begin{cases}
        \delta_{ij}\delta_{kl} \displaystyle\lim_{h\to 0} \sum_{r \in \ZZ}q^{(N-i+k-1)r} \frac{(1-q^r)^2}{h(1-q^{Nr})}z^{-r-1}w^{r-1} \quad &(i>k)\\
        -\delta_{ij}\delta_{kl} \displaystyle\lim_{h\to 0} \sum_{r \in \ZZ}\frac{(1-q^{r})(1-q^{(N-1)r})}{h(1-q^{Nr})}z^{-r-1}w^{r-1} \quad &(i=k)\\
        \delta_{ij}\delta_{kl} \displaystyle\lim_{h\to 0} \sum_{r \in \ZZ}q^{(k-i-1)r}\frac{(1-q^r)^2}{h(1-q^{Nr})} z^{-r-1}w^{r-1}\quad &(i<k)
        \end{cases} \\
        &= \delta_{ij}\delta_{kl}\left(\delta_{ik}-\frac{1}{N}\right)\partial_w \delta(z, w) = \left(\delta_{il}\delta_{jk}\delta_{ij} - \frac{1}{N}\delta_{ij}\delta_{kl}\right) \partial_w\delta(z, w),
        \end{aligned}
    \end{equation}
    \begin{equation}\label{eq:limit2}
        \lim_{h \to 0} \frac{1}{zwh}\delta_{j>i}\delta_{il}\delta_{jk}\left(\delta\left(\frac{w}{z}\right) -\delta\left(\frac{w}{qz}\right)\right)=\delta_{il}\delta_{jk}\delta_{j>i}\partial_w\delta(z, w),
    \end{equation}
    and
    \begin{equation}\label{eq:limit3}
            \lim_{h \to 0} \frac{1}{zwh}\delta_{i>j} \delta_{il}\delta_{jk} \left(\delta\left(\frac{qw}{z}\right)-\delta\left(\frac{w}{z}\right)\right)= \delta_{il}\delta_{jk}\delta_{i>j}\partial_w\delta(z, w).
    \end{equation}
    Combining \eqref{eq:limit1}+\eqref{eq:limit2}+\eqref{eq:limit3}, we obtain
    \begin{equation*}
        \left(\delta_{il}\delta_{jk}-\frac{1}{N}\delta_{ij}\delta_{kl}\right)\partial_w\delta(z, w).
    \end{equation*}
    Now we observe the middle terms of \eqref{eq:expansion}:
    \begin{equation}\label{eq:limit4}
    \begin{aligned}
        \lim_{h\to 0}\left(\frac{1}{w}\delta_{kl}C_{jilk}\left(\frac{w}{z}\right)\mathsf{E}_{ij}(z) \right)&= \frac{1}{2}\delta_{kl}(\delta_{jk}-\delta_{ik})\lim_{h\to 0}\big(\mathsf{E}_{ij}(w)\big)\delta(z, w)\\
        &= \frac{1}{2}\delta_{jk}\delta_{jl}\lim_{h\to 0}\big(\mathsf{E}_{il}(w)\big)\delta(z, w)-\frac{1}{2}\delta_{il}\delta_{ik}\lim_{h\to 0}\big(\mathsf{E}_{kj}(w)\big)\delta(z, w)\\
        \lim_{h\to 0}\left(\frac{1}{z}\delta_{ij}C_{jilk}\left(\frac{w}{z}\right)\mathsf{E}_{kl}(w) \right)&= \frac{1}{2}\delta_{ij}(\delta_{ik}-\delta_{il})\lim_{h\to 0}\big(\mathsf{E}_{kl}(w)\big)\delta(z, w)\\
        &= \frac{1}{2}\delta_{jk}\delta_{ij}\lim_{h\to 0}\big(\mathsf{E}_{il}(w)\big)\delta(z, w)-\frac{1}{2}\delta_{il}\delta_{ij}\lim_{h\to 0}\big(\mathsf{E}_{kj}(w)\big)\delta(z, w),
    \end{aligned}
    \end{equation}
    and the rest of the terms in \eqref{eq:expansion} converge to
    \begin{equation}\label{eq:limit5}
        \frac{1}{2}\delta_{jk}(\delta_{j\ne l}+\delta_{i\ne j})\lim_{h\to 0}\big(\mathsf{E}_{il}(w)\big)\delta(z, w) - \frac{1}{2}\delta_{il}(\delta_{i\ne j} + \delta_{i\ne k})\lim_{h\to 0}\big(\mathsf{E}_{kj}(w)\big)\delta(z, w).
    \end{equation}
    Combining \eqref{eq:limit4} + \eqref{eq:limit5}, we obtain
    \begin{equation*}
        \delta_{jk}\lim_{h\to 0}\big(\mathsf{E}_{il}(w)\big)\delta(z, w) - \delta_{il}\lim_{h\to 0}\big(\mathsf{E}_{kj}(w)\big)\delta(z, w).
    \end{equation*}
    Since $\lim_{h \to 0} hC_{jilk}\left(\frac{w}{z}\right) = 0$, no further nontrivial terms can arise in the limit. This completes the proof.
\end{proof}

\begin{cor}
    The $A_1$-algebra $\Vcal_h(\gfr)$ is a Poisson subalgebra of $\widetilde{\Vcal}_h(\gfr)$.
\end{cor}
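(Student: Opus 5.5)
The corollary asserts two things: $\Vcal_h(\gfr)$ is closed under $\{\cdot,\cdot\}_h$, and that bracket takes values in $\Vcal_h(\gfr)$ rather than only in $\widetilde{\Vcal}_h(\gfr)$. We already know $\Vcal_h(\gfr)$ is closed under multiplication, since limits of products of coefficients exist. So the plan is to reduce the bracket of two arbitrary elements to brackets of generators, and then check those generator brackets using Proposition~\ref{prop:affinelimit}.

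\textbf{Reduction to generators.} A general element of $\Vcal_h(\gfr)$ is a finite $\CC$-linear combination of homogeneous sums
\[
\sum_{m_1+\cdots+m_k=M} c_{m_1\dots m_k}(h)\,\mathsf{E}_{i_1j_1}[m_1]\cdots\mathsf{E}_{i_kj_k}[m_k],
\]
where each coefficient $c_{m_1\dots m_k}(h)$ has a limit as $h\to 0$. The bracket $\{\cdot,\cdot\}_h$ is $\CC(\!(h)\!)$-bilinear and satisfies the Leibniz rule. Hence the bracket of two such sums expands as follows:
\begin{itemize}
\item a sum of coefficient products $c(h)c'(h)$, each of which has a limit;
\item each multiplied by a product of generators $\mathsf{E}$;
\item and by a single generator bracket $\{\mathsf{E}_{ij}[m],\mathsf{E}_{kl}[n]\}_h$.
\end{itemize}
Using closure under multiplication, it therefore suffices to check that each generator bracket $\{\mathsf{E}_{ij}[m],\mathsf{E}_{kl}[n]\}_h$ lies in $\Vcal_h(\gfr)$. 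This is exactly the membership statement in Proposition~\ref{prop:affinelimit}, parts (1) and (2). Concretely, the expansion \eqref{eq:expansion} shows that every coefficient has a limit:
\begin{itemize}
\item $hC_{jilk}[r]$ and $C_{jilk}[r]$ lie in $A_1$;
\item the scalar term $h^{-1}C_{iikk}[r]$ has a limit because $(1-q^r)^2$ and $(1-q^r)(1-q^{(N-1)r})$ vanish to second order while $1-q^{Nr}$ vanishes to first order;
\item the delta-difference terms give coefficients $(1-q^{\pm r})/h$.
\end{itemize}

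\textbf{Main obstacle.} The step needing care is that the sums in the expansion are infinite, for example $\sum_{r} C[r]\,\mathsf{E}[m-r]\mathsf{E}[n+r]$. Each summand's coefficient has a limit, but the definition of $\Vcal_h(\gfr)$ only asks for coefficientwise limits of a fixed-total-degree sum $\sum_{m_1+m_2=M}$. Such sums are allowed, since elements are written as sums over all $m_1+\cdots+m_k=M$. So I would verify two things: that the expansion is exactly of this homogeneous form with total mode $m+n$, and that each individual coefficient lies in $A_1$ (rational in $q$, regular at $q=1$). Finally, the ideal generated by $\iota(\det[m])-\delta_{m,0}$ is a Poisson ideal by Lemma~\ref{lem:determinant}, so the bracket descends to the quotient, and $\Vcal_h(\gfr)$ is a Poisson subalgebra.
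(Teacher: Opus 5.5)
Your proposal is correct and follows the paper's proof. You expand the bracket of two elements of $\Vcal_h(\gfr)$ by the Leibniz rule, so that only products of coefficients (which have limits) times generator brackets $\{\mathsf{E}_{ij}[m],\mathsf{E}_{kl}[n]\}_h$ occur, and then you invoke Proposition~\ref{prop:affinelimit}, exactly as the paper does. One small correction: coefficients such as $hC_{jilk}[r]$, $h^{-1}C_{iikk}[r]$ and $(1-q^{\pm r})/h$ are not elements of $A_1$, so the condition to check is not that they lie in $A_1$ but that each coefficient has a limit as $h\to 0$, which is the defining condition of $\Vcal_h(\gfr)$ and is what your bullet list actually verifies.
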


\begin{proof}
    Let $X$, $Y$ be two arbitrary elements in $\Vcal_h(\gfr)$, written as finite linear combinations of the form
    \begin{equation*}
        \begin{aligned}
            X&=\sum_{m_1+\cdots + m_k = M} c_{m_1 \dots m_k}(h) \cdot \mathsf{E}_{i_1j_1}[m_1] \cdots \mathsf{E}_{i_kj_k}[m_k],\\
            Y&=\sum_{m_1'+\cdots + m_{k'}' = M'} c'_{m_1' \dots m_{k'}'}(h) \cdot \mathsf{E}_{i_1'j_1'}[m_1'] \cdots \mathsf{E}_{i_{k'}'j_{k'}'}[m_{k'}'].
        \end{aligned}
    \end{equation*}
    By applying the Leibniz rule for the Poisson bracket, it suffices to show that
    \begin{equation*}
        \lim_{h \to 0}\Big( c_{m_1 \dots m_k}(h)c'_{m_1' \dots m_k'}(h) \{\mathsf{E}_{i_sj_s}[m_s'], \mathsf{E}_{i_t'j_t'}[m_t']\}_h\Big)
    \end{equation*}
    exists for all choices of $s$ and $t$. This follows from Proposition~\ref{prop:affinelimit}.
\end{proof}

The following diagram summarizes the limit process from $\Vbf_q(G)$ to $\Vcal(\gfr)$ for $(G,\gfr)=(LSL_N,\sfr\lfr_N)$ or $(H,\hfr)$:
\begin{equation}
    \begin{tikzcd}[column sep=large]
    \Vbf_q(G) \arrow[d, dashed, "\text{\scriptsize Fourier modes}"'] & & \Vbf(\gfr) \arrow[d, dashed, "\text{\scriptsize Fourier modes}"]\\
    \Vcal_q(G) \arrow[r, hook, "\iota"] & \widetilde{\Vcal}_h(\gfr) \supset \Vcal_h(\gfr) \arrow[r, two heads, "\lim_{h\to 0}"] & \Vcal(\gfr)
    \end{tikzcd}
\end{equation}

\noindent More precisely, the following theorem describes the limit process developed in this section.

\begin{thm} \label{thm:main_affine}
Recall the formal generating series $\mathsf{E}_{ij}^{k,l}(z)=\sum_{m\in \mathbb{Z}} \mathsf{E}_{ij}^{k,l}[m] z^{-m-k-1}$ for the Fourier mode $\mathsf{E}_{ij}^{kl}[m]$ defined in  \eqref{eq:E_{ij}^{kl}}  and its derivatives \eqref{eq:derivation of E_{ij}}.
\begin{enumerate}[(1)]
    \item The $A_1$-algebra $\Vcal_h(\sfr\lfr_N)$ is a Poisson subalgebra of $\widetilde{\Vcal}_h(\sfr\lfr_N)$, and
    \begin{equation*}
        \lim_{h \to 0} : \Vcal_h(\sfr\lfr_N) \to \Vcal(\sfr\lfr_N)
    \end{equation*}
    given in \eqref{eq:limit} is a Poisson algebra homomorphism. Furthermore, in terms of the formal generating series, the limit and derivation commutes:
    \begin{equation*}
        \lim_{h\to 0}\big(\{\mathsf{E}_{ij}^{m_1, k_1}(z), \mathsf{E}_{kl}^{m_2, k_2}(w)\}_h\big) = \partial_z^{k_1}\partial_w^{k_2}\lim_{h\to 0}\big(\{\mathsf{E}_{ij}(z), \mathsf{E}_{kl}(w)\}_h\big).
    \end{equation*}
    Consequently, the limit recovers the Poisson bracket of $\Vbf(\sfr\lfr_N)$.
    \item The $A_1$-algebra $\Vcal_h(\hfr)$ is a Poisson subalgebra of $\widetilde{\Vcal}_h(\hfr)$, and
    \begin{equation*}
        \lim_{h \to 0} : \Vcal_h(\hfr) \to \Vcal(\hfr)
    \end{equation*}
    is a Poisson algebra homomorphism. Furthermore, in terms of the formal generating series, the limit and derivation commutes. Consequently, the limit recovers the Poisson bracket of $\Vbf(\hfr)$.
\end{enumerate}
\end{thm}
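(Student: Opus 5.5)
The proof will rest almost entirely on Proposition~\ref{prop:affinelimit} and the Corollary after it, which already show that $\Vcal_h(\gfr)$ is a Poisson subalgebra of $\widetilde{\Vcal}_h(\gfr)$. What remains is to show three things: that $\lim_{h\to 0}$ is a Poisson homomorphism, that it commutes with the operators producing $\mathsf{E}^{k,l}_{ij}$, and that it is surjective.

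\textbf{Step 1: $\lim_{h\to0}$ is a Poisson homomorphism.} Multiplicativity is immediate, since limits of products of coefficients in $A_1$ are products of the limits. For brackets I use the Leibniz rule on both sides. For $X=\sum c\,\prod\mathsf{E}$ and $Y=\sum c'\,\prod\mathsf{E}$ in $\Vcal_h(\gfr)$, the bracket $\{X,Y\}_h$ is a sum of terms
\[
c\,c'\,\{\mathsf{E}_{i_sj_s}[m_s],\mathsf{E}_{i'_tj'_t}[m'_t]\}_h\times(\text{remaining factors}).
\]
Taking limits factor by factor and substituting the formula of Proposition~\ref{prop:affinelimit}(1) gives exactly the bracket of $\Vcal(\sfr\lfr_N)$ on the images $e_{ij}-\delta_{ij}I_N/N$. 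One small check is needed here: the traceless combinations satisfy the stated relation with the $-\frac1N\delta_{ij}\delta_{kl}$ central term. This follows because $\langle e_{ii}-I/N,\,e_{kk}-I/N\rangle=\delta_{ik}-1/N$ and the commutators $[e_{ij},e_{kl}]$ are unchanged by the scalar shifts. One also has to check that the relations are compatible. The ideal generated by $\iota(\det[m])-\delta_{m0}$, once divided by $h$, has limit $\sum_i\mathsf{E}_{ii}[m]$. This maps to $\sum_i(e_{ii}-I_N/N)[m]=0$, so the map is well defined on the quotient.

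\textbf{Step 2: limits commute with derivations.} The coefficient of $\mathsf{E}^{k,l}_{ij}[m]$ is
\[
p_{k}(h,m)=\prod_{s=0}^{k-1}\Big(\frac{1-q^{(-m-1)l}}{lh}-s\Big),
\]
a scalar lying in $A_1$, with limit $\prod_s(-m-1-s)$. By bilinearity,
\[
\{\mathsf{E}^{k_1,l_1}_{ij}[m],\mathsf{E}^{k_2,l_2}_{kl}[n]\}_h=p_{k_1}(h,m)\,p_{k_2}(h,n)\,\{\mathsf{E}_{ij}[m],\mathsf{E}_{kl}[n]\}_h.
\]
So the limit is the product of the scalar limits with the limit of Proposition~\ref{prop:affinelimit}(1). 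Multiplying by $z^{-m-k_1-1}w^{-n-k_2-1}$ and summing over $m,n$, the factor $\prod_s(-m-1-s)$ combined with the shift of the exponent is precisely the action of $\partial_z^{k_1}$ on $z^{-m-1}$, and similarly for $\partial_w^{k_2}$. This yields the displayed identity and matches \eqref{eq:derivation of E_{ij}}.

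\textbf{Step 3: surjectivity.} Every generator $e_{ij}[m]$ of $\Vcal(\sfr\lfr_N)$ with $i\ne j$ is the limit of $\mathsf{E}_{ij}[m]$. The diagonal differences $e_{ii}-e_{jj}$ come from $\mathsf{E}_{ii}-\mathsf{E}_{jj}$, and together with the relation $\sum_i e_{ii}=0$ these span the diagonal part. Products and $\CC$-linear combinations are then attained.

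The case $\hfr$ is identical, using Proposition~\ref{prop:affinelimit}(2) in place of (1). The only delicate point I expect is the well-definedness modulo the determinant ideal in Step 1: one must verify that elements of $\Vcal_h(\gfr)$ which vanish in the quotient still have vanishing limits. I would handle this by writing the ideal generators as $h\big(\sum_i\mathsf{E}_{ii}[m]+O(h)\big)$ and checking that $A_1$-multiples of them land in the ideal $\langle\sum_i e_{ii}[m]\rangle$.
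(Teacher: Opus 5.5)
Your proposal is correct and is essentially the paper's argument. The paper gives no separate proof of this theorem: it treats it as a summary of Proposition~\ref{prop:affinelimit}, the Leibniz-rule Corollary, and the scalar identity behind \eqref{eq:derivation of E_{ij}}, which is exactly your Steps~1--2, and your Step~3 (surjectivity, stated only in the introduction's version) is a harmless addition. The one point the paper leaves implicit, well-definedness of the limit modulo the determinant ideal, is slightly subtler than your last sentence suggests: a limit-admitting element of that ideal may be a $\CC(\!(h)\!)$-combination of the $h^{-1}(\iota(\det[m])-\delta_{m,0})$ with polar coefficients whose leading parts cancel, not just an $A_1$-multiple of them. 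So you also need that such cancellations arise only from the Koszul syzygies of the linear forms $\sum_i\mathsf{E}_{ii}[m]$, which lift exactly and hence can be used to remove the poles; this holds because these linear forms involve disjoint sets of variables.
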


Since the Poisson structures on $\Vcal_h(\sfr\lfr_N)$ and $\Vcal_h(\hfr)$ are induced from $\Vbf_q(LSL_N)$ and $\Vbf_q(H)$, the theorem above rigorously justifies regarding the Poisson algebras $\Vbf_q(LSL_N)$ and $\Vbf_q(H)$ as the $q$-deformations of the Poisson algebras $\Vbf(\sfr\lfr_N)$ and $\Vbf(\hfr)$.

\section{Connection between classical \texorpdfstring{W}{W}-algebras and deformed \texorpdfstring{W}{W}-algebras} \label{sec:W-w}

In this section, we prove the main theorem of the paper, which establishes classical $W$-algebras as the limits of deformed $W$-algebras. This result concretizes what had been anticipated in the last part of \cite{FR96}*{Section~10.2}.

\vskip 2mm
Recall that $\Wbf_q(LSL_N)$ and $\Wbf(\sfr\lfr_N)$ admit embeddings into the Heisenberg algebras $\Vbf_q(H)$ and $\Vbf(\hfr)$, respectively. Throughout this section, we identify the $W$-algebras with their images under these embeddings and omit the symbols $m$ and $m_q$. In other words, according to Proposition \ref{prop:generators}, the generator $T_r(z)$ of $\Wbf_q(LSL_N)$ in this section is 
\begin{equation} \label{eq:T in Sec 5}
    T_r(z)  = \sum_{1 \le \alpha_1<\dots < \alpha_r \le N} \prod_{x=1}^r \mu_{\alpha_x\alpha_x}(q^{\alpha_x-x}z) \in \Wbf_q(LSL_N).
\end{equation}
The following definition gives analogous notions for $W$-algebras to the one defined in the previous section.
\begin{defn} Let $T(z)\in \Wbf_q(LSL_N)$ be in \eqref{eq:T in Sec 5}.
    \begin{enumerate}[(1)]
        \item For $1 \le r \le N$, the element in $\Vcal_q(H)$ denoted by  $T_r[m] $ is the Fourier mode in  
        \begin{equation*}
            T_r(z) = \sum_{m \in \ZZ} T_r[m] z^{-m-r}.
        \end{equation*} 
        \item Let $\Wcal_q(LSL_N)$ be the $\CC(q)$-subalgebra of $\Vcal_q(H)$ generated by the elements $T_r[m]$ for $1 \le r \le N-1$ and $m \in \ZZ$.
        \item For the map $\iota$ in Definition \ref{eq:affine_h-algebras} (1), let  $\widetilde{\Wcal}_h(\sfr\lfr_N)$ be the $\CC(\!(h)\!)$-subalgebra of $\widetilde{\Vcal}_h(\hfr)$ generated by the images $\iota(\Wcal_q(LSL_N))$. We denote
        \begin{equation*}
            \mathsf{T}_r(z):= \sum_{m\in \mathbb{Z}} \mathsf{T}_r[m] z^{-m-r} \quad  \text{ for } \quad \mathsf{T}_r[m] := \iota(T_r[m]).
        \end{equation*}
        \item We denote 
            \begin{equation*}
                \Wcal_h(\sfr\lfr_N) := \widetilde{\Wcal}_h(\sfr\lfr_N) \cap \Vcal_h(\hfr).
            \end{equation*} In other words, it is the $A_1$-subalgebra of  $\widetilde{\Wcal}_h(\sfr\lfr_N)$ consisting of elements with well-defined $h\to 0 $ limit.
        \end{enumerate}
\end{defn}

In the following example, we explain how to find $\mathsf{T}_r[m]$ in $\widetilde{\Wcal}_h(\sfr\lfr_N).$ 

\begin{ex}
If $N=3$, the Fourier modes of $T_r(z)$ can be expressed as follows.
\begin{equation*}
    \begin{aligned}
        T_1[m] &= \mu_{11}[m] + q^{-m-1}\mu_{22}[m] + q^{-2m-2}\mu_{33}[m], \\
        T_2[m] &=\sum_{m_1+m_2=m} \big(\mu_{11}[m_1]\mu_{22}[m_2] + q^{-m_2-1}\mu_{11}[m_1]\mu_{33}[m_2] + q^{-m-2}\mu_{22}[m_1]\mu_{33}[m_2]\big).
    \end{aligned}
\end{equation*}
Note that $   T_3[m] = \sum_{m_1+m_2+m_3=m} \mu_{11}[m_1]\mu_{22}[m_2]\mu_{33}[m_3] = \delta_{m, 0}$
in the algebra $\Vcal_q(H)$. On the other hand, the Fourier modes of $\mathsf{T}_r(z)$ are as follows:
\begin{equation}\label{eq:N=3examples1}
\begin{aligned}
    \mathsf{T}_1[m] &= \delta_{m,0}(1+q^{-1}+q^{-2}) + h(\mathsf{E}_{11}[m]+ q^{-m-1}\mathsf{E}_{22}[m] + q^{-2m-2}\mathsf{E}_{33}[m]), \\
    \mathsf{T}_2[m] &= \delta_{m,0}(1+q^{-1}+q^{-2}) \\
    &\quad + h\big((1+q^{-1})\mathsf{E}_{11}[m] + \mathsf{E}_{22}[m]+q^{-m-2}\mathsf{E}_{22}[m] + (q^{-m-1}+q^{-m-2})\mathsf{E}_{33}[m]\big)\\
    &\quad + h^2\sum_{m_1+m_2=m} \big( \mathsf{E}_{11}[m_1]\mathsf{E}_{22}[m_2] + q^{-m_2-1}\mathsf{E}_{11}[m_1]\mathsf{E}_{33}[m_2] +q^{-m-2}\mathsf{E}_{22}[m_1]\mathsf{E}_{33}[m_2]\big),
\end{aligned}
\end{equation}
and $\mathsf{T}_3[m] = \delta_{m, 0}.$ In other words, we have the relation 
\begin{equation}\label{eq:N=3examples2}
\begin{aligned}
     & h\big(\mathsf{E}_{11}[m]+\mathsf{E}_{22}[m] + \mathsf{E}_{33}[m]\big) \\
    &+ h^2\sum_{m_1+m_2=m} \big(\mathsf{E}_{11}[m_1]\mathsf{E}_{22}[m_2] + \mathsf{E}_{11}[m_1]\mathsf{E}_{33}[m_2] + \mathsf{E}_{22}[m_1]\mathsf{E}_{33}[m_2]\big) \\
    &+ h^3\sum_{m_1+m_2+m_3=m}\big(\mathsf{E}_{11}[m_1]\mathsf{E}_{22}[m_2]\mathsf{E}_{33}[m_3]\big) = 0.
    \end{aligned}
\end{equation}
\end{ex}

To investigate the algebra $\widetilde{\Wcal}_h(\sfr\lfr_N)$, it is useful to recall some identities for ordinary and Gaussian binomial coefficients. Let $\binom{N}{r}_q$ denote the Gaussian binomial coefficient, defined by
    \begin{equation} \label{eq:Gaussian binomial}
        \binom{N}{r}_q := \frac{(1-q^N)(1-q^{N-1})\dots (1-q^{N-r+1})}{(1-q)(1-q^2)\dots(1-q^r)}.
    \end{equation}
We can see \eqref{eq:Gaussian binomial} is in $\CC[q]$ from the identity 
\begin{equation*}
    \sum_{0 \le \beta_1 \le \dots \le \beta_m \le n} q^{\beta_1+\dots+\beta_m} = \binom{m+n}{m}_q
\end{equation*}
which follows from the combinatorial interpretation of Gaussian binomial coefficients. We shall also use the $q$-Vandermonde identity
\begin{equation*}
    \binom{m+n}{k}_q = \sum_{j=0}^k \binom{m}{k-j}_q\binom{n}{j}_q q^{j(m-k+j)}.
\end{equation*}
The following lemma describes the form of $\mathsf{T}_r[m]$.
\begin{lem}\label{lem:expansion}
    $\mathsf{T}_r[m]$ has the form
    \begin{equation*}
            \mathsf{T}_r[m]-\delta_{m, 0}\binom{N}{r}_{q^{-1}} = h\mathsf{W}_{r, 1}[m] + h^2\mathsf{W}_{r, 2}[m] +\dots + h^r\mathsf{W}_{r, r}[m],
    \end{equation*}
    for some  $\mathsf{W}_{r, s}[m] \in \Vcal_h(\hfr)$ such that 
    \begin{equation*}
    \lim_{h \to 0} \mathsf{W}_{r, s}[m] = \binom{N-s}{r-s} \sum_{1 \le \alpha_1 < \dots < \alpha_s \le N} \sum_{m_1+\dots + m_s = m} \lim_{h\to 0}(\mathsf{E}_{\alpha_1 \alpha_1}[m_1]) \dots \lim_{h \to 0}(\mathsf{E}_{\alpha_s \alpha_s}[m_s]).
    \end{equation*}
    In particular,
    \begin{equation*}
        \mathsf{W}_{r, 1}[m] = \sum_{\alpha=1}^N \sum_{x=0}^{r-1} q^{(-m-1-x)(\alpha-r+x)}\binom{\alpha-1}{r-x-1}_{q^{-1}}\binom{N-\alpha}{x}_{q^{-1}} \mathsf{E}_{\alpha\alpha}[m].
    \end{equation*}
\end{lem}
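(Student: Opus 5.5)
We substitute $\iota(\mu_{\alpha\alpha}[n])=\delta_{n,0}+h\mathsf{E}_{\alpha\alpha}[n]$ into the Fourier expansion of \eqref{eq:T in Sec 5}, expand the product, and sort the terms by how many factors contribute $h\mathsf{E}$. First we write the Fourier modes explicitly. For a fixed $r$-subset $\alpha_1<\dots<\alpha_r$, the factor $\mu_{\alpha_x\alpha_x}(q^{\alpha_x-x}z)$ contributes $q^{(\alpha_x-x)(-n_x-1)}\mu_{\alpha_x\alpha_x}[n_x]z^{-n_x-1}$. Hence
\begin{equation*}
T_r[m]=\sum_{\alpha_1<\dots<\alpha_r}\ \sum_{n_1+\dots+n_r=m}\ \prod_{x=1}^r q^{-(\alpha_x-x)(n_x+1)}\mu_{\alpha_x\alpha_x}[n_x].
\end{equation*}
We now apply $\iota$ and expand each factor as $\delta_{n_x,0}+h\mathsf{E}_{\alpha_x\alpha_x}[n_x]$. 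Choosing a subset $S\subset\{1,\dots,r\}$ of size $s$ of positions that carry $h\mathsf{E}$ gives a term of order $h^s$. The remaining positions force $n_x=0$ and contribute the scalar weight $q^{-(\alpha_x-x)}$.

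The $h^0$ term is $\delta_{m,0}\sum_{\alpha}q^{-\sum_x(\alpha_x-x)}$. By the identity $\sum_{0\le\beta_1\le\dots\le\beta_r\le N-r}q^{\sum\beta_x}=\binom{N}{r}_q$, with $\beta_x=\alpha_x-x$, this equals $\delta_{m,0}\binom{N}{r}_{q^{-1}}$. This proves the displayed form, and we define $\mathsf{W}_{r,s}[m]$ as the coefficient of $h^s$. Each $\mathsf{W}_{r,s}[m]$ is a finite combination of monomials $\mathsf{E}_{\beta_1\beta_1}[m_1]\cdots\mathsf{E}_{\beta_s\beta_s}[m_s]$ with $\beta_1<\dots<\beta_s$. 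Its coefficients are finite sums of Laurent monomials in $q$, which lie in $A_1$, so $\mathsf{W}_{r,s}[m]\in\Vcal_h(\hfr)$.

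For the limit, we set $q=1$ in the coefficient of a fixed monomial with indices $\beta_1<\dots<\beta_s$. This coefficient counts the $r$-subsets $\{\alpha_x\}\supset\{\beta_y\}$, and there are $\binom{N-s}{r-s}$ of them. This gives the stated limit formula.

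For $s=1$ we compute the $q$-weight exactly. Fix the position $x+1$ at which $\alpha_{x+1}=\alpha$ carries $\mathsf{E}_{\alpha\alpha}[m]$, so there are $r-x-1$ indices below $\alpha$ and $x$ indices above. Write the lower indices as $\alpha_1<\dots<\alpha_{r-x-1}<\alpha$ and the upper ones as $\alpha<\alpha_{r-x+1}<\dots$; the index assignment must be checked against the stated formula and adjusted if needed. The marked factor contributes $q^{-(\alpha-(r-x))(m+1)}$. The lower indices contribute $\binom{\alpha-1}{r-x-1}_{q^{-1}}$ by the same Gaussian-binomial identity. The upper indices, each shifted down by its position, contribute a Gaussian binomial $\binom{N-\alpha}{x}_{q^{-1}}$ times a power $q^{-x(\alpha-r+x)}$ coming from the offset $\alpha_y-y\ge\alpha-(r-x)$. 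Multiplying these should give the exponent $(-m-1-x)(\alpha-r+x)$.

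The main obstacle is this bookkeeping of the $q$-exponents. One has to choose correctly whether $x$ counts the indices above or below $\alpha$, since the two choices differ by a reflection. One also has to confirm that the offsets combine exactly into $q^{(-m-1-x)(\alpha-r+x)}$. I would check the result against the case $N=3$ in \eqref{eq:N=3examples1} to fix the conventions.
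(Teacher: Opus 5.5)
Your proposal is correct and follows essentially the same route as the paper: apply $\iota$ to the Fourier modes of \eqref{eq:T in Sec 5}, read off the $h^0$ term via the Gaussian-binomial identity, take $\mathsf{W}_{r,s}[m]$ to be the $h^s$ coefficient, and for $s=1$ split the other indices into those below and those above $\alpha$. The one slip is that you label the marked position $x+1$; it should be $r-x$, which your subsequent formulas already assume. With that choice the exponents $-(m+1)(\alpha-r+x)$ and $-x(\alpha-r+x)$ combine exactly into $(-m-1-x)(\alpha-r+x)$, so the hedged ``adjust if needed'' step requires no change.
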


\begin{proof}
    From the equality \eqref{eq:T in Sec 5}, we obtain
    \begin{equation*}
        T_r[m] = \sum_{1\le \alpha_1 <\dots < \alpha_r \le N} \sum_{m_1+\dots+m_r=m} \prod_{x=1}^r \left( q^{(-m_x-1)(\alpha_x-x)}\mu_{\alpha_x\alpha_x}[m_x]\right).
    \end{equation*}
    Applying $\iota$, it follows that
    \begin{equation*}
            \mathsf{T}_r[m] = \sum_{1\le \alpha_1 < \dots < \alpha_r \le N}\sum_{m_1+\dots+m_r=m}\prod_{x=1}^r \left( q^{-(\alpha_x-x)}\delta_{m_x, 0} +hq^{(-m_x-1)(\alpha_x-x)}\mathsf{E}_{\alpha_x\alpha_x}[m_x]\right).
    \end{equation*}
    Expanding the product with respect to powers of $h$, we obtain
    \begin{equation*}
        \mathsf{T}_r[m] - \delta_{m, 0}\binom{N}{r}_{q^{-1}}= h\mathsf{W}_{r, 1}[m] + \dots + h^r \mathsf{W}_{r, r}[m].
    \end{equation*}
    for some $\mathsf{W}_{r, s}[m]\in \widetilde{\Vcal}_h(\hfr).$
    Recall that the $q^{-1}$-binomial coefficient $\binom{n}{k}_{q^{-1}}$ is precisely the sum of $q^{-\sum (\gamma_i-i)}$ over all $1 \le \gamma_1 < \dots < \gamma_k \le n$. Applying this, for a fixed $s$, the coefficient $\mathsf{W}_{r, s}[m]$ is obtained by choosing $s$ factors of $q^{(-m_x-1)(\alpha_x-x)}\mathsf{E}_{\alpha_x, \alpha_x}[m_x]$ and $r-s$ factors of $q^{-(\alpha_x-x)}\delta_{m_x, 0}$. Hence, as $h \to 0$,
    \begin{equation*}
        q^{(-m_x-1)(\alpha_x-x)}\mathsf{E}_{\alpha_x\alpha_x}[m_x] \to \lim_{h \to 0} \left( \mathsf{E}_{\alpha_x\alpha_x}[m_x]\right),
    \end{equation*}
    while the remaining coefficient converges to
    \begin{equation*}
        q^{-\sum_x (\alpha_x-x)} =\binom{N-s}{r-s}_{q^{-1}} \to \binom{N-s}{r-s}.
    \end{equation*}
    Hence $\mathsf{W}_{r, s}[m]\in \Vcal_h(\hfr).$
    To compute $\mathsf{W}_{r, 1}[m]$ explicitly, fix $1 \le \alpha \le N$. In the expansion of the product
    \begin{equation*}
        \Big(q^{-(\alpha_1-1)}\delta_{m_1, 0} + hq^{(-m_1-1)(\alpha_1-1)}\mathsf{E}_{\alpha_1 \alpha_1}[m_1]\Big)\cdots\Big(q^{-(\alpha_r-r)}\delta_{m_r, 0} + hq^{(-m_r-1)(\alpha_r-r)}\mathsf{E}_{\alpha_r \alpha_r}[m_r]\Big),
    \end{equation*}
    the term  $\mathsf{E}_{\alpha\alpha}[m]$
    appears by selecting exactly one $h$-factor corresponding to $\alpha$, and $r-1$ constant factors (with respect to $h$) from the rest. Suppose the selected $h$-factor occurs at the $(r-x)$-th position in the product. This implies there are exactly $r-x-1$ chosen indices strictly less than $\alpha$, and $x$ chosen indices strictly greater than $\alpha$. Thus, choosing the sequence $1 \le \alpha_1 < \dots < \alpha_r \le N$ with $\alpha_{r-x} = \alpha$ corresponds to independently making two choices. First, choosing $r-x-1$ indices among $1, \dots, \alpha-1$ yields
    \begin{equation*}
        \binom{\alpha-1}{r-x-1}_{q^{-1}}.
    \end{equation*}
    For the indices strictly greater than $\alpha$, we obtain an additional factor of $q^{-x(\alpha-r+x)}$ from the index shift, yielding
    \begin{equation*}
        q^{-x(\alpha-r+x)}\binom{N-\alpha}{x}_{q^{-1}}.
    \end{equation*}
    Finally, the selected $h$-factor itself contributes $q^{(-m-1)(\alpha-r+x)}$, so combining all these contributions and summing over the possible values of $x$ ($0 \le x \le r-1$) and $\alpha$ ($1 \le \alpha \le N$) yields the desired formula for $\mathsf{W}_{r, 1}[m]$.
\end{proof}

As an application of the above lemma, we prove Proposition~\ref{prop:main}, which was also anticipated in the paper \cite{FR96}. Before stating Proposition~\ref{prop:main}, we introduce an algebra that contains all Fourier modes of the generating series of the classical $W$-algebra.
Recall from \eqref{eq:Miura_nondeformed} that the generators of $\Wbf(\sfr\lfr_N)\hookrightarrow \Vbf(\hfr)$ can be obtained via the Miura transformation. Since we omit the symbol $m$ for the Miura map, throughout this section, denote by $w_2(z),\ldots,w_N(z)$ the generators of $\Wbf(\sfr\lfr_N)$ determined by the identity
        \begin{equation} \label{eq:w(z) Section5}
        \begin{aligned}
            & (\partial+e_{11}(z))(\partial+e_{22}(z))\cdots (\partial+e_{NN}(z))\\
            & =\partial^N+ w_2(z)\partial^{N-2}+ \cdots + w_{N-1}(z) \partial+ w_N(z).
        \end{aligned}
        \end{equation}
     Now we consider the algebra of Fourier modes for $\Wbf(\sfr\lfr_N).$

        \begin{defn} \label{def:W(sl_N)}
            Let $\Wcal(\sfr\lfr_N)$ be the $\CC$-subalgebra of $\Vcal(\hfr)$ generated by the elements $w_r[m]$ for $2 \le r \le N$ and $m \in \ZZ$, where
        \begin{equation*}
            w_r(z) = \sum_{m \in \ZZ} w_r[m]z^{-m-r} \in \Wbf(\sfr\lfr_N).
        \end{equation*}
        \end{defn}

Now we state the proposition. The case $r=1$ of the following proposition provides the base step in the inductive proof of Theorem~\ref{thm:main}.

\begin{prop}\label{prop:main}
    For $1 \le r \le N-1$ and $m \in \ZZ$, we have 
    \begin{equation*}
       \frac{1}{h^2}\ \bigg(\, \mathsf{T}_r[m]-\delta_{m, 0}\binom{N}{r}_{q^{-1}}\bigg) \in \Wcal_h(\sfr\lfr_N)
    \end{equation*}
    and its limit is
    \begin{equation*}
    \sum_{m \in \ZZ} \lim_{h \to 0} \frac{\mathsf{T}_r[m]-\delta_{m, 0}\binom{N}{r}_{q^{-1}}}{h^2}z^{-m-2} = -\binom{N-2}{r-1} w_2(z).
    \end{equation*}
\end{prop}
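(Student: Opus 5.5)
The plan is to start from Lemma~\ref{lem:expansion} and show that the $h^1$-term $h\mathsf{W}_{r,1}[m]$ is in fact of order $h^2$ once the determinant relation in $\widetilde{\Vcal}_h(\hfr)$ is used. Membership in $\widetilde{\Wcal}_h(\sfr\lfr_N)$ is automatic: the element is $h^{-2}$ times $\mathsf{T}_r[m]$ minus a scalar in $\CC(\!(h)\!)$. So the whole content is to rewrite it as an expression whose coefficients have limits as $h\to 0$, and then to identify that limit.

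\textbf{Step 1 (expand $\mathsf{W}_{r,1}$ to first order in $h$).} Write the coefficient of $\mathsf{E}_{\alpha\alpha}[m]$ in the explicit formula for $\mathsf{W}_{r,1}[m]$ as $c_\alpha(m;h)=\binom{N-1}{r-1}+h\,d_\alpha(m)+h^2(\cdots)$. The constant term is $\binom{N-1}{r-1}$ by the ordinary Vandermonde identity $\sum_x\binom{\alpha-1}{r-x-1}\binom{N-\alpha}{x}=\binom{N-1}{r-1}$. The first-order term has the form $d_\alpha(m)=(m+1)a_\alpha+b_\alpha$. Here $a_\alpha=\sum_x(\alpha-r+x)\binom{\alpha-1}{r-x-1}\binom{N-\alpha}{x}$ comes from the factors $q^{(-m-1)(\alpha-r+x)}$ with $q=e^{-h}$. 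The $m$-independent part $b_\alpha$ collects the first-order terms of the $q^{-1}$-binomials.

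\textbf{Step 2 (use the determinant relation).} In $\widetilde{\Vcal}_h(\hfr)$ we have $\iota(\det[m])=\delta_{m,0}$, as in \eqref{eq:N=3examples2}. This gives
\[
\sum_{\alpha}\mathsf{E}_{\alpha\alpha}[m]=-h\sum_{\alpha<\beta}\sum_{m_1+m_2=m}\mathsf{E}_{\alpha\alpha}[m_1]\mathsf{E}_{\beta\beta}[m_2]+O(h^2).
\]
Substituting this into $h\binom{N-1}{r-1}\sum_\alpha\mathsf{E}_{\alpha\alpha}[m]$ shows that $h^{-2}\bigl(\mathsf{T}_r[m]-\delta_{m,0}\binom{N}{r}_{q^{-1}}\bigr)$ equals
\[
\sum_\alpha d_\alpha(m)\mathsf{E}_{\alpha\alpha}[m]+\Bigl(\binom{N-2}{r-2}-\binom{N-1}{r-1}\Bigr)\sum_{\alpha<\beta}\sum_{m_1+m_2=m}\mathsf{E}_{\alpha\alpha}[m_1]\mathsf{E}_{\beta\beta}[m_2]
\]
plus terms of the form \eqref{eq:generator_2_intro} with coefficients in $h\CC[\![h]\!]$. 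The coefficient of the quadratic part uses the limit of $\mathsf{W}_{r,2}$ from Lemma~\ref{lem:expansion}. The cubic and higher terms $h^{s-2}\mathsf{W}_{r,s}$ with $s\ge 3$ vanish in the limit. The contributions of order $h^{\ge 2}$ from the determinant relation and from $c_\alpha$ also vanish. Hence the element lies in $\Vcal_h(\hfr)$, and therefore in $\Wcal_h(\sfr\lfr_N)$.

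\textbf{Step 3 (compare with $w_2$).} The quadratic coefficient is $-\binom{N-2}{r-1}$ by Pascal's rule. Expanding \eqref{eq:w(z) Section5}, the factor $e_{\beta\beta}$ has $\beta-1$ copies of $\partial$ to its left, so
\[
w_2(z)=\sum_{\alpha<\beta}e_{\alpha\alpha}(z)e_{\beta\beta}(z)+\sum_\beta(\beta-1)\partial e_{\beta\beta}(z).
\]
The Fourier mode of $\partial e(z)$ at $z^{-m-2}$ is $-(m+1)e[m]$. Since $\sum_\alpha e_{\alpha\alpha}[m]=0$ in $\Vcal(\hfr)$, it therefore suffices to check two identities. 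First, $a_\alpha=\binom{N-2}{r-1}(\alpha-1)+\text{const}$, with the constant independent of $\alpha$. Second, $b_\alpha$ is independent of $\alpha$.

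The main obstacle is this last check: extracting $b_\alpha$ from $\frac{d}{dh}\big|_{h=0}$ of the $q^{-1}$-binomials $\binom{\alpha-1}{r-x-1}_{q^{-1}}\binom{N-\alpha}{x}_{q^{-1}}$, and summing over $x$. I would first try to avoid computing it directly. Instead, I would regroup the sum with the $q$-Vandermonde identity, choosing the shift so that the $m$-independent $q$-powers recombine into $\binom{N-1}{r-1}_{q^{-1}}$ times a factor $q^{-c(\alpha-1)}$. The derivative of that factor, together with the $(m+1)$-part, then gives $a_\alpha$ and $b_\alpha$ with the required linear dependence on $\alpha$. If this regrouping fails, I would fall back on verifying the statement by the generating function in $\alpha$. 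As a sanity check I would test the formula against the case $N=3$ in \eqref{eq:N=3examples1}.
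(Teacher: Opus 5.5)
Your plan is correct and is essentially the paper's proof. The paper also kills the order-$h$ term with the determinant relation $\mathsf{T}_N[m]=\delta_{m,0}$, gets the quadratic coefficient $-\binom{N-2}{r-1}$ by Pascal's rule, and handles the linear part with $q$-Vandermonde, the identity $(\alpha-r+x)\binom{\alpha-1}{r-x-1}=(\alpha-1)\binom{\alpha-2}{r-x-1}$, and ordinary Vandermonde.

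The only difference is that the paper subtracts $\binom{N-1}{r-1}_{q^{-1}}(\mathsf{T}_N[m]-\delta_{m,0})$ rather than the classical-binomial multiple. Since $q$-Vandermonde gives $\sum_x q^{-x(\alpha-r+x)}\binom{\alpha-1}{r-x-1}_{q^{-1}}\binom{N-\alpha}{x}_{q^{-1}}=\binom{N-1}{r-1}_{q^{-1}}$ exactly, with no extra factor (your $c=0$), your $b_\alpha$ is the $\alpha$-independent constant $\frac{d}{dh}\big|_{h=0}\binom{N-1}{r-1}_{q^{-1}}$, which is exactly what the paper's choice of subtraction cancels outright.
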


\begin{proof}
    From \eqref{eq:Miura_nondeformed}, the coefficient of $z^{-m-2}$ in $w_2(z)$ is 
    \begin{equation*}
    \begin{aligned}
 \sum_{1 \le \alpha_1 < \alpha_2 \le N} \sum_{m_1+m_2=m} \lim_{h \to 0} \left(\mathsf{E}_{\alpha_1\alpha_1}[m_1]\mathsf{E}_{\alpha_2\alpha_2}[m_2]\right) + \sum_{\alpha=1}^N(\alpha-1)(-m-1)\lim_{h \to 0}\mathsf{E}_{\alpha\alpha}[m].
    \end{aligned}
    \end{equation*}
    By Lemma~\ref{lem:expansion} and the relation $\mathsf{T}_N[m] = \delta_{m, 0}$,
    \begin{equation*}
        \begin{aligned}
            & \mathsf{T}_r[m]-\delta_{m, 0}\binom{N}{r}_{q^{-1}} \\
            & = \bigg(\mathsf{T}_r[m]-\delta_{m, 0}\binom{N}{r}_{q^{-1}}\bigg)-\binom{N-1}{r-1}_{q^{-1}}\bigg(\mathsf{T}_N[m]-\delta_{m, 0}\bigg)\\
            &=h\bigg( \mathsf{W}_{r, 1}[m] - \binom{N-1}{r-1}_{q^{-1}}\mathsf{W}_{N, 1}[m] \bigg)  + h^2\bigg( \mathsf{W}_{r, 2}[m] - \binom{N-1}{r-1}_{q^{-1}}\mathsf{W}_{N, 2}[m]\bigg) + h^3(\dots).
        \end{aligned}
    \end{equation*}
    Since the coefficient of $h^2$ is 
    \begin{equation*}
    \begin{aligned}
        &\lim_{h \to 0} \bigg(\mathsf{W}_{r, 2}[m] - \binom{N-1}{r-1}_{q^{-1}} \mathsf{W}_{N, 2}[m]\bigg) \\
        &= \left(\binom{N-2}{r-2}-\binom{N-1}{r-1}\right) \sum_{1 \le \alpha_1 < \alpha_2 \le N}\sum_{m_1+m_2=m} \lim_{h \to 0}\big(\mathsf{E}_{\alpha_1 \alpha_1}[m_1]\mathsf{E}_{\alpha_2\alpha_2}[m_2]\big)\\
        &=-\binom{N-2}{r-1} \sum_{1 \le \alpha_1<\alpha_2\le N}\sum_{m_1+m_2=m}\lim_{h \to 0}\big( \mathsf{E}_{\alpha_1\alpha_1}[m_1]\mathsf{E}_{\alpha_2\alpha_2}[m_2]\big),
    \end{aligned}
    \end{equation*}
    it suffices to show that
    \begin{equation*}
        \lim_{h \to 0} \frac{\mathsf{W}_{r, 1}[m] - \binom{N-1}{r-1}_{q^{-1}}\mathsf{W}_{N, 1}[m]}{h} = -\binom{N-2}{r-1} \sum_{\alpha=1}^N (\alpha-1)(-m-1)\lim_{h \to 0} \mathsf{E}_{\alpha\alpha}[m].
    \end{equation*}
    By Lemma~\ref{lem:expansion} and the $q$-Vandermonde identity, we have
    \begin{equation*}
    \begin{aligned}
        &\mathsf{W}_{r, 1}[m] - \binom{N-1}{r-1}_{q^{-1}}\mathsf{W}_{N, 1}[m] \\
        &= \sum_{\alpha=1}^N\sum_{x=0}^{r-1} \big(q^{(-m-x-1)(\alpha-r+x)} - q^{-x(\alpha-r+x)} \big)\binom{\alpha-1}{r-x-1}_{q^{-1}}\binom{N-\alpha}{x}_{q^{-1}} \mathsf{E}_{\alpha\alpha}[m].
    \end{aligned}
    \end{equation*}
    Therefore, we have
    \begin{equation*}
    \begin{aligned}
        &\lim_{h \to 0} \frac{\mathsf{W}_{r, 1}[m] - \binom{N-1}{r-1}_{q^{-1}}\mathsf{W}_{N, 1}[m]}{h} \\
        &=\lim_{h\to 0}\sum_{\alpha=1}^N\sum_{x=0}^{r-1} q^{(-m-x-1)(\alpha-r+x)}(-\alpha+r-x)\binom{\alpha-1}{r-x-1}\binom{N-\alpha}{x} \mathsf{E}_{\alpha\alpha}^{1, -\alpha+r-x}[m] \\
        &=\sum_{\alpha=1}^N\sum_{x=0}^{r-1} (m+1)(\alpha-r+x)\binom{\alpha-1}{r-x-1}\binom{N-\alpha}{x}  \lim_{h \to 0} \mathsf{E}_{\alpha\alpha}[m]\\
        &= (m+1) \sum_{\alpha=1}^N \sum_{x=0}^{r-1} (\alpha-1) \binom{\alpha-2}{r-x-1} \binom{N-\alpha}{x} \lim_{h \to 0} \mathsf{E}_{\alpha\alpha}[m]\\
        &= (m+1) \binom{N-2}{r-1}\sum_{\alpha=1}^N(\alpha-1) \lim_{h \to 0} \mathsf{E}_{\alpha\alpha}[m],
    \end{aligned}
    \end{equation*}
    which implies the proposition.
\end{proof}

In the rest of this section, we explain how to get generators of $\Wcal(\sfr\lfr_N)$ as limits of elements in $\Wcal_h(\sfr\lfr_N).$
Let $N \ge 2$ and consider the new elements $\mathfrak{T}_{r}[m] \in \widetilde{\Wcal}_h(\sfr\lfr_N)$ for $1\leq r \leq N-1, m \in \ZZ$ as follows:
\begin{equation}\label{eq:main}
\begin{aligned}
   \mathfrak{T}_{r}[m]\textstyle &:=\frac{1}{h^{r+1}}\Big(\mathsf{T}_r[m] - q^{-1}\binom{N-r}{1}_{q^{-1}} \mathsf{T}_{r-1}[m]+ \\
   &\quad \dots + (-1)^{r-1}q^{-\frac{(r-1)r}{2}}\binom{N-2}{r-1}_{q^{-1}}\mathsf{T}_1[m] + (-1)^r q^{-\frac{r(r+1)}{2}}\binom{N-1}{r}_{q^{-1}} \delta_{m, 0} - \delta_{m, 0} \Big).
    \end{aligned}
\end{equation}
Using the formal generating series
$ \mathfrak{T}_r(z) := \sum_{m \in \ZZ} \mathfrak{T}_r[m]z^{-m-r-1},$
we can state the main theorem of this paper.
\begin{thm}\label{thm:main}
For $1 \le r \le N-1$ and $m \in \ZZ$, the limit of $\mathfrak{T}_r[m]$ as $h \to 0$ is well defined. Equivalently,
\begin{equation*}
\mathfrak{T}_r[m] \in \Wcal_h(\sfr\lfr_N).
\end{equation*}
Moreover, the limit of the generating series corresponds to a generating series of $\Wbf(\sfr\lfr_N)$:
\begin{equation*}
\lim_{h \to 0} \mathfrak{T}_r(z)
= -w_{r+1}(z)
\in \Wbf(\sfr\lfr_N)
\end{equation*}
and hence
$\lim_{h\to 0}: \Wcal_h(\sfr\lfr_N)\to \Wcal(\sfr\lfr_N)$ is a surjective Poisson algebra homomorphism.
\end{thm}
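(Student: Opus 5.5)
The plan is to pass to generating series and operators. The combination in \eqref{eq:main} should come from a factorization of the $q$-difference operator in \eqref{eq:Miura_deformed}. I would define the difference-operator version of the Miura product,
\begin{equation*}
\mathcal{L}_h := (D+\iota\mu_{11}(z))\cdots(D+\iota\mu_{NN}(z)) = D^N + \mathsf{T}_1(z)D^{N-1}+\dots+\mathsf{T}_{N-1}(z)D+1,
\end{equation*}
with $\iota\mu_{\alpha\alpha}(z)=z^{-1}+h\mathsf{E}_{\alpha\alpha}(z)$. I would then rescale each factor $D+\iota\mu_{\alpha\alpha}$. Writing $D+z^{-1}$ as $z^{-1}(1+zD)$, with the $z$-powers bookkept via the shift $z^{-m-r}$ in $T_r$, each factor has the form $z^{-1}\bigl((1+q^{?}\,\cdot)\bigr)$ times something involving $\frac{1-D}{h}$-type operators. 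The goal is to show that
\begin{equation*}
\mathcal{L}_h=(-1)^N\prod_{\alpha}\big(h\,\mathcal{D}_\alpha + \dots\big)
\end{equation*}
expands so that the coefficients of powers of $(D-1)$, or of $(1-D)/h$, are exactly the combinations $h^{r+1}\mathfrak{T}_r$.

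Concretely, I would re-expand $\mathcal{L}_h$ in the basis $\Delta^{k}$, where $\Delta:=(1-q^{?}D)/h$ (possibly with a $z$-twist), instead of $D^k$. The change of basis from $D^k$ to $(1-D)^k$ involves the $q$-binomial coefficients $(-1)^jq^{-j(j+1)/2}\binom{N-r+j-1}{j}_{q^{-1}}$ that appear in \eqref{eq:main}. Using the $q$-binomial theorem and the $q$-Vandermonde identity, I would verify that the coefficient of $\Delta^{N-r-1}$ in $\mathcal{L}_h$ equals $\pm h^{N}\mathfrak{T}_r(z)$. The $\delta_{m,0}$ constants would be absorbed by $T_N=1$ and the constant terms $\binom{N}{r}_{q^{-1}}$ from Lemma~\ref{lem:expansion}.

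On the other side, each factor $D+\iota\mu_{\alpha\alpha}(z)$ equals $-h$ times $\bigl(\Delta-\mathsf{E}_{\alpha\alpha}(z)\bigr)$ up to a unit factor of the form $q^{\cdot}z^{-1}$ commuting past. The product therefore becomes $(-h)^N\prod_\alpha(\Delta_\alpha-\mathsf{E}_{\alpha\alpha})$. By \eqref{eq:motiv_generator}, $\Delta$ tends to $z\partial_z$ (equivalently $\partial$ after the $z$-conjugation) as $h\to0$. The product therefore tends to $\pm\prod_\alpha(\partial+e_{\alpha\alpha})$ modulo the trace relation. Comparing coefficients with \eqref{eq:w(z) Section5} gives $\lim\mathfrak{T}_r=-w_{r+1}$. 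Along the way, each coefficient expands into finitely many terms of the form $\mathsf{E}^{k,l}_{\alpha\alpha}$ products with $A_1$ coefficients. Hence it lies in $\Vcal_h(\hfr)$ and so in $\Wcal_h(\sfr\lfr_N)$. The index shifts, which explain $z^{-m-r-1}$ and $h^{r+1}$, come from the fact that $\iota(\mu)$ contributes the $h^0$ part.

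As a consistency check, for $r=1$ the computation should reduce to Proposition~\ref{prop:main}. I would structure the argument as an induction on $r$, with Proposition~\ref{prop:main} as the base case. At step $r$, lower-order poles in $h$ cancel because of the relation $\mathsf{T}_N=\delta$, which plays the role of the trace relation \eqref{eq:N=3examples2}.

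Surjectivity follows because the $w_{r+1}[m]$ generate $\Wcal(\sfr\lfr_N)$ and the limit is an algebra homomorphism. The Poisson property follows from Theorem~\ref{thm:main_affine}(2), since $\Wcal_h\subset\Vcal_h(\hfr)$ and the bracket on $\widetilde{\Wcal}_h$ is the restriction of the one on $\widetilde{\Vcal}_h(\hfr)$.

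The main obstacle is the exact operator bookkeeping: the noncommutation $D f(z)=f(qz)D$ and the $z$-weights must be arranged so that the rescaled factors really are $\Delta-\mathsf{E}_{\alpha\alpha}$. One must also show that the change of basis reproduces precisely the Gaussian coefficients in \eqref{eq:main}, and that all negative powers of $h$ cancel. I would first test the identity for $N=3$, using the explicit formulas \eqref{eq:N=3examples1}, before proving the general $q$-binomial identity.
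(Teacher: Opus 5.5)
The gap is in your central identification. You assert that, after re-expanding $\mathcal{L}_h$ in powers of a first-order operator $\Delta$, the coefficient of $\Delta^{N-r-1}$ is $\pm h^N\mathfrak{T}_r(z)$. For $1\le r\le N-2$ (so whenever $N\ge 3$) this fails for every $\Delta=a(z)+b(z)D$ with $b\neq 0$.

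Solve $\sum_k \mathsf{T}_{N-k}D^k=\sum_j C_j\Delta^j$ from the top down. Then $C_j$ is a combination of $\mathsf{T}_0,\dots,\mathsf{T}_{N-j}$ in which $\mathsf{T}_{N-j}$ occurs with nonzero coefficient. So the coefficient of $\Delta^{N-r-1}$ involves $\mathsf{T}_{r+1}$. By contrast, $\mathfrak{T}_r$ involves only $\mathsf{T}_0,\dots,\mathsf{T}_r$ and the constant $\mathsf{T}_N[m]=\delta_{m,0}$. Since $\mathsf{T}_1,\dots,\mathsf{T}_{N-1}$ are algebraically independent, these are different elements. Note also that the numbers $\binom{N-r+j-1}{j}_{q^{-1}}$ in \eqref{eq:main} are negative-binomial coefficients: they come from dividing by $(D+z^{-1})^{N-r}$, not from a change of basis $D^k\leftrightarrow\Delta^k$.

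Your factorization step also has a sign problem. We have $D+z^{-1}+h\mathsf{E}_{\alpha\alpha}=z^{-1}(1+zD)+h\mathsf{E}_{\alpha\alpha}$, and $1+zD$ does not become small as $h\to 0$. Realize $D=z^{-1}\sigma$ with $\sigma g(z)=g(qz)$ and apply the automorphism $D\mapsto -D$; what is then true is
\begin{equation*}
\sum_{k=0}^N(-1)^k\mathsf{T}_{N-k}(z)D^k=\prod_{\alpha=1}^N\bigl(\nabla+h\mathsf{E}_{\alpha\alpha}(z)\bigr),\qquad \nabla:=z^{-1}(1-\sigma).
\end{equation*}
Write the left side as $\sum_j C_j\nabla^j$. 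On Fourier modes one finds $C_{N-1}=h^2\mathfrak{T}_1$ and $C_0=-h^N\mathfrak{T}_{N-1}$, and for $N=3$ also $C_1=h^3\mathfrak{T}_2-h^2\mathfrak{T}_1$. So $\mathfrak{T}_1$ sits at $\nabla^{N-1}$, not at $\nabla^{N-2}$. The intermediate $\mathfrak{T}_r$ are not single coefficients at all; for example, for $N=4$, $h^3\mathfrak{T}_2\neq C_2$.

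Consequently, the operator limit does not give you what you need. Since $\nabla/h\to\partial_z$, the factorization yields $C_j=h^{N-j}c_j$ with $c_j\in\Wcal_h(\sfr\lfr_N)$ and $c_j\to w_{N-j}$, where $w_1=0$. This settles $r=N-1$. It also shows $C_{N-r-1}/h^{r+1}\to w_{r+1}$, which would give surjectivity and the Poisson property of $\lim$ using these \emph{other} elements. It says nothing directly about $\mathfrak{T}_r$. For instance, $\mathfrak{T}_1=C_{N-1}/h^2$, whereas the leading-order limit only gives $C_{N-1}/h\to 0$. Obtaining $-w_2$ requires the $O(h)$ part of $c_{N-1}$, including the subleading term of the relation $\mathsf{T}_N[m]=\delta_{m,0}$.

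To make your approach prove the theorem, you would need an exact expression of each $h^{r+1}\mathfrak{T}_r$ in terms of the $C_j$ (using $\mathsf{T}_N=\delta_{m,0}$), together with control of the $h$-orders of their subleading terms. The proposal contains neither.

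The paper avoids this by inducting on $N$. The relation $\mathcal{L}^{(N+1)}=(D+\mu_{11})\mathcal{L}^{(N)}$ of Lemma~\ref{lem:indicesshift} gives the exact recursion \eqref{eq:mainproof2} for $\mathfrak{T}_r$. This recursion mirrors $w^{N+1}_{r+1}=w^N_{r+1}+e_{11}w^N_r+\partial w^N_r$. The base cases are $r=1$ (Proposition~\ref{prop:main}) and $r=N-1$.
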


In the previous section, we justified that $\Vbf_q(H)$ is a $q$-deformation of $\Vbf(\hfr)$. Furthermore, since the above theorem demonstrates that the limit homomorphism recovers $\Wcal(\sfr\lfr_N)$ from $\Wcal_h(\sfr\lfr_N)$, it rigorously establishes that the Poisson algebra $\Wbf_q(LSL_N)$ is indeed a $q$-deformation of $\Wbf(\sfr\lfr_N)$. The following diagram summarizes the limit process from $\Wbf_q(LSL_N)$ to $\Wcal(\sfr\lfr_N)$.
\begin{equation}
\begin{tikzcd}[column sep=large, row sep=large]
\Wbf_q(LSL_N) \arrow[d, dashed, "\text{\scriptsize Fourier modes}"'] & & \Wbf(\sfr\lfr_N) \arrow[d, dashed, "\text{\scriptsize Fourier modes}"]\\
\Wcal_q(LSL_N)
  \arrow[r, hook, "\iota"]
  \arrow[d, hook]
&
\widetilde{\Wcal}_h(\sfr\lfr_N)
  \supset
\Wcal_h(\sfr\lfr_N)
  \arrow[r, two heads, "\lim_{h\to 0}"]
  \arrow[d, hook]
&
\Wcal(\sfr\lfr_N)
  \arrow[d, hook]
\\
\Vcal_q(H)
  \arrow[r, hook, "\iota"]
&
\widetilde{\Vcal}_h(\hfr)
  \supset
\Vcal_h(\hfr)
  \arrow[r, two heads, "\lim_{h\to 0}"]
&
\Vcal(\hfr).
\end{tikzcd}
\end{equation}

Before proceeding with the proof, we outline the overall strategy. Figure~\ref{fig:induction_strategy} below summarizes the induction steps. Note that the base case $r=1$ for all $N \ge 2$ has already been established in Proposition~\ref{prop:main} (represented by the blue dots). Next, we prove the case $r=N-1$ for all $N \ge 2$ (represented by the red dots). Finally, for $r \ge 2$, assuming the statement holds for a fixed $N$ at both $r-1$ and $r$, we prove the statement for $(N+1, r)$. This inductive step is indicated by the black arrows. This covers all $1 \le r \le N-1$.

\begin{figure}[htbp]
    \centering
    \hskip 3cm
    \begin{tikzpicture}[scale=1, >=stealth]
        
        \foreach \n in {2, 3, 4, 5, 6} {
            \draw[gray!20, thin] (\n, 0) -- (\n, 5.2);
        }
        \foreach \r in {1, 2, 3, 4, 5} {
            \draw[gray!20, thin] (2, \r) -- (6.2, \r);
        }

        \draw[->, thick, gray!80] (1.5, 0) -- (6.5, 0) node[right, black] {$N$};
        \draw[->, thick, gray!80] (2, -0.5) -- (2, 5.5) node[above, black] {$r$};
        
        \foreach \n in {2, 3, 4, 5, 6} {
            \draw[gray!80] (\n, 0.1) -- (\n, -0.1) node[below, black] {$\n$};
        }
        \foreach \r in {1, 2, 3, 4, 5} {
            \draw[gray!80] (2.1, \r) -- (1.9, \r) node[left, black] {$\r$};
        }

        \foreach \n in {2, 3, 4, 5, 6} {
            \node[circle, fill=black!100, draw=black, inner sep=2pt] (N\n_r1) at (\n, 1) {};
        }
        \node[right, black, font=\small] at (6.2, 1) {$r=1$(Proposition~\ref{prop:main})};

        \foreach \n in {3, 4, 5, 6} {
           \node[rectangle, fill=black!100, draw=black, inner sep=2pt] (N\n_r\the\numexpr\n-1\relax) at (\n, \n-1) {};
        }
        
        \node[above right, black, font=\small] at (6, 5) {$r=N-1$};

        \node[circle, fill=black!10, draw=black, inner sep=2pt] (N4_r2) at (4, 2) {};

        \node[circle, fill=black!10, draw=black, inner sep=2pt] (N5_r2) at (5, 2) {};
        \node[circle, fill=black!10, draw=black, inner sep=2pt] (N5_r3) at (5, 3) {};

        \node[circle, fill=black!10, draw=black, inner sep=2pt] (N6_r2) at (6, 2) {};
        \node[circle, fill=black!10, draw=black, inner sep=2pt] (N6_r3) at (6, 3) {};
        \node[circle, fill=black!10, draw=black, inner sep=2pt] (N6_r4) at (6, 4) {};

        \draw[->, thick, black] (N3_r1) -- (N4_r2);
        \draw[->, thick, black] (N3_r2) -- (N4_r2);
        
        \draw[->, thick, black] (N4_r2) -- (N5_r3);
        \draw[->, thick, black] (N4_r3) -- (N5_r3);

        \draw[->, thick, black] (N4_r1) -- (N5_r2);
        \draw[->, thick, black] (N4_r2) -- (N5_r2);
        
        \draw[->, thick, black] (N5_r4) -- (N6_r4);
        \draw[->, thick, black] (N5_r3) -- (N6_r4);
        
        \draw[->, thick, black] (N5_r3) -- (N6_r3);
        \draw[->, thick, black] (N5_r2) -- (N6_r3);
        
        \draw[->, thick, black] (N5_r2) -- (N6_r2);
        \draw[->, thick, black] (N5_r1) -- (N6_r2);

    \end{tikzpicture}
    \caption{Strategy of induction steps.}
    \label{fig:induction_strategy}
\end{figure}

\noindent
Since the induction step depends on $N$, we make the dependence on $N$ explicit in the notation. We denote $\Vbf(\hfr)$ and $\Vbf_q(H)$ by $\Vbf(\hfr^N)$ and $\Vbf_q(H^N)$, respectively. Recall the definition of $w_r(z) \in \Wbf(\sfr\lfr_N)$ in \eqref{eq:w(z) Section5} and $T_r(z) \in \Wbf_q(LSL_N)$ in \eqref{eq:T in Sec 5}. We denote the generators $w_r(z)$ and $T_r(z)$ by $w_r^N(z)$ and $T_r^N(z)$, respectively. In particular, when comparing $w_r^N(z)$ and $T_r^N(z)$ with $w_r^{N+1}(z)$ and $T_r^{N+1}(z)$, we shift the indices
\begin{equation*}
    e_{11}(z), \dots, e_{NN}(z), \quad \mu_{11}(z), \dots, \mu_{NN}(z)
\end{equation*}
to
\begin{equation*}
    e_{22}(z), \dots, e_{N+1, N+1}(z), \quad \mu_{22}(z), \dots, \mu_{N+1, N+1}(z).
\end{equation*}
The following lemma describes the relations between $w_r^{N+1}(z)$ and $w_r^N(z)$, and between $T_r^{N+1}(z)$ and $T_r^N(z)$.
\begin{lem}\label{lem:indicesshift}
For $N \ge 2$ and $2 \le r \le N$,
\begin{equation}\label{eq:WNN+1relation}
    \begin{aligned}
        w_{1}^{N+1}(z) &= w_1^N(z) + e_{11}(z),\\
        w_{r}^{N+1}(z) &= w_r^N(z) + e_{11}(z)w_{r-1}^N(z) + \partial_z w_{r-1}^N(z) \quad (r=2, 3, \dots, N),\\
        w_{N+1}^{N+1}(z) &= e_{11}(z)w_{N}^N(z) + \partial_zw_N^N(z)
        \end{aligned}
        \end{equation}
        in the algebra $\Vbf(\hfr^{N+1})$, and
        \begin{equation}\label{eq:TNN+1relation}
        \begin{aligned}
        T_{1}^{N+1}(z) &= T_1^N(qz) + \mu_{11}(z),\\
        T_{r}^{N+1}(z) &= T_r^N(qz) + \mu_{11}(z)T_{r-1}^N(z) \quad (r=2, 3, \dots, N),\\
        T_{N+1}^{N+1}(z) &= \mu_{11}(z)T_{N}^{N}(z)
    \end{aligned}
\end{equation}
in the algebra $\Vbf_q(H^{N+1})$. In particular, comparing the Fourier coefficients,
\begin{equation}\label{eq:TNN+1relation2}
\begin{aligned}
    \mathsf{T}_1^{N+1}[m] &= q^{-m-1}\mathsf{T}_1^{N}[m] + h\mathsf{E}_{11}[m] + \delta_{m, 0},\\
    \mathsf{T}_{r}^{N+1}[m] &= q^{-m-r}\mathsf{T}_{r}^{N}[m] + h\sum_{m_1+m_2=m}\mathsf{E}_{11}[m_1]\mathsf{T}_{r-1}^N[m_2] + \mathsf{T}_{r-1}^N[m], \\
    \mathsf{T}_{N+1}^{N+1}[m] &= h\sum_{m_1+m_2 = m} \mathsf{E}_{11}[m_1]\mathsf{T}_{N}^N[m_2] + \mathsf{T}_{N}^N[m]
    \end{aligned}
\end{equation}
in the algebra $\widetilde{\Vcal}_h(\hfr^{N+1})$.
\end{lem}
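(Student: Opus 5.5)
The plan is to derive both sets of identities from a single operator factorization: peel off the first factor of the Miura products \eqref{eq:w(z) Section5} and \eqref{eq:Miura_deformed}. The Fourier identities \eqref{eq:TNN+1relation2} will then follow by applying $\iota$ and extracting modes.

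In the differential case, write the $(N+1)$-product with the index shift as $(\partial+e_{11}(z))\,L_N$. Here $L_N=(\partial+e_{22}(z))\cdots(\partial+e_{N+1,N+1}(z))=\sum_{r=0}^N w_r^N(z)\partial^{N-r}$, with the conventions $w_0^N=1$ and $w_1^N=e_{22}+\dots+e_{N+1,N+1}$; no trace relation is imposed when comparing. In $\Vbf(\hfr^{N+1})\otimes\CC[\partial]$ we use $\partial\, w(z)=w(z)\partial+\partial_z w(z)$. This gives
\[
(\partial+e_{11})L_N=\sum_{r}\big(w_r^N\partial^{N+1-r}+(\partial_zw_r^N)\partial^{N-r}+e_{11}w_r^N\partial^{N-r}\big).
\]
Collecting the coefficient of $\partial^{N+1-r}$ yields $w_r^N+\partial_zw_{r-1}^N+e_{11}w_{r-1}^N$. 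Specializing to $r=1$ (where $w_0^N=1$ is constant) and to $r=N+1$ (where $w_{N+1}^N=0$) gives the first and last lines of \eqref{eq:WNN+1relation}.

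The $q$-difference case runs the same way, using $D\,f(z)=f(qz)D$. We have $(D+\mu_{11}(z))\sum_r T_r^N(z)D^{N-r}=\sum_r T_r^N(qz)D^{N+1-r}+\mu_{11}(z)T_r^N(z)D^{N-r}$, and comparing coefficients gives \eqref{eq:TNN+1relation}. One could also check this directly from \eqref{eq:T in Sec 5}: split the index tuples $\alpha$ according to whether $\alpha_1=1$. If $\alpha_1=1$, the factor $\mu_{11}(z)$ appears and the remaining exponents $\alpha_x-x$ agree with those of $T_{r-1}^N$ after the shift. If $\alpha_1>1$, each exponent $\alpha_x-x$ exceeds the shifted exponent by one, which produces $qz$.

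For the Fourier modes, recall $T_r(z)=\sum T_r[m]z^{-m-r}$. Then $T_r^N(qz)$ has $z^{-m-r}$-coefficient $q^{-m-r}T_r^N[m]$. Also $\mu_{11}(z)T_{r-1}^N(z)=\sum_{m_1,m_2}\mu_{11}[m_1]T^N_{r-1}[m_2]z^{-m_1-1-m_2-(r-1)}$, so its $z^{-m-r}$-coefficient is the convolution over $m_1+m_2=m$. Applying $\iota$, with $\iota(\mu_{11}[m_1])=\delta_{m_1,0}+h\mathsf{E}_{11}[m_1]$, splits this convolution into $\mathsf{T}_{r-1}^N[m]+h\sum\mathsf{E}_{11}[m_1]\mathsf{T}^N_{r-1}[m_2]$. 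For $r=1$ the factor $T_0^N=1$ contributes $\delta_{m,0}+h\mathsf{E}_{11}[m]$, and the top case likewise has no $q$-shift term.

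I expect no real obstacle here. The only care needed is bookkeeping: the mode conventions $z^{-m-r}$ versus $z^{-m-1}$, and working in $\Vbf_q(H^{N+1})$ before imposing $\prod\mu_{ii}=1$ in the $N$-variable subalgebra.
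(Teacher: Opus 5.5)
Your proposal is correct and is essentially the paper's proof: both factor the $(N+1)$-fold product as $(\partial+e_{11}(z))$, respectively $(D+\mu_{11}(z))$, times the index-shifted $N$-fold product, and then compare coefficients of powers of $\partial$ and $D$ using $\partial\, w(z)=w(z)\partial+\partial_z w(z)$ and $D f(z)=f(qz)D$. Your explicit extraction of Fourier modes via $\iota$, including the splitting of $\iota(\mu_{11}[m_1])=\delta_{m_1,0}+h\mathsf{E}_{11}[m_1]$ inside the convolution, fills in a step that the paper's one-line proof leaves implicit.
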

\begin{proof}
We use the column determinant formulas
\begin{equation*}
\begin{aligned}
    &(\partial + e_{11}(z))\dots (\partial+e_{NN}(z))(\partial + e_{N+1, N+1}(z)) \\
    &\quad = (\partial + e_{11}(z))(\partial^N + w_1^N(z)\partial^{N-1} + \dots + w_{N-1}^N(z)\partial + w_{N}^N(z))\\
    &\quad = \partial^{N+1} + w_1^{N+1}(z)\partial^{N} + \dots + w_{N}^{N+1}(z)\partial + w_{N+1}^{N+1}(z),
\end{aligned}
\end{equation*}
and
\begin{equation*}
\begin{aligned}
    &(D + \mu_{11}(z))\dots (D + \mu_{NN}(z))(D+ \mu_{N+1, N+1}(z)) \\
    &\quad = (D + \mu_{11}(z))(D^{N} + T_1^{N}(z)D^{N-1} + \dots + T_{N-1}^{N}(z)D + T_N^N(z))\\
    &\quad = D^{N+1} + T_1^{N+1}(z)D^{N} + \dots + T_{N}^{N+1}(z)D +T_{N+1}^{N+1}(z).
    \end{aligned}
\end{equation*}
Comparing the coefficients of the powers of $\partial$ and $D$ proves the lemma.
\end{proof}

\begin{proof}[Proof of Theorem~\ref{thm:main}]
   Recall the definition of $\mathfrak{T}_r[m]$ in \eqref{eq:main}. Since
    \begin{equation}
        -q^{-1}\binom{N-1}{1}_{q^{-1}}-1 = -\binom{N}{1}_{q^{-1}},
    \end{equation}
    the case $r=1$ has already been proved in Proposition~\ref{prop:main} for all $N \ge 2$.
    
    We now assume the theorem is true for $N$ and $r=N-1$, and prove the statement for $N+1$ and $r=N$.
    After shifting the indices as in Lemma~\ref{lem:indicesshift}, we have
\begin{equation*}
    \lim_{h \to 0} \mathfrak{T}_{N-1}^{N}(z)=-w_{N}^N(z). 
\end{equation*}
Applying \eqref{eq:TNN+1relation2},
\begin{equation}\label{eq:mainproof1}
    h^{N+1}\mathfrak{T}_{N}^{N+1}[m] =- h^N q^{-m-N}\mathfrak{T}_{N-1}^{N}[m] +\sum_{m_1+m_2=m} h^{N+1}\mathsf{E}_{11}[m_1]\mathfrak{T}_{N-1}^N[m_2] + h^N\mathfrak{T}_{N-1}^N[m] .
\end{equation}
The detailed verification of this equality is analogous to the computation in \eqref{eq:mainproof2}, and therefore omitted. Divide \eqref{eq:mainproof1} by $h^{N+1}$ and take the limit $h \to 0$, we obtain
\begin{equation*}
\begin{aligned}
    \lim_{h \to 0} \mathfrak{T}_{N}^{N+1}(z) &=\lim_{h \to 0} \left(\mathsf{E}_{11}(z)\mathfrak{T}_{N-1}^N(z)\right) +\sum_{m \in \ZZ}\lim_{h \to 0} \left(\frac{\mathfrak{T}_{N-1}^N[m] - q^{-m-N}\mathfrak{T}_{N-1}^N[m]}{h}z^{-m-N-1} \right) \\
    &= -\lim_{h \to 0}\big(\mathsf{E}_{11}(z)\big)w_{N}^{N}(z) - \partial_z w_{N}^{N}(z) = -w_{N+1}^{N+1}(z).
    \end{aligned}
\end{equation*}
Now, fix $N \ge 3$ and $1 \le r \le N-1$. After shifting the indices as in Lemma~\ref{lem:indicesshift}, we assume
\begin{equation*}
    \lim_{h \to 0} \mathfrak{T}_{r-1}^N(z) = -w_{r}^N(z), \quad \lim_{h \to 0} \mathfrak{T}_{r}^N(z) = -w_{r+1}^N(z).
\end{equation*}
After applying \eqref{eq:TNN+1relation},
\begin{equation}\label{eq:mainproof2}
\begin{aligned}
    & h^{r+1}\mathfrak{T}_{r}^{N+1}[m]= q^{-m-r}\mathsf{T}_r^N[m] + h\sum_{m_1+m_2=m}\mathsf{E}_{11}[m_1]\mathsf{T}_{r-1}^N[m_2] + \mathsf{T}_{r-1}^N[m] \\
    &\quad -q^{-1}\binom{N-r+1}{1}_{q^{-1}} \Big( q^{-m-r+1}\mathsf{T}_{r-1}^N[m] + h\sum_{m_1+m_2=m}\mathsf{E}_{11}[m_1]\mathsf{T}_{r-2}^N[m_2] + \mathsf{T}_{r-2}^N[m] \Big) \\
    &\quad + \dots \\
    &\quad +(-1)^{r-2}q^{\frac{-(r-2)(r-1)}{2}} \binom{N-2}{r-2}_{q^{-1}}\Big( q^{-m-2}\mathsf{T}_2^N[m] +h\sum_{m_1+m_2=m}\mathsf{E}_{11}[m_1]\mathsf{T}_1^N[m_2] + \mathsf{T}_1^N[m] \Big) \\
    &\quad + (-1)^{r-1}q^{-\frac{(r-1)r}{2}} \binom{N-1}{r-1}_{q^{-1}} \Big( q^{-m-1}\mathsf{T}_1^N[m] + h\mathsf{E}_{11}[m] + \delta_{m, 0} \Big) \\
    &\quad + (-1)^r q^{-\frac{r(r+1)}{2}} \binom{N}{r}_{q^{-1}} - \mathsf{T}_N^N[m] - h\sum_{m_1+m_2=m}\mathsf{E}_{11}[m_1] \mathsf{T}_N^N[m_2]\\
    &=q^{-m-r}\Big(h^{r+1}\mathfrak{T}_{r}^N[m] -h^r\mathfrak{T}_{r-1}^N[m]\Big) +h^{r+1} \sum_{m_1+m_2=m} \mathsf{E}_{11}[m_1]\mathfrak{T}_{r-1}^N[m_2] + h^r\mathfrak{T}_{r-1}^N[m].
    \end{aligned}
\end{equation}
Divide \eqref{eq:mainproof2} by $h^{r+1}$ and take the limit $h \to 0$, we obtain
\begin{equation*}
\begin{aligned}
    \lim_{h \to 0}\mathfrak{T}_r^{N+1}(z) &= \lim_{h \to 0} \left(q^{-m-r}\mathfrak{T}_r^N(z) + \mathsf{E}_{11}(z) \mathfrak{T}_{r-1}^N(z)\right) \\
    &\quad + \lim_{h\to 0}\left(\frac{\mathfrak{T}_{r-1}^N[m]-q^{-m-r}\mathfrak{T}_{r-1}^N[m]}{h}z^{-m-r-1} \right) \\
    &= -w_{r+1}^N(z) - \lim_{h \to 0}\big(\mathsf{E}_{11}(z)\big)w_{r}^N(z) -\partial_zw_{r}^N(z)= -w_{r+1}^{N+1}(z).
\end{aligned}
\end{equation*}
Since each $\mathsf{T}_{s}[m]$ is a $\CC(\!(h)\!)$-linear combination of the elements $\mathfrak{T}_r[m]$'s, it follows that $\widetilde{\Wcal}_h(\sfr\lfr_N)$ is generated by the $\mathfrak{T}_r[m]$. Therefore, the map
$
\lim_{h\to 0} : \Wcal_h(\sfr\lfr_N)\to \Wcal(\sfr\lfr_N)$
is well defined.
Moreover, since both $\widetilde{\Wcal}_h(\sfr\lfr_N)$ and $\Vcal_h(\hfr)$ are Poisson algebras, their intersection $\Wcal_h(\sfr\lfr_N)$ is closed under the Poisson bracket. Finally, because
$
\lim_{h\to 0}:\Vcal_h(\hfr)\to \Vcal(\hfr)
$
is a Poisson algebra homomorphism, its restriction
$
\lim_{h\to 0}:\Wcal_h(\sfr\lfr_N)\to \Wcal(\sfr\lfr_N)
$
is also a Poisson algebra homomorphism.
\end{proof}

\begin{ex}
    Recall the equations \eqref{eq:N=3examples1} and \eqref{eq:N=3examples2}. We have 
    \begin{equation*}
    \begin{aligned}
        \frac{\mathsf{T}_{1}[m]-\delta_{m, 0}\binom{3}{1}_{q^{-1}}}{h^2} &=\frac{\left(\mathsf{T}_1[m] - \delta_{m, 0}\binom{3}{1}_{q^{-1}}\right) -\left(\mathsf{T}_3[m]-\delta_{m, 0}\right)}{h^2}\\
        &=\frac{1}{h}\Big( (q^{-m-1}-1)\mathsf{E}_{22}[m] +(q^{-2m-2}-1) \mathsf{E}_{33}[m]\Big)\\
        &\quad -\sum_{m_1+m_2=m}\Big(\mathsf{E}_{11}[m_1]\mathsf{E}_{22}[m_2] + \mathsf{E}_{11}[m_1]\mathsf{E}_{33}[m_2] + \mathsf{E}_{22}[m_1]\mathsf{E}_{33}[m_2]\Big)\\
        &\quad -h\sum_{m_1+m_2+m_3=m}\Big(\mathsf{E}_{11}[m_1]\mathsf{E}_{22}[m_2]\mathsf{E}_{33}[m_3]\Big)\\
        &=-\Bigg(\mathsf{E}_{22}^{1, 1}[m] + 2\mathsf{E}_{33}^{1, 2}[m] \\
        &\quad +\sum_{m_1+m_2=m}\Big(\mathsf{E}_{11}[m_1]\mathsf{E}_{22}[m_2] + \mathsf{E}_{11}[m_1]\mathsf{E}_{33}[m_2] + \mathsf{E}_{22}[m_1]\mathsf{E}_{33}[m_2]\Big)\Bigg) \\
        &\quad -h\sum_{m_1+m_2+m_3=m}\Big(\mathsf{E}_{11}[m_1]\mathsf{E}_{22}[m_2]\mathsf{E}_{33}[m_3]\Big),
    \end{aligned}
    \end{equation*}
    where $\mathsf{E}_{ij}^{k,l}$ is defined in \eqref{eq:E_{ij}^{kl}}.
    Hence we conclude $\frac{\mathsf{T}_{1}[m]-\delta_{m, 0}\binom{3}{1}_{q^{-1}}}{h^2} \in  \Vcal_h(\sfr\lfr_3)$. Similarly, since 
    \begin{equation*}
    \begin{aligned}
        \frac{\mathsf{T}_{2}[m]-\delta_{m, 0}\binom{3}{2}_{q^{-1}}}{h^2} &= \frac{\left(\mathsf{T}_2[m] - \delta_{m, 0}\binom{3}{2}_{q^{-1}}\right) - \binom{2}{1}_{q^{-1}}\left( \mathsf{T}_3[m]- \delta_{m, 0}\right)}{h^2}\\
        &=\frac{1}{h}\Big( (q^{-m-2}-q^{-1})\mathsf{E}_{22}[m] +(q^{-m-2}+q^{-m-1}-q^{-1}-1) \mathsf{E}_{33}[m]\Big)\\
        &\quad +\sum_{m_1+m_2=m}\Big(\mathsf{E}_{11}[m_1]\mathsf{E}_{22}[m_2] + q^{-m_2-1}\mathsf{E}_{11}[m_1]\mathsf{E}_{33}[m_2] +q^{-m-2} \mathsf{E}_{22}[m_1]\mathsf{E}_{33}[m_2]\Big)\\
        &\quad -(1+q^{-1}) \sum_{m_1+m_2=m}\Big( \mathsf{E}_{11}[m_1]\mathsf{E}_{22}[m_2] + \mathsf{E}_{11}[m_1]\mathsf{E}_{33}[m_2] + \mathsf{E}_{22}[m_1]\mathsf{E}_{33}[m_2]\Big)\\
        &\quad -h(1+q^{-1})\sum_{m_1+m_2+m_3=m}\Big(\mathsf{E}_{11}[m_1]\mathsf{E}_{22}[m_2]\mathsf{E}_{33}[m_3]\Big)\\
        &= \Bigg(q^{-1}\mathsf{E}_{22}^{1, 1}[m] + (q^{-1}+1)\mathsf{E}_{33}^{1, 1}[m]\\
        &\quad +\sum_{m_1+m_2=m}\Big(\mathsf{E}_{11}[m_1]\mathsf{E}_{22}[m_2] + q^{-m_2-1}\mathsf{E}_{11}[m_1]\mathsf{E}_{33}[m_2] +q^{-m-2} \mathsf{E}_{22}[m_1]\mathsf{E}_{33}[m_2]\Big)\\
        &\quad -(1+q^{-1}) \sum_{m_1+m_2=m}\Big( \mathsf{E}_{11}[m_1]\mathsf{E}_{22}[m_2] + \mathsf{E}_{11}[m_1]\mathsf{E}_{33}[m_2] + \mathsf{E}_{22}[m_1]\mathsf{E}_{33}[m_2]\Big)\Bigg)\\
        &\quad -h(1+q^{-1})\sum_{m_1+m_2+m_3=m}\Big(\mathsf{E}_{11}[m_1]\mathsf{E}_{22}[m_2]\mathsf{E}_{33}[m_3]\Big),
    \end{aligned}
    \end{equation*}
    we also conclude $\frac{\mathsf{T}_{2}[m]-\delta_{m, 0}\binom{3}{2}_{q^{-1}}}{h^2} \in \Vcal_h(\sfr\lfr_3)$. Therefore, we obtain
    \begin{equation*}
    \begin{aligned}
        \sum_{m \in \ZZ} \lim_{h \to 0} \frac{\mathsf{T}_1[m]- \delta_{m, 0}\binom{3}{1}_{q^{-1}}}{h^2} z^{-m-2} &= -w_2(z),\\
        \sum_{m \in \ZZ} \lim_{h \to 0} \frac{\mathsf{T}_2[m]- \delta_{m, 0}\binom{3}{2}_{q^{-1}}}{h^2} z^{-m-2} &= -w_2(z),
    \end{aligned}
    \end{equation*}
    which is consistent with Proposition~\ref{prop:main}. Moreover,
    \begin{equation*}
    \begin{aligned}
        \mathfrak{T}_2[m] &= \frac{1}{h^3} \left( \mathsf{T}_2[m] - q^{-1} \mathsf{T}_1[m] + q^{-3}\delta_{m, 0} - \mathsf{T}_3[m] \right)\\
        &= \frac{1}{h^2}  \left( q^{-m-1}+q^{-m-2}-q^{-2m-3}-1\right) \mathsf{E}_{33}[m]\\
        &\quad + \frac{1}{h} \sum_{m_1+m_2=m}\left( \left(q^{-m_2-1}-1\right)\mathsf{E}_{11}[m_1]\mathsf{E}_{22}[m_2] + \left(q^{-m-2}-1\right)\mathsf{E}_{22}[m_1]\mathsf{E}_{33}[m_2]\right)\\
        &\quad -\sum_{m_1+m_2+m_3=m} \mathsf{E}_{11}[m_1]\mathsf{E}_{22}[m_2]\mathsf{E}_{33}[m_3]\\
        &= -\Bigg(\mathsf{E}_{33}^{2, 1}[m] +\sum_{m_1+m_2=m}\left( \mathsf{E}_{11}[m_1] \mathsf{E}_{22}^{1, 1}[m_2] + \mathsf{E}_{22}^{1, 1}[m_1]\mathsf{E}_{33}^{1, 1}[m_2]\right)\\
        &\quad + \sum_{m_1+m_2+m_3 = m} \mathsf{E}_{11}[m_1]\mathsf{E}_{22}[m_2]\mathsf{E}_{33}[m_3])\Bigg)\\
        &\quad + \sum_{n \ge 1} h^n\left( \frac{1}{n!}\sum_{k=0}^{n-1} \binom{n}{k} (m+1)^k \mathsf{E}_{33}[m] \right) \in \Vcal_h(\sfr\lfr_3).
    \end{aligned}
    \end{equation*}
    Therefore,
    \begin{equation*}
            \lim_{h \to 0}\mathfrak{T}_2(z) = -w_3(z)
    \end{equation*}
    which is consistent with Theorem~\ref{thm:main}.
\end{ex}

\appendix
\section{Classical Yang-Baxter equation}\label{AppendixA}

In this Appendix, we summarize the necessary background on the (modified) classical Yang-Baxter equation. Further details can be found in \cites{BD82, BD84, CP94}.

Let $\gfr$ be a complex Lie algebra endowed with a nondegenerate symmetric invariant bilinear form $\langle \cdot, \cdot \rangle$. Given an element $r \in \gfr \otimes \gfr$, we define a $3$-tensor
    \begin{equation*}
        [[r, r]] := [r^{12}, r^{13}] + [r^{12}, r^{23}] + [r^{13}, r^{23}],
    \end{equation*}
where $[(X \otimes Y)^{12}, (Z \otimes W)^{13}] = [X, Z] \otimes Y \otimes W$, $[(X \otimes Y)^{12}, (Z \otimes W)^{23}] = X \otimes [Y, Z] \otimes W$, and $[(X \otimes Y)^{13}, (Z \otimes W)^{23}] = X \otimes Z \otimes [Y, W]$ for $X, Y, Z, W \in \gfr$. We also define a linear map $R_+ : \gfr \to \gfr$ determined by the relation
\begin{equation*}
    \langle R_+(X), Y \rangle = \langle r, X \otimes Y \rangle, \quad X, Y \in \gfr,
\end{equation*}
and define $R_- := -R_+^\ast$.
\begin{defn}
    The Classical Yang-Baxter Equation (CYBE) is the equation
    \begin{equation}\label{eq:CYBE}
        [[r, r]] = 0.
    \end{equation}
    An element $r \in \gfr \otimes \gfr$ is called a classical $r$-matrix if $r$ satisfies the CYBE \eqref{eq:CYBE}. In terms of $R_\pm$, the CYBE has the form
    \begin{equation}\label{eq:CYBE2}
        [R_+(X), R_+(Y)] = R_+\big([R_+(X), Y] + [X, R_-(Y)]\big), \quad X, Y \in \gfr.
    \end{equation}
    Moreover, a skew-symmetric linear operator $R: \gfr \to \gfr$ is said to be a solution of the modified Classical Yang-Baxter Equation (mCYBE) if $R$ satisfies
    \begin{equation}\label{eq:mCYBE}
        [R(X), R(Y)] = R\big([R(X), Y] + [X, R(Y)]\big)-[X, Y], \quad X, Y \in \gfr.
    \end{equation}
\end{defn}

For the remainder of this appendix, we assume that $\gfr$ admits a decomposition into Lie subalgebras
\begin{equation*}
\gfr = \nfr_+ \oplus \hfr \oplus \nfr_-
\end{equation*}
such that $[\hfr, \nfr_\pm] \subseteq \nfr_\pm$. Moreover, we impose two additional conditions on $\langle \cdot, \cdot \rangle$:
\begin{enumerate}[(1)]
    \item $P_+^\ast = P_-$, where $P_\pm$ are the projections onto $\nfr_\pm$, respectively.
    \item The restriction of $\langle \cdot, \cdot\rangle$ to $\hfr$ is nondegenerate.
\end{enumerate}

\begin{lem}\label{lemma:AppendixA1}
    Suppose a linear operator $R_0 : \hfr \to \hfr$ satisfies
    \begin{equation}\label{eq:cartanconditions}
    \begin{aligned}
        R_0 + R_0^\ast &= I,\\
        [R_0(X), R_0(Y)] &= R_0\big([R_0(X), Y]-[X, R_0^\ast(Y)]\big), \quad X, Y \in \hfr,
    \end{aligned}
    \end{equation}
    where $I$ is the identity map on $\hfr$, and $R_0^\ast$ is the adjoint operator of $R_0$ with respect to the restriction of $\langle \cdot, \cdot \rangle$ to $\hfr$. Then the operators $R_+$ and $R_-$, defined by
    \begin{equation*}
        R_+ = P_+ + R_0P_0, \quad R_- = -P_- - R_0^\ast P_0,
    \end{equation*}
    satisfy \eqref{eq:CYBE2}, where $P_0$ is the projection onto $\hfr$.
\end{lem}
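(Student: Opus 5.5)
We verify \eqref{eq:CYBE2} directly, splitting $X$ and $Y$ by the decomposition $\gfr=\nfr_+\oplus\hfr\oplus\nfr_-$. First observe that $R_+-R_-=P_++P_-+(R_0+R_0^\ast)P_0=I$ by the first condition in \eqref{eq:cartanconditions}. Hence $[X,R_-(Y)]=[X,R_+(Y)]-[X,Y]$, and \eqref{eq:CYBE2} becomes
\begin{equation*}
[R_+X,R_+Y]=R_+\big([R_+X,Y]+[X,R_+Y]-[X,Y]\big).
\end{equation*}
The right-hand side is bilinear in $(X,Y)$, so it suffices to check this identity on homogeneous pairs $X\in\nfr_a$, $Y\in\nfr_b$ with $a,b\in\{+,0,-\}$, writing $\nfr_0=\hfr$. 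Throughout we use that $\nfr_\pm$ and $\hfr$ are subalgebras with $[\hfr,\nfr_\pm]\subseteq\nfr_\pm$, so that $\bfr_\pm=\hfr\oplus\nfr_\pm$ are subalgebras. We also use that $R_+$ acts by $1$ on $\nfr_+$, by $0$ on $\nfr_-$, and by $R_0$ on $\hfr$.

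The cases without mixing $\hfr$ are easy. If $X,Y\in\nfr_+$, the reduced equation reads $[X,Y]=R_+([X,Y]+[X,Y]-[X,Y])=[X,Y]$, since $[X,Y]\in\nfr_+$. If $X,Y\in\nfr_-$, both sides vanish. If $X\in\nfr_+$ and $Y\in\nfr_-$, the left side is $0$ and the right side is $R_+([X,Y]-[X,Y])=0$; the symmetric case $X\in\nfr_-$, $Y\in\nfr_+$ is identical.

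Next come the cases with one argument in $\hfr$. For $X\in\hfr$, $Y\in\nfr_+$, the left side is $[R_0X,Y]\in\nfr_+$. The right side is $R_+([R_0X,Y]+[X,Y]-[X,Y])=[R_0X,Y]$, so they agree. For $X\in\hfr$, $Y\in\nfr_-$, the left side is $0$ and the right side is $R_+([R_0X,Y])=0$, since $[R_0X,Y]\in\nfr_-$. The cases with $Y\in\hfr$ follow the same way.

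The remaining case $X,Y\in\hfr$ is the genuinely nontrivial one and is where the second condition of \eqref{eq:cartanconditions} enters. Here the identity reads
\begin{equation*}
[R_0X,R_0Y]=R_0\big([R_0X,Y]+[X,R_0Y]-[X,Y]\big).
\end{equation*}
Substituting $R_0Y-Y=-R_0^\ast Y$, which follows from $R_0+R_0^\ast=I$, turns this into exactly the assumed identity $[R_0X,R_0Y]=R_0([R_0X,Y]-[X,R_0^\ast Y])$. The only care needed is to confirm that all brackets stay in $\hfr$, which holds because $\hfr$ is a subalgebra.

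The main obstacle is bookkeeping rather than any deep step. One must ensure the reduction $R_+-R_-=I$ is used correctly, and that in the mixed cases the projection of each bracket lands in the right summand. I expect no case to fail, so the lemma follows from bilinearity.
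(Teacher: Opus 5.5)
Your proof is correct and is essentially the paper's argument: a direct componentwise verification using $[\hfr,\nfr_\pm]\subseteq\nfr_\pm$, the relation $R_0+R_0^\ast=I$, and the second identity for $R_0$ on $\hfr$; you first rewrite $R_-=R_+-I$ and check homogeneous pairs, whereas the paper substitutes $X=X_++X_0+X_-$, $Y=Y_++Y_0+Y_-$ and collects all terms at once. One small slip: for $X\in\hfr$, $Y\in\nfr_-$ the right-hand side is $R_+([R_0X,Y]-[X,Y])$ rather than $R_+([R_0X,Y])$, but it still vanishes because $[X,Y]\in\nfr_-$.
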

\begin{proof}
    Let $X = X_+ + X_0 + X_-, Y = Y_+ + Y_0 + Y_- \in \gfr = \nfr_+ \oplus \hfr \oplus \nfr_-$. Then we have
    \begin{equation*}
    \begin{aligned}
        R_+\big([R_+(X), Y] + [X, R_-(Y)]\big) &= R_+\big([X_++R_0(X_0), Y] + [X, - Y_- - R_0^\ast(Y_0)]\big) \\
        &= R_+\big([X_+, Y_+] + [X_+, Y_0] +[X_+, Y_-]\big)\\
        & \quad + R_+\big([R_0(X_0), Y_+] + [R_0(X_0), Y_0] + [R_0(X_0), Y_-]\big) \\
        & \quad - R_+\big([X_+, Y_-] + [X_0, Y_-] + [X_-, Y_-]\big) \\
        & \quad - R_+\big([X_+, R_0^\ast(Y_0)]+[X_0, R_0^\ast(Y_0)]+[X_-, R_0^\ast(Y_0)]\big)\\
        &= [X_+, Y_+] + [X_+, Y_0] + [R_0(X_0), Y_+]\\
        & \quad + R_0\big([R_0(X_0), Y_0]\big) -[X_+, R_0^\ast(Y_0)]-R_0\big([X_0, R_0^\ast(Y_0)]\big)\\
        &= [X_+, Y_+] + [X_+, R_0(Y_0)] + [R_0(X_0), Y_+] \\
        & \quad + R_0\big([R_0(X_0), Y_0] - [X_0, R_0^\ast(Y_0)]\big).
    \end{aligned}
    \end{equation*}
    Using the condition \eqref{eq:cartanconditions}, this simplifies to
    \begin{equation*}
        \begin{aligned}
        & [X_+, Y_+] + [X_+, R_0(Y_0)] + [R_0(X_0), Y_+]+ [R_0(X_0), R_0(Y_0)]\\
        &= [X_+ + R_0(X_0), Y_+ + R_0(Y_0)] = [R_+(X), R_+(Y)].
        \end{aligned}
    \end{equation*}
\end{proof}

A linear operator $R_0$ satisfying \eqref{eq:cartanconditions} can be constructed as follows:
\begin{lem}\label{lemma:AppendixA2}
    If a Lie algebra homomorphism $\theta : \hfr \to \hfr$ satisfies
    \begin{align}
        &\det(I-\theta) \ne 0;\label{eq:thetaconditionsAppendix1}\\
        &\langle\theta(X), \theta(Y) \rangle = \langle X, Y\rangle, \quad X, Y \in \hfr, \label{eq:thetaconditionsAppendix2}
    \end{align}
    then the operator
    \begin{equation*}
        R_0 = \frac{I}{I-\theta}
    \end{equation*}
    satisfies \eqref{eq:cartanconditions}.
\end{lem}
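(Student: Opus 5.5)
The plan is a direct algebraic verification. Both conditions in \eqref{eq:cartanconditions} become transparent once we express $R_0^\ast$ in terms of $\theta$ and change variables through the invertible operator $I-\theta$.

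\textbf{Step 1: invertibility and the adjoint.} First we check that $\theta$ is invertible and that $\theta^\ast=\theta^{-1}$.
\begin{itemize}
\item By \eqref{eq:thetaconditionsAppendix2}, $\langle \theta X,\theta Y\rangle=\langle X,Y\rangle$. Since the form on $\hfr$ is nondegenerate, this forces $\theta$ to be injective.
\item In the finite-dimensional setting, injectivity gives invertibility. In the loop setting of Section~\ref{sec:affine}, $\theta$ preserves the loop degree, so we argue degreewise, on the finite-dimensional pieces $\hfr\otimes t^n$. The form pairs degree $n$ with degree $-n$, and $\theta$ preserves both pieces.
\item Invertibility gives $\theta^\ast=\theta^{-1}$. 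Condition \eqref{eq:thetaconditionsAppendix1} makes $I-\theta$ invertible, again degreewise.
\end{itemize}
Consequently
\[
R_0^\ast=\frac{I}{I-\theta^{-1}}=-\frac{\theta}{I-\theta}.
\]
This yields the first identity immediately:
\[
R_0+R_0^\ast=\frac{I-\theta}{I-\theta}=I.
\]
It also gives the useful reformulation $R_0^\ast=R_0-I$.

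\textbf{Step 2: the bracket identity.} Since $I-\theta$ is bijective, write arbitrary $X,Y\in\hfr$ as $X=(I-\theta)A$ and $Y=(I-\theta)B$. Then
\[
R_0X=A, \qquad R_0^\ast Y=-\theta B.
\]
The left-hand side of \eqref{eq:cartanconditions} is therefore $[A,B]$.

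For the right-hand side, we expand the argument of $R_0$:
\[
[A,(I-\theta)B]+[(I-\theta)A,\theta B]=[A,B]-[A,\theta B]+[A,\theta B]-[\theta A,\theta B].
\]
The middle terms cancel, leaving $[A,B]-[\theta A,\theta B]$. Because $\theta$ is a Lie algebra homomorphism, this equals $(I-\theta)[A,B]$. Applying $R_0=(I-\theta)^{-1}$ returns $[A,B]$, which matches the left-hand side.

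\textbf{Main obstacle.} The only delicate point is Step 1 in the infinite-dimensional loop setting. There, invertibility of $\theta$ and of $I-\theta$, and the identification $\theta^\ast=\theta^{-1}$, must be justified. I would handle this by restricting to each graded piece $\hfr\otimes t^n$. These pieces are finite-dimensional, $\theta$ acts on them diagonally in $n$, and the form pairs the degree-$n$ piece with the degree-$(-n)$ piece. The remaining steps are purely formal.
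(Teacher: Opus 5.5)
Your proof is correct. Step 2 is the paper's argument in cleaner notation. Substituting $X=(I-\theta)A$ and $Y=(I-\theta)B$ is exactly what the paper does by inserting $\frac{I-\theta}{I-\theta}$ into each slot. The cancellation of the middle terms and the use of the homomorphism property are identical.

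The genuine difference is in Step 1. You obtain $R_0^\ast=-\frac{\theta}{I-\theta}$ by first showing that $\theta$ is invertible with $\theta^\ast=\theta^{-1}$. This is where you need nondegeneracy together with finite-dimensionality, applied degreewise in the loop case. The paper never inverts $\theta$. Setting $U=R_0X$ and $V=R_0Y$, so that $Y=(I-\theta)V$, it computes
\begin{equation*}
\langle U,(I-\theta)V\rangle=\langle U,V\rangle-\langle U,\theta V\rangle=\langle \theta U,\theta V\rangle-\langle U,\theta V\rangle=-\langle (I-\theta)U,\theta V\rangle .
\end{equation*}
This says $\langle R_0X,Y\rangle=-\langle X,\theta R_0 Y\rangle$. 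It uses only the isometry property and the invertibility of $I-\theta$.

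So the point you flag as the main obstacle can be avoided entirely. The paper's version works verbatim for any $\hfr$, possibly infinite-dimensional and ungraded, on which $I-\theta$ is invertible. Your version yields the slightly stronger structural fact $\theta^\ast=\theta^{-1}$, at the cost of the extra injectivity-plus-dimension argument.
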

\begin{proof}
For any $X, Y \in \hfr$, we have
    \begin{equation*}
        \begin{aligned}
            \left\langle \frac{I}{I-\theta}X, Y\right\rangle &= \left\langle \frac{I}{I-\theta}X, \frac{I-\theta}{I-\theta}Y\right\rangle = \left\langle \frac{I}{I-\theta}X, \frac{I}{I-\theta}Y \right\rangle -\left\langle \frac{I}{I-\theta}X, \frac{\theta}{I-\theta}Y \right\rangle \\
            &= \left\langle \frac{\theta}{I-\theta}X, \frac{\theta}{I-\theta}Y \right\rangle -\left\langle \frac{I}{I-\theta}X, \frac{\theta}{I-\theta}Y \right\rangle = -\left\langle X, \frac{\theta}{I-\theta}Y \right\rangle,
        \end{aligned}
    \end{equation*}
    which implies
    \begin{equation*}
        R_0 + R_0^\ast = \frac{I}{I-\theta} + \left(\frac{I}{I-\theta}\right)^\ast = \frac{I}{I-\theta} - \frac{\theta}{I-\theta} = I.
    \end{equation*}
    Furthermore,
    \begin{equation*}
        \begin{aligned}
            R_0\big([R_0(X), Y] - [X, R_0^\ast(Y)]\big) &= \frac{I}{I-\theta}\left(\left[\frac{I}{I-\theta}X, Y\right] - \left[X, \left(-\frac{\theta}{I-\theta}\right)Y\right]\right) \\
            &= \frac{I}{I-\theta}\left[ \frac{I}{I-\theta}X, \frac{I-\theta}{I-\theta}Y\right] + \frac{I}{I-\theta}\left[\frac{I-\theta}{I-\theta}X, \frac{\theta}{I-\theta}Y\right] \\
            &= \frac{I}{I-\theta}\left[\frac{I}{I-\theta}X, \frac{I}{I-\theta}Y\right]-\frac{I}{I-\theta}\left[\frac{I}{I-\theta}X, \frac{\theta}{I-\theta}Y\right] \\
            & \quad + \frac{I}{I-\theta}\left[\frac{I}{I-\theta}X, \frac{\theta}{I-\theta}Y\right] -\frac{I}{I-\theta}\left[\frac{\theta}{I-\theta}X, \frac{\theta}{I-\theta}Y\right].
        \end{aligned}
    \end{equation*}
    Since $\theta$ is a Lie algebra homomorphism, this simplifies to
    \begin{equation*}
        \frac{I}{I-\theta}\left[\frac{I}{I-\theta}X, \frac{I}{I-\theta}Y\right]-\frac{\theta}{I-\theta}\left[\frac{I}{I-\theta}X, \frac{I}{I-\theta}Y\right]= \left[ \frac{I}{I-\theta}X, \frac{I}{I-\theta}Y\right] = [R_0(X), R_0(Y)].
    \end{equation*}
\end{proof}

Given a Lie algebra homomorphism $\theta : \hfr \to \hfr$ with  \eqref{eq:thetaconditionsAppendix1} and \eqref{eq:thetaconditionsAppendix2}, let
\begin{equation*}
R_+ = P_+ + \frac{I}{I-\theta}P_0, \quad R_-=-P_- + \frac{\theta}{I-\theta}P_0, \quad \text{and} \quad R= R_+ + R_-.
\end{equation*}
Then it is straightforward to verify that
\begin{equation*}
    [R_-(X), R_-(Y)] = R_-\big([R_-(X), Y]+[X, R_+(Y)]\big)
\end{equation*}
and
\begin{equation*}
    [R(X), R(Y)] = R\big([R(X), Y]+[X, R(Y)]\big)-[X, Y].
\end{equation*}
Thus, $R$ satisfies the mCYBE \eqref{eq:mCYBE}.

\section{Poisson-Lie groups} \label{AppendixB}
This appendix provides some relevant concepts from the geometry of Poisson-Lie groups. In the first subsection, we briefly summarize the geometric motivation underlying the Poisson bracket \eqref{eq:PoissonBracketDef}. In the second subsection, we provide a purely algebraic proof for the well-definedness of the Poisson bracket \eqref{eq:PoissonBracketDef}, which was originally established within a geometric framework in \cite{S85}.

\subsection{Geometric Origin} \label{subsec:B1}
Let $G$ be a Poisson-Lie group with tangent Lie bialgebra $\gfr$, endowed with a nondegenerate bilinear form $\langle \cdot, \cdot\rangle$. The left and right gradients are defined as follows:
     \begin{equation*}
     \langle \nabla\varphi(g), X \rangle = \left.\frac{d}{ds}\right|_{s=0} \varphi(e^{sX}g), \quad \langle \nabla'\varphi(g), X \rangle = \left.\frac{d}{ds}\right|_{s=0}\varphi(ge^{sX}) , \qquad \varphi \in C^\infty(G), g\in G, X \in \gfr .
     \end{equation*}
For example, let $G = LSL_N$ be the group of $N \times N$ matrices with entries in $\CC[t, t^{-1}]$ and determinant $1$. For an element
\begin{equation*}
    g(t) = \begin{pmatrix}
    g_{11}(t) & g_{12}(t) & \dots & g_{1N}(t) \\
    g_{21}(t) & g_{22}(t) & \dots & g_{2N}(t) \\
    \vdots & \vdots & \ddots & \vdots \\
    g_{N1}(t) & g_{N2}(t) & \dots & g_{NN}(t)
\end{pmatrix} \in G, \quad g_{ij}(t) = \sum_{n \in \ZZ} g_{ij}[n] t^{-n},
\end{equation*}
we can compute the gradients of the functionals $\mu_{ij}[n] \in C^\infty(G)$, given by $\mu_{ij}[n](g(t)) = g_{ij}[n]$, as follows:
\begin{equation}\label{eq:nabla}
\begin{aligned}
\nabla \mu_{ij}[n] (g(t)) &= t^{-n}\begin{pmatrix}
        -\frac{1}{N}g_{ij}(t) &  & g_{1j}(t) &  & \\
        & \ddots & \vdots & & \\
        & & \frac{N-1}{N} g_{ij}(t) & & \\
        & & \vdots & \ddots & \\
        & & g_{Nj}(t) & & -\frac{1}{N} g_{ij}(t)
\end{pmatrix} \in \gfr,\\
\nabla' \mu_{ij}[n] (g(t)) &= t^{-n}\begin{pmatrix}
        -\frac{1}{N}g_{ij}(t) &  & &  & \\
        & \ddots &  & & \\
        g_{i1}(t) & \dots & \frac{N-1}{N} g_{ij}(t) & \dots & g_{iN}(t) \\
        & & & \ddots & \\
        & & & & -\frac{1}{N} g_{ij}(t)
\end{pmatrix} \in \gfr.
\end{aligned}
\end{equation}

Returning to the general setting, suppose that $G$ is a Poisson-Lie group whose associated Lie algebra $\gfr$ is equipped with a nondegenerate symmetric invariant bilinear form $\langle \cdot, \cdot \rangle$ and a classical $r$-matrix $R_\pm : \gfr \to \gfr$. Let $\dfr := \gfr \oplus \gfr$ be the direct sum of Lie algebras, equipped with the nondegenerate symmetric invariant bilinear form
    \begin{equation*}
        \langle(X_1, Y_1), (X_2, Y_2)\rangle := \langle X_1, X_2\rangle - \langle Y_1, Y_2 \rangle.
    \end{equation*}
    Then $\dfr$ splits into the direct sum of the following isotropic Lie subalgebras:
    \begin{equation*}
        \gfr^\delta := \{(X, X) \in \dfr \ | \ X \in \gfr\}, \qquad \gfr^\ast := \{(R_+(X), R_-(X)) \in \dfr \ | \ X \in \gfr\} .
    \end{equation*}
    The projection $R^\dfr_+ := P_{\gfr^\ast}$ of $\dfr$ onto $\gfr^\ast$ is a classical $r$-matrix of $\dfr$ which induces the Poisson-Lie group structure on the Drinfeld double $D(G) := G \times G$. The group $G$ can be identified with the diagonal subgroup of $D(G)$, and the Poisson structure of $D(G)$ can be restricted to $G$. Its Poisson bracket is given by:
    \begin{equation}\label{eq:AppendixBPB1}
        \{\varphi, \psi\}_{G} = \langle R(\nabla \varphi), \nabla \psi \rangle - \langle R(\nabla' \varphi), \nabla' \psi \rangle .
    \end{equation}
    The map
    \begin{equation*}
        D(G) \to G, \qquad (g, h) \mapsto gh^{-1}
    \end{equation*}
    is a Poisson map. We consider the restriction of this map to the Poisson-Lie subgroup $G^\ast \subset D(G)$ corresponding to the dual Lie algebra $\gfr^\ast$. The restricted map is a Poisson map, and its image, denoted by $G_\ast$, is a dense open subset of $G$; see \cites{S92, S85, S20}. The pushforward of the Poisson structure on $G^\ast$ via this map is given by
    \begin{equation*}
        \{\varphi, \psi\}_{G_\ast} = \frac{1}{2}\langle R(\nabla\varphi), \nabla\psi\rangle + \frac{1}{2}\langle R(\nabla'\varphi), \nabla'\psi\rangle - \langle R_+(\nabla'\varphi), \nabla \psi\rangle -\langle R_-(\nabla\varphi), \nabla'\psi\rangle.
    \end{equation*}
    
    Let $\tau$ be an automorphism of the Poisson-Lie group $G$ such that its corresponding automorphism on $\gfr$ satisfies $(\tau \otimes \tau)(r) = r$. It was proved in \cite{S92}, using a geometric framework, that the map
    \begin{equation*}
        D(G) \to G, \qquad (g, h) \mapsto \tau(g)h^{-1}
    \end{equation*}
    is a Poisson map. In this case, the resulting Poisson structure on $G_\ast$ is given by
    \begin{equation}\label{eq:AppendixBPB2}
            \{\varphi, \psi\}_{G_\ast} = \frac{1}{2}\langle R(\nabla\varphi), \nabla\psi\rangle + \frac{1}{2}\langle R(\nabla'\varphi), \nabla'\psi\rangle - \langle (\tau \circ R_+)(\nabla'\varphi), \nabla \psi\rangle -\langle (R_-\circ \tau^{-1})(\nabla\varphi), \nabla'\psi\rangle.
    \end{equation}

\subsection{Algebraic Proof of Well-definedness} \label{subsec:Jacobi, well-definedness}
In this subsection, we prove the skew-symmetry and Jacobi identity of the Poisson brackets \eqref{eq:AppendixBPB1} and \eqref{eq:AppendixBPB2}.

We summarize the notation used throughout the appendices. Let $G$ be a Lie group with Lie algebra $\gfr$. Suppose there is a decomposition of Lie subalgebras
\begin{equation*}
    \gfr = \nfr_+ \oplus \hfr \oplus \nfr_-
\end{equation*}
such that $[\nfr_\pm, \hfr] \subseteq \nfr_\pm$. Let $P_+, P_0, P_-$ be the projections onto $\nfr_+, \hfr, \nfr_-$, respectively. We assume that $\gfr$ is equipped with a nondegenerate symmetric invariant bilinear form $\langle \cdot, \cdot \rangle$ such that $P_+^\ast = P_-$ and whose restriction to $\hfr$ remains nondegenerate.

Choose a basis $\{X_\alpha\}_{\alpha \in \Ical}$ of $\gfr$, and let $\{X^\alpha\}_{\alpha \in \Ical}$ denote the corresponding dual basis with respect to $\langle \cdot, \cdot \rangle$. Let
\begin{equation}\label{eq:rmatricesbyprojections}
    R_+ := P_+ + \frac{I}{I-\theta}P_0, \quad R_- := -P_- + \frac{\theta}{I-\theta}P_0, \quad R:= R_++R_-
\end{equation}
where $\theta : \hfr \to \hfr$ is a Lie algebra homomorphism satisfying
\begin{equation*}
 \text{(i) $I-\theta$ is invertible, \quad 
    (ii) $\left\langle \theta(X), \theta(Y) \right\rangle = \left\langle X, Y \right\rangle$ for all $X, Y \in \hfr$.}
\end{equation*}
Let
\begin{equation*}
R_\pm(X^\alpha) = \sum_{\beta\in \Ical}R_\pm^{\alpha\beta}X_\beta, \quad R(X^\alpha) = \sum_{\beta \in \Ical} R^{\alpha\beta}X_\beta.
\end{equation*}
Since $R_+^\ast = -R_-$, it immediately follows that $R_+^{\alpha\beta} + R_-^{\beta\alpha}=0$ and $R^{\alpha\beta} + R^{\beta\alpha} = 0$. Suppose $\tau : \gfr \to \gfr$ is a Lie algebra automorphism satisfying the following conditions:
\begin{equation*}
    \text{ (i) $\langle\tau(X), \tau(Y)\rangle =\langle X, Y\rangle$ for all $X, Y \in \gfr$, \quad (ii) $\tau \circ R = R \circ \tau$.} 
\end{equation*}

For $X \in \gfr$ and $\varphi \in C^\infty(G)$, we define the right- and left-invariant vector fields generated by $X$, denoted by $X^R$ and $X^L$ respectively, as follows:
\begin{equation*}
\begin{aligned}
    (X^R \varphi)(g) &= \left.\frac{d}{ds}\right|_{s=0} \varphi(e^{sX}g) = \langle \nabla\varphi(g), X\rangle,\\
    (X^L \varphi)(g) &= \left.\frac{d}{ds}\right|_{s=0} \varphi(ge^{sX}) =\langle \nabla' \varphi(g), X \rangle.
\end{aligned}
\end{equation*}
Let us briefly summarize basic properties of the right- and left-invariant vector fields. For $X, Y \in \gfr$ and $\varphi \in C^\infty(G)$, we have the following relations by definition:
    \begin{equation*}
        (X^RY^R - Y^RX^R)\varphi = -[X, Y]^R \varphi, \qquad (X^LY^L-Y^LX^L)\varphi = [X, Y]^L \varphi,
    \end{equation*}
and
\begin{equation*}
    (X^LY^R-Y^RX^L)\varphi = 0.
\end{equation*}
Moreover, if we write
    \begin{equation*}
    \nabla \varphi = \sum_{\alpha \in \Ical} (\nabla \varphi)_\alpha X^\alpha, \quad \nabla' \varphi = \sum_{ \alpha \in \Ical} (\nabla' \varphi)_\alpha X^\alpha,
    \end{equation*}
    where $(\nabla \varphi)_\alpha, (\nabla' \varphi)_\alpha \in C^\infty(G)$ for each $\alpha \in \Ical$, then
    \begin{equation*}
    (\nabla \varphi)_\alpha = X_\alpha^R \varphi, \qquad (\nabla' \varphi)_\alpha = X_\alpha^L \varphi.
    \end{equation*}
    Furthermore, the following relations hold for the gradient components:
    \begin{equation*}
    \begin{aligned}
    &(\nabla(\nabla \varphi)_\alpha)_\beta -(\nabla (\nabla \varphi)_\beta)_\alpha = -[X_\beta, X_\alpha]^R \varphi,\\
    &(\nabla'(\nabla \varphi)_\alpha)_\beta -(\nabla (\nabla' \varphi)_\beta)_\alpha = 0,\\
    &(\nabla(\nabla' \varphi)_\alpha)_\beta -(\nabla' (\nabla \varphi)_\beta)_\alpha =0,\\
    &(\nabla'(\nabla' \varphi)_\alpha)_\beta -(\nabla' (\nabla' \varphi)_\beta)_\alpha = [X_\beta, X_\alpha]^L \cdot \varphi.
    \end{aligned}
    \end{equation*}

\begin{prop}\label{prop:AppendixB}
    For $\varphi, \psi \in C^\infty(G)$, the brackets defined by
    \begin{equation*}
    \{\varphi, \psi \} = \langle R(\nabla \varphi), \nabla \psi \rangle \pm \langle R(\nabla' \varphi), \nabla' \psi \rangle
    \end{equation*}
    are both well-defined Poisson brackets.
\end{prop}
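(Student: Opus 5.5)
The bracket is bilinear and, since $\nabla$ and $\nabla'$ are first-order, satisfies the Leibniz rule. So the content is skew-symmetry and the Jacobi identity. I will work in the dual-basis notation set up above, writing
\[
\{\varphi,\psi\} = \sum_{\alpha,\beta} R^{\alpha\beta}\big( (X_\alpha^R\varphi)(X_\beta^R\psi) \pm (X_\alpha^L\varphi)(X_\beta^L\psi)\big).
\]
This follows from $\nabla\varphi=\sum (X_\alpha^R\varphi)X^\alpha$ and $\langle R X^\alpha, X^\beta\rangle = R^{\alpha\beta}$. Skew-symmetry is then immediate from $R^{\alpha\beta}+R^{\beta\alpha}=0$, which holds because $R^\ast=-R$, itself a consequence of $R_+^\ast=-R_-$.

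For Jacobi, I will treat the bracket as the bivector $\pi = \pi^R \pm \pi^L$, where
\[
\pi^R=\sum R^{\alpha\beta}X_\alpha^R\wedge X_\beta^R,\qquad \pi^L=\sum R^{\alpha\beta}X_\alpha^L\wedge X_\beta^L,
\]
up to a harmless factor. Jacobi is equivalent to $[\pi,\pi]=0$ for the Schouten bracket. Expanding:
\begin{itemize}
\item Cross terms $[\pi^R,\pi^L]$ vanish, because left- and right-invariant fields commute and the coefficients $R^{\alpha\beta}$ are constants.
\item Each of $[\pi^R,\pi^R]$ and $[\pi^L,\pi^L]$ is the image, under the (anti-)homomorphism $X\mapsto X^R$ (resp. $X\mapsto X^L$) extended to $\wedge^3$, of the algebraic Schouten bracket $[[r,r]]\in\wedge^3\gfr$ with $r=\sum R^{\alpha\beta}X_\alpha\wedge X_\beta$. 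The sign from $X\mapsto X^R$ being an anti-homomorphism is irrelevant because $[[r,r]]$ is quadratic.
\end{itemize}

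Concretely, I will compute $\{\varphi,\{\psi,\chi\}\}+\text{cyc}$ directly. The second-derivative terms cancel by skew-symmetry of $R^{\alpha\beta}$, using the stated commutation relations for the gradient components, e.g. $(\nabla(\nabla\varphi)_\alpha)_\beta-(\nabla(\nabla\varphi)_\beta)_\alpha=-[X_\beta,X_\alpha]^R\varphi$. What remains is a trilinear expression in $\nabla\varphi,\nabla\psi,\nabla\chi$ (and similarly for the primed gradients). That expression is
\[
\big\langle [R\xi,R\eta]-R([R\xi,\eta]+[\xi,R\eta]),\ \zeta\big\rangle \quad\text{summed cyclically.}
\]
By the mCYBE \eqref{eq:mCYBE}, established for this $R$ at the end of Appendix~\ref{AppendixA}, this equals $-\langle[\xi,\eta],\zeta\rangle$ cyclically summed. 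So the $R$-part of the Jacobiator is proportional to $\langle[\nabla\varphi,\nabla\psi],\nabla\chi\rangle$, and likewise $\langle[\nabla'\varphi,\nabla'\psi],\nabla'\chi\rangle$ for the left part.

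The key point is that these two residual cubic terms cancel against each other. By invariance of $\langle\cdot,\cdot\rangle$, the Cartan 3-form $\langle[\xi,\eta],\zeta\rangle$ is bi-invariant. Moreover, $\nabla'\varphi(g)=\mathrm{Ad}_{g^{-1}}\nabla\varphi(g)$ (or its analogue with the paper's conventions), and $\mathrm{Ad}$ preserves both bracket and form. Hence
\[
\langle[\nabla\varphi,\nabla\psi],\nabla\chi\rangle=\langle[\nabla'\varphi,\nabla'\psi],\nabla'\chi\rangle .
\]
The signs coming from right fields (anti-homomorphism) versus left fields (homomorphism) differ, and must be tracked carefully.

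The main obstacle is exactly this sign bookkeeping. I need to check that for both choices of $\pm$ the residual terms cancel: for one sign, the anti-homomorphism sign flip together with the relative sign $\pm$ produces cancellation directly. For the other sign the mechanism must be different: either the mCYBE constant enters with opposite orientation, or one must argue that each piece vanishes separately. The latter would require a genuine CYBE, which we do not have, so I expect that a careful orientation check of $X^R$ versus $X^L$ resolves it.

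If the Schouten/$\mathrm{Ad}$ argument proves awkward to make purely algebraic in the loop-group setting, the fallback is to verify Jacobi only on generators $\mu_{ij}[m]$. There the gradients are explicit, as in \eqref{eq:nabla}, and the identity $\nabla'\mu=\mathrm{Ad}_{g^{-1}}\nabla\mu$ can be checked by direct matrix multiplication.
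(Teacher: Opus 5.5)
Your architecture matches the paper's proof: skew-symmetry from $R^\ast=-R$; the mixed $\nabla$--$\nabla'$ terms drop out because left- and right-invariant fields commute; mCYBE is applied to the pure $\nabla$ and pure $\nabla'$ parts; and the leftover Cartan-form terms cancel at the end. Your observation that $\nabla'\varphi=\mathrm{Ad}_{g^{-1}}\nabla\varphi$ gives $\langle[\xi,\eta],\zeta\rangle$ the same value on primed and unprimed gradients supplies a step the paper leaves implicit: there, $\langle\nabla\varphi_1,[\nabla\varphi_2,\nabla\varphi_3]\rangle-\langle\nabla'\varphi_1,[\nabla'\varphi_2,\nabla'\varphi_3]\rangle$ is dropped between two displayed lines without comment. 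However, you flag the sign bookkeeping as the main obstacle and leave it open, and your analysis of it is wrong in two places. So the decisive cancellation is not actually established.

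First, the relative sign $\pm$ never reaches the residual terms. The pure-$\nabla'$ double brackets carry the factor $(\pm1)^2=1$ (equivalently $[\pm\pi^L,\pm\pi^L]=[\pi^L,\pi^L]$), and $\pm$ multiplies only the mixed terms $[\pi^R,\pi^L]$, which vanish. There is no ``other sign'' needing a different mechanism: one computation covers both brackets, and the worry about needing a genuine CYBE is a red herring.

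Second, the opposite signs of the two residuals come precisely from the anti-homomorphism, contrary to your claim that this sign is irrelevant because $[[r,r]]$ is quadratic. Each term of $[\pi^R,\pi^R]$ contains exactly one commutator of vector fields. Hence $[\pi^R,\pi^R]$ is $-1$ times the push-forward of $[[r,r]]$ under $X\mapsto X^R$, while $[\pi^L,\pi^L]$ is $+1$ times its push-forward under $X\mapsto X^L$. Concretely, antisymmetrizing the second derivatives of $\varphi_1$ gives $-[X_\alpha,X_\delta]^R\varphi_1$ in the right part and $+[X_\alpha,X_\delta]^L\varphi_1$ in the left part. The collected terms are therefore
\begin{equation*}
-\langle\nabla\varphi_1,[R\nabla\varphi_2,R\nabla\varphi_3]\rangle+\langle\nabla'\varphi_1,[R\nabla'\varphi_2,R\nabla'\varphi_3]\rangle .
\end{equation*}
After mCYBE and the cyclic sum, the whole Jacobiator reduces to $\langle\nabla\varphi_1,[\nabla\varphi_2,\nabla\varphi_3]\rangle-\langle\nabla'\varphi_1,[\nabla'\varphi_2,\nabla'\varphi_3]\rangle$, which your $\mathrm{Ad}$-argument kills. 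If the anti-homomorphism sign really were irrelevant, bi-invariance would make the two residuals add to $2\langle\nabla\varphi_1,[\nabla\varphi_2,\nabla\varphi_3]\rangle$, which is generally nonzero, and Jacobi would fail for both signs.
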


\begin{proof}
    First, we verify skew-symmetry. Since $R^\ast = -R$, for any $\varphi, \psi \in C^\infty(G)$, we have
    \begin{equation*}
    \{\varphi, \psi \} = \langle R(\nabla \varphi), \nabla \psi \rangle \pm \langle R(\nabla' \varphi), \nabla' \psi \rangle = \langle \nabla \varphi , R^\ast(\nabla \psi) \rangle \pm \langle \nabla' \varphi, R^\ast(\nabla' \psi) \rangle = -\{\psi, \varphi\}.
    \end{equation*}
    Next, we verify the Jacobi identity. Let us denote the individual components of the bracket as
    \begin{equation*}
        \{\varphi, \psi\}_\nabla = \langle R(\nabla \varphi), \nabla \psi \rangle, \quad \{\varphi, \psi\}_{\nabla'} = \langle R(\nabla' \varphi), \nabla' \psi \rangle.
    \end{equation*}
    Since the left- and right- invariant vector fields commute, it can be verified that the following holds:
    \begin{equation*}
            \Big(\{\varphi_1, \{\varphi_2, \varphi_3\}_\nabla \}_{\nabla'}+ \text{cyclic} \Big)+\Big(\{\varphi_1, \{\varphi_2, \varphi_3\}_{\nabla'} \}_{\nabla} + \text{cyclic}\Big) = 0
    \end{equation*}
    for $\varphi_1, \varphi_2, \varphi_3 \in C^\infty(G)$, where $\dots + \text{cyclic}$ stands for cyclic permutations of the indices $(1, 2, 3)$. Now, expanding $\{\varphi_1, \{\varphi_2, \varphi_3\}_\nabla \}_\nabla$ and $\{\varphi_1, \{\varphi_2, \varphi_3\}_{\nabla'}\}_{\nabla'}$ using the basis $\{X_\alpha\}_{\alpha \in \Ical}$, the cyclic sum results in twelve terms. Collecting terms involving the second derivatives of $\varphi_1$, we obtain:
    \begin{equation*}
    \begin{aligned}
    &\sum_{\alpha,\beta,\gamma,\delta}R^{\alpha\beta}R^{\gamma\delta} ([X_\alpha, X_\delta]^R \varphi_1) (\nabla \varphi_2)_\beta (\nabla \varphi_3)_\gamma - \sum_{\alpha,\beta,\gamma,\delta}R^{\alpha\beta}R^{\gamma\delta} ([X_\alpha, X_\delta]^L \varphi_1) (\nabla' \varphi_2)_\beta (\nabla' \varphi_3)_\gamma\\
    &= -\sum_{\beta,\gamma} \Big( [R(X^\beta), R(X^\gamma)]^R \varphi_1 \Big) (\nabla \varphi_2)_\beta (\nabla \varphi_3)_\gamma +\sum_{\beta,\gamma} \Big( [R(X^\beta), R(X^\gamma)]^L \varphi_1 \Big) (\nabla' \varphi_2)_\beta (\nabla' \varphi_3)_\gamma\\
    &= -\langle \nabla \varphi_1, [R(\nabla \varphi_2), R(\nabla \varphi_3)] \rangle +\langle \nabla' \varphi_1, [R(\nabla' \varphi_2), R(\nabla' \varphi_3)] \rangle.
        \end{aligned}
    \end{equation*}
    Applying the mCYBE and invariance of the bilinear form, the expression simplifies to:
    \begin{equation*}
        \begin{aligned}
            &=-\langle\nabla \varphi_1, R([R(\nabla \varphi_2), \nabla \varphi_3]+[\nabla \varphi_2, R(\nabla \varphi_3)])\rangle+\langle\nabla \varphi_1, [\nabla \varphi_2, \nabla \varphi_3]\rangle\\
    &\quad +\langle\nabla'\varphi_1, R([R(\nabla' \varphi_2), \nabla' \varphi_3] +[\nabla' \varphi_2, R(\nabla' \varphi_3)])\rangle - \langle \nabla' \varphi_1, [\nabla' \varphi_2, \nabla' \varphi_3]\rangle\\
    &= \langle\nabla \varphi_3, [R(\nabla \varphi_1), R(\nabla \varphi_2)]\rangle +\langle\nabla \varphi_2, [R(\nabla \varphi_3), R(\nabla \varphi_1)] \rangle \\
    &\quad -\langle\nabla' \varphi_3, [R(\nabla' \varphi_1), R(\nabla' \varphi_2)]\rangle -\langle\nabla' \varphi_2, [R(\nabla' \varphi_3), R(\nabla' \varphi_1)] \rangle.
        \end{aligned}
    \end{equation*}
    These terms cancel with the terms arising from the second derivatives of $\varphi_2$ and $\varphi_3$ in the cyclic sum. Thus, the Jacobi identity holds.
\end{proof}

\begin{prop}\label{prop:Jacobi}
    For $\varphi, \psi \in C^\infty(G)$, the bracket defined by
    \begin{equation}\label{eq:PoissonBracketDef}
    \{\varphi, \psi \} = \frac{1}{2}\langle R(\nabla \varphi), \nabla \psi \rangle +\frac{1}{2}\langle R(\nabla' \varphi), \nabla' \psi\rangle-\langle (\tau \circ R_+)(\nabla' \varphi), \nabla \psi\rangle-\langle (R_- \circ \tau^{-1})(\nabla \varphi), \nabla' \psi\rangle
    \end{equation}
    is a well-defined Poisson bracket.
\end{prop}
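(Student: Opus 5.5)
The plan is to imitate the proof of Proposition~\ref{prop:AppendixB}, working in a basis $\{X_\alpha\}$ with dual basis $\{X^\alpha\}$ and the gradient-component calculus for $X^R$ and $X^L$. Write the bracket \eqref{eq:PoissonBracketDef} as $\{\varphi,\psi\} = \tfrac12\{\varphi,\psi\}_\nabla + \tfrac12\{\varphi,\psi\}_{\nabla'} + \{\varphi,\psi\}_{\mathrm{mix}}$. Here
\[
\{\varphi,\psi\}_{\mathrm{mix}} := -\langle (\tau\circ R_+)(\nabla'\varphi),\nabla\psi\rangle-\langle (R_-\circ\tau^{-1})(\nabla\varphi),\nabla'\psi\rangle .
\]

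\emph{Skew-symmetry.} The first two pieces are skew by Proposition~\ref{prop:AppendixB}. For the mixed part, I would use $R_+^\ast=-R_-$ together with the orthogonality of $\tau$, which gives $\tau^\ast=\tau^{-1}$. These yield
\[
\langle (\tau R_+)(\nabla'\psi),\nabla\varphi\rangle=\langle \nabla'\psi,R_+^\ast\tau^{-1}\nabla\varphi\rangle=-\langle (R_-\tau^{-1})(\nabla\varphi),\nabla'\psi\rangle .
\]
The two mixed terms are therefore exchanged, with a sign, under $\varphi\leftrightarrow\psi$.

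\emph{Jacobi identity.} I would expand the cyclic sum of $\{\varphi_1,\{\varphi_2,\varphi_3\}\}$ and collect only the terms containing second derivatives of $\varphi_1$. The terms with second derivatives of $\varphi_2$ and $\varphi_3$ then follow by cyclicity, exactly as in Proposition~\ref{prop:AppendixB}.

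Using the commutator relations for the gradient components, the second-derivative terms in $\varphi_1$ become a sum of expressions $\langle \nabla\varphi_1,[\,\cdot,\cdot\,]\rangle$ and $\langle\nabla'\varphi_1,[\,\cdot,\cdot\,]\rangle$. Their arguments are built from $R\nabla\varphi_i$, $R\nabla'\varphi_i$, $\tau R_+\nabla'\varphi_i$ and $R_-\tau^{-1}\nabla\varphi_i$ for $i=2,3$. Mixed terms of the form $\nabla'(\nabla\varphi)$ contribute nothing, because $X^L$ and $X^R$ commute.

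Next, I would rewrite $\tfrac12 R = \tfrac12(R_+ + R_-)$ and $\tfrac12(R_+-R_-)=\tfrac12 I$. The last identity follows from $R_+-R_-=P_++P_-+\frac{I-\theta}{I-\theta}P_0=I$. This lets me express everything through $R_\pm$ and $\tau$.

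The key algebraic inputs are the following:
\begin{enumerate}
\item the CYBE \eqref{eq:CYBE2} for $R_+$, from Lemma~\ref{lemma:AppendixA1};
\item its analogue for $R_-$, stated after Lemma~\ref{lemma:AppendixA2};
\item the mCYBE for $R$;
\item $\tau$ being a Lie automorphism that commutes with $R$.
\end{enumerate}

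I need $\tau$ to commute with $R_\pm$ separately, not just with $R$. Since $R_\pm=\tfrac12(R\pm I)$, this follows from $\tau R=R\tau$. With this, I would show that the $\varphi_1$-coefficient reduces to a sum of terms of two kinds:
\[
\langle\nabla\varphi_3,[R\nabla\varphi_1,R\nabla\varphi_2]\rangle \qquad\text{and}\qquad \langle\nabla'\varphi_3,[\tau R_+\nabla'\varphi_1,\,\cdot\,]\rangle ,
\]
together with their counterparts. These should cancel against the $\varphi_2$- and $\varphi_3$-coefficients, as in Proposition~\ref{prop:AppendixB}.

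The structural reason I expect this to work is conceptual. The bracket is the image of the Poisson structure on $G^\ast=\{(R_+X,R_-X)\}$ inside the double $\dfr$ under the twisted map $(g,h)\mapsto\tau(g)h^{-1}$. Algebraically, the mixed terms encode the $\dfr$-valued r-matrix $P_{\gfr^\ast}$, whose CYBE is equivalent to the pair of CYBEs for $R_\pm$.

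\emph{Main obstacle.} The hardest step is the bookkeeping of the cross terms between the mixed bracket and the $\tfrac12 R$ brackets. These include terms like $\{\varphi_1,\{\varphi_2,\varphi_3\}_{\mathrm{mix}}\}_{\nabla}$, which contain $\nabla(\nabla'\varphi)$-type derivatives whose antisymmetrized parts vanish but whose symmetric parts do not.

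My first approach would be to organize all terms by the pair of gradient types (for example $(\nabla,\nabla)$, $(\nabla,\nabla')$, $(\nabla',\nabla')$) applied to $\varphi_2$ and $\varphi_3$. I would then verify the vanishing separately in each sector:
\begin{enumerate}
\item the pure $(\nabla,\nabla)$ sector uses the mCYBE for $\tfrac12R$ with the $-\tfrac14[X,Y]$ correction;
\item the $(\nabla,\nabla')$ sector uses the CYBE for $R_+$ or $R_-$, twisted by $\tau$.
\end{enumerate}
If this direct route becomes unmanageable, the fallback is to transport the problem to $\dfr$. There one would check that $R^\dfr_+=P_{\gfr^\ast}$ satisfies the CYBE in the sense of \eqref{eq:CYBE2}, and then verify that the twisted multiplication map intertwines the brackets.
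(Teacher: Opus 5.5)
Your overall route is the paper's. Skew-symmetry goes exactly as you do it, via $R_+^\ast=-R_-$ and $\tau^\ast=\tau^{-1}$. The Jacobi identity is proved by expanding in invariant vector fields and reducing everything to the Yang--Baxter-type identities for $R$ and $R_\pm$, together with $R_+-R_-=I$ and $\tau R_\pm=R_\pm\tau$; your derivation of the latter from $\tau R=R\tau$ is correct. The skew-symmetry part is complete. The Jacobi part, however, is only asserted, and the bookkeeping you propose would not close as stated.

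First, the pure part does not vanish sector by sector. Applying the mCYBE to the $\tfrac12 R$ terms leaves, after the cyclic sum, two residuals:
\begin{itemize}
\item $\tfrac14\langle\nabla\varphi_1,[\nabla\varphi_2,\nabla\varphi_3]\rangle$ from the terms involving only $\nabla$;
\item $-\tfrac14\langle\nabla'\varphi_1,[\nabla'\varphi_2,\nabla'\varphi_3]\rangle$ from those involving only $\nabla'$.
\end{itemize}
The relative sign comes from $[X^R,Y^R]=-[X,Y]^R$ versus $[X^L,Y^L]=[X,Y]^L$. Neither residual vanishes algebraically. They cancel against each other only because $\nabla'\varphi(g)=\mathrm{Ad}_{g^{-1}}\nabla\varphi(g)$ and $\langle\cdot,[\cdot,\cdot]\rangle$ is $\mathrm{Ad}$-invariant, an input your plan never mentions. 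The paper avoids redoing this by quoting Proposition~\ref{prop:AppendixB} for $\mathrm{Jac}(1,1)+\mathrm{Jac}(1,2)+\mathrm{Jac}(2,1)+\mathrm{Jac}(2,2)=0$, and you should do the same.

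Second, the sectors along which cancellation actually happens are indexed by the gradient types carried by all three functions, not by the pair on $(\varphi_2,\varphi_3)$. For instance, $\langle\nabla'\varphi_1,[(R_-\tau^{-1})\nabla\varphi_2,(R_-\tau^{-1})\nabla\varphi_3]\rangle$ lies in your $(\nabla,\nabla)$-sector. It is cancelled, however, by terms in which $\varphi_2$ or $\varphi_3$ carries $\nabla'$.

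The remaining computation runs as follows. The mixed-Hessian terms you single out as the main obstacle are harmless: within the paper's grouping they cancel in pairs, since $(\nabla'(\nabla\varphi)_\alpha)_\beta=(\nabla(\nabla'\varphi)_\beta)_\alpha$.

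What survives with one $\nabla$ and two $\nabla'$ is $\langle\nabla\varphi_1,[(\tau R_+)\nabla'\varphi_2,(\tau R_+)\nabla'\varphi_3]\rangle+\text{cyclic}$, coming from $\mathrm{Jac}(3,3)+\mathrm{Jac}(4,3)$. To dispose of it:
\begin{itemize}
\item move $\tau$ across;
\item apply the CYBE for $R_+$, together with $R_+^\ast=-R_-$ and invariance;
\item write $R_\pm=\tfrac12R\pm\tfrac12 I$; the $\tfrac12 I$ pieces cancel by invariance.
\end{itemize}
The result is exactly minus the terms $\tfrac12\langle\nabla'\varphi_1,[(R_-\tau^{-1})\nabla\varphi_2,R\nabla'\varphi_3]\rangle+\tfrac12\langle\nabla'\varphi_1,[R\nabla'\varphi_2,(R_-\tau^{-1})\nabla\varphi_3]\rangle+\text{cyclic}$. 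These are produced by $\{\cdot,\cdot\}_2$ against the mixed brackets.

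The terms with two $\nabla$ and one $\nabla'$ are the mirror image, using the $R_-$-equation and, this time, $\tau R=R\tau$. Note that $\tau R_+\nabla'$ always enters through the $\nabla$-slot, and $R_-\tau^{-1}\nabla$ through the $\nabla'$-slot. So residuals of the shape $\langle\nabla'\varphi_3,[\tau R_+\nabla'\varphi_1,\cdot\,]\rangle$, which you anticipate, never occur. Carrying out these two cancellations is the actual content of the proof.
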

\begin{proof}
    We decompose the bracket into four distinct components:
    \begin{equation*}
    \begin{aligned}
    \{\varphi, \psi\}_1 = \frac{1}{2}\langle R(\nabla \varphi), \nabla \psi \rangle,& \quad \{\varphi, \psi\}_2 = \frac{1}{2}\langle R(\nabla' \varphi), \nabla' \psi\rangle\\
    \{\varphi, \psi\}_3 = -\langle (\tau \circ R_+)(\nabla' \varphi), \nabla \psi\rangle,& \quad \{\varphi, \psi\}_4 = -\langle (R_- \circ \tau^{-1})(\nabla \varphi), \nabla' \psi\rangle. 
    \end{aligned}
    \end{equation*}
    Skew-symmetry for the first two brackets follows from $R^*=-R$. Furthermore,
    \begin{equation*}
    \begin{aligned}
        \{\varphi, \psi\}_3 &= -\langle R_+(\nabla' \varphi), \tau^{-1}(\nabla \psi)\rangle = \langle \nabla' \varphi, (R_- \circ \tau^{-1})\nabla \psi\rangle = -\{\psi, \varphi\}_4\\
        \{\varphi, \psi\}_4 &= \langle \tau^{-1}(\nabla \varphi), R_+(\nabla' \psi) \rangle = \langle \nabla \varphi, (\tau \circ R_+)(\nabla' \psi)\rangle = -\{\psi, \varphi\}_3.
        \end{aligned}
    \end{equation*}
    Thus, the bracket \eqref{eq:PoissonBracketDef} is skew-symmetric. To verify the Jacobi identity, we define
    \begin{equation*}
        \text{Jac}(i, j) := \{\varphi_1, \{\varphi_2, \varphi_3\}_i\}_j + \{\varphi_2, \{\varphi_3, \varphi_1\}_i\}_j + \{\varphi_3, \{\varphi_1, \varphi_2\}_i\}_j, \quad i, j \in \{1, 2, 3, 4\}.
    \end{equation*}
    From Proposition~\ref{prop:AppendixB}, we have
    \begin{equation*}
    \text{Jac}(1, 1) + \text{Jac}(1, 2) + \text{Jac}(2, 1) + \text{Jac}(2, 2)=0.
    \end{equation*}
    It remains to show that the remaining terms vanish.
    We first compute $\text{Jac}(3, 3) + \text{Jac}(3, 4)+ \text{Jac}(4, 3) + \text{Jac}(4, 4)$. Note that the brackets $\{\varphi, \psi\}_3$ and $\{\varphi, \psi\}_4$ can be expanded as follows:
    \begin{equation*}
    \begin{aligned}
        \{\varphi, \psi\}_3 &=-\sum_{\alpha, \beta, \gamma} R_+^{\alpha\beta}(\nabla' \varphi)_\alpha (\nabla \psi)_\gamma \langle\tau(X_\beta), X^\gamma\rangle\\
        \{\varphi, \psi\}_4 &= \sum_{\alpha, \beta, \gamma} R_+^{\alpha\beta}(\nabla \varphi)_\gamma (\nabla' \psi)_\alpha \langle \tau(X_\beta), X^{\gamma}\rangle.
    \end{aligned}
    \end{equation*}
    Using these expressions, the corresponding cyclic sums evaluate to:
    \begin{align}
        \text{Jac}(3, 3) &= \sum_{\substack{\alpha\beta\gamma\\\delta\epsilon\zeta}} R_+^{\alpha\beta}R_+^{\delta\epsilon}\langle \tau(X_\beta), X^\gamma \rangle \langle \tau(X_\epsilon), X^\zeta \rangle (\nabla' \varphi_1)_\delta (\nabla (\nabla' \varphi_2)_\alpha)_\zeta (\nabla \varphi_3)_\gamma + \text{cyclic} \label{eq:tauNNP1}\\
        &+\sum_{\substack{\alpha\beta\gamma\\\delta\epsilon\zeta}} R_+^{\alpha\beta}R_+^{\delta\epsilon} \langle \tau(X_\beta), X^\gamma \rangle \langle \tau(X_\epsilon), X^\zeta \rangle (\nabla' \varphi_1)_\delta (\nabla' \varphi_2)_\alpha(\nabla(\nabla \varphi_3)_\gamma)_\zeta + \text{cyclic},\label{eq:tauNN1}\\
        \text{Jac}(3, 4) &= -\sum_{\substack{\alpha\beta\gamma\\\delta\epsilon\zeta}}  R_+^{\alpha\beta}R_+^{\delta\epsilon} \langle \tau(X_\beta), X^\gamma \rangle \langle \tau(X_\epsilon), X^\zeta \rangle(\nabla \varphi_1)_\zeta (\nabla'(\nabla' \varphi_2)_\alpha)_\delta(\nabla \varphi_3)_\gamma +\text{cyclic}\label{eq:tauNPNP1}\\
        &-\sum_{\substack{\alpha\beta\gamma\\\delta\epsilon\zeta}}  R_+^{\alpha\beta}R_+^{\delta\epsilon} \langle \tau(X_\beta), X^\gamma \rangle \langle \tau(X_\epsilon), X^\zeta \rangle(\nabla \varphi_1)_\zeta (\nabla' \varphi_2)_\alpha(\nabla'(\nabla \varphi_3)_\gamma)_{\delta} +\text{cyclic}, \label{eq:tauNPN1}\\
        \text{Jac}(4, 3) &= -\sum_{\substack{\alpha\beta\gamma\\\delta\epsilon\zeta}}  R_+^{\alpha\beta}R_+^{\delta\epsilon} \langle \tau(X_\beta), X^\gamma \rangle \langle \tau(X_\epsilon), X^\zeta \rangle (\nabla' \varphi_1)_\delta (\nabla(\nabla \varphi_2)_\gamma)_\zeta(\nabla' \varphi_3)_\alpha +\text{cyclic}\label{eq:tauNN2}\\
        &-\sum_{\substack{\alpha\beta\gamma\\\delta\epsilon\zeta}}  R_+^{\alpha\beta}R_+^{\delta\epsilon} \langle \tau(X_\beta), X^\gamma \rangle \langle \tau(X_\epsilon), X^\zeta \rangle (\nabla' \varphi_1)_\delta (\nabla \varphi_2)_\gamma(\nabla(\nabla' \varphi_3)_\alpha)_\zeta +\text{cyclic},\label{eq:tauNNP2}\\
        \text{Jac}(4, 4) &= \sum_{\substack{\alpha\beta\gamma\\\delta\epsilon\zeta}}  R_+^{\alpha\beta}R_+^{\delta\epsilon} \langle \tau(X_\beta), X^\gamma \rangle \langle \tau(X_\epsilon), X^\zeta \rangle (\nabla \varphi_1)_\zeta (\nabla'(\nabla \varphi_2)_\gamma)_\delta(\nabla' \varphi_3)_\alpha +\text{cyclic}\label{eq:tauNPN2}\\
        &+\sum_{\substack{\alpha\beta\gamma\\\delta\epsilon\zeta}}  R_+^{\alpha\beta}R_+^{\delta\epsilon} \langle \tau(X_\beta), X^\gamma \rangle \langle \tau(X_\epsilon), X^\zeta \rangle (\nabla \varphi_1)_\zeta (\nabla \varphi_2)_\gamma(\nabla'(\nabla' \varphi_3)_\alpha)_{\delta} +\text{cyclic}.\label{eq:tauNPNP2}
    \end{align}
    To observe the cancellations systematically, we group specific pairings of these equations:
    \begin{align}
       \eqref{eq:tauNN1}+\eqref{eq:tauNN2} &= \langle\nabla \varphi_1, [(\tau\circ R_+)(\nabla' \varphi_2), (\tau \circ R_+)(\nabla' \varphi_3)]\rangle + \text{cyclic}, \label{eq:tauNNPNPremain}\\
       \eqref{eq:tauNNP1}+\eqref{eq:tauNPN1} &=\eqref{eq:tauNPN2}+\eqref{eq:tauNNP2}=0,\\
       \eqref{eq:tauNPNP2}+\eqref{eq:tauNPNP1} &=-\langle \nabla' \varphi_1, [(R_- \circ \tau^{-1})(\nabla \varphi_2), (R_- \circ \tau^{-1}) (\nabla \varphi_3)] \rangle + \text{cyclic}.\label{eq:tauNPNNremain}
    \end{align}
    Now we turn our attention to the remaining cross terms: $\text{Jac}(1, 3)+\text{Jac}(3, 1)+\text{Jac}(1, 4)+\text{Jac}(4, 1)$. Expanding these yields:
    \begin{align}
        \text{Jac}(1, 3) &= -\frac{1}{2}\sum_{\substack{\alpha\beta\\\gamma\delta\epsilon}} R^{\alpha\beta}R_+^{\gamma\delta}\langle \tau(X_\delta), X^\epsilon\rangle (\nabla' \varphi_1)_\gamma (\nabla(\nabla \varphi_2)_\alpha)_\epsilon(\nabla \varphi_3)_\beta +\text{cyclic}\label{eq:tauNNNNP1}\\
        &-\frac{1}{2}\sum_{\substack{\alpha\beta\\\gamma\delta\epsilon}} R^{\alpha\beta}R_+^{\gamma\delta}\langle \tau(X_\delta), X^\epsilon\rangle (\nabla' \varphi_1)_\gamma (\nabla \varphi_2)_\alpha(\nabla(\nabla \varphi_3)_\beta)_{\epsilon} +\text{cyclic}\label{eq:tauNNNPN1}\\
        \text{Jac}(3, 1) &= -\frac{1}{2}\sum_{\substack{\alpha\beta\\\gamma\delta\epsilon}} R_+^{\alpha\beta}R^{\delta\epsilon}\langle\tau(X_\beta), X^\gamma\rangle (\nabla \varphi_1)_\delta (\nabla(\nabla' \varphi_2)_\alpha)_\epsilon(\nabla \varphi_3)_\gamma +\text{cyclic}\label{eq:tauNNPNN1}\\
        &-\frac{1}{2}\sum_{\substack{\alpha\beta\\\gamma\delta\epsilon}} R_+^{\alpha\beta}R^{\delta\epsilon}\langle\tau(X_\beta), X^\gamma\rangle(\nabla \varphi_1)_\delta (\nabla' \varphi_2)_\alpha(\nabla(\nabla \varphi_3)_\gamma)_{\epsilon} +\text{cyclic} \label{eq:tauNNNNP2}\\
        \text{Jac}(1, 4) &= \frac{1}{2}\sum_{\substack{\alpha\beta\\\gamma\delta\epsilon}} R^{\alpha\beta}R_+^{\gamma\delta} \langle\tau(X_\delta), X^\epsilon\rangle(\nabla \varphi_1)_\epsilon (\nabla'(\nabla \varphi_2)_\alpha)_\gamma(\nabla \varphi_3)_\beta +\text{cyclic}\label{eq:tauNPNNN1}\\
        &+\frac{1}{2}\sum_{\substack{\alpha\beta\\\gamma\delta\epsilon}} R^{\alpha\beta}R_+^{\gamma\delta} \langle\tau(X_\delta), X^\epsilon\rangle (\nabla \varphi_1)_\epsilon (\nabla \varphi_2)_\alpha(\nabla'(\nabla \varphi_3)_\beta)_\gamma +\text{cyclic}\label{eq:tauNPNNN2}\\
        \text{Jac}(4, 1) &= \frac{1}{2}\sum_{\substack{\alpha\beta\\\gamma\delta\epsilon}} R_+^{\alpha\beta}R^{\delta\epsilon}\langle \tau(X_\beta), X^\gamma\rangle (\nabla \varphi_1)_\delta (\nabla(\nabla \varphi_2)_\gamma)_\epsilon(\nabla' \varphi_3)_\alpha +\text{cyclic}\label{eq:tauNNNPN2}\\
        &+\frac{1}{2}\sum_{\substack{\alpha\beta\\\gamma\delta\epsilon}} R_+^{\alpha\beta}R^{\delta\epsilon}\langle \tau(X_\beta), X^\gamma\rangle (\nabla \varphi_1)_\delta (\nabla \varphi_2)_\gamma(\nabla(\nabla' \varphi_3)_\alpha)_\epsilon +\text{cyclic} \label{eq:tauNNPNN2}
    \end{align}
    Grouping these terms appropriately, we have
    \begin{align}
       \eqref{eq:tauNNNNP2}+\eqref{eq:tauNNNNP1} &=-\frac{1}{2}\langle \nabla \varphi_1, [R(\nabla \varphi_2), (\tau \circ R_+)(\nabla' \varphi_3)]\rangle + \text{cyclic} \label{eq:tauNPNNremain2}\\
       \eqref{eq:tauNNNPN2}+\eqref{eq:tauNNNPN1} &= -\frac{1}{2}\langle \nabla \varphi_1, [(\tau \circ R_+)(\nabla' \varphi_2), R(\nabla \varphi_3)]\rangle + \text{cyclic} \label{eq:tauNPNNremain3}\\
       \eqref{eq:tauNPNNN2}+\eqref{eq:tauNNPNN1} &= \eqref{eq:tauNPNNN1}+\eqref{eq:tauNNPNN2} =0.
    \end{align}
    Furthermore, the term \eqref{eq:tauNPNNremain} cancels with \eqref{eq:tauNPNNremain2} and \eqref{eq:tauNPNNremain3} as follows:
    \begin{equation*}
    \begin{aligned}
        \eqref{eq:tauNPNNremain} &= -\langle \nabla' \varphi_1, R_-[R_- (\tau^{-1}(\nabla \varphi_2)), \tau^{-1}(\nabla \varphi_3)]\rangle -\langle \nabla' \varphi_1, R_-[\tau^{-1}(\nabla \varphi_2), R_+(\tau^{-1}(\nabla \varphi_3))]\rangle + \text{cyclic}\\
        &= \langle R_+ (\nabla' \varphi_1), [R_- (\tau^{-1}(\nabla \varphi_2)), \tau^{-1}(\nabla \varphi_3)] \rangle + \langle R_+ (\nabla' \varphi_1), [\tau^{-1}(\nabla \varphi_2), R_+(\tau^{-1}(\nabla \varphi_3))]\rangle +\text{cyclic}\\
        &= \langle \tau^{-1}(\nabla \varphi_3), [R_+(\nabla' \varphi_1), R_-(\tau^{-1}(\nabla \varphi_2))]\rangle + \langle \tau^{-1}(\nabla \varphi_2), [R_+(\tau^{-1}(\nabla \varphi_3)), R_+(\nabla' \varphi_1)] \rangle +\text{cyclic}\\
        &= \frac{1}{2} \langle \tau^{-1}(\nabla \varphi_3), [R_+(\nabla' \varphi_1), R(\tau^{-1}(\nabla \varphi_2))]\rangle + \frac{1}{2} \langle \tau^{-1}(\nabla \varphi_2), [R(\tau^{-1}(\nabla \varphi_3)), R_+(\nabla' \varphi_1)]\rangle +\text{cyclic}\\
        &= \frac{1}{2} \langle \nabla \varphi_3, [(\tau \circ R_+)(\nabla' \varphi_1), R(\nabla \varphi_2)]\rangle + \frac{1}{2} \langle \nabla \varphi_2, [R(\nabla \varphi_3), (\tau \circ R_+)(\nabla' \varphi_1)]\rangle +\text{cyclic} \\
        &= -\eqref{eq:tauNPNNremain2} - \eqref{eq:tauNPNNremain3}.
    \end{aligned}
    \end{equation*}
Finally, we evaluate the last set of mixed terms: $\text{Jac}(2, 3)+\text{Jac}(3, 2)+\text{Jac}(2, 4)+\text{Jac}(4, 2)$. By a similar calculation, we have:
    \begin{align}
        \text{Jac}(2, 3) &= -\frac{1}{2}\sum_{\substack{\alpha\beta\\\gamma\delta\epsilon}} R^{\alpha\beta}R_+^{\gamma\delta}\langle \tau(X_\delta), X^\epsilon\rangle (\nabla' \varphi_1)_\gamma (\nabla(\nabla' \varphi_2)_\alpha)_\epsilon(\nabla' \varphi_3)_\beta +\text{cyclic}\label{eq:tauNNNNP11}\\
        &-\frac{1}{2}\sum_{\substack{\alpha\beta\\\gamma\delta\epsilon}} R^{\alpha\beta}R_+^{\gamma\delta}\langle \tau(X_\delta), X^\epsilon\rangle (\nabla' \varphi_1)_\gamma (\nabla' \varphi_2)_\alpha(\nabla(\nabla' \varphi_3)_\beta)_{\epsilon} +\text{cyclic}\label{eq:tauNNNPN11}\\
        \text{Jac}(3, 2) &= -\frac{1}{2}\sum_{\substack{\alpha\beta\\\gamma\delta\epsilon}} R_+^{\alpha\beta}R^{\delta\epsilon}\langle\tau(X_\beta), X^\gamma\rangle (\nabla' \varphi_1)_\delta (\nabla'(\nabla' \varphi_2)_\alpha)_\epsilon(\nabla \varphi_3)_\gamma +\text{cyclic}\label{eq:tauNNPNN11}\\
        &-\frac{1}{2}\sum_{\substack{\alpha\beta\\\gamma\delta\epsilon}} R_+^{\alpha\beta}R^{\delta\epsilon}\langle\tau(X_\beta), X^\gamma\rangle(\nabla' \varphi_1)_\delta (\nabla' \varphi_2)_\alpha(\nabla'(\nabla \varphi_3)_\gamma)_{\epsilon} +\text{cyclic} \label{eq:tauNNNNP22}\\
        \text{Jac}(2, 4) &= \frac{1}{2}\sum_{\substack{\alpha\beta\\\gamma\delta\epsilon}} R^{\alpha\beta}R_+^{\gamma\delta} \langle\tau(X_\delta), X^\epsilon\rangle(\nabla \varphi_1)_\epsilon (\nabla'(\nabla' \varphi_2)_\alpha)_\gamma(\nabla' \varphi_3)_\beta +\text{cyclic}\label{eq:tauNPNNN11}\\
        &+\frac{1}{2}\sum_{\substack{\alpha\beta\\\gamma\delta\epsilon}} R^{\alpha\beta}R_+^{\gamma\delta} \langle\tau(X_\delta), X^\epsilon\rangle (\nabla \varphi_1)_\epsilon (\nabla' \varphi_2)_\alpha(\nabla'(\nabla' \varphi_3)_\beta)_\gamma +\text{cyclic}\label{eq:tauNPNNN22}\\
        \text{Jac}(4, 2) &= \frac{1}{2}\sum_{\substack{\alpha\beta\\\gamma\delta\epsilon}} R_+^{\alpha\beta}R^{\delta\epsilon}\langle \tau(X_\beta), X^\gamma\rangle (\nabla' \varphi_1)_\delta (\nabla'(\nabla \varphi_2)_\gamma)_\epsilon(\nabla' \varphi_3)_\alpha+\text{cyclic} \label{eq:tauNNNPN22}\\
        &+\frac{1}{2}\sum_{\substack{\alpha\beta\\\gamma\delta\epsilon}} R_+^{\alpha\beta}R^{\delta\epsilon}\langle \tau(X_\beta), X^\gamma\rangle (\nabla' \varphi_1)_\delta (\nabla \varphi_2)_\gamma(\nabla'(\nabla' \varphi_3)_\alpha)_\epsilon +\text{cyclic} \label{eq:tauNNPNN22}
    \end{align}
    Grouping these terms appropriately, we have
    \begin{align}
    \eqref{eq:tauNNNNP22}+\eqref{eq:tauNNNNP11} &= \eqref{eq:tauNNNPN22}+\eqref{eq:tauNNNPN11} = 0 \\
       \eqref{eq:tauNPNNN22}+\eqref{eq:tauNNPNN11} &= \frac{1}{2} \langle\nabla' \varphi_1, [(R_- \circ \tau^{-1})(\nabla \varphi_2), R(\nabla' \varphi_3)]\rangle +\text{cyclic}\label{eq:tauNNPNPremain2}\\
       \eqref{eq:tauNPNNN11}+\eqref{eq:tauNNPNN22} &= \frac{1}{2} \langle \nabla' \varphi_1, [R(\nabla' \varphi_2), (R_- \circ \tau^{-1})(\nabla \varphi_3)] \rangle +\text{cyclic}\label{eq:tauNNPNPremain3}
    \end{align}
    The term \eqref{eq:tauNNPNPremain} cancels with \eqref{eq:tauNNPNPremain2} and \eqref{eq:tauNNPNPremain3} as follows:
    \begin{equation*}
    \begin{aligned}
        \eqref{eq:tauNNPNPremain} &= \langle \tau^{-1}(\nabla \varphi_1), [R_+(\nabla' \varphi_2), R_+(\nabla' \varphi_3)] \rangle+\text{cyclic}\\
        &=\langle \tau^{-1}(\nabla \varphi_1), R_+[R_+(\nabla' \varphi_2), \nabla' \varphi_3]\rangle +\langle \tau^{-1}(\nabla \varphi_1), R_+[\nabla' \varphi_2, R_-(\nabla' \varphi_3)]\rangle + \text{cyclic}\\
        &= -\langle (R_- \circ \tau^{-1}) (\nabla \varphi_1), [R_+(\nabla' \varphi_2), \nabla' \varphi_3] \rangle - \langle (R_- \circ \tau^{-1}) (\nabla \varphi_1), [\nabla' \varphi_2, R_-(\nabla' \varphi_3)]\rangle +\text{cyclic}\\
        &= -\langle\nabla' \varphi_3, [(R_- \circ \tau^{-1})(\nabla \varphi_1), R_+(\nabla' \varphi_2)] \rangle - \langle \nabla' \varphi_2, [R_-(\nabla' \varphi_3), (R_- \circ \tau^{-1})(\nabla \varphi_1)]\rangle +\text{cyclic} \\
        &= -\frac{1}{2} \langle \nabla' \varphi_3, [(R_- \circ \tau^{-1})(\nabla \varphi_1), R(\nabla' \varphi_2)] \rangle -\frac{1}{2} \langle \nabla' \varphi_2, [R(\nabla' \varphi_3), (R_- \circ \tau^{-1})(\nabla \varphi_1)] \rangle + \text{cyclic}\\
        &=-\eqref{eq:tauNNPNPremain2} - \eqref{eq:tauNNPNPremain3}.
    \end{aligned}
    \end{equation*}
    Combining all the cancellation above, all terms in the cyclic sum vanish. Hence the Jacobi identity holds.
\end{proof}

\section{Well-definedness of the Poisson structure on \texorpdfstring{$\Wbf_q(LSL_N)$}{WqLSLN}}\label{AppendixC}
In this appendix, we prove that the Poisson structure on $\Vbf_q(LSL_N)$ induces a well-defined Poisson structure on $\Wbf_q(LSL_N)$. Throughout this appendix, we use the convention that $\mu_{ij}(z) = 0$ whenever either index $i$ or $j$ lies outside the range $\{1, \dots, N\}$.

\begin{lem}\label{lem:AppendixC1}
    Let $1 \le i \le N-1$ and $1 \le k \le l \le N$.
    \begin{enumerate}[(1)]
    \item If $k < i$ and $l \ge i+1$,
    \begin{equation*}
        \pi_{B_+} \{\mu_{i+1, i}(z), \mu_{kl}(w)\} =\delta\left(\frac{w}{z}\right) \mu_{ki}(z) \mu_{i+1, l}(w).
    \end{equation*}
    \item If $k = l = i$,
    \begin{equation*}
        \pi_{B_+} \{\mu_{i+1, i}(z), \mu_{kl}(w)\} =\delta\left(\frac{qw}{z}\right) \mu_{i+1, i+1}(z).
    \end{equation*}
    \item If $k = i$ and $l \ge i+1$,
    \begin{equation*}
        \begin{aligned} \pi_{B_+} \{\mu_{i+1, i}(z), \mu_{kl}(w)\} &=\delta\left(\frac{qw}{z}\right) \mu_{i+1, l+1}(z)+\delta\left(\frac{w}{z}\right)\mu_{ii}(z) \mu_{i+1, l}(w)  \\
        &\quad - \sum_{\alpha=i+1}^l \delta\left(\frac{qw}{z}\right) \mu_{i+1, \alpha}(z)\mu_{\alpha l}(w). \end{aligned}
    \end{equation*}
    \item If $k =  i+1$,
    \begin{equation*}
        \pi_{B_+} \{\mu_{i+1, i}(z), \mu_{kl}(w)\}= -\delta\left(\frac{w}{z}\right) \mu_{i+1, l}(z).
    \end{equation*}
    \item Otherwise,
    \begin{equation*}
        \pi_{B_+} \{\mu_{i+1, i}(z), \mu_{kl}(w)\}= 0.
    \end{equation*}
    \end{enumerate}
\end{lem}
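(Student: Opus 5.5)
The plan is a direct substitution into the bracket formula \eqref{eq:PoissonBracketDefinition}, followed by applying $\pi_{B_+}$ term by term. With $(i,j)$ replaced by $(i+1,i)$ and $(k,l)$ arbitrary with $k\le l$, we have
\begin{equation*}
\begin{aligned}
\{\mu_{i+1,i}(z),\mu_{kl}(w)\}&=C_{i+1,i,k,l}\Big(\tfrac{w}{z}\Big)\mu_{i+1,i}(z)\mu_{kl}(w)+\tfrac12(\varepsilon_{i+1,k}+\varepsilon_{li})\delta\Big(\tfrac wz\Big)\mu_{ki}(z)\mu_{i+1,l}(w)\\
&\quad-\delta_{ik}\sum_{\alpha>i}\delta\Big(\tfrac{qw}{z}\Big)\mu_{i+1,\alpha}(z)\mu_{\alpha l}(w)+\delta_{i+1,l}\sum_{\alpha>i+1}\delta\Big(\tfrac{w}{qz}\Big)\mu_{\alpha i}(z)\mu_{k\alpha}(w).
\end{aligned}
\end{equation*}
Under $\pi_{B_+}$, the factor $\mu_{i+1,i}(z)$ becomes $-1$. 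Every $\mu_{\alpha\beta}$ with $\alpha>\beta+1$ becomes $0$, and every $\mu_{\beta+1,\beta}$ becomes $-1$.

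The first step is to show that the $C$-term vanishes. After projection it equals $-C_{i+1,i,k,l}(w/z)\mu_{kl}(w)$, so I need $C_{i+1,i,k,l}[r]=0$ whenever $k\le l$. This is the main computation, and it must be done carefully. Using $\Phi=\Psi+\Omega$ and the explicit formula $\Psi_{\alpha\beta}[r]=q^{r|\alpha-\beta|_N}/(1-q^{Nr})$, I would rewrite $\tfrac12\Phi_{i+1,k}+\tfrac12\Phi_{il}-q^r\Psi_{ik}-q^{-r}\Omega_{i+1,l}$. The underlying identities are $q^{r}\Psi_{ik}[r]=\Psi_{i+1,k}[r]$ when $i\ge k$ (with a correction $1-q^{Nr}$ times the denominator in the wrap-around case $i<k$), and the analogous identity $q^{-r}\Psi_{i+1,l}=\Psi_{il}$. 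I expect cancellations in each of the regimes $k\le i<l$, $k\le l\le i$, and $i<k\le l$.

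If the $C$-term does not vanish identically, the statement as written forces it to be zero anyway, since no $\mu_{kl}(w)$-term appears in cases (1)–(5). In that situation I would recheck the sign conventions through Proposition~\ref{prop:Poisson}. I regard this verification as the main obstacle, because the case analysis of $|\cdot|_N$ is error-prone.

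Next come the remaining three terms, treated case by case.

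\emph{The $\varepsilon$-term.} Here $\pi_{B_+}\mu_{ki}(z)$ is nonzero only if $k\le i+1$. The coefficient $\tfrac12(\varepsilon_{i+1,k}+\varepsilon_{li})$ equals $1$ when $k<i+1$ and $l>i$, equals $\tfrac12$ in boundary cases, and so on. This produces the $\delta(w/z)$ terms of (1), (3) and (4). In (4), $\mu_{i+1,i}(z)\mapsto-1$ and the coefficient is $\tfrac12(0+1)+\dots$. I will also track the case $l=i$, $k=i$, where $\varepsilon_{ii}=0$.

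\emph{The $\delta_{ik}$-term.} This is present only for $k=i$. After projection, $\mu_{\alpha l}(w)$ survives for $\alpha\le l+1$, and $\alpha=l+1$ gives $\mu_{l+1,l}\mapsto-1$. This yields $\delta(qw/z)\mu_{i+1,l+1}(z)$ minus the sum over $i<\alpha\le l$, which is case (3). For $l=i$ only $\alpha=i+1$ survives, giving case (2).

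\emph{The $\delta_{i+1,l}$-term.} Here $\mu_{\alpha i}(z)$ with $\alpha\ge i+2$ projects to $0$, so this term always vanishes.

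Finally, in the ``otherwise'' cases ($k>i+1$, or $l<i$ with $k<i$, etc.), I check that every surviving factor is killed by $\pi_{B_+}$, which gives case (5).
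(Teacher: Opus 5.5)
Your route, substituting $(i+1,i)$ into \eqref{eq:PoissonBracketDefinition} and projecting term by term, is the intended direct computation. Your treatment of the $\delta_{ik}$- and $\delta_{i+1,l}$-terms is also correct. The gap is the $C$-term: the claim $C_{i+1,i,k,l}[r]=0$ for $k\le l$ is false.

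Write $\Phi=2\Psi-\delta$ and $\Omega=\Psi-\delta$. Then $\Psi_{i+1,k}[r]-q^{r}\Psi_{ik}[r]=\delta_{k,i+1}$; the wrap-around correction occurs only at $k=i+1$, not for all $k>i$. Likewise $\Psi_{il}[r]-q^{-r}\Psi_{i+1,l}[r]=-q^{-r}\delta_{l,i+1}$. Hence
\begin{equation*}
C_{i+1,i,k,l}[r]=\delta_{k,i+1}-q^{-r}\delta_{l,i+1}-\tfrac12\delta_{k,i+1}-\tfrac12\delta_{li}+q^{-r}\delta_{l,i+1}=\tfrac12\big(\delta_{k,i+1}-\delta_{li}\big)
\end{equation*}
for every $r$. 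So $C_{i+1,i,k,l}(w/z)=\tfrac12(\delta_{k,i+1}-\delta_{li})\,\delta(w/z)$. After projection the $C$-term equals $-\tfrac12(\delta_{k,i+1}-\delta_{li})\,\delta(w/z)\mu_{kl}(w)$, which is nonzero exactly when $k=i+1$ or $l=i$. The $\frac{1}{1-q^{Nr}}$ parts do cancel as you expected; what you dropped are the $-\delta$ pieces of $\Phi$ and $\Omega$.

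This term cannot be discarded.
\begin{itemize}
\item In case (4) the $\varepsilon$-coefficient is exactly $\tfrac12(\varepsilon_{i+1,i+1}+\varepsilon_{li})=\tfrac12$. The $\varepsilon$-term alone therefore gives only $-\tfrac12\delta(w/z)\mu_{i+1,l}(z)$, and the $C$-term supplies the other half.
\item In case (2), and in the subcase $k<i$, $l=i$ of case (5), the $\varepsilon$-coefficient is $\tfrac12(\varepsilon_{i+1,k}+\varepsilon_{ii})=\tfrac12$. The $\varepsilon$-term projects to $-\tfrac12\delta(w/z)\mu_{ki}(z)$, which is \emph{not} killed by $\pi_{B_+}$, contrary to what you assert for the ``otherwise'' cases. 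It is cancelled by the $C$-term $+\tfrac12\delta(w/z)\mu_{ki}(w)$.
\end{itemize}

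Your fallback, that the statement forces the $C$-term to vanish because no $\mu_{kl}(w)$-term appears on the right, is circular. It is also wrong on its own terms: the right-hand side of (4), $-\delta(w/z)\mu_{i+1,l}(z)=-\delta(w/z)\mu_{kl}(w)$, is exactly such a term. Example~\ref{ex:N=2} confirms this: in $\{\mu_{21}(z),\mu_{22}(w)\}=\delta(w/z)\mu_{21}(z)\mu_{22}(w)$, half the coefficient comes from $C_{2122}=\tfrac12$ and half from the $\varepsilon$-term.

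With the correct value of $C$ inserted, the rest of your case analysis goes through.
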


For $f(z) \in \Vbf_q(B_+)$, we define $\deg(f(z))$ to be the polynomial degree of $f(z)$ with respect to the variables $\mu_{ij}(q^nz)$ for $i \le j$ and $n \in \ZZ$. Observe that $\pi_{B_+}\{\mu_{i+1, i}(z), \mu_{kl}(w)\}$ is a linear combination of degree $1$ and $2$ terms. Furthermore, degree $1$ terms arise if and only if $k=i$ or $k=i+1$.
\begin{lem}\label{lem:well1}
    For $i, r \in \{1, 2, \dots, N-1\}$,
    \begin{equation*}
    \pi_{B_+}\{\mu_{i+1, i}(z), T_r(w)\}=0.
    \end{equation*}
\end{lem}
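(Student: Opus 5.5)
Write $T_r(w)$ as in Proposition~\ref{prop:generators}, namely as the sum over $(\alpha,\beta)\in\Jcal_{m,r}^N$ of the monomials $\mu_{(\alpha,\beta)}(w)=\prod_{x}\mu_{\alpha_x\beta_x}(q^{\gamma_x}w)$. Expand $\{\mu_{i+1,i}(z),T_r(w)\}$ by the Leibniz rule into a sum over monomials and over the factor hit by the bracket. Each contribution is computed by Lemma~\ref{lem:AppendixC1}, after replacing $w$ by $q^{\gamma_x}w$. Since $\pi_{B_+}$ is an algebra homomorphism, this expresses $\pi_{B_+}\{\mu_{i+1,i}(z),T_r(w)\}$ as a sum of products of $\delta$-functions in $q^{\gamma_x}w/z$ with polynomials in the $\mu_{ab}$, $a\le b$. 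As in the proof of Lemma~\ref{lem:gen1}, I then sort by the degree $\deg$ defined above and show that each homogeneous component vanishes. Each term has degree $m-1$ or $m$ when it comes from a monomial of degree $m$, so cancellations can only occur between degree $m$ and degree $m+1$ monomials, or within a fixed degree.

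\textbf{Within a fixed degree (the terms of degree $m$ coming from degree-$m$ monomials).} These come from cases (1) and (3) of Lemma~\ref{lem:AppendixC1} via the $\delta(w/z)$ term, and from the quadratic sum in (3). The case-(1) term $\delta(w/z)\mu_{ki}(z)\mu_{i+1,l}(w)$ has the effect of cutting a block $[k,l]\ni i,i+1$ into $[k,i]$ and $[i+1,l]$.
\begin{itemize}
\item I would pair it with the case-(3) term $-\sum_\alpha\delta(qw/z)\mu_{i+1,\alpha}(z)\mu_{\alpha l}(w)$ coming from a monomial that already contains the block $[i,\,\cdot\,]$.
\item I would also use the $\delta(w/z)\mu_{ii}\mu_{i+1,l}$ term against a monomial with adjacent blocks $[\,\cdot\,,i]$, $[i+1,l]$.
\end{itemize}
The shifts must be checked: inserting a gap between $\beta_x=i$ and $\alpha_{x+1}=i+1$ changes $\gamma$ by exactly one power of $q$. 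This is what converts $\delta(qw/z)$ into $\delta(w/z)$ after the substitution.

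\textbf{Between degrees $m$ and $m+1$.} Linear terms arise only from cases (2), (3) with $k=i$ (the term $\delta(qw/z)\mu_{i+1,l+1}(z)$), and (4). Case (2) on a factor $\mu_{ii}$ produces $\delta(qw/z)\mu_{i+1,i+1}(z)$, i.e.\ it replaces $\mu_{ii}(q^{\gamma}w)$ by $\mu_{i+1,i+1}(q^{\gamma+1}w)$. Case (4) on $\mu_{i+1,l}$ produces $-\delta(w/z)\mu_{i+1,l}(z)$. These should cancel against the same monomial shape, exactly as in the degree-$0$ and paired $\mu_{ii}+\mu_{i+1,i+1}$ cancellation in Lemma~\ref{lem:gen1}. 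The term $\delta(qw/z)\mu_{i+1,l+1}(z)$, which extends a block beginning at $i$ by one on the right, should cancel with the case-(4) terms and the $[k,i][i+1,l]$ splitting terms. For this I would use the same index-shift bijection $\Jcal_{k+1,r}^N\leftrightarrow\Jcal_{k,r}^N$ used in the proof of Lemma~\ref{lem:gen1}.

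\textbf{Main obstacle.} The bookkeeping of the exponents $\gamma_x$ under these bijections, together with the boundary cases ($l=N$, where $\mu_{i+1,N+1}=0$, and $i$ at the first or last block), is the delicate part. I would first verify everything for $N=3$ against the explicit $T_1,T_2$ to fix signs. Then I would organize the general argument by the position $x$ of the block containing $i$ or $i+1$, so that only the two neighbouring blocks of each monomial are involved.
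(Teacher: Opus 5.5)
Your framework is the same as the paper's: Leibniz expansion over the monomials of Proposition~\ref{prop:generators}, evaluation of each bracket by Lemma~\ref{lem:AppendixC1}, and cancellation degree by degree. But the bookkeeping you propose fails exactly where the content of the lemma lies. By Lemma~\ref{lem:AppendixC1}, $\pi_{B_+}\{\mu_{i+1,i}(z),\mu_{kl}(w)\}$ has degree $1$ or $2$, so a monomial of degree $m$ produces terms of degree $m$ and $m+1$, not $m-1$ and $m$. The latter is the pattern of the $\nfr_+$-action in Lemma~\ref{lem:gen1}, not of the bracket.

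As a result your two classes are swapped. The linear outputs (case (2), the term $\delta(qw/z)\mu_{i+1,l+1}(z)$ of case (3), and case (4)) are the degree-preserving ones. The quadratic outputs (case (1), and the terms $\delta(w/z)\mu_{ii}(z)\mu_{i+1,l}(w)$ and $-\sum_\alpha\delta(qw/z)\mu_{i+1,\alpha}(z)\mu_{\alpha l}(w)$ of case (3)) raise the degree. They cannot cancel against one another, so they must be absorbed by linear outputs of monomials with one more block. Your first bullet in particular cannot work: the case-(1) output contains two disjoint abutting factors $\mu_{ki}\mu_{i+1,l}$ with $k<i$, whereas the case-(3) sum produces factors $\mu_{i+1,\alpha}\mu_{\alpha l}$ sharing the index $\alpha$, so these are never the same monomial. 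Likewise, the terms $\delta(qw/z)\mu_{i+1,l+1}(z)$ do not cancel against the splitting terms. These terms translate the block $[i,l]$ to $[i+1,l+1]$ rather than extending it.

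The matching that works is as follows. Within a fixed degree, the linear output of case (2) or (3) on a block $[i,\beta]$ at position $x$ cancels the case-(4) output of the monomial in which this block is replaced by $[i+1,\beta+1]$. This replacement raises $\gamma_x$ by one and leaves later exponents unchanged; this part of your plan is right. The bijection breaks down in three ways.
\begin{enumerate}
\item If $\beta=N$, the output vanishes since $\mu_{i+1,N+1}=0$.
\item Suppose a block $[i+1,\beta]$ is preceded by an abutting block $[\alpha',i]$, so that $\gamma_x=\gamma_{x-1}$. Then its case-(4) output cancels an output of the monomial with merged block $[\alpha',\beta]$: the case-(1) output if $\alpha'<i$, or the $\delta(w/z)\mu_{ii}\mu_{i+1,\beta}$ term of case (3) if $\alpha'=i$. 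This is the paper's cancellation of \eqref{eq:AppendixC1} against \eqref{eq:AppendixC4}. Your second bullet points toward it, but it is a cancellation across degrees.
\item Suppose $[i,\beta]$ is followed by an abutting block $[\beta+1,\beta']$, so that $\gamma_{x+1}=\gamma_x$. Then its linear output $\delta(q^{\gamma_x+1}w/z)\,\mu_{i+1,\beta+1}(z)\,\mu_{\beta+1,\beta'}(q^{\gamma_x}w)\cdots$ cancels the summand $\alpha=\beta+1$ of the case-(3) sum coming from the merged block $[i,\beta']$. Letting $\beta$ run from $i$ to $\beta'-1$ exhausts that sum.
\end{enumerate}
This last matching is what is missing from your proposal. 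Together with the others, it proves the lemma. The others also cover the degree-$1$ component ($\mu_{i,i+r-1}$ against $\mu_{i+1,i+r}$) and the vanishing of the degree-$(r+1)$ component, since brackets with $\mu_{\alpha\alpha}$ are linear.
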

\begin{proof}
    We observe that $T_r(w)$ consists of monomials in $\mu_{\alpha \beta}(w)$, $\alpha \le \beta$, with degrees $1, 2, \dots, r$. Furthermore, applying $\pi_{B_+}\{\mu_{i+1, i}(z), \cdot\}$ to a term of degree $m$ yields terms of degrees $m$ and $m+1$. Consequently, to prove the assertion, we verify that the expansion of $\pi_{B_+}\{\mu_{i+1, i}(z), T_r(w)\}$ vanishes degree by degree. Specifically, we examine the components of degree $1$, the maximal degree $r+1$, and the intermediate degrees $2 \le m \le r$.

    Throughout this proof, we omit the notation $\pi_{B_+}$ by directly setting $\mu_{\alpha\beta}(z) = 0$ for $\alpha > \beta+1$ and $\mu_{\alpha+1, \alpha}(z) = -1$ for the sake of brevity.

    We first track the linear terms from the following Poisson bracket:
    \begin{equation*}
        \left\{\mu_{i+1, i}(z), \mu_{1r}(w) + \mu_{2, r+1}(qw) + \dots + \mu_{N-r+1, N}(q^{N-r}w) \right\}.
    \end{equation*}
    The degree $1$ terms are explicitly given by
    \begin{equation*}
        \delta_{i+r \le N}\left(\delta\left(\frac{q^iw}{z}\right) \mu_{i+1, i+r}(z) - \delta\left(\frac{q^{i}w}{z}\right) \mu_{i+1, i+r}(z)\right) = 0.
    \end{equation*}
    
    Next, we track the degree $r+1$ terms. Since the bracket
    \begin{equation*}
        \left\{\mu_{i+1, i}(z), \mu_{\alpha\alpha}(w)\right\} = -\delta_{\alpha, i+1} \delta\left(\frac{w}{z}\right) \mu_{i+1, i+1}(z) + \delta_{\alpha i} \delta\left(\frac{qw}{z}\right) \mu_{i+1, i+1}(z)
    \end{equation*}
    is of degree $1$, it follows that no term of degree $r+1$ appears in $\left\{\mu_{i+1, i}(z), T_r(w)\right\}.$
    
    Finally, we prove that the terms of degree $m$ vanish for $2 \le m \le r$. Recall the definition of the index set $\Jcal_{m, r}^{N}$ given in \eqref{eq:index}.

    For a term of degree $m$ to produce a term of the same degree under the Poisson bracket with $\mu_{i+1, i}(z)$, it must contain a factor of the form $\mu_{i\beta}(w)$ or $\mu_{i+1, \beta}(w)$. Consider any $(\alpha, \beta) \in \Jcal_{m, r}^N$ such that $\alpha_l = i$ for some $1 \le l \le m$. By applying the Leibniz rule for the Poisson bracket, the bracket $\{\mu_{i+1, i}(z), \mu_{\alpha_l\beta_l}(q^{\gamma_l}w)\}$ yields the following monomial:
    \begin{equation*}
        \mu_{\alpha_1\beta_1}(q^{\gamma_1}w)\cdots \left( \delta\left(\frac{q^{\gamma_l+1}w}{z}\right) \mu_{i+1, \beta_l+1}(q^{\gamma_l}w)\right) \cdots \mu_{\alpha_{m}\beta_m}(q^{\gamma_m}w).
    \end{equation*}
    On the other hand, if we consider $(\alpha, \beta) \in \Jcal_{m, r}^N$ such that $\alpha_l = i+1$ for some $1 \le l \le m$, we obtain
    \begin{equation*}
        -\mu_{\alpha_1\beta_1}(q^{\gamma_1}w)\cdots \left( \delta\left(\frac{q^{\gamma_l}w}{z}\right) \mu_{i+1, \beta_l}(q^{\gamma_l}w)\right) \cdots \mu_{\alpha_{m}\beta_m}(q^{\gamma_m}w).
    \end{equation*}
    Consequently, after canceling these matching terms, we are left only with the following monomials for $2 \le l \le m$:
    \begin{equation}\label{eq:AppendixC1}
        -\delta\left(\frac{q^{\gamma_l}w}{z}\right) \left(\mu_{\alpha_1\beta_1}(q^{\gamma_1}w) \cdots \mu_{\alpha_{l-1} i}(q^{\gamma_{l-1}}w)\mu_{i+1, \beta_l}(q^{\gamma_l}w)\cdots \mu_{\alpha_m\beta_m}(q^{\gamma_m}w)\right).
    \end{equation}
    
    Next, we consider the terms of degree $m-1$ that give rise to terms of degree $m$ under the Poisson bracket with $\mu_{i+1, i}(z)$, and show that they precisely cancel with \eqref{eq:AppendixC1}. Take any $(\alpha, \beta) \in \Jcal_{m-1, r}^N$ such that $\alpha_l < i$ and $\beta_l \ge i+1$ for some $1 \le l \le m-1$. By applying the Leibniz rule for the Poisson bracket, the bracket $\{\mu_{i+1, i}(z), \mu_{\alpha_l\beta_l}(q^{\gamma_l}w)\}$ yields the following monomial:
    \begin{equation}\label{eq:AppendixC2}
        \mu_{\alpha_1\beta_1}(q^{\gamma_1}w)\cdots \left( \delta\left(\frac{q^{\gamma_l}w}{z}\right) \mu_{\alpha_l i}(z)\mu_{i+1, \beta_l}(q^{\gamma_l}w)\right) \cdots \mu_{\alpha_{m-1}\beta_{m-1}}(q^{\gamma_{m-1}}w).
    \end{equation}
    On the other hand, if we consider $(\alpha, \beta) \in \Jcal_{m-1, r}^{N}$ such that $\alpha_l = i$ and $\beta_l \ge i+1$ for some $1 \le l \le m-1$, we obtain
    \begin{equation}\label{eq:AppendixC3}
    \begin{aligned}
        &\mu_{\alpha_1\beta_1}(q^{\gamma_1}w)\cdots \left( \delta\left(\frac{q^{\gamma_l}w}{z}\right) \mu_{ii}(z)\mu_{i+1, \beta_l}(q^{\gamma_l}w)\right) \cdots \mu_{\alpha_{m-1}\beta_{m-1}}(q^{\gamma_{m-1}}w)\\
        &- \sum_{x=i+1}^{\beta_l} \mu_{\alpha_1\beta_1}(q^{\gamma_1}w)\cdots \left( \delta\left(\frac{q^{\gamma_l+1}w}{z}\right) \mu_{i+1, x}(z)\mu_{x\beta_l}(q^{\gamma_l}w)\right) \cdots \mu_{\alpha_{m-1}\beta_{m-1}}(q^{\gamma_{m-1}}w).
        \end{aligned}
        \end{equation}
    By combining \eqref{eq:AppendixC2} and \eqref{eq:AppendixC3}, we obtain
    \begin{equation}\label{eq:AppendixC4}
        \delta\left(\frac{q^{\gamma_l}w}{z}\right) \mu_{\alpha_1\beta_1}(q^{\gamma_1}w)\cdots \Big( \mu_{\alpha_l i}(z)\mu_{i+1, \beta_l}(q^{\gamma_l}w)\Big) \cdots \mu_{\alpha_{m-1}\beta_{m-1}}(q^{\gamma_{m-1}}w)
    \end{equation}
    for $1 \le l \le m-1$. Consequently, the term corresponding to $l$ in \eqref{eq:AppendixC4} cancels with the term corresponding to $l+1$ in \eqref{eq:AppendixC1}, completing the proof.
\end{proof}

Now we investigate the application of $\pi_{B_+}\{\mu_{i+d, i}(z), \cdot\}$ for $d \in \ZZ_{>1}$.

\begin{lem}\label{lem:AppendixC2}
    Let $d \in \ZZ_{>1}, 1 \le i \le N-d$, and $1 \le k \le l \le N$.
    \begin{enumerate}[(1)]
    \item If $k < i$ and $l = i+d-1$,
    \begin{equation*}
        \pi_{B_+} \{\mu_{i+d, i}(z), \mu_{kl}(w)\} =-\delta\left(\frac{w}{z}\right) \mu_{ki}(z).
    \end{equation*}
    \item If $k < i$ and $l \ge i+d$,
    \begin{equation*}
        \pi_{B_+} \{\mu_{i+d, i}(z), \mu_{kl}(w)\} = \delta\left(\frac{w}{z}\right) \mu_{ki}(z)\mu_{i+1, l}(w).
    \end{equation*}
    \item If $k=i$ and $l = i+d-2$,
    \begin{equation*}
        \pi_{B_+} \{\mu_{i+d, i}(z), \mu_{kl}(w)\} =-\delta\left(\frac{qw}{z}\right).
    \end{equation*}
    \item If $k = i$ and $l =i+d-1$,
    \begin{equation*}
        \pi_{B_+} \{\mu_{i+d, i}(z), \mu_{kl}(w)\} =-\delta\left(\frac{w}{z}\right) \mu_{ii}(z)+\delta\left(\frac{qw}{z}\right)\mu_{i+d-1, i+d-1}(w) + \delta\left(\frac{qw}{z}\right) \mu_{i+d, i+d}(z).
    \end{equation*}
    \item If $k = i$ and $l \ge i+d$,
    \begin{equation*}
    \begin{aligned}
        \pi_{B_+} \{\mu_{i+d, i}(z), \mu_{kl}(w)\}&= \delta\left(\frac{w}{z}\right) \mu_{ii}(z)\mu_{i+d, l}(w)+ \delta\left(\frac{qw}{z}\right)\mu_{i+d-1, l}(w) \\
        &\quad - \sum_{x=i+d}^{l} \delta\left(\frac{qw}{z}\right) \mu_{i+d, x}(z)\mu_{xl}(w) + \delta\left(\frac{qw}{z}\right) \mu_{i+d, l+1}(z).
    \end{aligned}
    \end{equation*}
    \item If $k= i+1$ and $l = i+d-1$,
    \begin{equation*}
        \pi_{B_+} \{\mu_{i+d, i}(z), \mu_{kl}(w)\} = \delta\left(\frac{w}{z}\right).
    \end{equation*}
    \item If $k= i+1$ and $l \ge i+d$,
    \begin{equation*}
        \pi_{B_+} \{\mu_{i+d, i}(z), \mu_{kl}(w)\} = -\delta\left(\frac{w}{z}\right)\mu_{i+d, l}(w).
    \end{equation*}
    \item Otherwise,
    \begin{equation*}
        \pi_{B_+} \{\mu_{i+d, i}(z), \mu_{kl}(w)\} = 0.
    \end{equation*}
    \end{enumerate}
\end{lem}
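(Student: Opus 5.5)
The plan is to specialize the defining bracket \eqref{eq:PoissonBracketDefinition} to the pair $\mu_{i+d,i}(z)$, $\mu_{kl}(w)$ with $k\le l$. Then I apply $\pi_{B_+}$ term by term, recalling that $\pi_{B_+}$ sets $\mu_{ab}=0$ for $a>b+1$ and $\mu_{b+1,b}=-1$. As in Lemma~\ref{lem:AppendixC1}, everything reduces to bookkeeping of which factors survive the projection. I then organize the survivors by the value of $k$ ($k<i$, $k=i$, $k=i+1$, $k>i+1$) and of $l$ relative to $i+d-2$, $i+d-1$, $i+d$.

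First I dispose of two of the four terms.
\begin{itemize}
\item The $C$-term $C_{i+d,i,k,l}(w/z)\,\mu_{i+d,i}(z)\mu_{kl}(w)$ vanishes after projection, since $d>1$ gives $\pi_{B_+}(\mu_{i+d,i})=0$.
\item The last term carries $\delta_{i+d,l}$, so it can only contribute when $l=i+d$. It then contains $\mu_{\alpha i}(z)$ with $\alpha>i+d>i+1$, which also projects to $0$.
\end{itemize}
So only the $\varepsilon$-term and the $\delta_{jk}$-term (here $\delta_{ik}$) remain.

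For the $\varepsilon$-term $\tfrac12(\varepsilon_{i+d,k}+\varepsilon_{l,i})\,\delta(w/z)\,\mu_{ki}(z)\mu_{i+d,l}(w)$, the projection is nonzero only if $k\le i+1$ and $l\ge i+d-1$. In that range $k<i+d$ and $l>i$, so both $\varepsilon$'s equal $1$ and the coefficient is exactly $1$. This term therefore equals $\delta(w/z)\,\pi_{B_+}(\mu_{ki}(z))\,\pi_{B_+}(\mu_{i+d,l}(w))$. Its first factor is $\mu_{ki}$, $\mu_{ii}$ or $-1$ according as $k<i$, $k=i$, $k=i+1$. Its second factor is $-1$ for $l=i+d-1$ and $\mu_{i+d,l}$ for $l\ge i+d$. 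This already yields cases (1), (2), (6), (7), and the $\delta(w/z)$ pieces of (4) and (5). The computation produces $\mu_{i+d,l}(w)$ in case (2); the index $i+1$ printed there should be checked against this.

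For the remaining term $-\delta_{ik}\sum_{\alpha>i}\delta(qw/z)\,\mu_{i+d,\alpha}(z)\mu_{\alpha l}(w)$, only $k=i$ matters. Survival forces $i+d-1\le\alpha\le l+1$, so $l\ge i+d-2$. I split off the boundary values of $\alpha$:
\begin{itemize}
\item $\alpha=i+d-1$, where $\mu_{i+d,i+d-1}=-1$, contributes $+\delta(qw/z)\mu_{i+d-1,l}(w)$. When $l=i+d-2$ this becomes $(-1)(-1)(-1)\delta(qw/z)=-\delta(qw/z)$, which is case (3); there the $\varepsilon$-term is absent because $l<i+d-1$.
\item $\alpha=l+1\ge i+d$, where $\mu_{l+1,l}=-1$, contributes $+\delta(qw/z)\mu_{i+d,l+1}(z)$.
\item The interior values $i+d\le\alpha\le l$ give $-\sum\delta(qw/z)\mu_{i+d,\alpha}(z)\mu_{\alpha l}(w)$.
\end{itemize}
For $l=i+d-1$ the two boundary values produce $\mu_{i+d-1,i+d-1}(w)$ and $\mu_{i+d,i+d}(z)$ with an empty interior sum, giving (4). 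For $l\ge i+d$ all three pieces appear, giving (5).

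Finally, every $(k,l)$ outside the listed ranges kills both surviving terms, which is case (8). The main obstacle is purely the careful tracking of signs and boundary indices, especially the $\alpha=i+d-1$ and $\alpha=l+1$ endpoints when $l$ is near $i+d$, and the convention that $\mu_{ab}=0$ when an index leaves $\{1,\dots,N\}$, which handles $l=N$ in the $\alpha=l+1$ piece.
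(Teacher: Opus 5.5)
Your proposal is correct. It is the same direct specialize-and-project computation that the paper leaves implicit, since the lemma, like Lemma~\ref{lem:AppendixC1}, is stated without a written proof, and you handle the boundary values $\alpha=i+d-1$, $\alpha=l+1$ and the out-of-range convention at $l=N$ properly. Your doubt about case (2) is justified: for $d>1$ the only surviving term is $\delta(w/z)\,\mu_{ki}(z)\mu_{i+d,l}(w)$, matching the $\mu_{i+d,l}(w)$ in cases (5) and (7), so the printed $\mu_{i+1,l}(w)$ appears to be carried over from the $d=1$ formula of Lemma~\ref{lem:AppendixC1}(1).
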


Observe that $\pi_{B_+}\{\mu_{i+d, i}(z), \mu_{kl}(w)\}$ is a linear combination of degree $0, 1$ and $2$ terms.
\begin{lem}\label{lem:well2}
    For $d \in \ZZ_{>1}, 1 \le i \le N-d$, and $r \in \{1, 2, \dots, N-1\}$, 
    \begin{equation*}
    \pi_{B_+}\{\mu_{i+d, i}(z), T_r(w)\}=0.
    \end{equation*}
\end{lem}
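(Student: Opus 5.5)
I will follow the same degree-by-degree strategy as in the proof of Lemma~\ref{lem:well1}, now using the bracket table of Lemma~\ref{lem:AppendixC2} in place of Lemma~\ref{lem:AppendixC1}. As before, I drop $\pi_{B_+}$ by setting $\mu_{\alpha\beta}=0$ for $\alpha>\beta+1$ and $\mu_{\alpha+1,\alpha}=-1$. I then expand $T_r(w)$ via Proposition~\ref{prop:generators} as a sum over $(\alpha,\beta)\in\Jcal_{m,r}^N$ of monomials $\prod_x\mu_{\alpha_x\beta_x}(q^{\gamma_x}w)$, and apply the Leibniz rule.

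By Lemma~\ref{lem:AppendixC2}, bracketing $\mu_{i+d,i}(z)$ with a factor $\mu_{kl}$ lowers the degree by one, preserves it, or raises it by one. Hence a degree-$m$ monomial contributes to degrees $m-1$, $m$ and $m+1$. The degree-raising terms come from cases (2), (5) and (7). The degree-preserving terms come from cases (1), (4) and the linear pieces of (5). The degree-lowering terms come from cases (3) and (6).

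\textbf{Top and bottom degrees.} I first dispose of the extreme degrees. For the top degree $r+1$, I note that diagonal factors $\mu_{\alpha\alpha}$ have $l-k=0<d-2$, so they fall into case (8) and no degree-$(r+1)$ term can arise. For degree $0$, the only sources are case (3), $\mu_{i,i+d-2}$, and case (6), $\mu_{i+1,i+d-1}$. These pieces have degree $1$ and length $d-1$, so they appear only when $r=d-1$. In that case they occur with arguments $q^{i-1}w$ and $q^{i}w$ respectively, giving
\[-\delta\!\left(\tfrac{q^{i}w}{z}\right)+\delta\!\left(\tfrac{q^{i}w}{z}\right)=0.\]
The same pairing of cases (3) and (6) should work inside longer monomials: an interval block $\mu_{i,i+d-2}$ at position $x$ is matched with the configuration having block $\mu_{i+1,i+d-1}$ at the same position. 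Their shifts differ by exactly one, because $\gamma$ increases by one when $\alpha_x$ increases by one with $\beta_x-\alpha_x$ fixed. This pairing fails only when a neighbouring block obstructs the shift, which is when $\beta_{x-1}=i$ or $\alpha_{x+1}=i+d-1$. Those leftover terms must be absorbed at the next step.

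\textbf{Intermediate degrees (the main obstacle).} For $1\le m\le r$, I will collect all contributions to degree $m$ coming from degree $m-1$, $m$ and $m+1$ monomials, and exhibit an explicit cancellation scheme in the spirit of \eqref{eq:AppendixC1}--\eqref{eq:AppendixC4}. I expect three pairings.

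\emph{(a) Case (2) against case (5).} The degree-raising term $\delta(w/z)\mu_{ki}(z)\mu_{i+d,l}(w)$ from case (2), applied to a block $(k,l)$ with $k<i\le i+d\le l$, combines with the $\delta(w/z)\mu_{ii}(z)\mu_{i+d,l}(w)$ and $-\sum_x\delta(qw/z)\mu_{i+d,x}(z)\mu_{xl}(w)$ parts of case (5). The result is a telescoped residual, exactly as in \eqref{eq:AppendixC4}. I am assuming here that case (2) should read $\mu_{i+d,l}$, as symmetry suggests.

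\emph{(b) Case (1) against case (4).} The degree-preserving terms from cases (1) and (4), involving $\mu_{ki}(z)$ and $\mu_{ii}(z)$, should cancel against these residuals after splitting a block at $i$ and $i+d$. This reuses the index-shift bijection $\Jcal_{m,r}^N\leftrightarrow\Jcal_{m\pm1,r}^N$ from Lemma~\ref{lem:gen1}.

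\emph{(c) Case (7) and the obstructed terms.} The case (7) terms and the terms $\delta(qw/z)\mu_{i+d-1,l}(w)$, $\delta(qw/z)\mu_{i+d,l+1}(z)$ from case (5), together with the leftover obstructed terms from the case (3)/(6) pairing, should cancel pairwise. Two monomials correspond when they differ by merging or splitting blocks around the positions $i,\dots,i+d$.

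I expect the hardest part to be the bookkeeping of the $q$-shifts $\gamma_x$. For each pair I must verify that the delta-functions agree, that is, $\delta(q^{\gamma}w/z)$ arises with the same exponent on both sides once $z$ is replaced by $q^{\gamma}w$. I would first check the case $d=2$, $N=4$ by hand to fix signs, then write the general bijections.
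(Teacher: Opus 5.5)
Your overall route is the paper's. You expand $T_r(w)$ over $\Jcal_{m,r}^N$, use the table of Lemma~\ref{lem:AppendixC2}, kill degree $0$ by pairing cases (3) and (6) at $\delta(q^iw/z)$ when $r=d-1$ (exactly as the paper does), exclude degree $r+1$, and then cancel the intermediate degrees. Your correction of case (2) is right. The $\varepsilon$-term gives $\tfrac12(\varepsilon_{i+d,k}+\varepsilon_{li})\,\delta(w/z)\mu_{ki}(z)\mu_{i+d,l}(w)$ with coefficient $1$, so $\mu_{i+1,l}$ there is a misprint. There are two slips. Case (7) is degree-preserving, not degree-raising. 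For $d=2$ the diagonal factors $\mu_{ii},\mu_{i+1,i+1}$ fall under cases (3) and (6), not (8); the top-degree conclusion survives because those cases lower the degree.

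The genuine gap is the intermediate step, which is the whole content of the lemma: you only predict a cancellation, and the predicted scheme does not close.
In (a), the case (2) output carries $\delta(q^{\gamma}w/z)$ and a factor $\mu_{ki}$. The $-\sum_x$ part of case (5) carries $\delta(q^{\gamma+1}w/z)$ and no factor whose block contains the index $i$, so these can never cancel. The pieces $\delta(qw/z)\mu_{i+d-1,i+d-1}(w)$ and $\delta(qw/z)\mu_{i+d,i+d}(z)$ of case (4) are never assigned. Case (1) outputs from configurations in which $i+d$ is not the first index of the next block contain no factor $\mu_{i+d,\cdot}$, so they have no partner among your (a)-residuals.

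A matching that does close is the following. Write $[\alpha,\beta]_g$ for a block with shift $g$, meaning the number of uncovered indices below $\alpha$; each move keeps the covered length $r$ and the shifts of all untouched blocks.
(i) Case (3) on $[i,i+d-2]_g$ against case (6) on $[i+1,i+d-1]_{g+1}$. This is available unless the next block starts at $i+d-1$, resp.\ the previous block ends at $i$.
(ii) The obstructed (3)-terms, coming from $[i,i+d-2]_g[i+d-1,l]_g$, against the $\mu_{i+d-1,l}(w)$ term of (5), resp.\ the $\mu_{i+d-1,i+d-1}(w)$ term of (4), on the merged block $[i,l]_g$.
(iii) The obstructed (6)-terms, coming from $[k,i]_g[i+1,i+d-1]_g$, against case (1), resp.\ the $-\mu_{ii}(z)$ term of (4), on $[k,i+d-1]_g$.
(iv) Case (2) and the first term of (5) on $[k,l]_g$ against case (7) on $[i+1,l]_g$ in $[k,i]_g[i+1,l]_g$.
(v) The term $-\mu_{i+d,x}(z)\mu_{xl}(w)$ of (5) on $[i,l]_g$ against the $\mu_{i+d,x}(z)$ term of (5) on $[i,x-1]_g$ in $[i,x-1]_g[x,l]_g$; when $x=i+d$ the partner is the $\mu_{i+d,i+d}(z)$ term of (4).
(vi) Case (7) on $[i+1,l]_g$ with $i$ uncovered against the $\mu_{i+d,l}(z)$ term of (5), resp.\ the $\mu_{i+d,i+d}(z)$ term of (4), on $[i,l-1]_{g-1}$ with $l$ uncovered.

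Paired terms have the same monomial and the same delta-function with opposite signs. Every nonzero output of Lemma~\ref{lem:AppendixC2} lies in exactly one pair. The paper at least writes the surviving terms \eqref{eq:AppendixC5}--\eqref{eq:AppendixC7} down explicitly; some explicit bookkeeping of this kind is what your proposal still lacks.
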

\begin{proof}
    Following the same strategy as before, we show that $\pi_{B_+}\{\mu_{i+d, i}(z), T_r(w)\}$ vanishes degree by degree. As in the previous proof, we omit the notation $\pi_{B_+}$ and proceed by directly setting $\mu_{\alpha\beta}(z) = 0$ for $\alpha > \beta+1$ and $\mu_{\alpha+1, \alpha}(z) = -1$.

    We first track the degree $0$ terms contributed by the following Poisson bracket:
    \begin{equation*}
        \left\{\mu_{i+d, i}(z), \mu_{1r}(w) + \mu_{2, r+1}(qw) + \dots + \mu_{N-r+1, N}(q^{N-r}w) \right\}.
    \end{equation*}
    The degree $0$ terms are explicitly given by
    \begin{equation*}
        \delta_{r, d-1}\{\mu_{i+d, i}(z), \mu_{i, i+d-2}(q^{i-1}w) + \mu_{i+1, i+d-1}(q^iw)\} = \delta_{r, d-1}\left(-\delta\left(\frac{q^iw}{z}\right)+ \delta\left(\frac{q^{i}w}{z}\right)\right) = 0.
    \end{equation*}
    
    Next, we track the linear terms of $\{\mu_{i+d, i}(z), T_r(w)\}$. To do this, it suffices to consider the degree $1$ and $2$ terms of $T_r(w)$. Evaluating the Poisson bracket yields
    \begin{equation*}
        -\delta_{r \ge d} \left(\delta\left(\frac{q^{i+d-r-1}w}{z}\right) \mu_{i+d-r, i}(z) - \delta\left(\frac{q^iw}{z}\right) \mu_{i+d-1, i+r-1}(q^{i-1}w) \right).
    \end{equation*}
    For the degree $2$ terms of $T_r(w)$, computing the Poisson bracket gives
    \begin{equation*}
        \begin{aligned}
            &\quad \delta_{r \ge d}\sum_{\substack{\alpha_2 > i+d-2 \\ \beta_2 = \alpha_2+r-d}} \{\mu_{i+d, i}(z), \mu_{i, i+d-2} (q^{i-1}w) \} \mu_{\alpha_2\beta_2} (q^{\alpha_2-d}w) \\
            &+\delta_{r \ge d}\sum_{\substack{\alpha_2 > i+d-1 \\ \beta_2 = \alpha_2+r-d}} \{\mu_{i+d, i}(z), \mu_{i+1, i+d-1} (q^{i}w) \} \mu_{\alpha_2\beta_2} (q^{\alpha_2-d}w) \\
            &+\delta_{r \ge d}\sum_{\substack{\beta_1 < i \\ \beta_1-\alpha_1 = r-d}} \mu_{\alpha_1\beta_1}(q^{\alpha_1-1}w) \{\mu_{i+d, i}(z), \mu_{i, i+d-2}(q^{i-r+d-2}w)\} \\
            &+\delta_{r \ge d}\sum_{\substack{\beta_1 < i+1 \\ \beta_1-\alpha_1 = r-d}} \mu_{\alpha_1\beta_1}(q^{\alpha_1-1}w) \{\mu_{i+d, i}(z), \mu_{i, i+d-1}(q^{i-r+d-1}w)\} \\
            &= \delta_{r \ge d}\left(\delta\left(\frac{q^{i+d-r-1}w}{z}\right) \mu_{i+d-r, i}(z) - \delta\left(\frac{q^iw}{z}\right) \mu_{i+d-1, i+r-1}(q^{i-1}w) \right).
        \end{aligned}
    \end{equation*}
    To see that the degree $r+1$ and $r$ components vanish, we observe the following equalities from Lemma~\ref{lem:AppendixC2}:
    \begin{equation*}
        \begin{aligned}
            \{\mu_{i+d, i}(z), \mu_{\alpha\alpha}(w)\} &= \delta_{d, 2} \left( \delta_{\alpha, i+1} \delta\left(\frac{w}{z}\right) - \delta_{\alpha i} \delta\left(\frac{qw}{z}\right) \right) \\
            \{\mu_{i+d, i}(z) , \mu_{\alpha, \alpha+1}(w)\} &=\delta_{\alpha i} \left( -\delta_{d, 2} \delta\left(\frac{w}{z}\right) \mu_{ii}(z) + \delta_{d, 2} \delta\left(\frac{qw}{z}\right) \mu_{i+2, i+2}(z) - \delta_{d, 3} \delta\left(\frac{qw}{z}\right) \right) \\
            &\quad + \delta_{\alpha, i+1} \left( - \delta_{d, 2} \delta\left(\frac{w}{z}\right) \mu_{i+2, i+2}(w) + \delta_{d, 3}\left(\frac{w}{z}\right) \right).
        \end{aligned}
    \end{equation*}
    Let $T_{r, r}(w)$ and $T_{r, r-1}(w)$ be the degree $r$ and $r-1$ components of $T_r(w)$, respectively. Note that $T_{r, r}(w)$ is a linear combination of monomials in $\mu_{\alpha \alpha}(w)$, and $T_{r, r-1}(w)$ is a linear combination of monomials consisting of $r-2$ factors of $\mu_{\alpha\alpha}(w)$ and one factor of $\mu_{\alpha, \alpha+1}(w)$. Therefore, $\{\mu_{i+d, i}(z), T_{r, r}(w)\}$ contains no degree $r$ and $r+1$ components, and $\{\mu_{i+d, i}(z), T_{r, r-1}(w)\}$ contains no degree $r$ components. This proves that there are no degree $r$ or $r+1$ components in $\{\mu_{i+d, i}(z), T_r(w)\}$.

    Finally, we prove that the terms of degree $m$ vanish for $2 \le m \le r-1$. The computation is similar to the proof of Lemma~\ref{lem:well1} so we omit the details and only record the resulting terms. For a term of degree $m+1$ to contribute to the degree $m$ part under the Poisson bracket with $\mu_{i+d, i}(z)$, it must contain a factor of the form $\mu_{i, i+d-2}(w)$ or $\mu_{i+1, i+d-1}(w)$. The remaining degree $m$ terms are given by
    \begin{equation}\label{eq:AppendixC5}
    \begin{aligned}
        &-\delta\left( \frac{q^{\gamma_l+1}w}{z}\right) \mu_{\alpha_1\beta_1}(q^{\gamma_1}w) \cdots \mu_{\alpha_{l-1}\beta_{l-1}}(q^{\gamma_{l-1}}w)\mu_{i+d-1,\beta_{l}}(q^{\gamma_{l}}w) \cdots \mu_{\alpha_{m}\beta_{m}}(q^{\gamma_{m}}w)\\
        &+\delta\left(\frac{q^{\gamma_l}w}{z}\right) \mu_{\alpha_1\beta_1}(q^{\gamma_1}w) \cdots \mu_{\alpha_{l-1},i} (q^{\gamma_{l-1}}w)\mu_{\alpha_l\beta_l}(q^{\gamma_l}w) \cdots \mu_{\alpha_m\beta_m}(q^{\gamma_m}w).
    \end{aligned}
    \end{equation}
    The terms of degree $m$ in $T_r(w)$ that contribute to the degree $m$ part of the Poisson bracket with $\mu_{i+d, i}(z)$ are given by
    \begin{equation}\label{eq:AppendixC6}
    \begin{aligned}
        &-\delta\left(\frac{q^{\gamma_l}w}{z}\right)\mu_{\alpha_1\beta_1}(q^{\gamma_1}w) \cdots \mu_{\alpha_{l-1},i}(q^{\gamma_{l-1}}w)\mu_{\alpha_l\beta_l}(q^{\gamma_l}w) \cdots \mu_{\alpha_m\beta_m}(q^{\gamma_m}w)\\
        &+\mu_{\alpha_1\beta_1}(q^{\gamma_1}w) \cdots \mu_{\alpha_{l-1}\beta_{l-1}}(q^{\gamma_{l-1}}w)\times\\
        &\quad \cdots \times \left( \delta\left(\frac{q^{\gamma_l+1}w}{z}\right) \mu_{i+d-1, \beta_l}(q^{\gamma_l}w) + \delta\left(\frac{q^{\gamma_l+1}w}{z}\right) \mu_{i+d, \beta_l +1} (q^{\gamma_l}w)\right)\cdots \mu_{\alpha_m\beta_m}(q^{\gamma_m}w) \\
        &- \delta\left(\frac{q^{\gamma_l}w}{z}\right)\mu_{\alpha_1\beta_1}(q^{\gamma_1}w) \cdots \mu_{\alpha_{l-1},i}(q^{\gamma_{l-1}}w)\mu_{i+d, \beta_l}(q^{\gamma_l}w) \cdots \mu_{\alpha_m\beta_m}(q^{\gamma_m}w).
    \end{aligned}
    \end{equation}
    The terms of degree $m+1$ in $T_r(w)$ that contribute to the degree $m$ part of the Poisson bracket with $\mu_{i+d, i}(z)$ are given by
    \begin{equation}\label{eq:AppendixC7}
    \begin{aligned}
        &-\delta\left(\frac{q^{\gamma_l+1}w}{z}\right)\mu_{\alpha_1\beta_1}(q^{\gamma_1}w) \cdots \mu_{\alpha_{l-1}\beta_{l-1}}(q^{\gamma_{l-1}}w)\mu_{i+d, \beta_l+1}(q^{\gamma_l}w) \cdots \mu_{\alpha_m\beta_m}(q^{\gamma_m}w)\\
        &+ \delta\left(\frac{q^{\gamma_l}w}{z}\right)\mu_{\alpha_1\beta_1}(q^{\gamma_1}w) \cdots \mu_{\alpha_{l-1},i}(q^{\gamma_{l-1}}w)\mu_{i+d, \beta_l}(q^{\gamma_l}w) \cdots \mu_{\alpha_m\beta_m}(q^{\gamma_m}w).
    \end{aligned}
    \end{equation}
    Therefore, combining \eqref{eq:AppendixC5}, \eqref{eq:AppendixC6}, and \eqref{eq:AppendixC7}, we conclude that the degree $m$ terms of $\{\mu_{i+d, i}(z), T_r(w)\}$ vanish.
    \end{proof}

\newpage









\end{document}